\documentclass[a4paper,twocolumn,11pt,unpublished,nopdfoutputerror]{quantumarticle}
\usepackage[T1]{fontenc}
\usepackage{amsmath,amssymb,amsthm,mathtools,booktabs,array,microtype,graphicx,placeins}
\usepackage[numbers,sort&compress]{natbib}
\let\label\relax
\usepackage{xr-hyper}
\usepackage[hidelinks]{hyperref}
\AtBeginEnvironment{thebibliography}{\raggedright}
\title{Quantum change point metrology}
\author{Xu~Chen}
\orcid{0009-0003-8499-4080}
\email{cnc@hebust.edu.cn}
\affiliation{School of Sciences, Hebei University of Science and Technology,
Shijiazhuang, Hebei 050018, People's Republic of China}
\author{Xue~Ma}
\email{maxue@he.chinamobile.com}
\affiliation{Network Management Center, China Mobile Communications Group
Hebei Co., Ltd., Shijiazhuang, Hebei 050000, People's Republic of China}
\newtheorem{qcpMtheorem}{Theorem}
\expandafter\def\csname theHqcpMtheorem\endcsname{qcp-M.\thesection.\arabic{qcpMtheorem}}

\newtheorem{qcpMstructure}{Proposition}
\expandafter\def\csname theHqcpMstructure\endcsname{qcp-M.structure.\arabic{qcpMstructure}}
\newtheorem{qcpCtheorem}{Theorem}[section]
\expandafter\def\csname theHqcpCtheorem\endcsname{qcp-C.\thesection.\arabic{qcpCtheorem}}
\newtheorem{qcpCproposition}[qcpCtheorem]{Proposition}
\newtheorem{qcpClemma}[qcpCtheorem]{Lemma}

\newtheorem{qcpDtheorem}{Theorem}[section]
\expandafter\def\csname theHqcpDtheorem\endcsname{qcp-D.\thesection.\arabic{qcpDtheorem}}

\newtheorem{qcpDcorollary}[qcpDtheorem]{Corollary}
\newtheorem{qcpStheorem}{Theorem}
\expandafter\def\csname theHqcpStheorem\endcsname{qcp-S.\thesection.\arabic{qcpStheorem}}
\newtheorem{qcpSproposition}{Proposition}
\expandafter\def\csname theHqcpSproposition\endcsname{qcp-S.\thesection.\arabic{qcpSproposition}}
\newtheorem{qcpSlemma}{Lemma}
\expandafter\def\csname theHqcpSlemma\endcsname{qcp-S.\thesection.\arabic{qcpSlemma}}

\expandafter\def\csname theHqcpScorollary\endcsname{qcp-S.\thesection.\arabic{qcpScorollary}}

\expandafter\def\csname theHqcpPtheorem\endcsname{qcp-P.\thesection.\arabic{qcpPtheorem}}
\newtheorem{qcpPproposition}{Proposition}
\expandafter\def\csname theHqcpPproposition\endcsname{qcp-P.\thesection.\arabic{qcpPproposition}}
\newtheorem{qcpPlemma}{Lemma}
\expandafter\def\csname theHqcpPlemma\endcsname{qcp-P.\thesection.\arabic{qcpPlemma}}

\expandafter\def\csname theHqcpPcorollary\endcsname{qcp-P.\thesection.\arabic{qcpPcorollary}}
\newtheorem{qcpUtheorem}{Theorem}
\expandafter\def\csname theHqcpUtheorem\endcsname{qcp-U.\thesection.\arabic{qcpUtheorem}}
\newtheorem{qcpUlemma}[qcpUtheorem]{Lemma}
\newtheorem{qcpUcorollary}[qcpUtheorem]{Corollary}

\newtheorem{qcpBtheorem}{Theorem}
\expandafter\def\csname theHqcpBtheorem\endcsname{qcp-B.\thesection.\arabic{qcpBtheorem}}

\newtheorem{qcpVtheorem}{Theorem}
\expandafter\def\csname theHqcpVtheorem\endcsname{qcp-V.\thesection.\arabic{qcpVtheorem}}
\newtheorem{qcpVlemma}{Lemma}
\expandafter\def\csname theHqcpVlemma\endcsname{qcp-V.\thesection.\arabic{qcpVlemma}}

\newcommand{\loc}{\mathrm{loc}}
\newcommand{\amp}{\mathrm{amp}}
\newcommand{\G}{\mathcal G}
\newcommand{\tr}{\operatorname{Tr}}
\newcommand{\Unif}{\operatorname{Unif}}
\newcommand{\ket}[1]{|#1\rangle}
\newcommand{\bra}[1]{\langle#1|}
\newcommand{\proj}[1]{\ket{#1}\bra{#1}}
\newcommand{\HS}{\mathrm{HS}}
\newcommand{\exprel}{\operatorname{exprel}}

\makeatletter
\newcommand{\qcpresetvolume}[1]{%
  \@removefromreset{equation}{section}%
  \setcounter{section}{0}\setcounter{subsection}{0}\setcounter{subsubsection}{0}%
  \setcounter{equation}{0}\setcounter{figure}{0}\setcounter{table}{0}\setcounter{footnote}{0}%
  \renewcommand{\thesection}{\arabic{section}}%
  \renewcommand{\thesubsection}{\thesection.\arabic{subsection}}%
  \renewcommand{\theequation}{\arabic{equation}}%
  \def\theHsection{qcp-#1.\thesection}%
  \def\theHsubsection{qcp-#1.\thesubsection}%
  \def\theHsubsubsection{qcp-#1.\thesubsubsection}%
  \def\theHequation{qcp-#1.\thesection.\arabic{equation}}%
  \def\theHfigure{qcp-#1.\arabic{figure}}%
  \def\theHtable{qcp-#1.\arabic{table}}%
}
\newcommand{\qcparticlelayout}{%
  \renewcommand\section{\@startsection{section}{1}{\z@}{-3.5ex\@plus -1ex\@minus -.2ex}{2.3ex\@plus .2ex}{\normalfont\Large\bfseries}}%
  \renewcommand\subsection{\@startsection{subsection}{2}{\z@}{-3.25ex\@plus -1ex\@minus -.2ex}{1.5ex\@plus .2ex}{\normalfont\large\bfseries}}%
  \renewcommand\subsubsection{\@startsection{subsubsection}{3}{\z@}{-3.25ex\@plus -1ex\@minus -.2ex}{1.5ex\@plus .2ex}{\normalfont\normalsize\bfseries}}%
  \renewenvironment{abstract}{\small\begin{center}\bfseries\abstractname\end{center}\quotation}{\endquotation}%
}
\newcommand{\qcpvolumetitle}[1]{\par\vspace*{1.5em}\begin{center}{\LARGE #1\par}\end{center}\vspace{1em}}
\makeatother

\begin{document}
\begingroup
\qcpresetvolume{M}
\setlength{\emergencystretch}{3em}

\maketitle
\begin{abstract}
We quantify the extra Bayes risk of locating a quantum change point
and estimating its amplitude with one measurement. The joint and
marginal optimizations use the full record and the same joint prior,
with both parameters unknown. For product records on a known pure state curve, position has
clipped absolute loss in resolution units, and amplitude offset has
squared loss in fixed calibrated units. Ideal steps exhibit certified conflict at finite signal strength, including in phase records of sixteen qubits, and recover both informative marginal optima at strong signal. For finite offset prior width,
a retained quantum boundary region serves position, and the remaining
suffix serves amplitude. Physical recovery holds when strength grows
more slowly than the square root of record length. 
Known smooth rises also admit leading recovery in the coherent experiment as strength
grows and width shrinks. For the quintic profile, we establish two residual cost scaling laws. The residual cost is $\Theta(s^{-2})$ as $s\to\infty$. Taking the strong signal limit at fixed $w$ yields a limiting cost of $\Theta(w^3)$ as $w\to0$. In the resolved local model, quadratic position loss admits a common optimum within the covariant class. These results link onset geometry and decision loss to Bayesian measurement compatibility in quantum change point estimation.
\end{abstract}
\section{Introduction}
\label{qcp-M-sec:introduction}

A quantum sensor record can encode both when a signal begins and how
strongly it acts. Estimating these quantities connects change point
detection with quantum sensing and metrology
\cite{qcp-M-degen2017,qcp-M-giovannetti2004,qcp-M-giovannetti2006,qcp-M-giovannetti2011}.
For an abrupt onset, moving the boundary changes a short segment of
the record, while changing the amplitude affects the whole suffix.
This spatial structure offers a way to allocate information between
the two tasks. The question is how much precision a common readout can
retain for each.

Quantum change point studies have established identification with
minimum error \cite{qcp-M-sentis2016}, exact identification with
inconclusive outcomes \cite{qcp-M-sentis2017}, and sequential local
strategies \cite{qcp-M-sentis2018}. Related settings include boundaries
between unknown domain states \cite{qcp-M-llorens2025}, multiple
anomalies \cite{qcp-M-llorens2024}, and sequential detection delay
\cite{qcp-M-fanizza2023,qcp-M-grootveld2026}.
Here, the amplitude that makes the boundary distinguishable is also an
estimation target. Each unknown parameter affects the information
available for estimating the other, connecting the problem to nuisance
parameters and semiparametric estimation
\cite{qcp-M-suzuki2020,qcp-M-tsang2020}. Measurement compatibility determines whether local precision limits for
several parameters can be attained by one readout
\cite{qcp-M-gill2000,qcp-M-matsumoto2002,qcp-M-ragy2016}.
Quantum Fisher information and multiparameter bounds relate local
distinguishability to estimation precision
\cite{qcp-M-braunstein1994,qcp-M-paris2009,qcp-M-sidhu2020,qcp-M-szczykulska2016,qcp-M-albarelli2020,qcp-M-demkowicz2020,qcp-M-hayashi2023}.
Bayesian studies characterize incompatibility
\cite{qcp-M-albarelli2026}, derive bounds from monotone metrics
\cite{qcp-M-zhang2026}, and optimize projective readouts
\cite{qcp-M-bernad2026}.
Classical change point theory relates localization to jump size and
transition width \cite{qcp-M-dachian2015,qcp-M-amiri2021}.
For quantum change point records, the central question is how onset
geometry and decision loss determine the common readout cost and its
asymptotic scaling.

Within Bayesian decision theory
\cite{qcp-M-holevo1973,qcp-M-personick1971,qcp-M-tsang2012,qcp-M-rubio2019,qcp-M-rubio2020},
we define the common readout cost as the weighted joint optimum minus
the weighted marginal optima. All three optimizations use the full
record, the same joint prior and complete measurement class,
with both parameters unknown. The model consists of product records on a known smooth curve
of pure states, with independent uniform priors on position and a
calibrated amplitude offset
(Figure~\ref{qcp-M-fig:model-joint-estimation}).
Position has clipped absolute loss in resolution units, and the offset
has squared loss in fixed calibrated units.
A common optimum exists precisely when the positive operator valued measure (POVM) that is optimal for position commutes with the optimal amplitude observable.
Quantum local asymptotic normality and Gaussian equivalence provide
a framework for comparisons with coherent state models
\cite{qcp-M-guta2006,qcp-M-kahn2009,qcp-M-butucea2018}.
We use these comparisons to control the optimal joint risk and both
optimal marginal risks.

We certify a positive common readout cost for the ideal step at a reference signal strength, including for phase records of sixteen qubits. As signal strength grows, a common readout recovers both marginal optima.
Recovery assigns a retained quantum boundary region to position and
the remaining suffix to amplitude. With the offset prior fixed at any
finite positive width, physical recovery holds along $s_n\to\infty$ with
$s_n=o(\sqrt n)$, where $n$ is the record length. Both marginal optimal
risks remain bounded away from zero, as do the risk reductions relative
to using the prior alone.
For known smooth rise profiles, the coherent experiment also admits recovery to leading order as the signal strengthens and the rise width shrinks.
For a quintic rise of known width $w$, we obtain two scaling laws for the residual cost.
The residual cost is $\Theta(s^{-2})$ as $s\to\infty$, uniformly for $s^2w\in[1/2,2]$. Taking the strong signal limit at fixed $w$ yields a limiting cost of $\Theta(w^3)$ as $w\to0$.
For quadratic position loss in the resolved local experiment, a readout based on amplitude prediction attains both marginal optima within the complete covariant class. With clipped absolute position loss, the optimal readout requires a cubic correction whose squared contribution accounts for the cubic width law under the full position prior.

Section~\ref{qcp-M-sec:model} defines the model and estimation task, and Section~\ref{qcp-M-sec:main-results} presents the main results and their mechanisms. Section~\ref{qcp-M-sec:certified-evidence} presents numerical conflict certificates and comparisons with physical records. The Supplementary Material contains the proofs and certificate error budgets.

\section{Model and estimation task}
\label{qcp-M-sec:model}

\subsection{Quantum record and prior information}

The quantum record contains $n$ systems with the same finite dimensional
Hilbert space $\mathcal H$. An unknown boundary $u$ marks the onset
of a response controlled by an unknown calibrated amplitude offset
$z$. The known response curve $\phi(\theta)$ consists of normalized
pure states and is $C^3$ in norm on $[-r_0,r_0]$. Define the state and the squared projective speed at $\theta=0$ by
\begin{equation}
 \begin{gathered}
 \rho(\theta)=\proj{\phi(\theta)},\\
 q_0=\|(I-\rho(0))\phi'(0)\|^2>0.
 \end{gathered}
 \label{qcp-M-eq:curve}
\end{equation}
The positive constant $q_0$ sets the local scale of the curve parameter.

Let $L\in(0,\infty)$ be the known half width of the offset prior.
Position and amplitude offset have independent uniform priors,
\begin{equation}
 (u,z)\sim\Unif[0,1]\otimes\Unif[-L,L].
 \label{qcp-M-eq:prior}
\end{equation}
We fix the calibration $a=2\sqrt{\log2}$,
$s_{\mathrm c}=3a/2$ and $K=s_{\mathrm c}^2$.
All logarithms are natural. The signal center $s>0$ is known.
The offset $z$ determines the calibrated strength $h=s+az$,
which is the coherent field amplitude in the comparison below.
The physical curve parameter is $\theta=h/\sqrt{nq_0}$.

For an ideal step, define $k_n(u)=\lfloor nu\rfloor+1$.
The state on $\mathcal H^{\otimes n}$ is
\begin{equation}
 \rho_n(u,z;s)
 =\rho(0)^{\otimes(k_n-1)}
  \otimes\rho(\theta)^{\otimes(n-k_n+1)}.
 \label{qcp-M-eq:record}
\end{equation}
When $u<1$, the unchanged sites $1,\ldots,k_n-1$ precede the first changed
site $k_n$. Under the position prior, $k_n$ is uniform on
$\{1,\ldots,n\}$; the zero-mass endpoint $u=1$ represents the unchanged
auxiliary state. The position target for a finite record is $u_n=(k_n-1)/n$.
Reports lie in $\mathcal X_n=\{0,1/n,\ldots,(n-1)/n\}$, and $x$ gives
the site estimate $\widehat k=nx+1$.

The condition $(s+aL)/\sqrt{nq_0}\le r_0$ ensures that the full physical parameter
interval lies within the domain of the known curve. Where a construction
requires a strictly positive amplitude $h$, the additional
condition $s>aL$ is imposed.

The coherent comparison experiment is defined on bosonic Fock space
over $L^2[0,1]$ by
\begin{equation}
 \sigma_s(u,z)=\proj{f_{u,h}}_{\rm coh},\qquad
 f_{u,h}=h\mathbf1_{[u,1]}.
 \label{qcp-M-eq:coherent}
\end{equation}
It retains the same prior on $(u,z)$, but has continuous location
target $u$ and report space $\mathcal X_\infty=[0,1]$. For normalized coherent
vectors, the squared fidelity is $\operatorname{Fid}(|f\rangle,|g\rangle)=e^{-\|f-g\|^2}$.
Distance is measured by the half trace distance $D(\rho,\sigma)=\|\rho-\sigma\|_1/2$, which for
these pure states satisfies $D=\sqrt{1-\operatorname{Fid}}$.

\begin{figure*}[t]
\centering
\includegraphics[width=\linewidth]{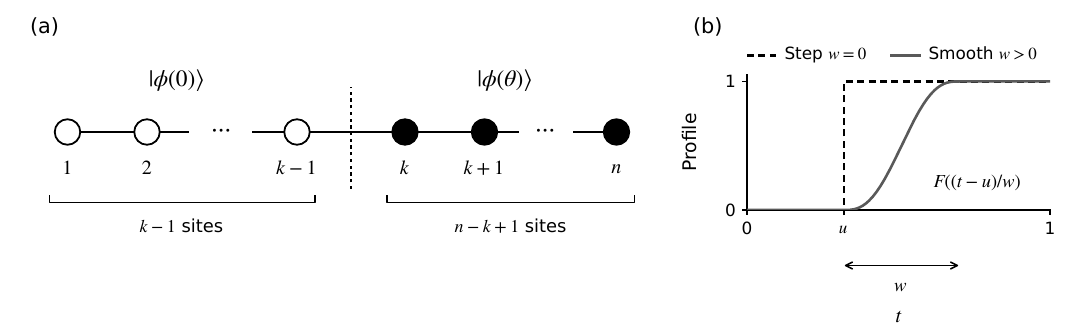}
\caption[Quantum change point model]{Quantum record and onset profiles.
(a) The illustrative phase family
$\ket{\phi(\theta)}=(\ket{0}+e^{i\theta}\ket{1})/\sqrt{2}$.
Sites $1,\ldots,k-1$ are in $\rho(0)$ and sites $k,\ldots,n$ in
$\rho(\theta)$, with first changed site $k=\lfloor nu\rfloor+1$.
(b) Normalized profiles for an ideal step and a smooth rise $F((t-u)/w)$, 
with unknown onset $u$ and known width $w$.}
\label{qcp-M-fig:model-joint-estimation}
\end{figure*}

\subsection{Common measurement and calibrated losses}

A complete POVM $M(dx,dy)$ produces
paired reports $x\in\mathcal X$ for position and $y\in[-L,L]$ for the calibrated
offset $z$, where $\mathcal X$ denotes the report space of the
chosen experiment. Completeness is expressed by
\[
 M(dx,dy)\ge0,\qquad
 \int_{\mathcal X\times[-L,L]}M(dx,dy)=I.
\]
The measurement design depends only on the known model, prior and task weight. The POVM description covers adaptive strategies based on observed outcomes and retained quantum systems, together with randomized and ancillary implementations. Every outcome $(x,y)$ is scored for both tasks, both risks are averaged over the full outcome distribution.

The resolution in the location loss is set by the known signal center.
Define $b(\theta)=1-|\langle\phi(0),\phi(\theta)\rangle|^2$ and
\begin{equation}
 b_n(s)=b\bigl(s/\sqrt{nq_0}\bigr),\qquad \tau_n(s)=nb_n(s).
 \label{qcp-M-eq:distinguishability}
\end{equation}
Position error is scored by clipped absolute loss in resolution units,
and amplitude offset error by squared loss in fixed units:
\begin{equation}
 \begin{aligned}
 \ell_{\loc,n}(x,u)
   &=2\min\{1,\tau_n(s)|x-u_n|/K\},\\
 \ell_{\loc,\infty}(x,u)
   &=2\min\{1,s^2|x-u|/K\},\\
 \ell_{\amp}(y,z)&=(y-z)^2,\quad y\in[-L,L].
 \end{aligned}
 \label{qcp-M-eq:losses}
\end{equation}
Amplitude reports may be clipped to $[-L,L]$ without increasing the
loss. The bounds for location and amplitude are $B_\loc=2$ and
$B_\amp=4L^2$, respectively.
At $s=s_{\mathrm c}$, the coherent location loss becomes $2|x-u|$.
For finite records, $b_n(s)|\widehat k-k_n|$ expresses position error in the same
resolution units, whereas the physical parameter estimate $\widehat\theta=(s+ay)/\sqrt{nq_0}$
has squared error $a^2(y-z)^2/(nq_0)$ under a physical prior of half width
$aL/\sqrt{nq_0}$.

We compare prior widths while keeping $a,s,q_0,K$ and the loss units fixed. Along physical recovery paths, we take $s_n\to\infty$ with $s_n=o(\sqrt n)$ and keep the state curve, offset prior and amplitude units fixed. The physical displacement $\theta$ then vanishes at each site, while the prior spread in $z$ remains fixed and position errors are scored in resolution units.

\subsection{Bayesian risks and the cost of a common readout}

For either experiment, write $\omega_{u,z}$ for the input state and $\ell_j$
for the loss associated with $j\in\{\loc,\amp\}$. The Bayes risk of a paired
POVM, averaged over the joint prior in \eqref{qcp-M-eq:prior}, is
\[
 r_j(M)=\mathbb E_{u,z}\!\int_{\mathcal X\times[-L,L]}
 \ell_j\,\tr[\omega_{u,z}M(dx,dy)].
\]
Here $\ell_\loc$ depends on $(x,u)$ and $\ell_\amp$ on $(y,z)$.
The minimum Bayes risk for task $r_j^\star=\inf_M r_j(M)$ optimizes task $j$
over the full record and joint prior, averaging over unknown $z$ for
position and unknown $u$ for amplitude. Appending a fixed report for
the other task embeds each single task measurement in the paired class,
so all three optimizations use the same complete measurements.

With $e_j(M)=r_j(M)-r_j^\star$ denoting the marginal excess risk, the joint optimum
and cost of a common readout at fixed weight $0\le\lambda\le1$ are
\begin{equation}
 \begin{aligned}
 R_\lambda^\star&=\inf_M\{\lambda r_\loc(M)+(1-\lambda)r_\amp(M)\},
 \\
 \G_\lambda&=R_\lambda^\star-\lambda r_\loc^\star-(1-\lambda)r_\amp^\star
 \\
 &=\inf_M\{\lambda e_\loc(M)+(1-\lambda)e_\amp(M)\}.
 \end{aligned}
 \label{qcp-M-eq:gap}
\end{equation}
The cost $\G_\lambda$ is the additional Bayes risk incurred by serving
both tasks from the same record. It is nonnegative and satisfies
$\G_0=\G_1=0$. For the weights $\alpha+\beta>0$,
the corresponding unnormalized cost is $(\alpha+\beta)\G_{\alpha/(\alpha+\beta)}$.


An exact operator characterization of the gap is obtained from
$\Omega=\mathbb E_{u,z}\omega_{u,z}$ and $\Omega_1=\mathbb E_{u,z}[z\omega_{u,z}]$, both averaged under the full joint prior.
The amplitude observable thus incorporates uncertainty in position.
All operators in the following statement act on the average state
support $\mathcal H_\Omega=\operatorname{supp}\Omega$.

\begin{qcpMstructure}[Exact common optimality]
\label{qcp-M-prop:common-optimality}
Assume that the report spaces are compact and the losses are bounded
and continuous in the reports.
For squared amplitude loss, the spectral measurement of the Personick
observable $A$ minimizes the Bayes risk
\cite{qcp-M-personick1971}.
On $\mathcal H_\Omega$, $A$ is the unique bounded self-adjoint solution of
\begin{equation}
 \tfrac12(\Omega A+A\Omega)=\Omega_1,\qquad \|A\|\le L.
 \label{qcp-M-eq:amplitude-personick}
\end{equation}
For every $0<\lambda<1$, the gap $\G_\lambda$ is zero if and only if
there exists a POVM $N$ that is optimal for position and satisfies
\begin{equation}
 [N(B),A]=0,
 \label{qcp-M-eq:common-optimality}
\end{equation}
for every Borel set $B\subseteq\mathcal X$.
\end{qcpMstructure}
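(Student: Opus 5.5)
We first settle existence, uniqueness and optimality of the Personick observable, then prove the two directions of the gap characterization. On the finite-dimensional (or, for the coherent experiment, separable) support $\mathcal H_\Omega$, the map $X\mapsto\tfrac12(\Omega X+X\Omega)$ is injective on bounded operators because $\Omega>0$ there. Existence follows by writing $A=\int_0^\infty e^{-t\Omega/2}\,2\Omega_1\,e^{-t\Omega/2}\,dt$ in the finite-dimensional case. For the norm bound, note that $-L\Omega\le\Omega_1\le L\Omega$, since $z\in[-L,L]$ and $\omega_{u,z}\ge0$. A standard argument, testing the equation against eigenvectors of $A$ at its extreme eigenvalue, then gives $\|A\|\le L$.

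Personick optimality is the usual completion of the square. For any POVM with amplitude marginal $M_2(dy)$, set $Y_1=\int y\,M_2(dy)$ and $Y_2=\int y^2M_2(dy)\ge Y_1^2$. Then
\[
 r_\amp(M)=\mathbb E z^2-2\tr[\Omega_1Y_1]+\tr[\Omega Y_2]\ge \mathbb E z^2-\tr[\Omega A^2]+\tr[\Omega(Y_1-A)^2],
\]
after substituting $\tr[\Omega_1Y_1]=\tr[\Omega\,\tfrac12(AY_1+Y_1A)]$. Because $\|A\|\le L$, the spectral measure $E_A$ of $A$ is admissible on $[-L,L]$ and attains equality. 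Moreover, equality for a general $M$ forces two conditions: $\tr[\Omega(Y_2-Y_1^2)]=0$, and $Y_1=A$ on $\mathcal H_\Omega$. Since $\Omega$ is faithful there, $M_2$ has vanishing variance operator with mean $A$, which forces $M_2=E_A$ on $\mathcal H_\Omega$. This is a Naimark-type rigidity: a POVM whose first and second moments match a projection-valued measure coincides with it. So the amplitude optimum is unique as a measurement on $\mathcal H_\Omega$.

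For the gap, $\G_\lambda=\inf_M\{\lambda e_\loc+(1-\lambda)e_\amp\}$ with $e_j\ge0$ and $0<\lambda<1$.

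\emph{Sufficiency.} If $N$ is position-optimal and commutes with $A$, then it commutes with $E_A$. Define $M(B\times C)=N(B)E_A(C)$, which is positive, extends to a POVM, and has marginals $N$ and $E_A$. Both excess risks vanish, so $\G_\lambda=0$.

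\emph{Necessity.} The infimum in \eqref{qcp-M-eq:gap} is attained. Compactness of $\mathcal X\times[-L,L]$ gives weak-* compactness of the set of POVMs, and boundedness and continuity of the losses make $r_j$ continuous in that topology. Hence $\G_\lambda=0$ yields a minimizer $M$ with $e_\loc(M)=e_\amp(M)=0$. By the rigidity step, the amplitude marginal is $E_A$. A POVM whose marginal is projection-valued commutes with every effect of the other marginal: from $M(B\times C)\le E_A(C)$ one gets $M(B\times C)=E_A(C)M(B\times C)E_A(C)$. Then $N(B)=M(B\times[-L,L])$ commutes with each $E_A(C)$, and hence with $A$, while $N$ is position-optimal. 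Operators on the complement of $\mathcal H_\Omega$ do not affect any risk, so we may compress everything to $\mathcal H_\Omega$.

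The main obstacle is the rigidity step: turning zero amplitude excess into $M_2=E_A$ exactly, not just matching the mean. The plan is to use $Y_2-Y_1^2\ge0$, with equality iff $M_2$ is projection-valued, via the Naimark dilation and Kadison–Schwarz. In infinite dimensions this also needs care with faithfulness and unboundedness of $\Omega^{-1}$, so we would state it on $\mathcal H_\Omega$ with $\Omega$ faithful.
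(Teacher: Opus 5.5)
Your overall structure matches the paper's: the Personick equation, the completed amplitude square, compactness for attainment, commutation read off from zero amplitude excess, and the product $N(B)E_A(C)$ for sufficiency. The existence step, however, has a genuine gap. You only treat the finite-dimensional case, but the proposition is also needed for the coherent experiment. There $\mathcal H_\Omega$ is an infinite-dimensional subspace of Fock space and the eigenvalues of $\Omega$ accumulate at zero, which causes two problems. First, the improper integral need not converge in norm. Second, your eigenvector test for $\|A\|\le L$ does not extend: the top of $\operatorname{spec}A$ need not be an eigenvalue, and for approximate eigenvectors $\psi_n$ the error $\|\Omega\psi_n\|\,\|(A-\alpha)\psi_n\|$ is not controlled relative to $\langle\psi_n,\Omega\psi_n\rangle$, which may tend to zero.

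The paper avoids both problems with the truncated integrals $A_R=2\int_0^Re^{-t\Omega}\Omega_1e^{-t\Omega}dt$. The sandwich $-L\Omega\preceq\Omega_1\preceq L\Omega$ gives $-L(I-e^{-2R\Omega})\preceq A_R\preceq L(I-e^{-2R\Omega})$. Monotonicity with the positive source $\Omega_1+L\Omega$ then gives strong convergence to a self-adjoint $A$ with $\|A\|\le L$, and the remainder $e^{-R\Omega}\Omega_1e^{-R\Omega}$ vanishes in trace norm. Your own argument can also be rescued by compressing to spectral projections $P_N$ of $\Omega$, which commute with $\Omega$. The compressed problem has solution entries $2(\Omega_1)_{ij}/(\omega_i+\omega_j)$, identical to those of the full solution, so $\|P_NAP_N\|\le L$ for every $N$ defines the bounded solution.

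Your formula is also off by a factor of two: $\int_0^\infty e^{-t\Omega/2}\,2\Omega_1\,e^{-t\Omega/2}dt$ has entries $4(\Omega_1)_{ij}/(\omega_i+\omega_j)$. Drop the $2$, or use $e^{-t\Omega}$ with the prefactor $2$. Two further steps are not automatic on Fock space. Weak-* continuity of $r_j$ is one; the paper obtains it from finite-rank truncations whose risk error $2B_j\sqrt{m\eta_N}$ is uniform in $M$. Extending $N(B)E_A(C)$ from rectangles to a POVM is the other; the paper builds it by finite partitions and compactness.

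For necessity, your route is valid but longer than needed. You first prove full rigidity, $M_2=E_A$, and then use $M(B\times C)\le E_A(C)$ to force commutation. The paper instead dilates the paired POVM jointly, $M=J^\dagger EJ$ with commuting report observables, so that $e_\amp(M)=\|YJ-JA\|_\Omega^2$. Zero excess and the dense range of $\Omega^{1/2}$ give $YJ=JA$, and then $[N(B),A]=J^\dagger[E_\loc(B),Y]J=0$ at once. The dilation computation you planned for your ``main obstacle'' produces exactly this intertwining. Rigidity, which the paper records afterwards as $E_Y(D)J=JE_A(D)$, is therefore a by-product rather than a prerequisite. Your detour does yield the rectangle form $M(B\times C)=N(B)E_A(C)$ directly; the paper obtains it in the converse from the cited joint-measurability result.
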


Every amplitude optimal readout has the spectral measure of $A$
as its amplitude marginal. A position POVM is jointly measurable
with this sharp marginal if and only if all its effects commute
with $A$ \cite{qcp-M-heinosaari2008}.
The gap vanishes at one interior task weight if and only if it
vanishes at all.
The \hyperref[qcp-SM-sec:common-optimality]{supplementary proof}
establishes that the optimal risks are attained.

The prior and loss shape the optimal decisions, connecting the
criterion to Bayesian measurement incompatibility
\cite{qcp-M-albarelli2026}.

In the smooth onset models of Section~\ref{qcp-M-sec:joint-rise}, the position parameter shifts a response profile whose shape and width are known. The prior and paired
measurement class are unchanged, while position remains a continuous
target even in finite records. Section~\ref{qcp-M-sec:proof-route}
provides the physical comparison and its resolution units.
Throughout, $\dagger$ denotes an adjoint and $\star$ an optimum.
In comparisons of physical and coherent records, the subscript $n$
denotes a finite physical record; an omitted record size subscript
denotes the coherent experiment. For the ideal step, we write
$\G_\lambda(s;L)=\G_\lambda(s,0;L)$.
Known design arguments are suppressed within a fixed experiment.

\section{Main results}
\label{qcp-M-sec:main-results}
Figures~\ref{qcp-M-fig:spatial-allocation}--\ref{qcp-M-fig:decision-loss} summarize the spatial allocation of information, the two limiting procedures, and the role of decision loss.

\subsection{Ideal steps: conflict and spatial recovery}
\label{qcp-M-sec:ideal-steps}
An ideal step concentrates position information near the boundary and
amplitude information across the suffix. This spatial separation gives
strong signal recovery for fixed priors with finite strength conflict.

\begin{qcpMtheorem}[Conflict and recovery]
\label{qcp-M-thm:main}
Consider the prior and calibrated losses in
\eqref{qcp-M-eq:prior} and \eqref{qcp-M-eq:losses}.
\begin{enumerate}
\renewcommand{\labelenumi}{(\roman{enumi})}
\item At $s=s_{\mathrm c}$, there are constants $c,\delta>0$
such that, for every $0<\lambda<1$, the coherent ideal step experiment satisfies
\[
 \G_\lambda(s_{\mathrm c};L)
 >c\min\{\lambda,1-\lambda\},
 \qquad
 |L-\tfrac12|\le\delta.
\]
The gap is also positive throughout this interval in all sufficiently long physical records.
\item For fixed $0<L<\infty$ and 
$s_n\to\infty$ with $s_n=o(\sqrt n)$, there are complete common
measurements $M_n$, such that
\[
 \max_{j} e_{j,n}(M_n)\longrightarrow0.
\]
Thus $\G_{\lambda,n}(L)\to0$.
Both marginal optimal risks and their improvements over the prior alone remain bounded away from zero.
\end{enumerate}
\end{qcpMtheorem}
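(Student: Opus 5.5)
For part (i), the plan is to reduce the claim to a strict failure of the commutation criterion in Proposition~\ref{qcp-M-prop:common-optimality}, and then to make that failure quantitative. At $s=s_{\mathrm c}$ and $L=\tfrac12$ the coherent ideal step experiment lives on the span of the coherent vectors $\ket{f_{u,h}}$. The Gram kernel $\langle f_{u,h}|f_{u',h'}\rangle$ is explicit from $\|f-g\|^2$. First, we discretize $u$ and $z$ on fine grids and solve the Personick equation \eqref{qcp-M-eq:amplitude-personick} for $A$ in the Gram representation. In parallel, we solve the position problem as a semidefinite program over grid POVMs with clipped loss, and certify $r_\loc^\star$ and $r_\amp^\star$ by matching primal and dual (Holevo--Yuen--Kennedy--Lax) certificates. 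Next we bound $\G_\lambda$ from below using the dual of the joint program. Any Hermitian $Y$ with $Y\le \lambda\,\overline{\ell}_\loc(x,\cdot)+(1-\lambda)\overline{\ell}_\amp(y,\cdot)$ for all $(x,y)$, where the bars denote prior-averaged loss operators, gives $R_\lambda^\star\ge\tr Y$.

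The linear rate $c\min\{\lambda,1-\lambda\}$ is obtained from convexity. Because $\G_\lambda=\inf_M\{\lambda e_\loc+(1-\lambda)e_\amp\}$, it suffices to show a trade-off: if $e_\amp(M)\le\epsilon_0$, then $e_\loc(M)\ge c_0$. The amplitude risk is strongly convex in the amplitude marginal, since $e_\amp(M)\ge\tr[\Omega(A_M-A)^2]$-type bounds hold, where $A_M$ is the first moment operator of the amplitude reports. Hence near-optimality for amplitude forces near-commutation with $A$. The certified numerical value of $\min_N\|[N,A]\|_\Omega$ over position-optimal POVMs must then be bounded away from zero.

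Continuity in $L$ yields the interval $|L-\tfrac12|\le\delta$: all risks are Lipschitz in $L$ after rescaling $z$. The physical-record statement follows from the comparison with the coherent model through quantum local asymptotic normality. Site differences $b_n|\widehat k-k_n|$ match the continuous loss, and the Bayes risks converge to their coherent values, so a positive coherent gap persists for large $n$. I expect the main obstacle to be this certificate: turning a numerical noncommutation into a rigorous lower bound with controlled discretization error in $u$, $z$ and truncation of the coherent span.

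For part (ii), the plan is an explicit spatial splitting. We construct $M_n$ by first measuring a boundary window of sites of width $m_n$ around the candidate onset, with $m_n\tau_n^{-1}n\to\infty$ but $m_n/n\to0$. Concretely, we run a position-optimal (or near-optimal, e.g.\ sequential local) strategy that locates $k_n$ within $O(n/\tau_n)$ sites. We then measure the remaining suffix outside that window with the amplitude-optimal procedure for a product of $\rho(\theta)$ states; asymptotically this is a Gaussian shift estimate of $z$ combined with the uniform prior. The key steps are as follows.

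1. The position part achieves $r_\loc^\star+o(1)$. Position information is carried by $O(n/\tau_n)$ sites, because $\tau_n\asymp s_n^2\to\infty$ and the loss is in resolution units. The marginal optimum is itself attained by boundary-local information, up to $o(1)$.

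2. Discarding a vanishing fraction of the suffix, together with the position uncertainty, changes the amplitude Fisher information by a factor $1+o(1)$. This leaves $e_\amp\to0$. The condition $s_n=o(\sqrt n)$ keeps $\theta\to0$, so the curve is locally linear (using $C^3$ and $q_0>0$) and the local asymptotic normality comparison applies uniformly over the prior.

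3. The nontriviality claims hold. For amplitude, the information per unit of $z$ is $a^2$ times a fixed constant, so $r_\amp^\star$ tends to a fixed Gaussian-uniform Bayes risk in $(0,L^2/3)$. For position, in resolution units the clipped loss keeps $r_\loc^\star$ at a positive constant strictly below the prior-only risk.

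The delicate part of (ii) is step 1: showing that the marginal position optimum cannot exploit suffix amplitude information beyond $o(1)$. I would handle this by conditioning on $z$, where knowing $h$ only rescales the local boundary problem, and bounding the value of exact $z$ knowledge for position by continuity in $h$ at resolution scale.
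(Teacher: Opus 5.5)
\textbf{Gap in part (ii): how the boundary window is found.} Your construction of $M_n$ measures ``a boundary window of width $m_n$ around the candidate onset'' for position and keeps the rest of the suffix for amplitude. Nothing in the plan locates that window without consuming the quantum state it is meant to hold. A position-optimal POVM on the full record is collective and is not confined to any known set of sites, so running it first leaves no untouched suffix.

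The local sequential rule you offer instead is to measure sites in the basis $\{\phi(0),\phi(0)^\perp\}$ until the first click. It is not near-optimal. In resolution units its delay is exponential, so its best risk tends to $2\log2/K=2/9$. By my estimate, the complete covariant optimum of Section J1 uses the symmetric canonical density $p_\chi$ at $\chi=0$ and is roughly $0.13$. Hence $e_\loc$ would stay bounded away from zero.

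The missing idea is the paper's gentle prelocalization instrument:
\begin{enumerate}
\item Test successive blocks of length $m$ against the projector onto $\phi(0)^{\otimes m}$, stop at the first complementary outcome, and retain its label and the post-measurement record.
\item Blocks before the change pass undisturbed. The boundary block is disturbed only through its missed mass $(1-p)^l$, which gives an averaged error at most $5N_b/(2np_-)$. The retained block stays available for a collective position measurement.
\item Given the label, replacing the block amplitude by the known center costs on average at most $M_1aL\sqrt{(m+1)/(nq_0)}/(2\sqrt2)$.
\end{enumerate}
Both errors vanish for $m=\lceil n\tau_n^{-2/3}\rceil$, that is, $m\tau_n/n\to\infty$ and $m/n\to0$. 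Your condition $m_n\tau_n^{-1}n\to\infty$ presumably means the first of these.

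\textbf{What the instrument buys.} This device also disposes of the step you call delicate. After replacement, each label yields a block state depending only on position, tensored with a suffix state depending only on $z$, under a factorized conditional prior. This reference experiment therefore has zero gap. The instrument followed by replacement is a forward channel, and discarding the label is a reverse channel. The two-way comparison lemma of Section A1 then transfers the joint optimum and both marginal optima at once. No separate argument that the position optimum cannot exploit suffix information is needed.

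It also replaces your Fisher-information argument, which would not suffice by itself. Under the fixed uniform prior the amplitude information is $O(1)$, so a factor $1+o(1)$ in Fisher information does not control the Bayes risk. A risk comparison, such as the trace-distance replacement bound above, is required.

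\textbf{Part (i).} Your architecture matches the paper's: a feasible joint dual, complete POVMs bounding the marginals, continuity in $L$, and a two-way coherent comparison for long records. However, drop the commutator route to the linear rate. A minimum of $\|[N,A]\|_\Omega$ over the uncharacterized set of position-optimal POVMs is not something primal and dual certificates deliver. Exact noncommutation also gives no quantitative bound without a stability estimate. Nonnegative excesses give $\G_\lambda\ge2\min\{\lambda,1-\lambda\}\G_{1/2}$ directly.

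The discretization must also make the finite joint problem a genuine lower bound. The paper achieves this by purifying the within-cell offset, so that $R^\star_{\rm pur}\le R^\star_{\rm orig}$, and by charging quadrature and compression fees.
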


The offset prior remains fixed in calibrated units, while the known
center sets the position resolution. Conflict and recovery hold under
the same prior and loss rule at each fixed prior width in the interval
in part (i).

For $u\le v$, the coherent fields satisfy
\begin{equation}
 \begin{aligned}
 \|f_{u,h}-f_{v,g}\|_2^2
 &=h^2(v-u)\\
 &\quad+(h-g)^2(1-v).
 \end{aligned}
 \label{qcp-M-eq:geometry}
\end{equation}
The first term comes from the interval between the candidate boundaries, and the second from their shared suffix.
A complete prelocalization instrument uses the observed coarse label to retain
a quantum boundary block for the position decision and the remaining
suffix for amplitude (Figure~\ref{qcp-M-fig:spatial-allocation}).
The block size balances coarse localization error against the
comparison error from replacing the block amplitude by the known center.
The risk comparison covers all localization outcomes and the full
position prior.

Supplementary Material Sections T1, O1 and D1 establish conflict
and its robustness in prior width. Sections A2 and A3 give the
recovery proof and the marginal risk bounds, respectively.

\begin{figure}[t]
\centering
\includegraphics[width=\linewidth]{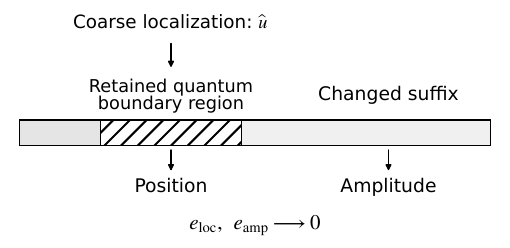}
\caption[Spatial allocation and recovery]{Spatial allocation and recovery.
The observed coarse localization label $\widehat u$ guides the allocation:
the retained quantum boundary region serves position estimation, and
the remaining changed suffix serves amplitude estimation.
For nondecreasing $C^2$ profiles,
both excess risks vanish in the coherent experiment as $s\to\infty$
and $w\to0$.
Risks are averaged over all localization outcomes.}
\label{qcp-M-fig:spatial-allocation}
\end{figure}

\subsection{Shrinking rises: leading recovery and inverse square cost}
\label{qcp-M-sec:joint-rise}
For a smooth onset, the overlap between location and amplitude
information is controlled by the rise width. Let the known profile
$F\in C^2(\mathbb R)$ be nondecreasing, equal to zero for $x\le0$ and to one for $x\ge1$. The coherent field is $f_{u,h}^{(w)}(t)=hF((t-u)/w)$,
evaluated at record coordinate $t\in[0,1]$. Let $\chi=s^2w$ be the relative rise width.
The coherent state and location resolution are
\begin{equation}
 \begin{aligned}
 \sigma_{s,w}(u,z)
   &=\proj{f_{u,h}^{(w)}}_{\mathrm{coh}},\\
 \delta_{s,w}
   &=\frac{\sqrt{1+\chi}}{s^2}.
 \end{aligned}
 \label{qcp-M-eq:rise-family}
\end{equation}
$\chi$ compares the rise width with the ideal step resolution. The location scale $\delta_{s,w}$ changes from $s^{-2}$ to $\sqrt{w}/s$ as the rise becomes resolved. The signal center $s$, rise width $w$ and profile $F$ are known.
The location loss is defined by
\begin{equation}
 \ell_\loc^{(w)}(x,u)=2\min\{1,|x-u|/(K\delta_{s,w})\}.
 \label{qcp-M-eq:rise-loss}
\end{equation}
The amplitude loss remains $(y-z)^2$. At $w=0$, the convention
$f_{u,h}^{(0)}=f_{u,h}$ recovers the coherent ideal step location loss.
For a smooth rise, position remains a continuous target in both the
coherent experiment and finite records.

\begin{qcpMtheorem}[Leading recovery and inverse square cost]
\label{qcp-M-thm:joint-scale}
\label{qcp-M-thm:uniform-first-cost}
\leavevmode\par
\begin{enumerate}
\renewcommand{\labelenumi}{(\roman{enumi})}
\item As $s\to\infty$ and $w\to0$, there are
complete common readouts whose two marginal excess risks vanish.
Consequently, $\G_\lambda(s,w)\to0$ for $0\le\lambda\le1$, with arbitrary $\chi$. Each marginal optimal risk and its
reduction from the prior only risk stay bounded away from zero.

\item Take $F(x)=10x^3-15x^4+6x^5$ on $[0,1]$ and $L=1/2$. There exist constants $0<C_-\le C_+<\infty$ and
$s_0>0$ such that, for all $s\ge s_0$ and $\chi\in[1/2,2]$,
\begin{equation}
 C_-\le s^2\G_\lambda(s,\chi/s^2)\le C_+.
 \label{qcp-M-eq:uniform-first-cost}
\end{equation}
\end{enumerate}
\end{qcpMtheorem}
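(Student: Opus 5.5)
Part (i) follows the allocation strategy of Theorem~\ref{qcp-M-thm:main}(ii), now in the coherent smooth-rise experiment. Part (ii) splits into a matching upper and lower bound on $s^2\G_\lambda$, obtained by reducing to a resolved local Gaussian shift model.

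\emph{Part (i).} We build a common readout in three stages.
\begin{itemize}
\item[\textbf{1.}] We perform a coarse prelocalization. A fraction of the field on a coarse grid is spent to produce a label $\widehat u$ with error $o(1)$ but much larger than $\delta_{s,w}$. This is possible because $s\to\infty$ makes the boundary visible on a scale $\gg\delta_{s,w}$.
\item[\textbf{2.}] Conditional on $\widehat u$, we retain a boundary window $W=[\widehat u-\eta,\widehat u+\eta+w]$ with $\eta\to0$ and $\eta\gg\delta_{s,w}$, and use it for position. We use the complement suffix $[\widehat u+\eta+w,1]$ for amplitude.
\item[\textbf{3.}] Because coherent states factorize over disjoint subsets of $L^2[0,1]$, the two blocks are independent given $\widehat u$.
\end{itemize}

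For the amplitude block, the suffix has length $1-u-O(\eta+w)$. By the fidelity formula, its Personick risk differs from $r^\star_\amp$ by $O(\eta+w)$ plus the coarse-localization contribution. This uses continuity of the Bayes posterior-mean risk in the suffix length; the marginal optimum is itself a suffix-length-averaged Gaussian shift risk.

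For the position block, we replace the unknown $h$ in the window by the known center $s$. The cost in trace distance is at most
\[
\sqrt{1-e^{-a^2z^2(2\eta+w)}}=O\bigl(\sqrt{\eta+w}\bigr),
\]
so the window behaves as a known-amplitude shift experiment. In the natural parametrization by $(u-\widehat u)/\delta_{s,w}$, this converges to the full-record position optimum.

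Boundedness away from zero of the marginal risks follows from the fixed $L$ and the nondegenerate Gaussian limit in units of $\delta_{s,w}$.

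\emph{Part (ii), upper bound.} We refine the construction with $w=\chi/s^2$. For the quintic profile, the position score is $\partial_u f=-(h/w)F'((t-u)/w)$, supported on $[u,u+w]$. The amplitude score $F((t-u)/w)$ overlaps with it only on this length-$w$ region. We take the readout that measures position optimally from the window, using the local Gaussian model with the rise resolved at scale $\delta\sim s^{-2}$. Amplitude is then estimated from the suffix, with the window contributing only through its projection orthogonal to the position score. The residual amplitude loss is the discarded information. It is of order (window length)$\times$(signal) divided by $s^2$ in calibrated units, which is $O(w)=O(s^{-2})$. Controlling the Gaussian approximation errors at order $s^{-2}$ uniformly in $\chi\in[1/2,2]$ is needed here; we plan to use quantum local asymptotic normality bounds with explicit constants.

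\emph{Part (ii), lower bound (the main obstacle).} We use Proposition~\ref{qcp-M-prop:common-optimality} quantitatively. In the resolved local Gaussian model we compute the optimal position POVM for clipped absolute loss. As noted in the introduction, it contains a cubic correction that fails to commute with the Personick observable $A$. We then quantify noncommutativity: a Holevo/Nagaoka-type bound for the weighted incompatible excess. Specifically, $\lambda e_\loc+(1-\lambda)e_\amp$ is bounded below by $c\min\{\lambda,1-\lambda\}$ times the squared norm of the commutator component, and this component scales as $w\cdot s^0$ in calibrated units, i.e.\ $\asymp s^{-2}$ for $\chi\in[1/2,2]$.

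The hard step is proving that this local cubic-correction obstruction survives averaging over the full position prior and the boundary effects. We must show the limiting Bayes problem on the rescaled scale has a strictly positive gap, uniformly for $\chi$ in the compact interval. We would attempt this by continuity and compactness in $\chi$, combined with a certified evaluation at the $\chi$ endpoints.
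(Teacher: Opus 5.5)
Your part (i) follows the paper's route in Section J. A pilot fraction read by homodyne supplies a coarse cell label, the retained field is displaced by the known reference, the window amplitude is frozen at the known center at trace cost $O(\sqrt{b+w})$, and coherent factorization leaves a location field tensored with a constant suffix mode. Two points need completing.

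First, a forward construction alone does not identify the original marginal optima. The paper also builds a reverse channel: it embeds the location field in the window and the amplitude mode in the suffix, puts vacuum elsewhere, and restores the known reference. The labelled product experiment, which has zero gap at every label, then brackets all three optima through the risk-transfer lemma of Section A1.

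Second, the limiting location experiment is not a nondegenerate Gaussian. It is the family $\mathsf P_\chi$ with kernel $k_\chi$, which tends to the step kernel $e^{-|v|/2}$ as $\chi\to0$ and is Gaussian only as $\chi\to\infty$. Informativeness along paths with arbitrary $\chi$ therefore needs spectral continuity on $[0,\infty]$ and the covariant-risk lemma, which give $0<r_F(\chi)<2$ uniformly.

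The genuine gap is the lower bound in part (ii): the mechanism you invoke cannot produce $s^{-2}$. The cubic correction $d^3C_3$ belongs to the resolved regime of the cubic width theorem (fixed $w$, then $s\to\infty$). Its linear part is exactly compatible, since the field $e^{2idpA}$ has zero amplitude excess, and the incompatible remainder costs order $d^6\propto w^3$. At $w=\chi/s^2$ that is $O(s^{-6})$, so your claim that the squared commutator component scales like $w$ is off by a factor $w^2$.

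Moreover, for $\chi\in[1/2,2]$ the rise is not resolved on the scale $\delta\asymp s^{-2}$. Neither the two-mode Gaussian fiber nor quantum local asymptotic normality describes the local experiment; it is the non-regular packet experiment with spectrum $m_{\chi,\tau}$ and $|p|^{-8}$ tails.

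The paper's obstruction is a first-order linear response:
\begin{itemize}
\item The amplitude enters the packet through its intensity $\tau=(1+a\epsilon z)^2$, $\epsilon=1/s$, with score $q=a\partial_\tau\log m_\tau|_{\tau=1}$.
\item Graph coercivity bounds, by a multiple of $\sqrt{e_\loc}$, how far any readout fails to commute with a momentum shift $U_\zeta$ up to a report phase.
\item Under that shift, the mean of the amplitude predictor changes by $-2\epsilon Q_\zeta(\chi)\overline{\mathcal A}_L+o(\epsilon)$.
\item The rigidity--response lemma of Section C6 then gives $\G_\lambda\gtrsim\epsilon^2$.
\item Uniformity on $[1/2,2]$ comes analytically from $Q_\zeta(\chi)\ge\underline q/4$, obtained from the quintic spectral tail.
\end{itemize}
Certified evaluations at the endpoints of $[1/2,2]$ could not cover all $s\ge s_0$ in any case.

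The return to the original record also fails at the precision you propose. Freezing the window amplitude at the center costs order $\sqrt{\eta+w}\gg\epsilon$ in trace distance, and the resulting reference has zero gap. The $s^{-2}$ cost is generated precisely by the amplitude dependence of the window that this step discards. Trace-distance transfer moves the gap linearly in the channel error, so it would need errors $o(s^{-2})$, and local asymptotic normality bounds do not supply this.

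The paper instead centers each excess at its own model's optimum (Section C4). Because the excesses are squares, root stability $|\sqrt{\mathfrak p^\star(T)}-\sqrt{\mathfrak p^\star(S)}|\le d_\delta(T-S)$ together with a common amplitude predictor transfers the gap at $o(\epsilon^2)$ from synthesis errors of only $o(\epsilon)$. This is done after the endpoints, the orthogonal tangent and the scalar phase have been removed.

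Your upper-bound heuristic does match the paper's rule $M_0$: a location-optimal readout with amplitude taken from the suffix, giving $e_\amp\le C_\amp\epsilon^2+o(\epsilon^2)$. But it too needs this centered comparison, not a resolved Gaussian model.
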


Coarse localization guides the spatial allocation of the retained field. A shrinking quantum boundary region serves position estimation, and the remaining suffix serves amplitude estimation. Supplementary Material
Section J compares this labeled
product experiment with the original record in both directions,
including late changes, and tracks each marginal optimum separately.
At inverse square precision, optimal localization constrains a mode
used by the amplitude decision. Its linear response gives the residual
cost in the gray region $\chi\in[1/2,2]$ of
Figure~\ref{qcp-M-fig:limit-paths}.
Matching lower and upper bounds establish the inverse square law under the full prior. Comparisons accurate to $o(s^{-2})$ for the joint optimum and both marginal optima preserve the leading residual cost.

\begin{figure}[!b]
\centering
\includegraphics[width=\linewidth]{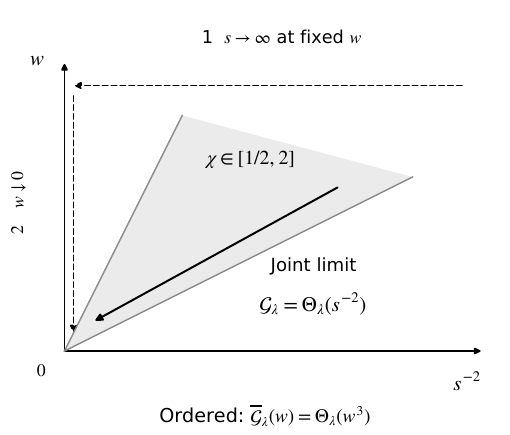}
\caption[Two distinct limiting paths]{Schematic limiting paths for the coherent quintic experiment with $L=1/2$ and fixed $0<\lambda<1$.
In the gray region $\chi\in[1/2,2]$,
$\G_\lambda=\Theta_\lambda(s^{-2})$ uniformly in $\chi$.
The dashed ordered path first takes $s\to\infty$ at fixed $w$,
then $w\downarrow0$, giving
$\overline{\G}_\lambda(w)=\Theta_\lambda(w^3)$.}
\label{qcp-M-fig:limit-paths}
\end{figure}

\subsection{Resolved rises}
\label{qcp-M-sec:resolved-rises}

\begin{qcpMtheorem}[Cubic width cost]
\label{qcp-M-thm:cubic-width}
For the quintic profile and $L=1/2$, $0<\lambda<1$. The strong signal limit exists.
As $w\downarrow0$,
\begin{equation}
 \overline{\G}_\lambda(w)
 :=\lim_{s\to\infty}\G_\lambda(s,w)
 =\Theta_\lambda(w^3).
 \label{qcp-M-eq:main-cubic-width}
\end{equation}
\end{qcpMtheorem}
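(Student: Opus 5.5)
We will prove Theorem~\ref{qcp-M-thm:cubic-width} in two stages. First we identify the strong signal limit at fixed $w$ with a resolved local Gaussian experiment. Then we expand the gap of that limiting experiment in the small width $w$.

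\emph{Stage one: the limit.} Fix $w>0$. As $s\to\infty$ we have $\chi=s^2w\to\infty$, so $\delta_{s,w}\sim\sqrt w/s$, and the location posterior concentrates on scale $\delta_{s,w}$ around $u$. We reparametrize $u=u_0+\delta_{s,w}t$. Expanding $f^{(w)}_{u,h}$ to first order in $t$ gives a coherent displacement along the rise mode $g_w=-(s/w)F'((\cdot-u_0)/w)$, with $\|\delta_{s,w} g_w\|^2\to\int F'^2$ after normalization. The amplitude direction is $a\,\mathbf 1\cdot F((\cdot-u_0)/w)$. By quantum local asymptotic normality, as in the Gaussian equivalence cited in Section~\ref{qcp-M-sec:model}, the experiment localized in $u$ converges in Le~Cam distance to a two-mode Gaussian shift model. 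In that model the location mode and the amplitude mode have a nonzero overlap, proportional to $\int_0^1 F'(x)F(x)\,dx\cdot w$ after normalization. The clipped location loss becomes an absolute loss $2|t|/K$, truncated at distance $K$, and the $u$ prior becomes flat at the local scale. Because the limiting local risks are continuous in the experiment and the losses are bounded, the joint optimum and both marginal optima converge. This uses the same comparison argument as in Theorem~\ref{qcp-M-thm:joint-scale}(i), now carried out at fixed $w$. The existence of $\overline{\G}_\lambda(w)$ follows, and it equals the gap of the limiting Gaussian model averaged over $u_0\sim\Unif[0,1]$. The boundary contribution from $u_0>1-w$ has measure $O(w)$, and we must check that it contributes only $O(w^3)$ or less.

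\emph{Stage two: upper bound.} The limiting model is a covariant Gaussian shift. By the remark in the introduction, with quadratic position loss the readout based on amplitude prediction (a heterodyne/homodyne pair that first projects out the overlap) attains both marginal optima. With clipped absolute loss, the optimal position readout is that linear one plus a cubic correction in the quadrature. We construct the common readout by using the quadratic-loss optimum, which commutes with the Personick observable $A$, so $e_\amp=0$. Its excess position risk is then the squared size of the dropped cubic correction. That correction scales with the overlap coefficient, which is $O(w)$ in the mode geometry, times an additional factor $w^{1/2}$ from the location/amplitude coupling. Squaring gives $O(w^3)$, uniformly for $\lambda$ in compact subsets, which yields the $\Theta_\lambda$ dependence.

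\emph{Lower bound, the main obstacle.} Using Proposition~\ref{qcp-M-prop:common-optimality} we show the gap is strictly positive. We will then quantify it by a second-order perturbation argument. We write any paired POVM as close to optimal for both tasks and bound $\lambda e_\loc+(1-\lambda)e_\amp$ below by a quadratic form in the commutator $[N,A]$. We evaluate this form in the Gaussian model, where the optimal position POVM is explicit, and show that its leading coefficient is $c\,w^3$ with $c>0$ given by an explicit quintic integral. The hard part is making this quadratic lower bound uniform. We must exclude measurements that trade a small amplitude loss against a position gain at a lower order. For this we will use strong convexity of the squared amplitude loss around $A$, and a local strict-convexity (margin) property of the absolute-loss Bayes decision near the posterior median. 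Finally we must verify that the $u_0$ averaging and the edge region do not cancel the $w^3$ coefficient.
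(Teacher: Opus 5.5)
Your first stage and your upper bound follow the paper's route. At fixed $w$, the strong-signal limit is the $u$-integral of the complete covariant fiber gaps for $\ket{\kappa v+dz}\otimes\ket{cz}$ (Note S). The quadratic-loss readout $W_A=e^{2idpA}$ commutes with $A$, so it has $e_\amp=0$, and its location regret is $O(d^6)$ (Note U).

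\textbf{The lower bound.} This is where the proposal has a genuine gap. You never identify what makes the $w^3$ coefficient positive, and the computation you sketch would not produce it.

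To order $d^6$, the location optimum improves on $W_A$ through a correction $e^{id^3H_*}$, with $H_*=-\chi_bC_3/3$. Here $C_3$ is the Sylvester response to the cubic posterior source $\mathcal R_3=2\int\{(z-A)^3,\rho_z\}\nu(dz)$, and $\chi_b$ is the graph-Laplacian response to $f_b$. The gap is then bounded below by $\inf_H\{\lambda\mathcal D_{b,\Omega}(H-H_*)+(1-\lambda)\mathcal E_A(H)\}$. This is positive only if $V_b=\operatorname{Var}_\mu\chi_b>0$ and $J_3(c)=\|[A,C_3]\|_\Omega^2>0$.

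The second condition is a property of the non-Gaussian uniform prior, not of an explicit quintic integral. The profile enters only through $\kappa_F$ and $d^2=(7\log2/10)w$. In Note B the witness is driven by $c_{01}=8(m_4-3m_2^2)=-1/15$, and with Gaussian moments the leading entries of $C_3$ cancel.

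Moreover, $J_3>0$ is established only at small residual sensitivity, $J_3(c)=128c^4/1575^2+O(c^6)$. So the lower bound has to come from positions near the end of the record, $u=1-c^2/a^2-\vartheta w$ with $c\in[c_-,c_+]$ small, integrated with Jacobian $2c/a^2$ and Fatou's lemma. Since fiber gaps are nonnegative, cancellation under $u$-averaging is not the issue. What is needed is a positive-measure set of $u$ with $\liminf g_{\lambda,w}(u)/w^3>0$.

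\textbf{The proposed tools.} They do not work as stated.
\begin{itemize}
\item There is no explicit location-optimal POVM for clipped absolute loss in this model.
\item Proposition~1 would require characterizing all such POVMs, and it yields positivity without any rate.
\item A margin estimate at a posterior median presupposes a fixed outcome distribution, whereas here the measurement itself varies.
\end{itemize}
The paper instead expands the location functional exactly around $W_0=e^{2idpA}$. It then uses Gaussian graph coercivity of the $\operatorname{sinc}^2$ Dirichlet form and weighted dilation compactness. Together these show that every rule within $o(d^6)$ of the joint optimum has the form $VW_0$ with $(V-J)/d^3\rightharpoonup JiH+N$. For such rules, the location excess is at least $\mathcal D_{b,\Omega}(H-H_*)$ and the amplitude excess at least $\mathcal E_A(H)$, at order $d^6$.

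\textbf{Scale bookkeeping.} Your accounting of the powers of $w$ is inconsistent. The normalized overlap is $d=-a\sqrt w/(2\kappa_F)=O(w^{1/2})$, not $O(w)$. The $w^3$ arises as $d^6$: Personick cancellation removes the first two anti-Hermitian orders of $e^{2itA}\mathsf P(t)$, leaving an $O(t^3)$ anti-Hermitian part and hence $O(t^6)$ regret. With an $O(w)$ overlap, this mechanism would give $w^6$.

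For the upper bound, you also need this regret uniformly in $c\in[0,a]$, including truncated rises. The paper obtains that from the relative bound $\|\Omega^{-1/2}A\Omega^{1/2}\|\le R_A$.
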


The loss dependence can be seen in the resolved local experiment
(Figure~\ref{qcp-M-fig:decision-loss}).
At $u<1$, express the position displacement $v$ in units
of $\sqrt w/s$. The local state is
\begin{equation}
 \ket{\kappa(u)v+d(u)z}\otimes\ket{c(u)z},
 \qquad \kappa(u),c(u)>0.
 \label{qcp-M-eq:main-resolved-fiber}
\end{equation}
The first mode has position sensitivity $\kappa$ and signed shared
amplitude displacement $d$; the second has residual amplitude
sensitivity $c$.

For quadratic local position loss, the residual mode provides an amplitude estimate that can be used to correct the first mode displacement. Let $A$ be the optimal amplitude observable for $\ket{cz}$. Measuring the first mode quadrature and $A$ gives outcomes $X$ and $Y$, with  $\widehat v=(X-dY)/\kappa$ and $\widehat z=Y$. This readout attains both marginal optima within the complete covariant class. 

For clipped absolute position loss, the optimal position readout
requires a leading cubic correction to linear prediction,
involving $d^3C_3$.
For fixed sufficiently small $c>0$, $[C_3,A]\ne0$, and the squared
correction gives a local cost of order $d^6$ as $d\to0$.
For complete quintic rises, $d^2=(7\log2/10)w$.
A complete upper readout controls truncated rises.
Comparison with the original coherent experiment and averaging
the local bounds over the full position prior then yield the
cubic width law.

Supplementary Material Section S gives the local model, the limit
at fixed width and the quadratic comparison.
Sections P and B establish the cubic response and positive local
lower bound; Section U proves the full prior width law.

\begin{figure*}[!t]
\centering
\includegraphics[width=\linewidth]{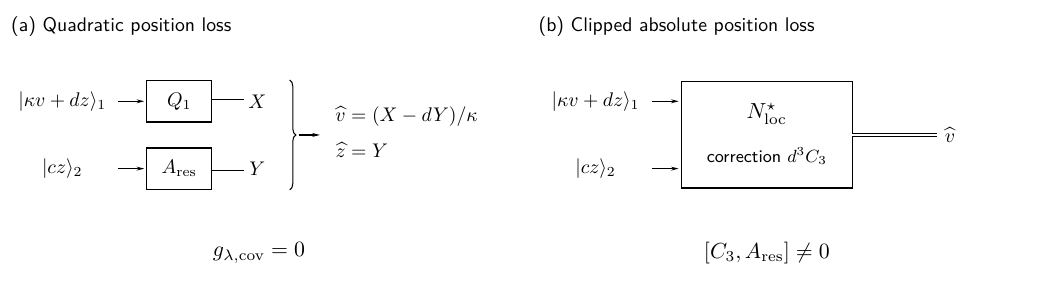}
\caption[Position loss and local readouts]{Position loss selects the local quantum readout.
The panels compare readouts for the same two-mode family, with complete
covariant measurements and quadratic amplitude loss.
(a) For quadratic position loss, measuring $Q_1$ and the optimal
residual amplitude observable $A_{\rm res}=A$ gives outcomes $X,Y$.
The reports $\widehat v=(X-dY)/\kappa$ and $\widehat z=Y$
attain both marginal optima within this class. (b) For the quintic local model with $L=1/2$, the position-optimal
POVM $N_\loc^\star$ for clipped absolute loss
requires a leading correction involving $d^3C_3$.
For sufficiently small $c>0$,
$[C_3,A_{\rm res}]\ne0$ produces a local cost of order $d^6$ as $d\to0$.
For complete quintic rises, $d^2=(7\log2/10)w$.
Averaging the transferred local bounds over the full position prior
yields the cubic width law.}
\label{qcp-M-fig:decision-loss}
\end{figure*}

\section{Numerical certification and transfer to physical records}
\label{qcp-M-sec:proof-route}
\label{qcp-M-sec:certified-evidence}
\subsection{Numerical certificates of conflict}
Two certificates establish finite strength conflict in the coherent
ideal step experiment and a physical phase qubit record of $16$ qubits. Each compares
the joint optimum with both marginal optima under unknown
position and amplitude.

At $s=s_{\mathrm c}$ and $L=1/2$, the coherent ideal step certificate for
equal weights yields
\begin{equation}
 1\times10^{-3}<\G_{1/2}<1.83\times10^{-3}.
 \label{qcp-M-eq:critical-gap-certificate}
\end{equation}
Continuity in prior width extends this bound to the conflict interval
in Theorem~\ref{qcp-M-thm:main}. For the original $n=16$ phase
qubit record at the same center, certification over the entire known
interval $L\in[7/16,9/16]$ gives
\begin{equation}
 4.17\times10^{-4}<\G_{1/2,16}(L)<2.242\times10^{-3}.
 \label{qcp-M-eq:main-finite-width-anchor}
\end{equation}
Relative to the equally weighted average of the actual marginal optima,
the gap lies between $0.58\%$ and $2.76\%$. Both tasks remain
informative throughout this interval, over which calibration and raw
loss units are held fixed.
Figure~\ref{qcp-M-fig:finite-certificate} displays deterministic certified
enclosures over eight prior width cells and tighter bounds at nine nodes.

\begin{figure}[t]
\centering
\includegraphics[width=\linewidth]{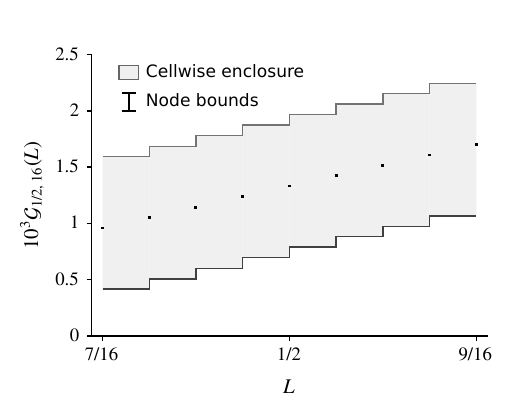}
\caption[Conflict in a finite physical record]{Certified conflict for
an ideal step in a phase record of $16$ qubits at $s=s_{\mathrm c}$
and $\lambda=1/2$.
The offset prior half width $L$ varies at fixed calibration
and raw loss units.
The gray band encloses $10^3\G_{1/2,16}(L)$ over eight width cells;
black capped segments give tighter bounds at nine nodes.
Both are deterministic.
The cost is between $0.58\%$ and $2.76\%$ of $(r_{\loc,16}^\star+r_{\amp,16}^\star)/2$
throughout $L\in[7/16,9/16]$.
Tables and error budgets appear in Supplementary Material Section~V.}
\label{qcp-M-fig:finite-certificate}
\label{qcp-V-fig:certificate-band}
\end{figure}

Complete POVMs and feasible decision duals give upper and lower
risk bounds, respectively \cite{qcp-M-yuen1975}.
Continuous bounds in prior width cover the intervals between nodes.
Supplementary Material Sections T1, O1, D1 and V give the constructions
and error budgets.

\subsection{Transfer to physical records}
Comparisons with physical records resolve each conclusion at its risk
scale. Table~\ref{qcp-M-tab:physical-paths} lists sufficient paths for recovery
and inverse square cost in physical records.
For smooth rises, the site parameters are
$\theta_i=(s_n+az)F((i/n-u)/w_n)/\sqrt{nq_0}$.
The target $u$ remains continuous, and reports lie in $\mathcal X_n$.
With $\delta_n=\sqrt{1+w_n\tau_n(s_n)}/\tau_n(s_n)$,
the physical position loss is
\begin{equation}
 \ell_{\loc,n}^{(w_n)}(x,u)
 =2\min\{1,|x-u|/(K\delta_n)\}.
 \label{qcp-M-eq:physical-rise-loss}
\end{equation}
The amplitude loss remains $(y-z)^2$.
Comparisons in both directions transfer the joint optimum and both
marginal optima.
Leading recovery uses a vanishing risk error, the inverse square cost
uses $o(s_n^{-2})$ precision. Supplementary Material Sections A2 and C11 establish these paths.

\begin{table*}[t]\centering\small
\begin{tabular}{p{.28\linewidth}p{.25\linewidth}p{.38\linewidth}}
\toprule
Conclusion & Sufficient signal path & Known design and comparison\\
\midrule
Ideal step recovery & $s_n\to\infty$, $s_n=o(\sqrt n)$ &
Fixed finite $L$; direct construction with a retained boundary block.\\[3pt]
Leading recovery for shrinking rises & $s_n\to\infty$, $s_n=o(n^{1/4})$ &
Known $F$, $w_n\to0$; risk errors $o(1)$.\\[3pt]
Inverse square crossover cost & $s_n\to\infty$, $s_n=o(n^{1/8})$ &
Quintic, $L=1/2$, $\chi_n\in[1/2,2]$, $w_n=\chi_n/s_n^2$;
risk errors $o(s_n^{-2})$.\\
\bottomrule
\end{tabular}
\caption{Sufficient paths to finite product records. The comparisons
preserve the unknown parameters, calibrated losses and complete
readouts to the precision required by each conclusion.}
\label{qcp-M-tab:physical-paths}
\end{table*}

\section{Discussion}\label{qcp-M-app:limit-scope}

The common readout cost quantifies how onset geometry and decision loss shape simultaneous position and amplitude estimation. The joint and marginal optimizations use the full record and the same joint prior, with both parameters unknown. For ideal steps, we certify a positive cost at a reference signal strength and recover both marginal optima at strong signal under the same prior. These results show how finite signal conflict and asymptotic compatibility arise within the same estimation task.

Recovery exploits the spatial distribution of information in the record. Coarse localization preserves a quantum boundary region for the position decision and leaves the remaining suffix for amplitude estimation. Known nondecreasing $C^2$ profiles also admit leading recovery in the coherent experiment. This holds along every path $s\to\infty$, $w\to0$, with arbitrary variation of the relative width $\chi=s^2w$. Both marginal optimal risks and their improvements over the prior alone remain bounded away from zero. The common readout, therefore approaches two informative marginal optima at their respective loss scales.
The residual laws quantify the coupling that remains beyond leading recovery. For the coherent quintic experiment with $L=1/2$, two limiting procedures give distinct scaling laws. The cost is $\Theta_\lambda(s^{-2})$. Taking $s\to\infty$ at fixed $w$ before letting $w\downarrow0$ gives $\overline{\G}_\lambda(w)=\Theta_\lambda(w^3)$. These laws describe how the relation between rise width and position resolution controls the decay of the common readout cost.
The resolved local experiment explains the role of the loss function. With quadratic position and amplitude losses, amplitude prediction from the residual mode supplies a linear correction for position. The resulting complete readout attains both marginal optima within the local covariant class. With clipped absolute position loss, a cubic correction produces a local cost of order $d^6$ at each fixed sufficiently small positive residual sensitivity. Transferring the local bounds to the full position prior yields the cubic width law. The loss thus determines how shared information enters the optimal decisions and their residual incompatibility.

Proposition~\ref{qcp-M-prop:common-optimality} gives an exact criterion:
compatibility holds if and only if some optimal position POVM commutes
with the optimal amplitude observable. Leading recovery describes the approach to the marginal optima, and the residual laws quantify its rate. Certified bounds establish positive conflict in finite phase records (Figure~\ref{qcp-M-fig:finite-certificate}). Comparisons at the required risk precision transfer recovery and the inverse square crossover cost to physical records along Table~\ref{qcp-M-tab:physical-paths}. The spatial allocation of information and the decisions selected by the loss connect quantum change point metrology with Bayesian measurement compatibility.

Future work can examine how the onset profile and position loss determine the first correction to linear amplitude prediction. Comparing these corrections across models would clarify which features set the residual exponent and which scaling laws persist across different designs. Treating rise width or profile shape as additional unknowns would reveal how uncertainty in the onset affects spatial recovery. Finite record implementations could also assess the tradeoff between estimation precision and the quantum memory required to retain the boundary region.

\begin{samepage}
\section*{Author contributions}

Xu Chen conceived the initial idea, developed the physical model, and carried out the formal analysis. Xue Ma performed the numerical computations and prepared the data visualizations. Both authors contributed writing and revision of the manuscript. LLM Models were used for language polishing and cross checking numerical results. Both authors reviewed the  outputs and take full responsibility for the scientific content and the final manuscript.

\section*{Acknowledgements}

This work was supported by Hebei University of Science and Technology under
Starting Grant No.~81/1181298.

\end{samepage}

\FloatBarrier

\clearpage
\endgroup

\newcounter{qcpMainReferenceCount}
\setcounter{qcpMainReferenceCount}{\value{NAT@ctr}}
\clearpage
\onecolumn
\newgeometry{left=1in,top=1in,textwidth=6.5in,textheight=9in}
\makeatletter
\def\toclevel@section{2}
\def\toclevel@subsection{3}
\def\toclevel@subsubsection{4}
\makeatother
\pdfbookmark[0]{Supplementary Material}{qcp-SM-bookmark}
\qcpvolumetitle{Supplementary Material}
\label{qcp-SM-start}
This supplement supplies the proofs and numerical error budgets for the
main article. The following map identifies the argument supporting each
main conclusion.
\begin{center}\small
\begin{tabular}{p{.31\linewidth}p{.61\linewidth}}
\toprule
Main-text conclusion & Supporting sections\\
\midrule
Proposition~\ref{qcp-M-prop:common-optimality} &
The proof immediately below: Personick optimum, attainment and commuting effects.\\[3pt]
Theorem~\ref{qcp-M-thm:main}(i), coherent certificate &
T1 and O1: complete upper rules and feasible duals; D1: robustness in prior width and physical return.\\[3pt]
Theorem~\ref{qcp-M-thm:main}(ii) &
A1--A3: risk comparison, retained boundary construction and informative marginals.\\[3pt]
Theorem~\ref{qcp-M-thm:joint-scale}(i) &
J1: two-way comparison with a labeled product experiment, including late changes.\\[3pt]
Theorem~\ref{qcp-M-thm:joint-scale}(ii), Table~\ref{qcp-M-tab:physical-paths} &
C1--C10: uniform inverse square bounds; C11: comparison at the required physical risk scale.\\[3pt]
Theorem~\ref{qcp-M-thm:cubic-width}, Figure~\ref{qcp-M-fig:decision-loss} &
S: fixed width limit and quadratic compatibility; P and B: cubic response and positive local lower bound; U: complete upper rule and full prior integration.\\[3pt]
Finite certificate, Figure~\ref{qcp-M-fig:finite-certificate} &
V: the 16-qubit experiment, node risks and continuous width-cell enclosures.\\
\bottomrule
\end{tabular}
\end{center}

\paragraph{Conventions.}
The priors, calibrated losses, reports and state curve are those of the
main text. Trace distance is half the trace norm and fidelity is squared.
For a synthesis $T$, let $\Omega=TT^\dagger$ and
$\|D\|_\Omega=\|D\Omega^{1/2}\|_{\HS}$; finite unnormalized masses
$\tr\Omega$ are retained. A Naimark isometry is denoted by $J$,
with commuting report observables $X,Y$.
Strictly positive gap constants concern a fixed $0<\lambda<1$;
the endpoint weights have zero gap.
The coherent field amplitude is $h=s+az$, the physical curve parameter
is $\theta$, and $t$ is a record or integration coordinate.

For the residual ensemble $\ket{cz}$ with $z\sim\Unif[-L,L]$, write its
amplitude optimum as $\mathcal A_L(c)$ and its suffix average as
$\overline{\mathcal A}_L=\int_0^1\mathcal A_L(a\sqrt e)\,de$.
The local covariant gap is $g_\lambda(\kappa,d,c)$;
$g_{\lambda,w}(u)$ uses the actual sensitivities, including truncated rises.
For complete quintic rises, $\kappa_F=\sqrt{10/7}$ and
$g_\lambda(d;c)=g_\lambda(\kappa_F,d,c)$.
The full coherent gap is $\G_\lambda(s,w)$, the physical gap is
$\G_{\lambda,n}$, and
$\overline{\G}_\lambda(w)=\lim_{s\to\infty}\G_\lambda(s,w)$.

\pdfbookmark[1]{Proof of Proposition 1}{qcp-SM-common}
\section*{Proof of exact common optimality}
\label{qcp-SM-sec:common-optimality}
Let $\omega_{u,z}$ be trace norm measurable input states on a separable Hilbert space,
with the joint prior of Proposition~\ref{qcp-M-prop:common-optimality}.
The position report space $\mathcal X$ is compact metric. The location
loss is measurable, nonnegative, uniformly bounded and continuous in
the report for almost every parameter. Amplitude reports lie in $[-L,L]$.
Define
\[
 \Omega_j=\mathbb E[z^j\omega_{u,z}],\quad j=0,1,2,
 \qquad \Omega=\Omega_0,\qquad
 C_x=\mathbb E[\ell_\loc(x,u)\omega_{u,z}].
\]
Dominated convergence makes $C_x$ continuous in trace norm.
Compression to $\mathcal H_\Omega=\operatorname{supp}\Omega$ preserves
all risks. On this support, $\Omega$ is faithful and trace class.

\begin{proof}[Proof of Proposition~\ref{qcp-M-prop:common-optimality}]
\textit{Amplitude optimum.}
Since $-L\Omega\preceq\Omega_1\preceq L\Omega$, the operators
\[
 A_R=2\int_0^R e^{-t\Omega}\Omega_1e^{-t\Omega}\,dt
 \quad\hbox{satisfy}\quad
 -L(I-e^{-2R\Omega})\preceq A_R\preceq L(I-e^{-2R\Omega}).
\]
The integrals with positive source $\Omega_1+L\Omega$ increase to a
bounded operator. As $e^{-2R\Omega}\to0$ strongly, $A_R$ converges
strongly to a self-adjoint $A$ with $\|A\|\le L$. Integration gives
\[
 \tfrac12(\Omega A_R+A_R\Omega)
 =\Omega_1-e^{-R\Omega}\Omega_1e^{-R\Omega}.
\]
The remainder tends to zero in trace norm, proving the Personick
equation. In an eigenbasis $\Omega e_i=\omega_i e_i$, $\omega_i>0$,
its solution has entries $A_{ij}=2(\Omega_1)_{ij}/(\omega_i+\omega_j)$,
which proves uniqueness among bounded operators.

Dilate a paired POVM as $M(Q)=J^\dagger E(Q)J$ and write
$Y=\int y\,E(dx,dy)$. Completing the amplitude square yields
\[
 r_\amp^\star=\tr\Omega_2-\tr(\Omega A^2),\qquad
 e_\amp(M)=\|YJ-JA\|_\Omega^2.
\]
The spectral measurement of $A$ makes this square zero. Its outcomes
lie in $[-L,L]$, so it is an admissible amplitude optimum.

\par\medskip\noindent\textit{Attainment.}
For a compact metric report space $\mathcal K$, complete POVMs are
positive unital maps $\Phi:C(\mathcal K)\to\mathcal B(\mathcal H_\Omega)$.
The bounds $\|\Phi(f)\|\le\|f\|_\infty$ give compactness in the
pointwise ultraweak topology. Positivity and unitality pass to the
limit, and scalar Riesz representation recovers a POVM.

Choose finite-rank projections $P_N\uparrow I$, put
$m=\tr\Omega$ and $\eta_N=\tr[(I-P_N)\Omega]$, and let $r_j^{(N)}$
use the compressed states $P_N\omega_{u,z}P_N$. The projection
trace-distance bound and Cauchy--Schwarz give, uniformly in $M$,
\[
 |r_j(M)-r_j^{(N)}(M)|
 \le B_j\mathbb E\|\omega_{u,z}-P_N\omega_{u,z}P_N\|_1
 \le2B_j\sqrt{m\eta_N}\longrightarrow0.
\]
Each compressed cost field is a finite sum of fixed matrix units with
continuous coefficients in the report. Its risk is therefore
continuous in the preceding topology. The full risk is a uniform
limit of these continuous functions. Compactness gives minimizers for
the two marginal risks and every weighted joint risk.

\par\medskip\noindent\textit{Commuting position effects.}
For $0<\lambda<1$, an attained joint optimum with $\G_\lambda=0$
has $e_\loc=e_\amp=0$. The amplitude square and the dense range of
$\Omega^{1/2}$ imply $YJ=JA$. With
$E_\loc(B)=E(B\times[-L,L])$ and $N(B)=J^\dagger E_\loc(B)J$,
\[
 [N(B),A]=J^\dagger[E_\loc(B),Y]J=0.
\]
The marginal $N$ is optimal for position. Functional calculus also
gives $E_Y(D)J=JE_A(D)$, so the amplitude marginal is the spectral
measure $E_A$ of $A$.

Conversely, a position optimum $N$ commuting with $A$ commutes
with its spectral measure $E_A$. On finite report partitions,
$N(B_i)E_A(D_j)$ are positive complete effects. Refining the partitions
and applying the preceding compactness argument gives a paired POVM
with marginals $N,E_A$. Its rectangle effects are
$M(B\times D)=N(B)E_A(D)$, since the second marginal is sharp
\cite[Proposition~8]{qcp-M-heinosaari2008}.
Both marginal risks are therefore optimal at every weight.
A fixed paired report on $\mathcal H_\Omega^\perp$ extends the rule
to the original Hilbert space.
\end{proof}

\pdfbookmark[1]{I. Ideal step conflict and recovery}{qcp-SM-part1}
\section*{I. Ideal step conflict, recovery and physical comparison}
\begingroup
\qcpresetvolume{C}
\qcparticlelayout
\setlength{\emergencystretch}{3em}

\numberwithin{equation}{section}

\newcommand{\LocLower}{0.124616894327471}
\newcommand{\LocUpper}{0.125468298728954}
\newcommand{\AmpLower}{0.060612066766632}
\newcommand{\AmpUpper}{0.060807446886014}
\newcommand{\JointLower}{0.094140512957477936}
\newcommand{\JointUpper}{0.094449096017483}
\newcommand{\GapLower}{0.001002640149993936}
\newcommand{\GapUpper}{0.001834615470431500}

\renewcommand{\thesection}{A\arabic{section}}\setcounter{section}{0}
\section{Risk transfer and finite records}
\begin{qcpClemma}[Comparison of the three risks]\label{qcp-C-lem:transfer}
Let $\mathsf E,\mathsf F$ share the same prior, targets, actions and losses
$0\le\ell_j\le B_j$. Let $\mathcal T:\mathsf E\to\mathsf F$ and
$\mathcal S:\mathsf F\to\mathsf E$ be channels independent of the parameters
with trace distance errors, uniform or averaged under the prior, $\varepsilon_T,\varepsilon_S$.
For $B_\lambda=\lambda B_\loc+(1-\lambda)B_\amp$, the marginal risks,
weighted joint risk and gap satisfy
\begin{align}
 -B_j\varepsilon_S&\le r_j^\star(\mathsf E)-r_j^\star(\mathsf F)\le B_j\varepsilon_T,\nonumber\\
 -B_\lambda\varepsilon_S&\le R_\lambda^\star(\mathsf E)-R_\lambda^\star(\mathsf F)\le B_\lambda\varepsilon_T,\label{qcp-C-eq:transfer}\\
 |\G_\lambda(\mathsf E)-\G_\lambda(\mathsf F)|&\le B_\lambda(\varepsilon_T+\varepsilon_S).\nonumber
\end{align}
A rule of excess $\delta_j$ pulls back through either channel with excess
at most $\delta_j+B_j(\varepsilon_T+\varepsilon_S)$.
Loss changes bounded by $d_j$ change the gap by at most
$2[\lambda d_\loc+(1-\lambda)d_\amp]$. An ordered change to the position loss
alone changes it by at most $\lambda d_\loc$.
\end{qcpClemma}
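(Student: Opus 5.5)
The argument pulls measurements back through the two channels and compares risks state by state. Given a paired POVM $M$ on $\mathsf F$, the pullback $\mathcal T^\dagger(M)$ (Heisenberg picture of the channel $\mathcal T$) is a complete paired POVM on $\mathsf E$. Because the channel does not depend on the parameters, its outcome distribution on $\omega^{\mathsf E}_{u,z}$ equals that of $M$ on $\mathcal T(\omega^{\mathsf E}_{u,z})$.

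For each loss $\ell_j$ with values in $[0,B_j]$, the key estimate is
\[
 \Bigl|\int \ell_j\,\tr[(\mathcal T(\omega^{\mathsf E}_{u,z})-\omega^{\mathsf F}_{u,z})M]\Bigr|\le B_j\,D\bigl(\mathcal T(\omega^{\mathsf E}_{u,z}),\omega^{\mathsf F}_{u,z}\bigr).
\]
This holds because the loss enters through the effect $\int \ell_j\,dM/B_j$, which lies between $0$ and $I$. Shifting the loss by $B_j/2$ costs nothing, since both states are normalized, and this is what yields the factor $B_j$ rather than $2B_j$. Averaging over the prior gives $r_j^{\mathsf E}(\mathcal T^\dagger M)\le r_j^{\mathsf F}(M)+B_j\varepsilon_T$, in either the uniform or the prior-averaged version of $\varepsilon_T$. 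Taking the infimum over $M$ gives $r_j^\star(\mathsf E)\le r_j^\star(\mathsf F)+B_j\varepsilon_T$. The same argument applied to the weighted loss $\lambda\ell_\loc+(1-\lambda)\ell_\amp$, which is bounded by $B_\lambda$, gives the bound for $R_\lambda^\star$. The symmetric argument with $\mathcal S$ gives the lower bounds with $\varepsilon_S$.

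For the gap, I subtract. The definition
\[
 \G_\lambda=R^\star_\lambda-\lambda r^\star_\loc-(1-\lambda)r^\star_\amp
\]
combines the joint error $B_\lambda\varepsilon_T$ with the marginal errors $\lambda B_\loc\varepsilon_S+(1-\lambda)B_\amp\varepsilon_S=B_\lambda\varepsilon_S$, so $\G_\lambda(\mathsf E)-\G_\lambda(\mathsf F)\le B_\lambda(\varepsilon_T+\varepsilon_S)$. The reverse inequality follows in the same way.

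For the pullback of a rule, let $M$ on $\mathsf F$ have excess $\delta_j$. Then $\mathcal T^\dagger M$ has risk at most $r_j^\star(\mathsf F)+\delta_j+B_j\varepsilon_T$, which is at most $r_j^\star(\mathsf E)+\delta_j+B_j(\varepsilon_T+\varepsilon_S)$ by the marginal lower bound. Pulling back through $\mathcal S$ is symmetric.

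For loss changes, suppose $|\ell_j-\ell_j'|\le d_j$ pointwise. Then every risk of every rule moves by at most $d_j$, so $R_\lambda^\star$ moves by at most $\lambda d_\loc+(1-\lambda)d_\amp$, and so does the weighted sum of marginal optima. Together these give the factor $2$. If only the position loss changes, and in an ordered way ($\ell_\loc\le\ell'_\loc\le\ell_\loc+d_\loc$), then $R_\lambda^\star$ and $\lambda r_\loc^\star$ both increase by amounts in $[0,\lambda d_\loc]$. Their difference therefore lies in $[-\lambda d_\loc,\lambda d_\loc]$, which gives the bound $\lambda d_\loc$.

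The main obstacle is bookkeeping rather than depth. The constants must come out as $B_j$ and not $2B_j$, which depends on the half trace distance and the normalization shift. The averaged version of $\varepsilon_T$ requires that the trace-distance bound be applied pointwise in $(u,z)$ and only then integrated against the prior. Finally, the infima must be over the same complete paired class on both sides, which the channels preserve because $\mathcal T^\dagger$ is unital and positive.
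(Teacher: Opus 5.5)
Your proposal is correct and follows essentially the same route as the paper's proof: pull POVMs back through the parameter-independent channels using the estimate ``$B_j$ times half trace distance,'' then observe that the joint increment and the weighted marginal increment both lie in $[-B_\lambda\varepsilon_S,B_\lambda\varepsilon_T]$. The rule-pullback and loss-change statements then follow, as in the paper, by comparison with the corresponding infima and by uniform bounds on each infimum.
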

\begin{proof}
Pullback through $\mathcal T$ gives
$r_j^\star(\mathsf E)\le r_j^\star(\mathsf F)+B_j\varepsilon_T$.
The reverse channel gives the other inequality. The joint and weighted
marginal increments lie in one interval of length
$B_\lambda(\varepsilon_T+\varepsilon_S)$, which bounds their difference.
Comparison with the corresponding infimum gives the excess bound.

A loss change bounds each infimum by its uniform norm. For an ordered
location change, both increments lie in $[0,\lambda d_\loc]$.
\end{proof}

\begin{qcpCtheorem}[Compact coherent comparison]\label{qcp-C-thm:channels}
Let $M_j=\sup_{|\theta|\le r_0}\|\phi^{(j)}(\theta)\|$,
$\gamma_0=-i\langle\phi(0),\phi'(0)\rangle$,
$B_2=M_2+2|\gamma_0|M_1+\gamma_0^2$ and
$C_\phi=(B_2/q_0+\sqrt3)/2$. For finite $H\ge0$, put
\begin{equation}
 C_H=H+C_\phi H^2,\qquad
 \varepsilon_{n,H}=\min\{1,C_H/\sqrt n\}.\label{qcp-C-eq:critical-error}
\end{equation}
For $H/\sqrt{nq_0}\le r_0$, CPTP channels compare the finite and coherent
records in both directions with error at most $\varepsilon_{n,H}$, uniformly
in $u\in[0,1]$, $|h|\le H$. The channels depend only on $n$ and the known curve.
For $|s|+aL\le H$, set $d_{\loc,n}=2(|\tau_n-s^2|+\tau_n/n)/K$.
The corresponding marginal risks, joint risk and gap obey
\begin{align}
 |r_{\loc,n}^\star-r_{\loc,\infty}^\star|&\le2\varepsilon_{n,H}+d_{\loc,n},&
 |r_{\amp,n}^\star-r_{\amp,\infty}^\star|&\le4L^2\varepsilon_{n,H},\nonumber\\
 |R_{\lambda,n}^\star-R_{\lambda,\infty}^\star|&\le B_\lambda\varepsilon_{n,H}+\lambda d_{\loc,n},&
 |\G_{\lambda,n}-\G_{\lambda}|&\le2B_\lambda\varepsilon_{n,H}+2\lambda d_{\loc,n},\label{qcp-C-eq:finite-return}
\end{align}
where $B_\lambda=2\lambda+4L^2(1-\lambda)$.
\end{qcpCtheorem}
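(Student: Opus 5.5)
We build explicit channels in both directions between the $n$-site product record and the coherent step experiment, estimate the state error uniformly in $(u,h)$, and then read off the risk bounds from Lemma~\ref{qcp-C-lem:transfer}.

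\emph{Local states.} Pick an orthonormal basis $e_0=\phi(0)$, $e_1=(I-\rho(0))\phi'(0)/\sqrt{q_0}$ and complete it. Remove the gauge phase with $e^{-i\gamma_0\theta}$. Taylor's theorem with the $C^3$ bounds then gives
\[
\phi(\theta)=e^{i\gamma_0\theta}\bigl(\sqrt{1-\theta^2q_0}\,e_0+\theta\sqrt{q_0}\,e_1+R(\theta)\bigr),\qquad \|R(\theta)\|\le \tfrac12 B_2\theta^2 .
\]
The constant $B_2=M_2+2|\gamma_0|M_1+\gamma_0^2$ is exactly what the second derivative of $e^{-i\gamma_0\theta}\phi(\theta)$ produces. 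With $\theta=h/\sqrt{nq_0}$, each changed site is therefore close to the spin-coherent-type vector $\sqrt{1-h^2/n}\,e_0+(h/\sqrt n)e_1$, up to $O(B_2h^2/(nq_0))$. Over the $n$ sites these local errors add in fidelity to give the $C_\phi H^2/\sqrt n$-type term; the accounting via Fubini–Study distance is what produces $\sqrt n\cdot(h^2/n)$. The global phases are position-dependent, $e^{i\gamma_0\theta(n-k+1)}$, but they are global phases of the full state, so they do not matter.

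\emph{Channels.} Map the site-$i$ qubit span$\{e_0,e_1\}$ to the single-excitation mode $\mathbf 1_{[(i-1)/n,i/n]}$ of $L^2[0,1]$, i.e. the Holstein–Primakoff / quantum-LAN embedding of the symmetric single-site picture. Orthogonal leakage outside this span is discarded or replaced by vacuum. On the product vector $\bigotimes_i(\sqrt{1-\epsilon_i^2}\,e_0+\epsilon_i e_1)$ with $\epsilon_i=h/\sqrt n$, this isometry yields a state whose overlap with the coherent state $\ket{f}$, $f=\sum_i\epsilon_i\sqrt n\,\mathbf 1_{I_i}$, is explicit. The standard estimate $|1-\prod(\sqrt{1-\epsilon^2}\,e^{\epsilon^2/2})|$ gives a fidelity defect $O(n\epsilon^4)=O(h^4/n)$, hence trace distance $O(h^2/\sqrt n)$; this is the $\sqrt3/2$ part of $C_\phi$. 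The piecewise-constant $f$ equals $h\mathbf 1_{[(k-1)/n,1]}$, which differs from $f_{u,h}$ only on an interval of length at most $1/n$. That costs $\|f-f_{u,h}\|\le h/\sqrt n$, which is the linear $H$ term in $C_H$.

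For the reverse direction, first apply the conditional expectation onto piecewise-constant modes; this is a CPTP coarse-graining that leaves the discretized field unchanged. Then apply the adjoint isometry, completed to a channel by sending the non-single-excitation-per-bin part to $e_0$. Both directions inherit the same error, uniform in $u$ and $|h|\le H$, and the minimum with $1$ is trivial.

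\emph{Risk transfer.} Apply Lemma~\ref{qcp-C-lem:transfer} with $\varepsilon_T=\varepsilon_S=\varepsilon_{n,H}$ and $B_\amp=4L^2$, $B_\loc=2$. The two experiments have different location losses and report spaces. Handle this by the identification $x\in\mathcal X_n\subset[0,1]$ and by rounding continuous reports to $\mathcal X_n$. Comparing $2\min\{1,\tau_n|x-u_n|/K\}$ with $2\min\{1,s^2|x-u|/K\}$ uses that $\min$ is $1$-Lipschitz and $|u-u_n|\le1/n$. This gives a uniform loss difference of at most $d_{\loc,n}$. The lemma's loss-change clause then yields \eqref{qcp-C-eq:finite-return}; only the position loss changes, so the weighted terms carry $\lambda d_{\loc,n}$ and the gap carries $2\lambda d_{\loc,n}$.

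The main obstacle is obtaining the explicit constant $C_\phi$, in particular the $\sqrt3$ from the product-fidelity expansion together with the phase-gauge remainder. The approach is to bound $1-|\langle\cdot,\cdot\rangle|$ per site and convert the accumulated fidelity into trace distance via $D=\sqrt{1-\operatorname{Fid}}$.
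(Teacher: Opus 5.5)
Your architecture matches the paper's. You gauge-fix, send $e_0$ to vacuum and $e_1$ to one photon in the normalized cell mode, and add per-site vector errors in quadrature through the product overlap to get $\sqrt n\cdot H^2/n$. You then pay $H/\sqrt n$ for replacing $u_n$ by $u$, use an isometry and its completed adjoint as the two channels, and finish with the transfer lemma and report rounding. The gap is in the per-site estimate. That estimate is exactly what produces the constant $C_\phi$ asserted in the statement, and you leave it as the ``main obstacle'', which is a real gap.

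Taylor's theorem bounds $R_0=\psi(\theta)-e_0-\theta\sqrt{q_0}\,e_1$ by $B_2\theta^2/2$. It does not give that bound for your remainder $R=R_0+(1-\sqrt{1-q_0\theta^2})\,e_0$, because the normalization shift is itself about $q_0\theta^2/2$. The $\sqrt3$ also does not come from your normalized comparison. With $\alpha=\theta\sqrt{q_0}=h/\sqrt n$, the image of $\sqrt{1-\alpha^2}\,e_0+\alpha e_1$ has overlap $e^{-\alpha^2/2}(\sqrt{1-\alpha^2}+\alpha^2)=1-\alpha^4/4+O(\alpha^6)$ with $\ket{\alpha}$, so its distance is about $\alpha^2/\sqrt2$. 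Repairing your remainder by the triangle inequality then gives a per-site bound of about $(B_2/q_0+1+\sqrt2)q_0\theta^2/2$, which exceeds $C_\phi q_0\theta^2$.

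The paper does not normalize. It compares the image of the first-order Taylor vector, $\ket{\mathrm{vac}}+\alpha\ket{1}$, with $\ket{\alpha}$. The vacuum defect $\approx\alpha^2/2$ and the two-photon amplitude $\approx\alpha^2/\sqrt2$ give distance at most $\sqrt3\,\alpha^2/2$. Adding $\|R_0\|\le B_2\theta^2/2$ gives $\|V_{\rm cell}\psi(\theta)-c_\theta\|\le C_\phi q_0\theta^2$. Then $1-|\langle x,y\rangle|^2\le\|x-y\|^2$ for unit vectors and $1-\prod_i F_i\le\sum_i(1-F_i)$ yield exactly $C_\phi H^2/\sqrt n$.

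There are two smaller points. First, restricting to $\operatorname{span}\{e_0,e_1\}$ when $\dim\mathcal H>2$ loses the exact left inverse $\mathsf C_n(V_n\rho V_n^\dagger)=\rho$, which is what gives the reverse direction for free by contractivity. The paper instead sends every transverse basis vector to a single photon of a distinct color. Then $V_n$ is an isometry on all of $\mathcal H^{\otimes n}$, the forward channel traces out the extra colors, and the reverse channel appends their vacuum. Your truncated channels can be saved, since replacing the discarded mass by a fixed state only raises the overlap with the pure product target. But ``both directions inherit the same error'' needs that argument.

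Second, to obtain $\tau_n/n$ rather than $2\tau_n/n$ in $d_{\loc,n}$, note that the rounding shift $x-\min\{n-1,\lfloor nx\rfloor\}/n$ and $u-u_n$ both lie in $[0,1/n]$. Hence the displacement changes by at most $1/n$. Bounding the two shifts separately, as your Lipschitz argument does, doubles that term.
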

\begin{proof}
Use the gauge $\psi(\theta)=e^{-i\gamma_0\theta}\phi(\theta)$ and a basis
starting with $e_0=\phi(0)$, $e_1=\psi'(0)/\sqrt{q_0}$. Mapping $e_0$ to
vacuum and the transverse vectors to single photons of distinct colors
defines a product isometry $V_n$.
The cell coherent vector $c_\theta=|\theta\sqrt{q_0}e_1\rangle$ agrees with the mapped
curve to first order and satisfies
$\|V_{\rm cell}\psi(\theta)-c_\theta\|\le C_\phi q_0\theta^2$.

Product fidelity accumulates the cell error with a factor $\sqrt n$;
replacing $u_n$ by $u$ costs $H/\sqrt n$.
With $\Pi_n=V_nV_n^\dagger$ and any fixed density operator
$\rho_{\rm fix}$ on the finite record Hilbert space, define
\[
 \mathsf C_n(\omega)=V_n^\dagger\omega V_n+
 \tr[(I-\Pi_n)\omega]\rho_{\rm fix},\quad
 T_n(\rho)=\tr_{\rm other}(V_n\rho V_n^\dagger),\quad
 S_n(\sigma)=\mathsf C_n(\sigma\otimes\proj0_{\rm other}).
\]
These maps are completely positive and trace preserving, with
$\mathsf C_n(V_n\rho V_n^\dagger)=\rho$. Contractivity gives both comparisons.

Round $x$ to $\min\{n-1,\lfloor nx\rfloor\}/n$. The report rounding and $u-u_n$ both lie in $[0,1/n]$, so the
displacement changes by at most $1/n$. The Lipschitz bound for the loss
gives $d_{\loc,n}$; Lemma~\ref{qcp-C-lem:transfer} then transfers the optima.
\end{proof}

\section{Recovery by retaining the boundary region}\label{qcp-C-sec:recovery}
\begin{qcpCtheorem}[Recovery with aligned blocks]\label{qcp-C-thm:recovery}
Use the prior and losses of \eqref{qcp-M-eq:prior}, \eqref{qcp-M-eq:losses},
with $B_\loc=2$, $B_\amp=4L^2$. Fix $0<L<\infty$ so that the physical interval
lies in the curve domain.
Partition the record into $N_b=\lceil n/m\rceil$ blocks of length at most
$m$, including the short last block, and assume
$p_-=\inf_{|z|\le L}b((s+az)/\sqrt{nq_0})>0$.
For any $\delta_j>0$, one complete paired rule gives
\begin{equation}
 e_j\le B_jE_{n,m}+\delta_j,\qquad
 E_{n,m}=\min\left\{1,\frac{5N_b}{2np_-}
 +\frac{M_1aL}{\sqrt{2q_0}}\sqrt{\frac{m+1}{n}}\right\}.
 \label{qcp-C-eq:block-budget}
\end{equation}
The bounds hold for both tasks, so $\G_\lambda\le B_\lambda E_{n,m}$ at every weight.
\end{qcpCtheorem}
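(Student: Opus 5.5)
We build one paired rule by a two-stage instrument that exploits the geometry in \eqref{qcp-M-eq:geometry}: coarse localization of the boundary to a block, then separate optimal readouts on the retained boundary region and on the suffix. Then we compare its risks with each marginal optimum.

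\textbf{Stage 1 (coarse block localization).} We perform a sequential test over the $N_b$ blocks. Block $i$ is tested against the unchanged product state $\rho(0)^{\otimes m}$ by measuring the projector onto $\phi(0)^{\otimes m}$ and its complement. A changed site in a block fires with probability at least $b_n(h)\ge p_-$ per site. We scan blocks and let $\widehat\imath$ be the first block with a rejection, keeping all blocks from $\widehat\imath-1$ onward as quantum systems. The probability that the true boundary block lies outside the retained pair $\{\widehat\imath-1,\widehat\imath\}$ is controlled geometrically: it requires at least one fully changed block to pass, and summing over positions relative to $k_n$ and averaging $k_n$ uniformly gives a mislocation mass of order $N_b/(np_-)$. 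Bookkeeping the short last block and the two-block window should give the constant $5/2$. On that event we charge the full bound $B_j$.

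\textbf{Stage 2 (replacing the retained window amplitude).} Conditional on correct localization, the record factorizes into unchanged sites, a window $W$ of at most about $m+1$ sites containing the boundary, and a fully changed suffix $S$ in state $\rho(\theta)^{\otimes|S|}$. Position is determined by $W$ together with $\theta$, and amplitude is determined by $S$ together with the $\theta$-dependence in $W$. We compare with an auxiliary experiment in which the window sites carry $\rho(\theta_0)$, with $\theta_0=s/\sqrt{nq_0}$ the known center. The state curve is Lipschitz with constant $M_1$, so each window site moves by at most $M_1 aL/\sqrt{nq_0}$ in norm. Product pure state fidelity then bounds the trace distance by about $M_1aL\sqrt{(m+1)/(2nq_0)}$, which is the second term of $E_{n,m}$. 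In the auxiliary experiment, $W$ carries no information about $z$ beyond the position, and $S$ carries none about $u$ beyond the block label. We therefore measure a position rule on $W$ and an amplitude rule on $S$. Because these act on disjoint tensor factors they commute, and the pair is a complete paired POVM.

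\textbf{Comparison with the marginal optima.} To show each component is within $B_jE_{n,m}+\delta_j$ of $r_j^\star$, we argue in reverse. Any near-optimal position rule for the full record, with excess $\delta_\loc$, can be simulated from $(\widehat\imath,W)$ up to the Stage 1 and Stage 2 errors, because the discarded sites are either known to be $\rho(0)$ or carry only amplitude information. The same holds symmetrically for amplitude given $S$ and $\widehat\imath$. The cleanest formulation is to invoke Lemma~\ref{qcp-C-lem:transfer}: we compare the original experiment with a labeled product experiment by channels in both directions with error $E_{n,m}$, and the pullback clause gives excess at most $\delta_j+B_jE_{n,m}$.

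\textbf{Main obstacle.} The delicate step is the reverse direction for position. An optimal position rule may exploit amplitude information in the suffix to sharpen its decision near the boundary. We must show that, after replacing the window amplitude by the known center, the suffix is ancillary for $u$ given the label, so a reconstruction channel can regenerate it without harm. We also need the Stage 1 measurement to disturb the retained window at most by the counted error, which may require a gentle, block-aligned projective test. I would first try defining the reconstruction channel as preparing $\rho(\theta_0)$ on the discarded suffix and checking that the total trace distance is still bounded by $E_{n,m}$. Finally, $\G_\lambda\le B_\lambda E_{n,m}$ follows by weighting the two excess bounds and letting $\delta_j\to0$.
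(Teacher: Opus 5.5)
\textbf{Verdict: there is a genuine gap, mainly in Stage 1.} Your overall architecture matches the paper's: projective block tests against $\phi(0)^{\otimes m}$, a label from the first click, replacement of the boundary block's changed sites by the known center $s/\sqrt{nq_0}$, and a two-way comparison with a labeled product experiment through the transfer lemma of Section A1. However, Stage 1 misidentifies the error and never bounds the term that actually produces $5N_b/(2np_-)$.

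The cost is not mislocation outside $\{\widehat\imath-1,\widehat\imath\}$. It is the back-action of the test on the boundary block. With $l$ changed sites and per-site click probability $p\ge p_-$, that block passes with probability $a_l=(1-p)^l$ and is then collapsed to $\phi(0)^{\otimes m}$. So keeping block $\widehat\imath-1$ retains no position information, and your claim that the record factorizes with the boundary inside $W$ ``conditional on correct localization'' fails on this branch. When the block clicks, it is projected off its vacuum component, and the normalized output has overlap $\sqrt{1-a_l}$ with the untouched block, i.e.\ trace distance $\sqrt{a_l}$. The paper compares the instrument output with the correct label tensored with the untouched record and obtains exactly $(\sqrt{4a_l-3a_l^2}+a_l)/2\le\sqrt{a_l}+a_l/2$. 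Summing $\sum_{l\ge1}[(1-p)^{l/2}+(1-p)^l/2]\le 2/p+1/(2p)$ per block and averaging over the uniform boundary site gives $e\le 5N_b/(2np_-)$. The dominant $2/p$ comes from the square-root back-action. Your mechanism (boundary block and a fully changed block both passing) has probability $a_l(1-p)^m$ and would give only $O(N_b(1-p_-)^m/(np_-))$, so it cannot yield the stated order or the constant $5/2$. The disturbance you flag as an obstacle is the main term, not a side condition.

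The reverse direction is also misframed, and your suggested reconstruction would fail. Preparing $\rho(\theta_0)$ on the suffix destroys order-one amplitude information: $\rho(\theta)^{\otimes N}$ and $\rho(\theta_0)^{\otimes N}$ with $N$ of order $n$ have squared overlap about $e^{-a^2z^2N/n}$, bounded away from one. No regeneration is needed. Take the labeled experiment to be the correct label tensored with the full record in which only the $l$ changed boundary sites are moved to $\theta_0$. The reverse channel then simply discards the label, with error $f$ equal to the replacement distance, and the forward channel has error $e+f$. The worry that a position rule exploits the suffix is moot. The lemma compares optima, and at each label the conditional prior factors into position within the block and amplitude on the suffix. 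So near-optimal factor rules pair into one complete rule, using the prior alone for an empty suffix.

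Finally, the lemma charges $B_j(\varepsilon_T+\varepsilon_S)=B_j(e+2f)$, so the replacement term enters twice. To match $E_{n,m}$ you need $f\le M_1aL\sqrt{(m+1)/(nq_0)}/(2\sqrt2)$. This uses $\mathbb E|z|=L/2$, independence of $z$ and $l$, and $\mathbb E\sqrt{l}\le\sqrt{(m+1)/2}$, rather than the worst case $|z|\le L$ behind your per-direction estimate $M_1aL\sqrt{(m+1)/(2nq_0)}$, which already equals the whole second term of $E_{n,m}$.
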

\begin{proof}
Test successive blocks against their initial state projections, stopping
at the first complementary outcome and retaining its label and output
record, with no click as
a separate outcome. Only blocks at or after the boundary can click.
For $l$ changed sites in the boundary block, the missed mass is
$a_l=(1-p)^l$. The distance from the untouched record with the correct
label is $(\sqrt{4a_l-3a_l^2}+a_l)/2\le\sqrt{a_l}+a_l/2$:
the lost vacuum amplitude and all other labels contribute these two
terms. Summing over $l$ and blocks under the uniform position prior
gives $e\le5N_b/(2np_-)$.  

Given the correct label, replace the boundary block amplitude by
$s/\sqrt{nq_0}$. The conditional prior factors into position in the
block and amplitude in the suffix. Pair rules approaching the two
infima, using the prior alone for an empty suffix. The replacement error is at most
$M_1a|z|\sqrt{l/(nq_0)}$. Averaging with $\mathbb E|z|=L/2$ and
$\mathbb E l\le(m+1)/2$ gives
$f\le M_1aL\sqrt{(m+1)/(nq_0)}/(2\sqrt2)$.
The forward error is $e+f$; discarding the label defines the reverse
channel with error $f$.
Lemma~\ref{qcp-C-lem:transfer} gives $B_j(e+2f)+\delta_j$, capped by $B_j$.

The curve bound gives $\sqrt{p_-}\ge\sqrt{b_n}-M_1aL/\sqrt{nq_0}$.
For $s_n\to\infty$, $s_n=o(\sqrt n)$, $b_n\sim s_n^2/n$ and eventually
$p_-\ge b_n/4$. Taking $m=\lceil n\tau_n^{-2/3}\rceil$ makes
$5N_b/(2np_-)\le10\tau_n^{-1/3}+10/\tau_n$, while the replacement term
is $O_L(\tau_n^{-1/3}+n^{-1/2})$. Letting $\delta_j\downarrow0$ proves Theorem~\ref{qcp-M-thm:main}(ii).
\end{proof}

\section{Positive and informative marginal risks}\label{qcp-C-proof:nondegeneracy}
\begin{qcpClemma}[Information bound for a uniform amplitude prior]
\label{qcp-C-lem:uniform-amplitude-information}
Let $z\sim\Unif[-L,L]$, $L>0$, and let the trace norm differentiable state
model have quantum Fisher information at most $\mathcal I$ uniformly in $z$.
Every measurement and amplitude report $Y$ satisfy
\[
 \mathbb E(Y-z)^2\ge\{3/L^2+(6/5)\mathcal I\}^{-1}.
\]
\end{qcpClemma}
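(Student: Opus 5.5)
The bound is a weighted quantum van Trees (Gill--Levit) inequality, with a weight that vanishes at the ends of the prior support. A uniform prior has infinite Fisher information, so the plain van Trees bound is useless here, and the weight is what replaces it. I would take
\[
 \psi(z)=L^2-z^2,\qquad |z|\le L,
\]
and aim for the intermediate inequality
\[
 \mathbb E(Y-z)^2\;\ge\;\frac{(\mathbb E\psi)^2}{\mathbb E[\psi^2\mathcal I_z]+\mathbb E[\psi'^2]},
\]
where $\mathcal I_z\le\mathcal I$ is the SLD quantum Fisher information. With $z$ uniform on $[-L,L]$ the moments are $\mathbb E\psi=2L^2/3$, $\mathbb E\psi^2=8L^4/15$ and $\mathbb E\psi'^2=4\,\mathbb Ez^2=4L^2/3$. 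Substituting gives
\[
 \frac{4L^4/9}{(8L^4/15)\,\mathcal I+4L^2/3}=\{3/L^2+(6/5)\mathcal I\}^{-1},
\]
which is exactly the claimed constant. This confirms the choice of $\psi$.

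To derive the inequality I would first reduce to bounded reports. Clipping $Y$ to $[-L,L]$ does not increase the loss, as noted for the calibrated losses in the main text. I would then dilate the measurement as in the proof of Proposition~\ref{qcp-M-prop:common-optimality}, writing $M=J^\dagger EJ$ with report observable $Y$ bounded by $L$. This turns the measurement into a sharp observable on a larger space, and the operator inequality $J^\dagger Y^2J\succeq (J^\dagger YJ)^2$ lets Cauchy--Schwarz act at the operator level.

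The key identity concerns the bias $\beta(z)=\tr[\rho_z M_Y]-z$, where $M_Y=J^\dagger YJ$. By trace-norm differentiability, $\beta$ is $C^1$ on $[-L,L]$ with
\[
 \beta'(z)+1=\operatorname{Re}\tr[\rho_z L_z(M_Y-z)],
\]
where $L_z$ is the SLD; the shift by $z$ is harmless because $\tr\rho_zL_z=0$. Multiplying by $\psi$ and integrating by parts, the boundary terms vanish since $\psi(\pm L)=0$. This gives
\[
 \mathbb E\psi=\mathbb E\,\operatorname{Re}\tr\bigl[\rho_z(\psi L_z+\psi')(M_Y-z)\bigr].
\]
Applying Cauchy--Schwarz in the dilated space with the state-weighted inner product, and then averaging over $z$, gives
\[
 (\mathbb E\psi)^2\le\mathbb E(Y-z)^2\cdot\mathbb E\,\tr[\rho_z(\psi L_z+\psi')^2].
\]
The cross term $2\psi\psi'\tr[\rho_zL_z]$ vanishes, so the second factor equals $\mathbb E[\psi^2\mathcal I_z+\psi'^2]$. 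Bounding $\mathcal I_z\le\mathcal I$ then yields the intermediate inequality.

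I expect the main obstacle to be the regularity step. It has three parts: justifying that $z\mapsto\tr[\rho_zM_Y]$ is continuously differentiable, identifying its derivative through an SLD when the model is only trace-norm differentiable (the SLD may be unbounded), and making the integration by parts legitimate. My first approach would use the fact that $\|\partial_z\rho_z\|_1$ bounds the derivative because $\|M_Y\|\le L$. I would also use a formulation in which $\tr[\rho_zL_z^2]\le\mathcal I_z$ is taken as the definition via $\partial_z\rho_z=\tfrac12(\rho_zL_z+L_z\rho_z)$ in $L^2(\rho_z)$. If needed, I would approximate by finite-rank compressions, as in the attainment argument of Proposition~\ref{qcp-M-prop:common-optimality}, and pass to the limit.
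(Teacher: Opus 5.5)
Your proposal is correct. It uses the same weighted van Trees argument as the paper: the weight $L^2-z^2$, integration by parts with vanishing endpoint terms, one Cauchy--Schwarz step, and the moments $2L^2/3$, $8L^4/15$ and $4L^2/3$.

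The difference is the level at which you apply Cauchy--Schwarz. The paper passes to the classical outcome law $p_z$ of the clipped report and uses the classical score $\partial_z\log p_z$, whose conditional mean vanishes. It then bounds the classical Fisher information of the measurement by $\mathcal I$. Zero likelihoods force it to regularize with a faithful reference law, $(1-\eta)p_z+\eta$, and to let $\eta\downarrow0$. You work instead with the SLD and the operator inequality $(J^\dagger YJ-z)^2\preceq J^\dagger(Y-z)^2J$. This folds the Braunstein--Caves step into the van Trees Cauchy--Schwarz and removes the need for the zero-likelihood regularization. The price is that you need the SLD to exist in $L^2(\rho_z)$ with $\tr[\rho_zL_z^2]\le\mathcal I$, which is the natural reading of the hypothesis.

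One small correction: trace-norm differentiability does not make $\beta$ continuously differentiable, but you do not need that. The same SLD Cauchy--Schwarz gives $|\tr[\partial_z\rho_z M_Y]|\le L\sqrt{\mathcal I}$. Hence $\beta$ is Lipschitz and therefore absolutely continuous, which is all your integration by parts requires.
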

\begin{proof}
After the report is clipped to $[-L,L]$, integration by parts with
$g=L^2-z^2$ gives
$\mathbb E[(Y-z)(g'+g\partial_z\log p_z)]=2L^2/3$; the endpoint term
vanishes. Cauchy--Schwarz then proves the bound using
$\mathbb E(g')^2=4L^2/3$, $\mathbb E g^2=8L^4/15$, the zero conditional
mean of the score and the classical Fisher information bound
$\le\mathcal I$.

For zero likelihoods, choose a faithful reference law and use
$(1-\eta)p_z+\eta$. Trace norm differentiability and the information
bound justify the score identity; bounded reports give risk convergence
as $\eta\downarrow0$.
\end{proof}

\begin{proof}[Informative marginals in Theorem~\ref{qcp-M-thm:main}]
Let $\theta_{0,n}=s_n/\sqrt{nq_0}$ denote the known physical center. Set
$\kappa_n=M_1aL/\sqrt{q_0\tau_n}$,
$p_\pm=b_n(1\pm\kappa_n)^2$, $d_n=\lfloor b_n^{-1}\rfloor$.
Eventually $0<p_-\le b(\theta)\le p_+<1$, $d_n<n$,
$p_\pm/b_n\to1$, $b_nd_n\to1$ and $d_n/n\to0$.
With the amplitude revealed, pair the positions $k,k+d_n$. The corresponding
states have squared fidelity at least $(1-p_+)^{d_n}$. Applying Helstrom's
bound \cite{qcp-C-helstrom1969} and the loss incurred by a wrong
classification gives
\[
 r_{\loc,n}^\star\ge\frac{1-d_n/n}{2}\min\{2,b_nd_n/K\}
 [1-\sqrt{1-(1-p_+)^{d_n}}].
\]
The factor $1/2$ counts each pair once. Testing sites until the first
click gives risk at most
$2b_n(1-p_-)/(Kp_-)\to2/K$, whereas a fixed report has loss two outside at most
$2K/b_n+1$ sites, which gives the prior only limit.

For the amplitude lower bound, reveal $k$. For
$q(\theta)=\|(I-\rho(\theta))\phi'(\theta)\|^2$, the derivative bound
$|q'|\le C_q=4M_1M_2+2M_1^3$ bounds the conditional QFI by
$4a^2\ell_k(1+\delta_{q,n})$, where $\ell_k=(n-k+1)/n$ and
$\delta_{q,n}=C_q(s_n+aL)/(q_0\sqrt{nq_0})=o(1)$.
The preceding lemma gives
\[
 r_{\amp,n}^\star\ge\frac1n\sum_{k=1}^n
 \{3/L^2+(24a^2/5)\ell_k(1+\delta_{q,n})\}^{-1}.
\]
Its Riemann limit is $(24a^2/5)^{-1}\log(1+8a^2L^2/5)>0$.

For the upper bound, test $m=o(n)$ sites ending at
$j_0=\lfloor n/2\rfloor$, with $mp_-\to\infty$. A click places the last
$N=\lfloor n/4\rfloor$ sites after the change. Use
$O_\theta=|\phi(\theta)\rangle\langle\chi_\theta|+|\chi_\theta\rangle\langle\phi(\theta)|$,
$\chi_\theta=(I-\rho(\theta))\phi'(\theta)/\sqrt{q(\theta)}$.
The known observable $O_{\theta_{0,n}}$ is measured on those sites.
Outcomes $X\in\{-1,0,1\}$ have variance at most one, mean derivative
$2\sqrt{q(\theta_{0,n})}$ at the known center and second derivative bounded by
$2(M_2+M_1^2)$. Thus $Z=n^{-1/2}\sum_{i=1}^NX_i$ obeys
$\mathbb E_zZ=az/2+o(1)$, $\operatorname{Var}_zZ\le N/n$, uniformly.
The clipped report $2aZ/(3/L^2+a^2)$ has risk at most
$1/(3/L^2+a^2)+o(1)$, by expansion and $\mathbb Ez^2=L^2/3$. Report zero when the pilot
has no click.

The ideal branch $k\le j_0$ preserves the amplitude prior, and the pilot
disagrees with that branch with probability at most
$m/n+e^{-mp_-}=o(1)$. The pilot and tail act on disjoint sites, giving total risk
$1/[2(3/L^2+a^2)]+L^2/6+o(1)<L^2/3$.
Both marginal risks and their gains over the prior alone are therefore
positive in the limit; the prior risks are $2+o(1)$ and $L^2/3$.
\end{proof}

\endgroup

\begingroup
\qcpresetvolume{Ccert}
\qcparticlelayout

\renewcommand{\thesection}{T\arabic{section}}\setcounter{section}{0}
\section{Certificate construction for continuous ideal step risks}\label{qcp-C-sec:certificate}
At $L=1/2$ and $s=s_{\mathrm c}$, partition the position prior into
$N$ cells. The following measurements and duals give the bounds in
Section O \cite{qcp-M-holevo1973,qcp-C-eldar2003}.

\subsection{Representation and complete upper bounds}
Write $u=J/N+r$, with independent uniform $J\in\{0,\ldots,N-1\}$
and $r\in[0,1/N]$. Delete the last $r$ units of the field at the left grid endpoint,
translate right and add vacuum to obtain the shifted field. Subtracting
$r$ from a clipped position report reduces radial loss and preserves
amplitude risk, so the grid optima are lower bounds.

For eight Gauss amplitude nodes, put $h_i=a(3/2+z_i)$,
$\mathsf G(v)_{ij}=e^{-(h_i-h_j)^2(1-v)/2}$ and
$\mathsf D_{ii}=e^{-h_i^2/(2N)}$.
Endpoint projections use $\mathsf D_{jj}$, and the innovation Gram
matrix is $\mathsf G(v)-\mathsf D\mathsf G(v+1/N)\mathsf D$.
Directed Cholesky factorization gives dimension $1+8N$, including vacuum.
For $u=l+t\in[l,l+1/N]$, the innovation coordinates are
\begin{equation}
 C_{\rm cell}^{-1}
 \left(\mathsf G(l)_{ij}e^{(h_j^2/2-h_ih_j)t}
       -(\mathsf D\mathsf G(l+1/N)\mathsf D)_{ij}e^{h_j^2t/2}\right)_{i=1}^8.
 \label{qcp-C-eq:innovation}
\end{equation}
Endpoint residuals in distinct cells occupy orthogonal spaces $Q_c$.
For dyadic frames $V,F$, certify $V^\dagger V\preceq n_VI$ and
$FF^\dagger\preceq n_FI$, so $VF/\sqrt{n_Vn_F}$ is contractive.
Add the $Q_c$ effects and the positive complement
$I-\sum_iE_i-\sum_cQ_c$, with report $(1/2,0)$.
Starting from fallback risks $1/2$ and $1/12$, integrate each effect's
risk change across every loss kink. Integrals of $e^{bt}$ and $te^{bt}$
use entire series remainders at small exponents.

\subsection{Prior integration and feasible lower bounds}
Complex amplitude continuation has trace one and trace norm
$e^{2a^2(1-u)(\operatorname{Im}\zeta)^2}$.
Completeness gives $\|\int y^jM(dx,dy)\|\le2^{-j}$ for $j=1,2$.
Integrate the quadratic term exactly. The remaining Cauchy bounds at
radius $R$ are $(R+3/4)e^{2a^2R^2}$ for amplitude and
$2e^{2a^2R^2}$ for position, giving the Gauss allowances
\begin{equation}
 Q_8(R)=\frac{(8!)^4}{17(16!)^2}\frac{e^{2a^2R^2}}{R^{16}},\qquad
 \varepsilon_\loc=2Q_8(R),\quad \varepsilon_\amp=(R+3/4)Q_8(R).
 \label{qcp-C-eq:gauss-fee}
\end{equation}
At $R=6/5$, both task allowances and their weighted joint allowance
are below $1.2\times10^{-7}$, charged once per endpoint.

For $K_{\rm cmp}=B^\dagger/\sqrt c$, $c\ge\|B\|^2$, retain the
compression failure flag. Return fidelity is at least
$\langle\psi,K_{\rm cmp}^\dagger K_{\rm cmp}\psi\rangle^2$;
omitted mass $\delta$ gives average distance at most $\sqrt{2\delta}$.
If $\|W-W_c\|_F\le e$, $\|W\|_F\le1$ and node error is $e_z$, then
\begin{equation}
 \kappa_\loc=2e(2+e),\qquad
 \kappa_\amp=e(2+e)+2e_z(1+e)^2,\qquad
 \kappa_R=(\kappa_\loc+\kappa_\amp)/2.
 \label{qcp-C-eq:coordinate-fee}
\end{equation}
Indeed, $\|WDW^\dagger-W_cDW_c^\dagger\|\le\|D\|e(2+e)$.
Shift the $d_c$-dimensional dual by $\kappa_jI$, at cost $d_c\kappa_j$,
and use the zero dual on failure, charging $B_j\sqrt{2\delta}$.
An entrywise residual bound $|Z-FF^\dagger|\le e_Z$ gives
$Z\succeq-d_ce_ZI$ and certifies the corrected slacks.

Let $S_0,S_1,S_2$ be the unnormalized amplitude moments.
For any self-adjoint $A_0$, set $E_A=S_0A_0+A_0S_0-2S_1$. The dual
\[
 Y_\amp=S_2-A_0S_0A_0-\|E_A\|I/2
\]
is feasible: for $|y|\le1/2$, its slack is
$(y-A_0)S_0(y-A_0)+yE_A+\|E_A\|I/2\succeq0$.
The entrywise estimate $\|E_A\|\le d_c\max_{ij}|(E_A)_{ij}|$
completes the bound.

\renewcommand{\thesection}{O\arabic{section}}\setcounter{section}{0}
\section{Positive gap certificate at the reference center}\label{qcp-C-sec:latest-certificate}
Set $s=s_{\mathrm c}$ and $L=\lambda=1/2$.
Section T supplies the measurement and integration interfaces.

\subsection{Purified comparison and feasible duals}
For $N=256$, $d=1/N$ and $u=j/N+r$, purify $r$ in
$L^2([0,d],dr/d;\mathcal F)$ by
$\Psi_{j,z}(r)=|(s_{\mathrm c}+az)\mathbf1_{[j/N+r,1]}\rangle$.
A decomposable POVM with fiber report
$\operatorname{clip}_{[0,U]}(x-r)$, $U=1-d$, reduces location loss
$2|x-u|$ for target $j/N$ and preserves amplitude risk. Thus
\begin{equation}
 R_{\rm pur}^{\star}\le R_{\rm orig}^{\star},\qquad
 r_{j,{\rm pur}}^{\star}\le r_{j,{\rm orig}}^{\star},\quad
 j\in\{{\rm loc},{\rm amp}\}.
 \label{qcp-C-O-eq:pur-direction}
\end{equation}
For $m_{ab}=(h_a-h_b)^2/2$ and
$\exprel(t)=(e^t-1)/t$, with $\exprel(0)=1$, integration over $r$ gives
\begin{align}
 \overline K(v)_{ab}&=e^{-m_{ab}(1-v)}\exprel(m_{ab}d),\\
 \langle\Psi_{j,a},\Psi_{k,b}\rangle
 &=e^{-h_a^2(k-j)/(2N)}\overline K(k/N)_{ab},\quad j\le k.
 \label{qcp-C-O-eq:pur-kernel}
\end{align}
The backward innovation is $\overline K(v)-D\overline K(v+d)D$,
$D_{aa}=e^{-h_a^2d/2}$, and the terminal block is
$\exprel(-m_{ab}d)$. Directed Cholesky factorization with eight nodes gives
dimension $8N$. The holomorphic density has trace one and norm at most
$e^{2a^2(\operatorname{Im}\zeta)^2}$, so Section T gives
\begin{equation}
 \varepsilon_{\rm prior}=3/25000000.
 \label{qcp-C-O-eq:prior-fee}
\end{equation}

For dyadic weighted coordinates $W$ of rank $d_r=1675$, let
$C_{x,y}=W\operatorname{diag}(|x-u_j|+(y-z_a)^2/2)W^\dagger$.
The stored dual $Y_0$, positive correction $D_{\rm final}$,
$p=2^{-36}$ and row repairs $d_i$ give
\begin{equation}
 Y_{\rm grid}=Y_0-D_{\rm final}-pI-\operatorname{diag}(d_i)
 \preceq C_{x,y}.
 \label{qcp-C-O-eq:grid-dual}
\end{equation}
For each grid action, the residual
$E_k=C_{x,y}-Y_0+D_k+pI-F_kF_k^\dagger$ has absolute row sums at most
$d_i$. The bound $2|v_iv_j|\le|v_i|^2+|v_j|^2$ gives
$E_k+\operatorname{diag}d_i\succeq0$; subsequent positive corrections
preserve earlier constraints. This covers $256\cdot129=33024$ actions.

The arithmetic uses binary64 products and sums, with FMA allowed,
without overflow, underflow or products at reduced precision.
For $u_f=2^{-53}$ and $\gamma_m=mu_f/(1-mu_f)$, a length-$k$ dot
product has error at most $\gamma_{2k}\sum|a_ib_i|$.
Row norm bounds in powers of two $w_i,f_i,b_i$ give entry errors
$2^{-38}w_iw_j$, $2^{-39}f_if_j$ and $2^{-35}b_ib_j$ for costs,
Gram factors and positive corrections, with remaining allowance
$2^{-44}$. Use $\gamma_{4096+2\cdot33024}<2^{-35}$ and double the
positive norm sums. If $A_i$ is the exact absolute computed residual
sum and $B_i$ its error allowance, then
$\widehat s_i\ge(1-\gamma_{d_r+16})(A_i+B_i)$ with
$\gamma_{d_r+16}<1/4$. Maximizing
$d_i^{\rm raw}=\lceil4\widehat s_i2^{64}\rceil$ over actions gives
\begin{equation}
 d_i=2^{-64}\left\lceil d_i^{\rm raw}/3\right\rceil
       \ge A_i+B_i.
 \label{qcp-C-O-eq:repair}
\end{equation}
Position costs are affine between nodes. Amplitude chords of spacing
$\Delta=1/128$ exceed the quadratic cost by at most
$\Delta^2WW^\dagger/8$. Subtract this matrix and
$\varepsilon_{\rm coord}I$ to extend the dual to every continuous report.

\subsection{Return to the original risk}
For the compression $V$, the exact bound
$\|V^\dagger V-I\|\le\sqrt{\|(V^\dagger V-I)^2\|_\infty}<1$
certifies full column rank and contraction after normalization;
$\|\cdot\|_\infty$ is the maximum absolute row sum. Put
$t=c_{\rm old}/c_{\rm new}\ge1$. The omitted mass and compressed
lower objective are
$\delta_{\rm new}=1-t(1-\delta_{\rm old})$ and
$q=t[\tr Y_{\rm grid}-\Delta^2\tr WW^\dagger/8-d_r\varepsilon_{\rm coord}]$.
\begin{qcpClemma}[Square root lower return]
For a contraction of full column rank $B=U|B|$ from the retained space
to the original space, unit vectors $\psi_\theta$, probability prior
$\pi$ and $0\le\ell(a,\theta)\le L_\theta$, a compressed lower bound
$q$ yields
\begin{equation}
 r_{\rm orig}^{\star}\ge(\sqrt{q_+}-\sqrt D)_+^2,
 \qquad D\ge\int L_\theta(1-\|B^\dagger\psi_\theta\|^2)d\pi.
 \label{qcp-C-O-eq:sqrt-return}
\end{equation}
\end{qcpClemma}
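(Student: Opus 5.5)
The plan is to compare an arbitrary complete POVM on the original space with a compressed POVM obtained through the isometric part of the polar decomposition. Then I would control the change of state by a square-root (Minkowski) triangle inequality in $L^2(\pi)$. The interpretation of $q$ is that it lower-bounds the compressed Bayes risk $\int\!\int\ell(a,\theta)\,\langle B^\dagger\psi_\theta,N(da)B^\dagger\psi_\theta\rangle\,d\pi$ for every complete POVM $N$ on the retained space. This is what the dual $Y_{\rm grid}$ certifies, since feasibility gives $\int\tr[Y\,\cdot]\le$ risk for subnormalized inputs as well. Since risks are nonnegative, $q$ may be replaced by $q_+$.

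\emph{Step 1 (pulling back a POVM).} Fix a complete POVM $M$ on the original space and write $B=U|B|$. Full column rank makes $U$ an isometry from the retained space, so $N=U^\dagger MU$ is complete there. Since $B^\dagger=|B|U^\dagger$, we have $UB^\dagger\psi=U|B|U^\dagger\psi=P\psi$ with $P=(BB^\dagger)^{1/2}$ and $0\preceq P\preceq I$, because $B$ is a contraction. Writing $C_\theta=\int\ell(a,\theta)M(da)$, so that $0\preceq C_\theta\preceq L_\theta I$, the compressed risk of $N$ equals $\int\|C_\theta^{1/2}P\psi_\theta\|^2\,d\pi$. The original risk of $M$ is $\int\|C_\theta^{1/2}\psi_\theta\|^2\,d\pi$.

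\emph{Step 2 (state defect).} For each $\theta$, $\|C_\theta^{1/2}(I-P)\psi_\theta\|^2\le L_\theta\langle\psi_\theta,(I-P)^2\psi_\theta\rangle$. The scalar inequality $(1-p)^2\le1-p^2$ on $[0,1]$ and functional calculus give $(I-P)^2\preceq I-P^2=I-BB^\dagger$. Hence this term is at most $L_\theta(1-\|B^\dagger\psi_\theta\|^2)$, and its $\pi$-integral is at most $D$.

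\emph{Step 3 (Minkowski and infimum).} Apply the triangle inequality pointwise, $\|C^{1/2}P\psi\|\le\|C^{1/2}\psi\|+\|C^{1/2}(I-P)\psi\|$, and then Minkowski's inequality in $L^2(\pi)$. This gives $\sqrt{q_+}\le\sqrt{r_{\rm orig}(M)}+\sqrt D$, so $r_{\rm orig}(M)\ge(\sqrt{q_+}-\sqrt D)_+^2$. Taking the infimum over $M$ proves \eqref{qcp-C-O-eq:sqrt-return}.

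The main obstacle I expect is the bookkeeping rather than any hard estimate. One point is to confirm that the certified $q$ really bounds the compressed risk for every complete POVM on the retained space, including subnormalized inputs and the case where $BB^\dagger$ is not a projection. The other is measurability of $\theta\mapsto C_\theta$ for continuous report spaces. The first is handled by weak duality, since the dual constraint is $Y\preceq C_{x,y}$ pointwise in the report. The second follows from the measurability of $\ell$ and the boundedness $C_\theta\preceq L_\theta I$.
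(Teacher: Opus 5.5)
Your proof is correct and follows the paper's argument essentially step for step. Your $P=(BB^\dagger)^{1/2}=U|B|U^\dagger$ is exactly the paper's positive contraction $S=UB^\dagger$, and the remaining steps coincide: the pullback $U^\dagger MU$, the pointwise triangle inequality combined with Minkowski in $L^2(\pi)$ (the paper's ``triangle inequality in the direct sum''), the operator inequality $(I-S)^2\preceq I-S^2$, and the final infimum over $M$.
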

\begin{proof}
The polar isometry gives the positive contraction $S=UB^\dagger$.
Pulling an original POVM through $U$, the triangle inequality in the
direct sum gives
$\sqrt{q(M)}\le\sqrt{r(M)}+
[\int L_\theta\|(I-S)\psi_\theta\|^2d\pi]^{1/2}$.
Since $(I-S)^2\preceq I-S^2$, the last term is at most $\sqrt D$.
Take near minimizers of the original risk.
\end{proof}
The joint loss envelope is $383/256$. With
$D=(383/256)\overline\delta_{\rm new}$, the return gives
\begin{equation}
 R^{\star}_{\rm orig}\ge
 (\sqrt{q_+}-\sqrt D)_+^2-\varepsilon_{\rm prior}\ge\JointLower.
 \label{qcp-C-O-eq:final-lower}
\end{equation}
For $q>D$, enclose $\sqrt{qD}$ above by a rational number and evaluate
$q+D-2\sqrt{qD}$; for $q\le D$, use zero.

\subsection{Numerical bounds}
The saved coordinates, dual factors and upper measurements give the
following allowances. Displayed intervals are rounded outward;
calculations use the exact values in the numerical archive.
\begin{center}\small
\begin{tabular}{lr}
\toprule
Quantity & Bound\\
\midrule
Repaired grid dual trace & $[0.0941492445,\,0.0941492446]$\\
Report mesh allowance & $[7.62939453,\,7.62939454]\times10^{-6}$\\
Coordinate trace allowance & $<2.744\times10^{-10}$\\
Compressed objective $q$ & $[0.0941416148,\,0.0941416149]$\\
Weighted omitted mass $D$ & $<2.561\times10^{-12}$\\
Prior integration allowance & $1.2\times10^{-7}$\\
\bottomrule
\end{tabular}
\end{center}
Subtracting the marginal upper bounds from the returned joint lower
bound gives Table~\ref{qcp-C-tab:anchor}.
\begin{table}[htbp]
\centering
\begin{tabular}{lc}
\toprule
Quantity at $L=1/2$, $s=s_{\mathrm c}$ & Certified interval\\
\midrule
$r_\loc^\star$ & $[0.12461689,\,0.12546830]$\\
$r_\amp^\star$ & $[0.06061206,\,0.06080745]$\\
$R_{1/2}^\star$ & $[0.09414051,\,0.09444910]$\\
$10^3\G_{1/2}$ & $[1.0026,\,1.8347]$\\
\bottomrule
\end{tabular}
\caption{Bounds for the continuous prior \eqref{qcp-M-eq:prior} at
$L=1/2$, $h\in[a,2a]$ and position loss $2|x-u|$.
Each interval is rounded outward from the exact risk certificate.
Section~\ref{qcp-C-sec:certificate} gives the measurement interfaces.}
\label{qcp-C-tab:anchor}
\end{table}

\endgroup
\begingroup
\qcpresetvolume{D}
\qcparticlelayout
\setlength{\emergencystretch}{3em}

\numberwithin{equation}{section}

\newcommand{\GapLower}{0.001002640149993936}

\renewcommand{\thesection}{D\arabic{section}}\setcounter{section}{0}
\section{Robustness in prior width at the reference center}
\begin{qcpDcorollary}[A width neighborhood in fixed units]
\label{qcp-D-cor:width-neighborhood}
Fix $s=s_{\mathrm c}$, $a,q_0,K$ and both raw losses. Put
$I_L=\{L:|L-1/2|\le3\times10^{-4}\}$ and $g_{I_L}=2.525\times10^{-4}$.
For each known $L\in I_L$, amplitudes are positive and
\begin{equation}
 \G_{1/2}(s_{\mathrm c};L)>g_{I_L},\qquad
 \G_{\lambda}(s_{\mathrm c};L)
 >2\min\{\lambda,1-\lambda\}g_{I_L}\quad(0<\lambda<1).
 \label{qcp-D-eq:width-positive}
\end{equation}
With $H=s_{\mathrm c}+a(0.5003)$, $C_{I_L}=1+2(0.5003)^2$ and
$H/\sqrt{nq_0}\le r_0$, uniformly on $I_L$,
\begin{equation}
 \G_{1/2,n}(s_{\mathrm c};L)
 >g_{I_L}-2C_{I_L}\varepsilon_{n,H}-d_{\loc,n},
 \label{qcp-D-eq:width-finite-return}
\end{equation}
where $d_{\loc,n}$ is evaluated at $s_{\mathrm c}$ in
Theorem~\ref{qcp-C-thm:channels}. For a fixed known curve, one
sufficiently large $n$ gives positivity throughout $I_L$ and at every
fixed interior weight.
\end{qcpDcorollary}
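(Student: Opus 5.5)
Starting from the certified anchor at $L=1/2$ (Table~\ref{qcp-C-tab:anchor}: $\G_{1/2}(s_{\mathrm c};1/2)>1.0026\times10^{-3}$), I would show the gap moves little in $L$. The tool is Lemma~\ref{qcp-C-lem:transfer}, applied to experiments with different offset priors. Two such experiments do not share a prior, so the first step is to reparametrize. Write $z=Lt$ with $t\sim\Unif[-1,1]$, so that $h=s_{\mathrm c}+aLt$. The state family indexed by $(u,t)$ then has a fixed prior, and only the state map and the amplitude loss depend on $L$.

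For $L,L'\in I_L$, I would compare the two experiments through the identity channel. Its trace distance error is bounded by the averaged fidelity from \eqref{qcp-M-eq:geometry}:
\[
D=\sqrt{1-e^{-a^2(L-L')^2t^2(1-u)}}\le a|L-L'|\,|t|\sqrt{1-u}.
\]
Averaging gives $\varepsilon\le a|L-L'|\,\mathbb E|t|\,\mathbb E\sqrt{1-u}=a|L-L'|/3$. The amplitude loss $(y-Lt)^2$, with reports in $[-L,L]$, must also be compared across widths. Rescaling reports $y\mapsto yL'/L$ changes the loss by a factor $(L'/L)^2$. This enters as a loss change $d_\amp\le|{(L'/L)^2}-1|\,B_\amp$, which is of order $8|L-L'|$ near $L=1/2$. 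The position loss is unchanged because $s_{\mathrm c}$ and $K$ are fixed.

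Lemma~\ref{qcp-C-lem:transfer} then gives a Lipschitz bound
\[
|\G_{1/2}(L)-\G_{1/2}(1/2)|\le 2B_{1/2}\varepsilon+2\cdot\tfrac12 d_\amp.
\]
Here $B_{1/2}=1+2L^2\approx1.5$. With $|L-1/2|\le3\times10^{-4}$, the total perturbation is of order $10^{-3}\cdot3\times10^{-4}$ times modest constants. I expect the gap to stay above $1.0026\times10^{-3}-(\text{a few})\times10^{-4}$, and the task is to verify that this exceeds $g_{I_L}=2.525\times10^{-4}$. Amplitude positivity follows from $s_{\mathrm c}-aL=a(1.5-L)>0$.

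For general $\lambda$, I would use concavity of $\lambda\mapsto\G_\lambda$ rather than a separate estimate. Indeed $R^\star_\lambda$ is an infimum of affine functions of $\lambda$, hence concave, and the subtracted marginal term is affine in $\lambda$, so $\G_\lambda$ is concave. Since $\G_0=\G_1=0$, concavity gives $\G_\lambda\ge2\min\{\lambda,1-\lambda\}\G_{1/2}$, which is the second part of \eqref{qcp-D-eq:width-positive}.

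The finite record bound \eqref{qcp-D-eq:width-finite-return} follows from Theorem~\ref{qcp-C-thm:channels} with $H=s_{\mathrm c}+a\cdot0.5003$, which dominates $s_{\mathrm c}+aL$ on $I_L$. At $\lambda=1/2$ we have $2B_{1/2}=2(1+2L^2)\le2C_{I_L}$ and $2\lambda d_{\loc,n}=d_{\loc,n}$, which are exactly the constants in the statement. Both $\varepsilon_{n,H}\to0$ and $d_{\loc,n}\to0$, since $\tau_n\to s_{\mathrm c}^2$. Because these errors are uniform in $L$, a single $n$ works throughout $I_L$, and concavity extends positivity to every interior weight.

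The main obstacle is making the width perturbation rigorous with explicit constants. The offset prior support changes with $L$, and so does the report range $[-L,L]$ inside $\ell_\amp$. If the naive constants are too loose to stay above $g_{I_L}$, I would fall back on a sharper route. That route compares the joint optima directly using the rescaled rule and the exact fidelity integral, instead of the crude bound $d_\amp\le|{(L'/L)^2}-1|B_\amp$.
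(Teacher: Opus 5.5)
Most of your plan matches the paper: the coupling $z=Lt$ with a fixed prior on $(u,t)$, the averaged state distance $\varepsilon\le a\Delta_L/3$ with $\Delta_L=|L-1/2|$, positivity of $s_{\mathrm c}-aL$, and the reduction to $\lambda=1/2$. Your concavity argument there is equivalent to the paper's use of nonnegative regrets. The finite-record step through Theorem~\ref{qcp-C-thm:channels}, with $2B_{1/2}\le2C_{I_L}$ and $2\lambda d_{\loc,n}=d_{\loc,n}$, also matches.

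The gap is the width-continuity estimate, which is the only quantitative content of the corollary. Lemma~\ref{qcp-C-lem:transfer} gives absolute errors, so the perturbation is not of order $10^{-3}\cdot3\times10^{-4}$. At $\Delta_L=3\times10^{-4}$ the state term is $2B_{1/2}\varepsilon\approx a\Delta_L\approx5.0\times10^{-4}$. Your amplitude term is $d_\amp=|(L'/L)^2-1|B_\amp\approx4\Delta_L\approx1.2\times10^{-3}$. That term alone exceeds the certified anchor $1.0026\times10^{-3}$, so your bound does not even give $\G_{1/2}(s_{\mathrm c};L)>0$, let alone $\G_{1/2}(s_{\mathrm c};L)>g_{I_L}$.

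Averaging the report-uniform loss change over the prior does not rescue the rescaling route. Since $\max_{|\eta|\le1}(\eta-t)^2=(1+|t|)^2$ has mean $7/3$, the loss term is still about $7\times10^{-4}$, which with the state term again exceeds the available margin. The exact fidelity integral in your fallback only improves the state term at relative order $a^2\Delta_L^2$, so it cannot help.

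What is missing is an efficient way to charge the width change to the amplitude loss. The paper keeps reports in the common interval $[-L_m,L_m]$ with $L_m=\max\{L,1/2\}$; this is harmless, since clipping does not change either optimum. It then treats the width change as a change of target, $Lr$ versus $r/2$. The loss difference satisfies $|(y-Lr)^2-(y-r/2)^2|\le\Delta_L|r|\{2L_m+(L+1/2)|r|\}$ uniformly in reports. Its prior mean is $\varepsilon_\ell=\Delta_L[L_m+(L+1/2)/3]\approx2.5\times10^{-4}$. Together with $2B_{1/2}(L_m)\varepsilon$ this gives $|\Delta\G_{1/2}|<7.5\times10^{-4}$, which just suffices because the anchor exceeds $1.0025\times10^{-3}$.

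Your normalized coordinates also admit a multiplicative repair with much more room. At fixed states, $L^2(\eta-t)^2$ is exactly $\gamma=(2L)^2$ times $\tfrac14(\eta-t)^2$. Hence $e_\loc$ is unchanged and $e_\amp$ is multiplied by $\gamma$. Let $\G'_\lambda$ be the gap of the intermediate experiment with width-$L$ states and width-$1/2$ loss; then $\G_\lambda\ge\min\{1,\gamma\}\G'_\lambda$. Only the state comparison is now additive, giving $\G_{1/2}(s_{\mathrm c};L)\ge\min\{1,\gamma\}[\G_{1/2}(s_{\mathrm c};1/2)-a\Delta_L]\approx5.0\times10^{-4}>g_{I_L}$. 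Either argument closes the step; your proposal as written does not.
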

\begin{proof}
Couple $z=Lr$ and $z_0=L_0r$, where $L_0=1/2$ and
$(u,r)\sim\Unif[0,1]\otimes\Unif[-1,1]$. Set
$\Delta_L=|L-L_0|$, $L_m=\max(L,L_0)$.
The average state distance is at most $\varepsilon=a\Delta_L/3$,
since $\mathbb E|r|=1/2$ and $\mathbb E\sqrt{1-u}=2/3$.
On reports $[-L_m,L_m]$, the amplitude loss changes by at most
$\Delta_L[2L_m|r|+(L+L_0)r^2]$, with mean
$\varepsilon_\ell=\Delta_L[L_m+(L+L_0)/3]$.
Comparing states at fixed targets, then changing the amplitude target,
Lemma~\ref{qcp-C-lem:transfer} gives, with
$B_\lambda(L_m)=2\lambda+4L_m^2(1-\lambda)$,
\begin{align}
 |\Delta r_\loc^\star|&\le2\varepsilon,&
 |\Delta r_\amp^\star|&\le4L_m^2\varepsilon+\varepsilon_\ell,\nonumber\\
 |\Delta R_\lambda^\star|&\le B_\lambda(L_m)\varepsilon+(1-\lambda)\varepsilon_\ell,&
 |\Delta\G_\lambda|&\le2B_\lambda(L_m)\varepsilon+2(1-\lambda)\varepsilon_\ell.
 \label{qcp-D-eq:width-continuity}
\end{align}
At equal weights, $\Delta_L\le3\times10^{-4}$, $L_m\le0.5003$,
$L+L_0\le1.0003$ and $a<1.6655$ give
$|\Delta\G_{1/2}|<7.5\times10^{-4}$.
The bound on $a$ follows from eight terms of
$2\operatorname{atanh}(1/3)=\log2$ and the geometric tail.
Since $\GapLower>1.0025\times10^{-3}$, the central certificate gives
$\G_{1/2}>g_{I_L}$. Nonnegative regrets give the factor
$2\min\{\lambda,1-\lambda\}$; \eqref{qcp-C-eq:finite-return}
gives the finite record bound with common $H$.
\end{proof}

\endgroup

\pdfbookmark[1]{II. Leading recovery for shrinking rises}{qcp-SM-part2}
\section*{II. Leading recovery for shrinking rises}
\begingroup
\qcpresetvolume{C}
\qcparticlelayout

\renewcommand{\thesection}{J\arabic{section}}\setcounter{section}{0}
\section{Leading recovery for smooth rises}\label{qcp-C-J-J-start}
Consider \eqref{qcp-M-eq:rise-family}--\eqref{qcp-M-eq:rise-loss} for
fixed $F,L$ and $0\le\lambda\le1$.
Put $c_\chi=\sqrt{1+\chi}$, $\delta=c_\chi/s^2$, and
$B_\loc=2$, $B_\amp=4L^2$, $B_\lambda=\lambda B_\loc+(1-\lambda)B_\amp$.
Define $J_F(r)=\int_{\mathbb R}[F(x-r)-F(x)]^2dx$ and
$\kappa_F^2=\int_0^1F'(x)^2dx$. The local location kernel is
\begin{equation}
 k_\chi(v)=\exp[-\chi J_F(c_\chi|v|/\chi)/2]\quad(0<\chi<\infty),
 \qquad k_0(v)=e^{-|v|/2},\quad k_\infty(v)=e^{-\kappa_F^2v^2/2}.
 \label{qcp-C-J-eq:kernel}
\end{equation}
The experiment $\mathsf P_\chi$ has this overlap. Its spectral density
and canonical outcome density are
\begin{equation}
 m_\chi(p)=\frac1{2\pi}\int e^{-ipv}k_\chi(v)\,dv,\qquad
 p_\chi(x)=|\mathcal F^{-1}\sqrt{m_\chi}(x)|^2,
 \label{qcp-C-J-eq:spectral}
\end{equation}
where $\mathcal F$ is unitary on $L^2(\mathbb R)$ and the inverse is in $L^2$.
Write
\begin{equation}
 r_F(\chi)=2\int\min\{1,|x|/K\}p_\chi(x)\,dx,\qquad
 \overline{\mathcal A}_L=\int_0^1\mathcal A_L(a\sqrt e)\,de,
 \label{qcp-C-J-eq:risks}
\end{equation}
where $\mathcal A_L(c)$ is the amplitude optimum for $\ket{cz}$ with
$z\sim\Unif[-L,L]$. On $e\in[0,1]$, coherent state comparison gives
the modulus $4L^2aL\sqrt{|e-e'|}$ for $\mathcal A_L(a\sqrt e)$.
The three optimal risks satisfy
\begin{equation}
 r_\loc^\star=r_F(\chi)+o(1),\quad
 r_\amp^\star=\overline{\mathcal A}_L+o(1),\quad
 R_\lambda^\star=\lambda r_F(\chi)+(1-\lambda)\overline{\mathcal A}_L+o(1).
 \label{qcp-C-J-eq:three-risks}
\end{equation}
as $s\to\infty,w\to0$, with arbitrary $\chi=s^2w$.

\subsection{Geometry and spectral continuity}
\begin{qcpClemma}[Geometry and uniform spectral comparison]
\label{qcp-C-J-lem:geometry}\label{qcp-C-J-lem:spectral}
For $r\ge0$, $r^2/(1+r)\le J_F(r)\le r$; for $r\ge1$,
$J_F(r)=r-C_F$, $C_F=2\int_0^1F(1-F)$; and
$J_F(r)/r^2\to\kappa_F^2$ as $r\downarrow0$.
The spectral densities $m_\chi$ are normalized. If $\chi_n\to\chi_0$ in
$[0,\infty]$, then $\|m_{\chi_n}-m_{\chi_0}\|_1\to0$ and
$\|\sqrt{m_{\chi_n}}-\sqrt{m_{\chi_0}}\|_2\to0$.
There are parameter independent channels in both directions whose errors
are bounded by this square-root distance, uniformly over the translation
$v\in\mathbb R$.
\end{qcpClemma}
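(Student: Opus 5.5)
The plan has three parts. First, the elementary properties of $J_F$. Then continuity of the spectral densities. Finally, the channels between translated experiments.

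\emph{Properties of $J_F$.} Write $G_r(x)=F(x-r)-F(x)$. Since $F$ is nondecreasing with values in $[0,1]$, we have $0\le -G_r\le 1$. Hence
\[
J_F(r)\le\int|G_r|\,dx=\int[F(x)-F(x-r)]\,dx=r,
\]
where the last identity holds because $F$ rises from $0$ to $1$.

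For the lower bound, $G_r$ is supported on $[0,1+r]$, and Cauchy--Schwarz gives
\[
r^2=\Bigl(\int|G_r|\Bigr)^2\le(1+r)\,J_F(r).
\]

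For $r\ge1$ the supports of $F'$ and $F'(\cdot-r)$ separate. Expanding $G_r^2$ on $[0,1]$, $[1,r]$ and $[r,r+1]$ gives $J_F(r)=r-C_F$ with $C_F=2\int_0^1F(1-F)$.

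Finally, $G_r/r\to -F'$ in $L^2$, since $F\in C^2$ with compact support of $F'$. This gives $J_F(r)/r^2\to\kappa_F^2$.

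\emph{Normalization and $L^1$ continuity of $m_\chi$.} I will show that $k_\chi$ is positive definite by exhibiting it as a coherent-state overlap. Specifically, $k_\chi(v)$ is the overlap of the coherent fields $\sqrt{\chi}\,F(\cdot-c_\chi v/\chi)$ at displacements differing by $v$, since $\|f(\cdot-r)-f\|^2$ is exactly $\chi J_F(r)$ after rescaling. The family is translation covariant, and $k_\chi$ is continuous with $k_\chi(0)=1$. Bochner's theorem then gives a probability measure, and $m_\chi(p)\,dp$ has total mass $k_\chi(0)=1$.

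For densities, I will show $k_\chi\in L^1$ from $J_F(r)\ge r^2/(1+r)$, which yields at least exponential decay in $|v|$. This decay is uniform for $\chi$ in compact subsets of $(0,\infty)$. At the endpoints the explicit kernels $e^{-|v|/2}$ and $e^{-\kappa_F^2v^2/2}$ are integrable.

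Pointwise convergence $k_{\chi_n}\to k_{\chi_0}$ follows from continuity of $J_F$. At $\chi\to0$ I use $J_F(r)=r-C_F$ for large $r$, so $\chi J_F(c_\chi|v|/\chi)\to|v|$. At $\chi\to\infty$ I use the small-$r$ asymptotic together with $c_\chi^2/\chi\to1$.

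Because each $m_\chi\ge0$ has unit mass, Scheffé's lemma reduces $L^1$ convergence to pointwise (or a.e.) convergence of $m_{\chi_n}$. I will obtain that by Fourier inversion, using dominated convergence with a common integrable majorant of the $k_\chi$ near $\chi_0$. The square-root bound follows from
\[
\|\sqrt{m}-\sqrt{m'}\|_2^2\le\|m-m'\|_1 .
\]

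\emph{Expected main obstacle: a uniform majorant near $\chi_0\in\{0,\infty\}$.} I will try the bound
\[
\chi J_F(c_\chi|v|/\chi)\ge \frac{c_\chi^2v^2}{\chi+c_\chi|v|},
\]
which follows from $J_F(r)\ge r^2/(1+r)$. This gives
\[
k_\chi(v)\le \exp\Bigl[-\tfrac{v^2}{2(\chi/c_\chi^2+|v|/c_\chi)}\Bigr],
\]
where $\chi/c_\chi^2<1$ and $c_\chi\ge1$.

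This is not uniform as $\chi\to\infty$, because the $|v|/c_\chi$ term vanishes and the tail must then be controlled differently. In that regime I will instead split: for $c_\chi|v|/\chi\le1$ the bound is Gaussian in $v$, and beyond that it is exponential in $\chi$. If this fails, I will fall back on weak convergence of the probability measures (Lévy continuity) combined with equicontinuity of the densities.

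\emph{Channels.} The experiment $\mathsf P_\chi$ is realized on $L^2(\mathbb R)$ by the states $\psi_v=e^{ivP}\mathcal F^{-1}\sqrt{m_\chi}$, whose overlaps are $k_\chi$. The unitary multiplication by the phase of $\sqrt{m_{\chi'}}/\sqrt{m_\chi}$, a fixed known map, sends $\psi_v^{(\chi)}$ to a vector at distance $\|\sqrt{m_\chi}-\sqrt{m_{\chi'}}\|_2$ from $\psi_v^{(\chi')}$. Both vectors are taken in momentum representation with nonnegative amplitudes, so the identity channel in that representation already compares them. The distance is independent of $v$, because translation acts as a common phase $e^{ipv}$. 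Trace distance between the pure states is bounded by this vector distance, and this holds in both directions.
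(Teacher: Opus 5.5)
Your proposal is correct and follows the paper's proof: the support and Cauchy--Schwarz bounds for $J_F$, the split at $0,1,r,r+1$, $L^2$-differentiability of translation, positive definiteness from coherent overlaps of the square-integrable difference fields, dominated convergence with Scheff\'e, and identification of $\mathsf P_\chi$ with the vectors $e^{-ipv}\sqrt{m_\chi(p)}$ through equality of Gram kernels. The obstacle you anticipate does not exist: because $\chi/c_\chi^2<1$ and $|v|/c_\chi\le|v|$, your own display already gives $k_\chi(v)\le e^{-v^2/(2(1+|v|))}$ for every $\chi$, with the endpoints following by pointwise limits, and letting $|v|/c_\chi\to0$ as $\chi\to\infty$ only strengthens the decay, so this is exactly the paper's uniform majorant and neither the splitting nor the L\'evy-continuity fallback is needed.
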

\begin{proof}
The function $F(x)-F(x-r)\in[0,1]$ has integral $r$ and is supported on
an interval of length at most $1+r$. The lower bound follows from
Cauchy--Schwarz, and the upper bound from squaring this function.
Splitting the integral at $0,1,r,r+1$ yields the large-shift identity;
differentiability of translation in $L^2$ gives the small-shift limit.
For the quintic,
$\kappa_F^2=10/7$, $C_F=50/231$.
The exact coherent fields
\begin{equation}
 g_{\chi,v}(x)=F((x-c_\chi v)/\chi)-F(x/\chi)
 \quad\hbox{in }L^2(\mathbb R)
 \label{qcp-C-J-eq:fields}
\end{equation}
give \eqref{qcp-C-J-eq:kernel}, with indicator-difference and
one-mode $\ket{\kappa_Fv}$ endpoints. Geometry gives pointwise kernel
convergence and the integrable majorant
\[
 |k_\chi(v)|\le e^{-v^2/(2(1+|v|))},\qquad
 \chi J_F(c_\chi|v|/\chi)\ge
 \frac{(1+\chi)v^2}{\chi+c_\chi|v|}\ge\frac{v^2}{1+|v|}.
\]
Normalization follows from positive definiteness and Fourier inversion,
and the asserted norm convergence follows from dominated convergence and
Scheffe's theorem. Equality of the Gram kernels identifies the cyclic states $\ket{\psi_{\chi,v}}$
with $e^{-ipv}\sqrt{m_\chi(p)}$. Their vector distance is independent of $v$; the resulting isometries
and completed inverses give both channels.
\end{proof}

\subsection{Product comparison under the original prior}
Partition $[0,1]$ into $N_c$ cells of length $b=1/N_c$, centered at $u_j$.
A cell is called good if $(j+1)b+w\le1$. Its suffix length is
$\ell_j=1-(j+1)b-w$, and its local coordinate is $v=(u-u_j)/\delta$.
The mass of the remaining cells is
$\beta=1-b\lfloor(1-w)/b\rfloor\le w+b$.
The reference experiment $\mathsf Q_{s,w}$ emits a good label together with
\begin{equation}
 \ket{\psi_{\chi,v}}\otimes\ket{a\sqrt{\ell_j}\,z},
 \label{qcp-C-J-eq:comparison-product}
\end{equation}
or one bad label with vacuum, preserving the original targets and losses.
For $0<t<1$, put
\begin{align}
 P&=\min\left\{1,2(N_c+1)\left(\frac{aL}{s}
       +\frac{\sqrt{2/\pi}}{2s\sqrt t}\right)\right\},\nonumber\\
 H&=t(s\sqrt{b/2}+aL),\qquad D_w=aL\sqrt{b+w}.
 \label{qcp-C-J-eq:budgets}
\end{align}
Two-way channels have prior averaged errors
\begin{equation}
 \varepsilon_T\le\beta+P+H+D_w,\qquad
 \varepsilon_S\le\beta+D_w,
 \label{qcp-C-J-eq:channels}
\end{equation}
capped at one. The reference factors at each label. Lemma~\ref{qcp-C-lem:transfer} gives
$0\le\G_\lambda\le B_\lambda(\varepsilon_T+\varepsilon_S)$ and a complete rule
with $e_j\le B_j(\varepsilon_T+\varepsilon_S)+\eta$ for both tasks and any $\eta>0$.
\begin{proof}[Product comparison]
A fraction $t$ of the intensity is assigned to the pilot. On a good cell,
$A_w(u)=\int_0^1F((r-u)/w)dr=1-u-w(1-\mu_F)$, $\mu_F=\int_0^1F$.
Constant-mode homodyne gives $X=\sqrt t(s+az)A_w(u)+Z/2$, $Z\sim N(0,1)$;
the clipped report $1-w(1-\mu_F)-X/(s\sqrt t)$ has error at most
$aL/s+|Z|/(2s\sqrt t)$. Integration over the boundary strips, with the
pilot's selection probabilities, gives $P$.

For an observed good label, displace the retained field by
$-\sqrt{1-t}sF((r-u_j)/w)$. Before attenuation,
its residual is $s[F((r-u)/w)-F((r-u_j)/w)]+azF((r-u)/w)$.
The first term is supported on $W_j=[jb,(j+1)b+w]$ and has norm at most
$s\sqrt{b/2}$; the offset has norm at most $aL$. Thus attenuation costs
$H$, and omitting the offset on $W_j$ costs $D_w$. The displacement is
independent of $u,z$. Spatial factorization, followed by the change of
variable $x=s^2(r-u_j)$, gives \eqref{qcp-C-J-eq:fields} and the constant
suffix mode, which is vacuum when $\ell_j=0$.

Given $u,z$, the pilot and retained field are independent. Wrong
labels and terminal cells contribute $P+\beta$ to the forward error.
The reverse map embeds the location field in $W_j$, the amplitude
mode in the suffix, and vacuum elsewhere, then restores the known
reference. Its errors are $D_w+\beta$. Both maps are complete channels.

At each good label, the factored prior lets one pair rules approaching
both marginal infima. Averaging over the other factor gives the
matching lower bound. The vacuum label also has zero gap.

Choose $N_c=\lceil s^{1/3}\rceil$, $b=1/N_c$, $t=1/s$. Then
$P,H=O_L(s^{-1/6})$, $D_w=O_L(s^{-1/6}+\sqrt w)$, $\beta\le w+b$ and
$e_j=O_{F,L}(s^{-1/6}+\sqrt w)$, for every $w\to0$ and arbitrary $\chi$.
\end{proof}

\subsection{Covariant risk and global risk tracking}
Let $r_F(\chi;V)$ denote the optimum for $v\sim\Unif[-V,V]$, clipped
reports and loss $2\min(1,|x-v|/K)$.
\begin{qcpClemma}[Complete covariant risk]
\label{qcp-C-J-lem:canonical}
For $\mathsf P_\chi$, complete covariant measurements, with failure assigned
cost two, have optimal risk $r_F(\chi)$ in \eqref{qcp-C-J-eq:risks}.
The finite-prior optima satisfy $r_F(\chi;V)\to r_F(\chi)$ uniformly on
$[0,\infty]$ as $V\to\infty$.
\end{qcpClemma}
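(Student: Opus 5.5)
The experiment $\mathsf P_\chi$ is a translation-covariant pure-state family. By Lemma~\ref{qcp-C-J-lem:spectral} it is unitarily the family $\psi_v(p)=e^{-ipv}\sqrt{m_\chi(p)}$ in $L^2(\mathbb R,dp)$, with translation generated by multiplication by $p$. The plan is in three steps: characterize complete covariant measurements, identify the optimal covariant seed, then pass from the flat (improper) location to the finite priors $\Unif[-V,V]$.

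\emph{Step 1: the covariant class.} A covariant POVM has the form $M(dx)=U_x\,T\,U_x^\dagger\,dx/(2\pi)$, where $U_x$ is translation, $T\ge0$ is a seed, and completeness on the spectral support of $m_\chi$ forces the diagonal condition $T(p,p)=1$. Any mass not captured, for instance from a seed defined only on a subspace, is a failure outcome and is charged cost two. The outcome density of $x-v$ is then
\[
 \frac{1}{2\pi}\Bigl|\int e^{ip x}\,\xi(p)\sqrt{m_\chi(p)}\,dp\Bigr|^2
\]
averaged over a decomposition $T=\int\xi\xi^\dagger$ with $|\xi(p)|\le1$. I would argue in the Holevo way. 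First, the risk is linear in $T$, so it suffices to consider extreme seeds. Second, since $\min\{1,|x|/K\}$ is symmetric and nondecreasing in $|x|$, the risk is minimized by the phase-aligned seed $\xi(p)=1$. This is the canonical measurement $T=|1\rangle\langle1|$, which gives $p_\chi=|\mathcal F^{-1}\sqrt{m_\chi}|^2$.

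The justification I plan to use is the standard rearrangement argument: for a positive-definite seed, the density $|\mathcal F^{-1}(\xi\sqrt m)|^2$ is more spread than its phase-aligned counterpart in the symmetric-decreasing-rearrangement sense, so a symmetric nondecreasing loss can only increase. If that argument is too delicate for a general seed, my fallback is to show optimality directly by a dual certificate within the covariant class. This yields the covariant optimum $r_F(\chi)$. I would also record that, as a function of $\chi\in[0,\infty]$, it is continuous by the $L^2$ continuity of $\sqrt{m_\chi}$ and the bounded loss.

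\emph{Step 2: finite priors.} For $r_F(\chi;V)\le r_F(\chi)+o(1)$, use the canonical measurement with the report clipped to $[-V,V]$. Clipping does not increase the distance to $v\in[-V,V]$, so this bound holds with error zero. For the reverse inequality $r_F(\chi;V)\ge r_F(\chi)-o(1)$ uniformly in $\chi$, I would use the standard Hunt--Stein/averaging argument. Given any measurement optimal for $\Unif[-V,V]$, average its translates $U_t M(\cdot-t)U_t^\dagger$ over $t\in[-V',V']$ with $V'=V-\sqrt V$. The result is a nearly covariant measurement whose flat-prior risk on the window exceeds the finite-prior risk by at most the edge fraction times the loss bound two. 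Here I need to control boundary effects, namely positions $v$ near $\pm V$, where the finite-prior risk uses information from the prior edge. Their contribution is $O(\sqrt V/V)$, plus a tail term. That tail term is controlled uniformly in $\chi$ by the majorant $|k_\chi(v)|\le e^{-v^2/(2(1+|v|))}$ from Lemma~\ref{qcp-C-J-lem:geometry}, which gives uniform concentration of the canonical law. A compactness limit of these averaged measurements, by weak* compactness of POVMs as in the attainment argument of Proposition~\ref{qcp-M-prop:common-optimality}, is exactly covariant, possibly with failure mass. Its risk is therefore at least $r_F(\chi)$, which proves the lower bound.

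\emph{Main obstacle.} The hardest part is uniformity over the noncompact range $[0,\infty]$. I expect to handle it by combining pointwise convergence with monotonicity of $r_F(\chi;V)$ in $V$ (enlarging the prior window while rescaling risk) and continuity in $\chi$. Continuity in $\chi$ comes from the two-way channels of Lemma~\ref{qcp-C-J-lem:spectral}, which shift each optimum by at most $2\|\sqrt{m_{\chi}}-\sqrt{m_{\chi'}}\|_2$ uniformly in $V$. Compactness of $[0,\infty]$ then lets Dini's theorem upgrade the pointwise convergence to uniform convergence.
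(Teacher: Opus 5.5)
Your Step~2 and your uniformity argument follow the paper's route. The central step of Step~1, showing that the phase-aligned seed is optimal, has a genuine gap.

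\textbf{The rearrangement principle is false.} Canonical optimality does not follow from the loss being symmetric and nondecreasing in $|x|$. Take the window loss $\mathbf 1\{|x|>K\}$ and a translation family whose spectral density is concentrated in two narrow symmetric bumps at $\pm p_0$.
\begin{itemize}
\item The aligned seed $\xi\equiv1$ gives an outcome density close to a multiple of $\cos^2(p_0x)$ on $[-K,K]$.
\item The seed equal to $+1$ near $p_0$ and $-1$ near $-p_0$ gives the same multiple of $\sin^2(p_0x)$.
\item The window masses differ by a positive multiple of $\sin(2p_0K)/p_0$. When $\sin(2p_0K)<0$, the flipped seed strictly beats the canonical one.
\end{itemize}
So no general majorization statement holds, and the proof must use the specific shape of the loss.

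Two further points in Step~1 do not hold up. Linearity does not reduce you to rank-one unimodular seeds, because the convex set $\{T\succeq0,\ T(p,p)=1\}$ has extreme points that are not rank one. Your ``dual certificate'' fallback is left unspecified, and that is exactly where the content of the lemma lies.

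\textbf{The missing idea is positive definiteness of the benefit.} Write $2\min\{1,|x|/K\}=2-2b_K(x)$ with $b_K(x)=(1-|x|/K)_+$. This triangle is the autocorrelation of a box, so $\widehat b_K(t)=K\operatorname{sinc}^2(Kt/2)\ge0$.

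Take a covariant rule with kernel $T(p,q)=\langle\eta(p),\eta(q)\rangle$ and $\|\eta(p)\|\le1$, where the commuting deficit goes to failure and carries zero benefit. Its expected benefit is
\[
(2\pi)^{-1}\iint\sqrt{m_\chi(p)m_\chi(q)}\,\widehat b_K(p-q)\operatorname{Re}T(p,q)\,dp\,dq .
\]
Young's inequality justifies this integral, since $\sqrt{m_\chi}\in L^2$ and $\widehat b_K\in L^1$. Every factor except $\operatorname{Re}T$ is nonnegative, and $\operatorname{Re}T(p,q)\le1$ by Cauchy--Schwarz. Hence a constant unit vector $\eta$ is optimal and gives $p_\chi$.

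This is Holevo's criterion and is the paper's argument. It handles seeds of any rank and failure mass in one step. In the two-bump example above, this is why the cross term under the tent loss carries the weight $\widehat b_K(2p_0)\ge0$, so aligning the phases always helps.

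\textbf{The rest matches the paper.}
\begin{itemize}
\item Clipping gives $r_F(\chi;V)\le r_F(\chi)$.
\item Averaging translates and taking a compactness limit on $\mathbb R\cup\{\infty\}$ gives an exactly covariant rule. Because the loss is continuous and equals two at infinity, its risk is bounded above by the liminf of the finite-prior optima, which gives the lower bound.
\end{itemize}
Two simplifications are available.
\begin{itemize}
\item Averaging over the full window $[-V,V]$ makes the averaged rule's risk at the origin equal the finite-prior Bayes risk exactly. Neither the margin $V-\sqrt V$ nor the tail estimate from the majorant of $k_\chi$ is needed.
\item Dini's theorem is unnecessary. Your sketch does not establish exact monotonicity in $V$; it is only evident along integer ratios of window lengths. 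The $V$-independent equicontinuity from the spectral channels, which you already cite, together with pointwise convergence and a finite net on $[0,\infty]$, gives uniformity.
\end{itemize}
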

\begin{proof}
On the cyclic momentum space, the shift representation has Lebesgue spectral
measure. Its normalized covariant POVMs therefore have measurable isometric
kernel dilations \cite[Theorem~2.4.3, Corollary~2.4.5 and
Section~2.5.1]{qcp-S-toigo2005}. If the finite-report part has total effect $Q$, then $Q$ commutes
with shifts. Normalize on $\operatorname{supp}Q$ and multiply its dilation
by $Q^{1/2}$ to obtain kernels $\langle\eta(p),\eta(q)\rangle$ with $\|\eta(p)\|\le1$.
Assign the commuting deficit $I-Q$ to the failure report of cost two.
The triangular benefit has nonnegative transform
$\widehat b_K(t)=K\operatorname{sinc}^2(Kt/2)$, where
$\operatorname{sinc}(x)=\sin(x)/x$ and $\operatorname{sinc}(0)=1$.
Its expectation is
\[
 (2\pi)^{-1}\iint\sqrt{m_\chi(p)m_\chi(q)}\widehat b_K(p-q)
 \operatorname{Re}\langle\eta(p),\eta(q)\rangle\,dpdq.
\]
A constant unit vector maximizes this expectation and gives $p_\chi$
\cite{qcp-C-holevoShift2003}. Young's inequality justifies the integral
for $\sqrt m\in L^2$, $\widehat b_K\in L^1$.

Restricting the prior and clipping reports gives
$r_F(\chi;V)\le r_F(\chi)$. For the reverse inequality, extend finite-window
rules to the one-point compactification $\mathbb R\cup\{\infty\}$ and
average their translates over $[-V,V]$. The risk at zero is their
finite-prior risk, while translation by $b$ changes any continuous test
of norm at most one by at most $|b|/V$.
The compactness argument in the proof of
Proposition~\ref{qcp-M-prop:common-optimality}, applied to this compact report
space, gives a complete limit along a subnet realizing the risk liminf.
The covariance defect vanishes. The loss at zero is continuous, with value
two at infinity, so the limit risk is at least the preceding covariant
minimum, including failure mass. This proves pointwise convergence.
The spectral channels give equicontinuity independently of $V$; a finite
net on $[0,\infty]$ then gives uniformity.
\end{proof}
\begin{proof}[Theorem~\ref{qcp-M-thm:joint-scale}(i): global tracking and informative marginals]
On good cells, $V=b/(2\delta)=bs^2/(2\sqrt{1+s^2w})\to\infty$
uniformly for $w\le1$. The canonical lemma, the bad-cell cost $2\beta$
and the channels in both directions prove the location formula in
\eqref{qcp-C-J-eq:three-risks}.

The amplitude optima $\mathcal A_L(a\sqrt{\ell_j})$ converge after
averaging to $\overline{\mathcal A}_L$, by their continuity modulus,
$0\le1-u-\ell_j\le b+w$ and the terminal mass bound. Transferring the
amplitude and joint infima proves
\eqref{qcp-C-J-eq:three-risks}.

Continuity of $r_F$ follows from Plancherel's theorem. Since the outcome
distribution has no point masses, $r_F>0$. The benefit is also positive because $m_\chi$ is positive on
a set of positive measure, and $\widehat b_K$ is positive near zero.
Compactness makes $0<r_F<2$ uniform.

Lemma~\ref{qcp-C-lem:uniform-amplitude-information}, applied with QFI
$4a^2e$, gives the amplitude lower bound. For the upper bound, use
homodyne $X=a\sqrt e\,z+N$, $\operatorname{Var}N=1/4$, followed by its
optimal linear predictor and clipping. Integration yields
\begin{equation}
 \frac5{24a^2}\log(1+8a^2L^2/5)
 \le\overline{\mathcal A}_L\le\frac1{4a^2}\log(1+4a^2L^2/3)<L^2/3.
 \label{qcp-C-J-eq:A-bounds}
\end{equation}
The prior-only location benefit is confined to length $2K\delta\to0$.
\end{proof}

\endgroup

\pdfbookmark[1]{III. Inverse square cost and physical transfer}{qcp-SM-part3}
\section*{III. The inverse square residual cost and its physical transfer}
\begingroup
\qcpresetvolume{C}
\qcparticlelayout

\renewcommand{\thesection}{C\arabic{section}}\setcounter{section}{0}
\paragraph{Fixed assumptions for C1--C11.}
C1--C10 establish Theorem~\ref{qcp-M-thm:uniform-first-cost}(ii) for the full coherent record;
C11 transfers the result to the finite product record and compares the leading risks.
Throughout these sections, the quintic profile is used, with \(a=2\sqrt{\log2}\),
\(K=9\log2\), \(L=1/2\), the original independent uniform \(u,z\) prior
and a fixed \(0<\lambda<1\). The scaling parameters are
\(\epsilon=s^{-1}\), \(\chi=s^2w\), and
\(\delta=\sqrt{1+\chi}\epsilon^2\).
All optima use the losses in main-text \eqref{qcp-M-eq:rise-loss}.
The compact design interval \(I=[1/2,2]\) is known;
rules may depend on \(\chi\) but are independent of \(u,z\).
\paragraph{Proof route.}
C1 supplies the relative operator bounds and C2 the quintic packet estimates.
C3 removes endpoint contributions at $o(\epsilon^2)$ precision.
C4 centers comparisons at each model's own marginal optimum;
C5 and C6 give the scalar upper and lower bounds.
C7--C9 return that scalar decision problem to the original record.
C10 verifies uniformity on $I$, and C11 gives the physical transfer.

\section{Amplitude optima, response and source lifts}
\label{qcp-C-sec:relative-foundation}
For a Hilbert--Schmidt synthesis $V$ and $Z=Z^\dagger$, $\|Z\|\le L$,
the \hyperref[qcp-SM-sec:common-optimality]{common optimality proof}
applies to $\Omega=VV^\dagger$, $\Omega_1=VZV^\dagger$ on
$\operatorname{supp}\Omega$, retaining the finite source mass:
\begin{align}
 \tfrac12\{\Omega,A\}&=VZV^{\dagger},\qquad \|A\|\le L,
 \label{qcp-C-eq:foundation-personick}\\
 e_{\amp}(J;V)&=\|YJ-JA\|_{\Omega}^2. \label{qcp-C-eq:foundation-square}
\end{align}
The spectral measurement of $A$ is amplitude optimal.
The Schur norm is the maximum absolute row or column sum; it is
submultiplicative and dominates the operator norm.
\begin{qcpClemma}[A source defect controls the amplitude optimum]
\label{qcp-C-lem:source-defect}
Let \(\widetilde A=\widetilde A^{\dagger}\) be bounded on the record space and
let \(K^{\dagger}=-K\) be bounded on the source. The source defect
$E=\widetilde A V-VZ-VK$ satisfies
\[
 \|\widetilde A-A\|_{\Omega}\le\|E\|_{\HS}.
\]
\end{qcpClemma}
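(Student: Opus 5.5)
The plan is a Pythagorean decomposition of the source defect. We identify the weighted norm on the source side, split $E$ into an $\Omega$-weighted error term and a residual, and show that the two pieces are orthogonal in the real Hilbert--Schmidt inner product. Orthogonality will follow from the Personick equation \eqref{qcp-C-eq:foundation-personick} together with the anti-self-adjointness of $K$.

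\emph{Setup.} Extend $A$ by zero on $(\operatorname{supp}\Omega)^\perp$. Since $\ker\Omega=\ker V^\dagger$, the range of $V$ lies in $\overline{\operatorname{supp}\Omega}$, so this extension does not affect any product $AV$. Put $D=\widetilde A-A$, which is bounded and self-adjoint. Because $\Omega=VV^\dagger$, we have
\[
 \|D\|_\Omega^2=\tr(D\Omega D^\dagger)=\|DV\|_{\HS}^2 .
\]
Since $V$ is Hilbert--Schmidt and all other operators are bounded, every trace below is finite.

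\emph{Decomposition.} Write $E=DV+G$ with $G=AV-VZ-VK$. Expanding the square gives
\[
 \|E\|_{\HS}^2=\|DV\|_{\HS}^2+\|G\|_{\HS}^2+2\operatorname{Re}\tr(DVG^\dagger).
\]
It therefore suffices to show that the cross term has zero real part. Using $K^\dagger=-K$, we get $G^\dagger=V^\dagger A-ZV^\dagger+KV^\dagger$, and hence
\[
 \tr(DVG^\dagger)=\tr(D\Omega A)-\tr(DVZV^\dagger)+\tr(DVKV^\dagger).
\]
We handle the three terms as follows.
\begin{itemize}
\item For the first term, self-adjointness of $D,\Omega,A$ gives $\overline{\tr(D\Omega A)}=\tr(A\Omega D)=\tr(DA\Omega)$. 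Hence $\operatorname{Re}\tr(D\Omega A)=\tfrac12\tr(D\{\Omega,A\})$.
\item The second term is real because $D$ and $VZV^\dagger$ are self-adjoint. Combined with the first term, the Personick equation \eqref{qcp-C-eq:foundation-personick} gives
\[
 \operatorname{Re}\bigl[\tr(D\Omega A)-\tr(DVZV^\dagger)\bigr]=\tr\bigl(D(\tfrac12\{\Omega,A\}-VZV^\dagger)\bigr)=0 .
\]
\item For the third term, $\overline{\tr(DVKV^\dagger)}=\tr(VK^\dagger V^\dagger D)=-\tr(DVKV^\dagger)$, so it is purely imaginary.
\end{itemize}
Thus $\|E\|_{\HS}^2=\|DV\|_{\HS}^2+\|G\|_{\HS}^2\ge\|\widetilde A-A\|_\Omega^2$, which is the claim.

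\emph{Main obstacle.} The only delicate point is justifying the trace manipulations in infinite dimension. Cyclicity and the adjoint identities are used only for products containing at least one Hilbert--Schmidt factor $V$ on each side, or the trace-class operator $\Omega$, with the remaining factors bounded. These identities are standard in that setting. The Personick identity itself is an operator identity on $\operatorname{supp}\Omega$, inherited from the common optimality proof. Since $D(\tfrac12\{\Omega,A\}-VZV^\dagger)$ vanishes on $(\operatorname{supp}\Omega)^\perp$, the zero extension of $A$ is harmless there.
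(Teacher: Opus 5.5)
Your proof is correct and follows essentially the paper's route. The paper combines the Personick equation with $K^\dagger=-K$ to get $\{\Omega,H\}=EV^\dagger+VE^\dagger$ for $H=\widetilde A-A$, hence $\|H\|_\Omega^2=\operatorname{Re}\tr(V^\dagger HE)$, and then applies Cauchy--Schwarz; that is the same orthogonality $\operatorname{Re}\tr\bigl(HV(AV-VZ-VK)^\dagger\bigr)=0$ you prove, except that your Pythagorean form gives the slightly sharper identity $\|E\|_{\HS}^2=\|\widetilde A-A\|_\Omega^2+\|AV-VZ-VK\|_{\HS}^2$.
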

\begin{proof}
Skew-adjointness gives $\{\Omega,H\}=EV^\dagger+VE^\dagger$ for
$H=\widetilde A-A$. Trace-class products and Cauchy--Schwarz yield
$\|H\|_{\Omega}^2=\operatorname{Re}\tr(V^\dagger HE)
\le\|H\|_{\Omega}\|E\|_{\HS}$.
\end{proof}

\subsection{Relative bounds for the uniform prior residuals}
Use $V_cf=\int_{-L}^L|cz\rangle f(z)\nu(dz)$,
$\nu=\Unif[-L,L]$, $L=1/2$, $0\le c\le a$.
Set $\Omega_{j,c}=V_cZ^jV_c^\dagger$, $\Omega_c=\Omega_{0,c}$, and suppress
$c$ in the polynomial representation. For monic orthogonal polynomials
$P_n$ of $\mu_c=e^{-c^2z^2}\nu$ with squared norms $h_n$, the unitary
$f\mapsto e^{c^2z^2/2}f$ gives
\begin{equation}
 T_{mn}=\frac{c^m}{\sqrt{m!}}
  \int z^m\frac{P_n(z)}{\sqrt{h_n}}\,\mu_c(dz),
  \qquad \Omega_j=T\mathsf J^jT^{\dagger}.
 \label{qcp-C-eq:polynomial-synthesis}
\end{equation}
Here $\mathsf J$ is multiplication by $z$ in this polynomial basis.
\begin{qcpClemma}[Uniform prior recurrence bound]
\label{qcp-C-lem:recurrence-bound}
For the monic orthogonal polynomials of \(e^{-t x^2}dx\) on \([-1,1]\), write
\(xP_n=P_{n+1}+\beta_n(t)P_{n-1}\), \(\beta_0=0\). Their recurrence coefficients
satisfy, for \(0\le t\le\log2\),
\[
 \frac{e^{-t}}4\le \beta_n(t)\le\frac13\qquad(n\ge1).
\]
\end{qcpClemma}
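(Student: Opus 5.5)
The lower bound will come from the extremal characterization of the squared norms. The upper bound is the step I expect to be the main obstacle: I plan to prove it by an asymptotic estimate for large $n$ combined with a certified finite check for small $n$.

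Write $w_t(x)=e^{-tx^2}\mathbf 1_{[-1,1]}$ and $h_n(t)=\int P_n^2w_t$, so that $\beta_n(t)=h_n(t)/h_{n-1}(t)$. The key fact is $h_n(t)=\min\int p^2w_t$ over monic $p$ of degree $n$. Since $e^{-t}\le w_t\le1$ on $[-1,1]$, comparison with the Legendre weight gives
\[
 e^{-t}h_n^{\mathrm L}\le h_n(t)\le h_n^{\mathrm L}.
\]
Here $h_n^{\mathrm L}$ are the monic Legendre norms, with $\beta_n^{\mathrm L}=h_n^{\mathrm L}/h_{n-1}^{\mathrm L}=n^2/(4n^2-1)$.

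\emph{Lower bound.} Writing $\beta_n(t)=\prod_{k\le n}\beta_k/\prod_{k\le n-1}\beta_k$ and using only the two-sided bound above gives $\beta_n\ge e^{-t}h_n^{\mathrm L}/h_{n-1}^{\mathrm L}$ only up to the factor $e^{-t}$ on one side. The sharper route is the ratio form. Take $p=P_n^{(t)}$ as a test polynomial for the Legendre problem, giving $h_n^{\mathrm L}\le\int P_n^2\le e^{t}h_n(t)$. Use $h_{n-1}(t)\le h_{n-1}^{\mathrm L}$ in the denominator. Together these give $\beta_n(t)\ge e^{-t}\beta_n^{\mathrm L}=e^{-t}n^2/(4n^2-1)>e^{-t}/4$.

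\emph{Upper bound.} For $n=1$ it is explicit. We have $\beta_1=\int x^2w_t/\int w_t$, and since $e^{-tx^2}$ is decreasing in $x^2$ while $x^2$ is increasing, Chebyshev's association inequality gives $\beta_1(t)\le\int_{-1}^1x^2dx/2=1/3$.

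For general $n$, the two-sided comparison alone only yields $e^{t}\beta_n^{\mathrm L}\le 2n^2/(4n^2-1)$, which is too weak. I would therefore use the structure in $t$ and in $n$ separately.

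First, differentiate in $t$. Orthogonality kills the variation of the monic polynomial, and $\|xP_n\|^2=h_{n+1}+\beta_n^2h_{n-1}$, so
\[
 \partial_t\log h_n=-(\beta_{n+1}+\beta_n),\qquad
 \partial_t\log\beta_n=\beta_{n-1}-\beta_{n+1}.
\]
Integration by parts against $(P_nP_{n-1}w_t)'$ gives the string equation
\[
 nh_{n-1}-2th_n=2e^{-t}P_n(1)P_{n-1}(1).
\]
The right-hand side is positive because all zeros of $P_n$ lie in $(-1,1)$. This yields the crude bound $\beta_n(t)<n/(2t)$, but that is not by itself enough for small $t$.

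Second, for large $n$, I would use that $w_t$ is an analytic, nonvanishing Jacobi-type weight on $[-1,1]$. By Bernstein--Szeg\H{o}/Riemann--Hilbert asymptotics, $\beta_n(t)=1/4+O(n^{-2})$ uniformly for $t\in[0,\log2]$, with an explicit constant obtained from a Szeg\H{o} function bound. This bound gives $\beta_n\le1/3$ for all $n\ge n_0$ with explicit $n_0$.

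Third, for $2\le n<n_0$, I would certify $\beta_n(t)\le1/3$ on $[0,\log2]$ by interval evaluation. The moments $\int x^{2k}e^{-tx^2}dx$ are entire in $t$, and the Chebyshev/Stieltjes recursion can be run in rational interval arithmetic on a finite cover of $[0,\log2]$. I would use $\partial_t\beta_n=\beta_n(\beta_{n-1}-\beta_{n+1})$, bounded by $|\partial_t\beta_n|\le1/3$, as a Lipschitz bound between nodes. This matches the certified style of Sections T and O.

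The delicate part is making the $O(n^{-2})$ constant explicit enough that $n_0$ is small, so that the finite certificate stays manageable.
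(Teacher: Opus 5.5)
Your lower bound is correct, and it takes a different route from the paper. Comparing extremal norms, $h_n(t)\ge e^{-t}h_n^{\mathrm L}$ and $h_{n-1}(t)\le h_{n-1}^{\mathrm L}$, gives $\beta_n(t)\ge e^{-t}n^2/(4n^2-1)$ for every $t\ge0$ without any differential equation. The paper instead integrates $\partial_t\log\beta_n\ge-\beta_{n+1}\ge-1$ from $\beta_n(0)$. Your Chebyshev argument for $n=1$ is also fine.

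The gap is the upper bound for $n\ge2$. It rests entirely on an $O(n^{-2})$ Riemann--Hilbert or Szeg\H{o} estimate with an \emph{explicit} constant, uniform in $t\in[0,\log2]$, and you do not supply it. Steepest descent analysis does not readily produce explicit constants. As a result, neither $n_0$ nor the size of the interval-arithmetic certificate is under control. The Lipschitz bound $|\partial_t\beta_n|\le1/3$ you quote for that certificate also presupposes the inequality being proved. A priori you only have $|\partial_t\beta_n|\le1$, which follows from $0\le\beta_k\le1$ for a measure on $[-1,1]$.

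The missing idea is that the Toda identity you already derived closes the argument directly. Your lower bound holds for every index unconditionally, so you can induct on $n$. If $\beta_{n-1}\le1/3$ on $[0,\log2]$, then $\partial_t\log\beta_n=\beta_{n-1}-\beta_{n+1}\le1/3-e^{-t}/4$. Hence for $n\ge2$,
\[
 \beta_n(t)\le\frac{n^2}{4n^2-1}\exp\Bigl[\frac t3-\frac{1-e^{-t}}4\Bigr]
 \le\frac4{15}\exp\Bigl[\frac{\log2}3-\frac18\Bigr]\approx0.2965<\frac13,
\]
where the second inequality uses that the exponent increases in $t$. The base case is your $n=1$ bound, or equivalently $\partial_t\beta_1=-\beta_1\beta_2<0$. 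This estimate is the paper's proof. The paper phrases it as a first-exit continuation argument using the equicontinuity bound $|\beta_n'|\le1$, but plain induction on $n$ suffices. No asymptotics or computer verification is needed.
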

\begin{proof}
Norm differentiation and orthogonality give
$\beta_n'=\beta_n(\beta_{n-1}-\beta_{n+1})$, $h_n'/h_n=-(\beta_n+\beta_{n+1})$
and $\beta_n(0)=n^2/(4n^2-1)$. Thus $0<\beta_n\le1$ implies
$\beta_n\ge e^{-t}/4$, while $\beta_1'=-\beta_1\beta_2<0$.
Before a first exit above $1/3$, each $n\ge2$ satisfies
$\beta_n(t)\le(4/15)\exp[t/3-(1-e^{-t})/4]<1/3$ on $[0,\log2]$.
The uniform derivative bound $|\beta_n'|\le1$ permits continuation of
the infinite family.
\end{proof}
After $z=Lx$, the recurrence coefficient is at most $L^2/3$. Put
\begin{equation}
 q_{\max}=\frac{\log2}{3},\qquad
 U_*=1+\frac{8q_{\max}}{1-q_{\max}},\qquad
 S_j=U_*^2(2L/\sqrt3)^j,
 \quad R_A=2S_1,\quad R_B=2(S_2+R_A S_1).
 \label{qcp-C-eq:relative-constants}
\end{equation}
\begin{qcpCproposition}[Relative amplitude and response bounds]
\label{qcp-C-prop:relative-residual}
Let \(A_c\) be the bounded amplitude optimum in
\eqref{qcp-C-eq:foundation-personick}, and let the response \(B_c\) solve
\begin{equation}
 \tfrac12\{\Omega_c,B_c\}
 =\Omega_{2,c}-\tfrac12\{A_c,\Omega_{1,c}\}.
 \label{qcp-C-eq:response-sylvester}
\end{equation}
On the average state support, uniformly for \(0\le c\le a\), the bounds are
\begin{equation}
 \|\Omega_c^{-1/2}A_c\Omega_c^{1/2}\|\le R_A,\qquad
 \|\Omega_c^{-1/2}B_c\Omega_c^{1/2}\|\le R_B,\qquad
 \|B_c\|\le L^2+LR_A.
 \label{qcp-C-eq:relative-residual-bounds}
\end{equation}
The same bounds apply to all unnormalized finite Fock sections.
\end{qcpCproposition}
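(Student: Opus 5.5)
The plan is to reduce both relative norms to similarity-transformed operators in the orthonormal-polynomial representation \eqref{qcp-C-eq:polynomial-synthesis}, where everything is explicit.

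\textbf{Reduction to a triangular synthesis.} Since $P_n$ is orthogonal to all polynomials of lower degree, $T_{mn}=0$ for $m<n$, so $T$ is lower triangular with diagonal $T_{nn}=c^n\sqrt{h_n}/\sqrt{n!}$. The polar decomposition $T=\Omega_c^{1/2}W$, with $W$ unitary on the support, gives
\[
\|\Omega_c^{-1/2}A_c\Omega_c^{1/2}\|=\|T^{-1}A_cT\|.
\]
Put $X=T^{-1}A_cT$. Then $X^\dagger=T^\dagger A_cT^{-\dagger}$. Multiplying the Personick equation \eqref{qcp-C-eq:foundation-personick}, with $\Omega_1=T\mathsf J T^\dagger$, by $T^{-1}$ on the left and $T^{-\dagger}$ on the right gives
\[
X+X^\dagger=2\mathsf J.
\]
This fixes only the Hermitian part of $X$. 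The anti-Hermitian part is determined by requiring $TXT^{-1}$ to be self-adjoint. Writing $T=D(I+N)$, with $D$ diagonal and $N$ strictly lower triangular, this becomes a graded Sylvester problem in the scaling $D$.

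\textbf{Graded bounds.} Lemma~\ref{qcp-C-lem:recurrence-bound}, after the substitution $z=Lx$, gives tridiagonal $\mathsf J$ with off-diagonal entries $\sqrt{\beta_n}\le L/\sqrt3$, so $\|\mathsf J\|\le 2L/\sqrt3$. It also gives the ratio bound
\[
T_{n+1,n+1}^2/T_{nn}^2=c^2h_{n+1}/\bigl((n+1)h_n\bigr)\le c^2L^2/3\le q_{\max}
\]
for $c\le a$, since $a^2L^2/3=(\log 2)/3$. Using this, I would estimate the entries of $D^{-1}TD$ and of its inverse by geometric series in $q_{\max}$, obtaining Schur norm at most $U_*$ for $I+N$ and for $(I+N)^{-1}$. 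Conjugating $\mathsf J^j$ by these factors explains the constant $S_j=U_*^2\|\mathsf J\|^j$. Solving for the skew part of $X$ adds at most a copy of the Hermitian part, which gives $R_A=2S_1$.

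\textbf{The response $B_c$.} The right side of \eqref{qcp-C-eq:response-sylvester} equals
\[
T\bigl(\mathsf J^2-\tfrac12(X\mathsf J+\mathsf J X^\dagger)\bigr)T^\dagger.
\]
For $Y=T^{-1}B_cT$, the same manipulation gives $Y+Y^\dagger=2\mathsf J^2-(X\mathsf J+\mathsf J X^\dagger)$. Repeating the graded argument yields $\|Y\|\le 2(S_2+R_AS_1)=R_B$.

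For the absolute bound on $B_c$, I would compare source operators as in the amplitude part of the proof of Proposition~\ref{qcp-M-prop:common-optimality}. Since $|z|\le L$, we have $\Omega_2\preceq L^2\Omega$. The relative bound on $A_c$ should give $\pm\frac12\{A_c,\Omega_1\}\preceq LR_A\Omega$, using $\Omega_1=\Omega^{1/2}M\Omega^{1/2}$ with $\|M\|\le L$. The monotone integral representation
\[
B=2\int_0^\infty e^{-t\Omega}(\cdot)\,e^{-t\Omega}\,dt
\]
then yields $\|B_c\|\le L^2+LR_A$.

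\textbf{Finite Fock sections and main obstacle.} Truncating to finite photon number keeps $T$ lower triangular and keeps the same diagonal ratios. It also replaces $\mathsf J$ by a compression, which can only decrease its norm. Hence all constants persist.

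The main obstacle is the second step: showing that the anti-Hermitian part of $X$ is bounded uniformly. The multiplier $2\omega_j/(\omega_i+\omega_j)$ is not a bounded Schur multiplier in general, so the argument must use the geometric decay $q_{\max}<1$ of the triangular diagonal. I would first try to prove $\|(I+N)^{\pm1}\|_{\rm Schur}\le U_*$ by induction on the polynomial degree, with the bound $8q_{\max}/(1-q_{\max})$ coming from summing the geometric tails.
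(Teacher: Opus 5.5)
\textbf{There is a genuine gap, at the step you flag yourself.} Your reduction is sound. $X=T^{-1}A_cT$ is unitarily equivalent to the paper's $\mathsf B=\Omega^{-1/2}A\Omega^{1/2}$. The conditions $X+X^\dagger=2\mathsf J$ and self-adjointness of $TXT^{-1}$ together amount to $\mathsf K X+X\mathsf K=2\mathsf J\mathsf K$ with $\mathsf K=T^\dagger T$.

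The trouble is the tool you propose for bounding $X$. Schur bounds on the triangular factor, followed by conjugating $\mathsf J^j$ by it, still leave a non-diagonal Sylvester problem: $\mathsf K=DC^\dagger CD$, and $C^\dagger C\neq I$ mixes the scales $d_n$. The explicit solution $X'_{ij}=2\lambda_j Z_{ij}/(\lambda_i+\lambda_j)$, with $Z=U^\dagger\mathsf J U$, holds only in the eigenbasis $U$ of $\mathsf K$, equivalently of $\Omega$.

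The missing idea is eigenvector localization for $\mathsf K$, which is what the paper does:
\begin{enumerate}
\item From $e^{-2q}I\preceq C^\dagger C\preceq e^{4q}I$, the min--max principle gives $e^{-2q}d_n^2\le\lambda_n\le e^{4q}d_n^2$.
\item The ratio $d_{n+1}^2/d_n^2\le q/(n+1)$ and $qe^{6q}<1$ make these bands disjoint.
\item Then $|(v_n)_i|\le e^{6q}\min\{d_i/d_n,d_n/d_i\}$, and parity gives $|U_{in}|\le e^{6q}q^{|i-n|/2}$.
\item Hence $\|U\|_{\rm Sch}\le 1+2e^{6q}q/(1-q)\le U_*$.
\end{enumerate}
This is where $U_*$ comes from; its $8$ is $2e^{6q_{\max}}=2\cdot 4$. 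It is not a bound on the triangular factor. For that factor the paper only needs $\|C\|_{\rm Sch}\le e^{2q}$ and $\|C^{-1}\|_{\rm Sch}\le e^{q}$.

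\textbf{The diagonal must also be factored on the right,} $T=CD$. With your $T=D(I+N)$ one finds $(I+N)_{n+2,n}=\sum_{k=1}^{n+1}\beta_k/\sqrt{\beta_{n+1}\beta_{n+2}}$, which grows linearly in $n$, so no Schur bound on $I+N$ exists. In $C=TD^{-1}$, the factor $c^{2j}\sqrt{n!/(n+2j)!}$ tames the recurrence path counts and gives $|C_{n+2j,n}|\le(2q)^j/j!$.

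\textbf{Once $Z_j=U^\dagger\mathsf J^jU$ is controlled in the Schur norm, the obstacle disappears.} As you note, the multiplier is not bounded for the operator norm. The maximum absolute row or column sum, however, depends only on entry moduli, so entries $2\lambda_j/(\lambda_i+\lambda_j)\in[0,2]$ at most double it. This gives
\[
\|\mathsf B\|\le\|\mathsf B\|_{\rm Sch}\le 2\|Z_1\|_{\rm Sch}\le 2S_1=R_A
\]
directly, with no splitting into Hermitian and skew parts. The geometric decay is therefore used to bound the normalized moments in the eigenbasis, not to tame the multiplier. It also means $S_j$ must be built from $\|\mathsf J\|_{\rm Sch}\le 2L/\sqrt3$, not from the operator norm.

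With that in place, the rest of your plan matches the paper:
\begin{itemize}
\item The response bound follows from the normalized source $Z_2-(\mathsf BZ_1+Z_1\mathsf B^\dagger)/2$.
\item The monotone Sylvester integral gives $\|B_c\|\le L^2+LR_A$.
\item For finite sections, $T_N=C_ND_NP_N$, and $(P_NCP_N)^{-1}=P_NC^{-1}P_N$ by triangularity. The moments use $P_N\mathsf J^jP_N$, the compression of the full power, rather than powers of the compressed $\mathsf J$.
\end{itemize}
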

\begin{proof}
For $c>0$, write $T=CD$, $d_n=c^n\sqrt{h_n}/\sqrt{n!}$,
$q=c^2L^2/3\le q_{\max}$. Counting downward recurrence steps and factorial
log-convexity give
\begin{equation}
 |C_{n+2j,n}|\le(2q)^j/j!,
 \qquad |(C^{-1})_{n+2j,n}|\le q^j/j!.
 \label{qcp-C-eq:triangular-bounds}
\end{equation}
The inverse entries have the representation
\begin{equation}
 (C^{-1})_{n+2j,n}
 =c^{2j}\sqrt{\frac{n!}{(n+2j)!}}\,[z^n]P_{n+2j}(z),
 \quad
 |[z^n]P_{n+2j}|\le \binom{n+j}{j}(L^2/3)^j.
 \label{qcp-C-eq:triangular-inverse}
\end{equation}
Thus $\|C\|_{\rm Sch}\le e^{2q}$, $\|C^{-1}\|_{\rm Sch}\le e^q$,
$e^{-2q}I\le C^\dagger C\le e^{4q}I$, and
$d_{n+1}/d_n\le\sqrt{q/(n+1)}$.

For $\mathsf K=DHD$, $mI\le H\le MI$, the min--max principle gives
$md_n^2\le\lambda_n\le Md_n^2$. Its eigenvectors satisfy
\begin{equation}
 |(v_n)_i|
 \le\min\left\{\frac{Md_i}{\sqrt{m\lambda_n}},
                    \frac{\sqrt{M\lambda_n}}{md_i}\right\}
 \le\frac Mm\min\{d_i/d_n,d_n/d_i\}.
 \label{qcp-C-eq:relative-eigenvectors}
\end{equation}
The inverse weight estimate follows from
$D^{-1}v_n=\lambda_n^{-1}HDv_n\in\ell^2$ and
$\langle D^{-1}v_n,H^{-1}D^{-1}v_n\rangle=\lambda_n^{-1}$.
Since $M/m=e^{6q}$ and $qe^{6q}\le4\log2/3<1$, the min--max bands are
disjoint. Parity and continuation from $c=0^+$ therefore determine the
parity of each eigenvector. On indices of the same parity,
$|U_{in}|\le e^{6q}q^{|i-n|/2}$, giving
$\|U\|_{\rm Sch}\le U(q):=1+2e^{6q}q/(1-q)\le U_*$.
Finite sections and the factorial tails give this continuation.

The polar factor of the injective, dense range $T$ is unitary.
Normalized moments are $Z_j=U^\dagger \mathsf J^jU$, with
\begin{equation}
 \|Z_j\|_{\rm Sch}\le U(q)^2(2L/\sqrt3)^j\le S_j,
 \qquad q=c^2L^2/3\le\log2/3.
 \label{qcp-C-eq:q-dependent-Schur}
\end{equation}
For $\mathsf B=\Omega^{-1/2}A\Omega^{1/2}$, the entries
$2\lambda_j(Z_1)_{ij}/(\lambda_i+\lambda_j)$ give $\|\mathsf B\|_{\rm Sch}\le2S_1$.
Applying this multiplier to the response source
$Z_2-(\mathsf B Z_1+Z_1\mathsf B^\dagger)/2$ gives $R_B$.
The positive Sylvester integral gives the ordinary operator bound.
For the projection $P_N$ onto the first $N$ Fock coordinates, lower
triangularity gives $T_N=P_NT=C_ND_NP_N$, where $C_N=P_NCP_N$ and
$D_N=P_NDP_N$ act on the retained coordinates. Hence
\[
 \Omega_{j,N}
 =C_ND_N(P_N\mathsf J^jP_N)D_NC_N^\dagger.
\]
The bounds for $C_N,C_N^{-1}$ and the separated min--max bands give
the same Schur estimates after diagonalizing $D_NC_N^\dagger C_ND_N$.
The moments use compressions $P_N\mathsf J^jP_N$ of the full powers and
retain the original mass.
At $c=0$, $A_0=0$, $B_0=(L^2/3)I$ on vacuum.
\end{proof}

\subsection{Source lifts and continuity}
\begin{qcpClemma}[Skew-adjoint source lift]
\label{qcp-C-lem:source-lift}
The relative amplitude bound \(R_A\) gives a bounded skew-adjoint source
operator \(K_V\) satisfying
\begin{equation}
 AV-VZ=VK_V,\qquad \|K_V\|\le\kappa_{\mathrm{src}}:=3(R_A+L).
 \label{qcp-C-eq:source-lift}
\end{equation}
For the coherent residual response, a bounded skew-adjoint source operator
\(K_{1,V}\) satisfies
\begin{equation}
 VK_{1,V}=2BV+V[K_V,Z],
 \qquad \|K_{1,V}\|\le6(R_B+L\kappa_{\mathrm{src}}).
 \label{qcp-C-eq:first-source-lift}
\end{equation}
\end{qcpClemma}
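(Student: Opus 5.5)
The plan is to build both lifts from the polar decomposition of $V$ and then correct them to be skew-adjoint. The only inputs are the relative bounds of Proposition~\ref{qcp-C-prop:relative-residual} and the two operator equations \eqref{qcp-C-eq:foundation-personick} and \eqref{qcp-C-eq:response-sylvester}.

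\emph{Set-up.} Write $V=\Omega^{1/2}W$, where $W$ is a partial isometry with $WW^\dagger=I$ on $\operatorname{supp}\Omega$. Put $P=W^\dagger W$, the projection onto $(\ker V)^\perp$ in the source. Then $W=WP$.

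\emph{Preliminary lift for the amplitude optimum.} Set $\mathsf B=\Omega^{-1/2}A\Omega^{1/2}$, so $\|\mathsf B\|\le R_A$. This gives $AV=\Omega^{1/2}\mathsf B W$. Define
\[
 X=W^\dagger\mathsf B W-PZ .
\]
Then
\[
 VX=\Omega^{1/2}\mathsf B W-\Omega^{1/2}WPZ=AV-VZ ,
\]
and $\|X\|\le R_A+L$. Moreover $X=PX$.

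\emph{Symmetrization.} The Personick equation gives
\[
 V(X+X^\dagger)V^\dagger=(AV-VZ)V^\dagger+V(V^\dagger A-ZV^\dagger)=\{A,\Omega\}-2VZV^\dagger=0 .
\]
Since $V^\dagger$ has dense range in $P$, this yields $P(X+X^\dagger)P=0$. Now put
\[
 K_V=X-(I-P)X^\dagger .
\]
Because $V(I-P)=0$, we have $VK_V=VX$. Using $X=PX$, a direct expansion gives $K_V+K_V^\dagger=P(X+X^\dagger)P=0$. So $K_V$ is skew-adjoint with $\|K_V\|\le2(R_A+L)\le\kappa_{\rm src}$.

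\emph{Response lift.} Take $\mathsf B_1=\Omega^{-1/2}B\Omega^{1/2}$, with $\|\mathsf B_1\|\le R_B$, and define
\[
 X_1=2W^\dagger\mathsf B_1W+P[K_V,Z] .
\]
Then $VX_1=2BV+V[K_V,Z]$, $X_1=PX_1$, and $\|X_1\|\le2R_B+2L\kappa_{\rm src}$.

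The main check is $V(X_1+X_1^\dagger)V^\dagger=0$.
\begin{itemize}
\item From the first lift, $VK_VZV^\dagger=(AV-VZ)ZV^\dagger=A\Omega_1-\Omega_2$.
\item Skew-adjointness gives $K_VV^\dagger=-(VK_V)^\dagger$, so $VZK_VV^\dagger=-\Omega_1A+\Omega_2$.
\item Hence $V[K_V,Z]V^\dagger=\{A,\Omega_1\}-2\Omega_2$, which is self-adjoint. Adding its adjoint gives $2\{A,\Omega_1\}-4\Omega_2$.
\item The $B$ terms contribute $2\{\Omega,B\}=4\Omega_2-2\{A,\Omega_1\}$ by \eqref{qcp-C-eq:response-sylvester}.
\end{itemize}
The two contributions cancel. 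The same correction $K_{1,V}=X_1-(I-P)X_1^\dagger$ is then skew-adjoint, with
\[
 \|K_{1,V}\|\le4(R_B+L\kappa_{\rm src})\le6(R_B+L\kappa_{\rm src}).
\]

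\emph{Main obstacle.} The delicate step is domain bookkeeping. $\Omega^{-1/2}$ is unbounded, so the identities $AV=\Omega^{1/2}\mathsf BW$ and the density argument must be justified. I would first prove everything on the finite Fock sections, where the same relative bounds hold by Proposition~\ref{qcp-C-prop:relative-residual}. The uniform norm bounds then pass to strong limits.
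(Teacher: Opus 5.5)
Your proof is correct and is essentially the paper's argument. With $D=\Omega^{-1/2}(AV-VZ)$ and $X=W^\dagger D$, your $K_V=X-(I-P)X^\dagger$ is exactly the paper's $U^\dagger D-D^\dagger U-U^\dagger DP$: your $W$ is its $U$, and the two expressions differ by $U^\dagger(UD^\dagger+DU^\dagger)U=0$. Your explicit check that $V(X_1+X_1^\dagger)V^\dagger=0$ is the response identity the paper only asserts, your form gives the sharper constant $2\|D\|$, and the finite-section detour is unnecessary because the relative bound already gives $A\Omega^{1/2}=\Omega^{1/2}\mathsf B$ as an identity of bounded operators.
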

\begin{proof}
Write $V=\Omega^{1/2}U$, $UU^\dagger=I$, $P=U^\dagger U$ and
$D=\Omega^{-1/2}(AV-VZ)$. Since $\|D\|\le R_A+L$ and
$DU^\dagger+UD^\dagger=0$, the skew lift
$K_V=U^\dagger D-D^\dagger U-U^\dagger DP$ has $UK_V=D$ and norm at most $3\|D\|$.
The response equation gives $WV^\dagger+VW^\dagger=0$ for
$W=2BV+V[K_V,Z]$; the same construction with
$\|\Omega^{-1/2}W\|\le2R_B+2L\kappa_{\rm src}$ proves the first lift.
\end{proof}
Write $K_c=K_{V_c}$ and $K_{1,c}=K_{1,V_c}$.
For two syntheses with the same target, put $\Omega_V=VV^\dagger$.
Applying the defect lemma to $A_W(V-W)-(V-W)(Z+K_W)$ and subtracting
the lift identities gives
\begin{align}
 \|A_V-A_W\|_{\Omega_V}
   &\le(2L+\kappa_{\mathrm{src}})\|V-W\|_{\HS}, \label{qcp-C-eq:weighted-A}\\
 \|V(K_V-K_W)\|_{\HS}
   &\le2(2L+\kappa_{\mathrm{src}})\|V-W\|_{\HS}. \label{qcp-C-eq:weighted-K}
\end{align}
For the coherent family, the synthesis difference satisfies
\begin{equation}
 \|V_c-V_{c'}\|_{\HS}^2
 =\int2(1-e^{-(c-c')^2z^2/2})\,\nu(dz)
 \le\frac{L^2}{3}|c-c'|^2.
 \label{qcp-C-eq:coherent-source-continuity}
\end{equation}
On compact subsets of $c>0$, uniqueness, weighted convergence and the
bounded lift norms give strong convergence on the dense range of
$V_c^\dagger$, for both $K_c$ and $A_c$. Trace norm continuity of the
moments identifies each weak limit of $B_c$ through its Sylvester equation.
For its source $C_c$, $\|B_cV_c\|_{\HS}^2=\tr(B_cC_c)$ gives
Hilbert--Schmidt continuity. The first lift identity gives weighted
continuity, with endpoint strips bounded by their mass.

\section{Quintic packets and spectral bounds}
\label{qcp-C-proof-spectral-foundations}
\label{qcp-C-proof-palm-translation-experiment}
\label{qcp-C-proof-exact-local-expansion}
\label{qcp-C-sec:packet}
Integrating the translated polynomial pieces gives
\begin{equation}
 J_F(r)=
 \begin{cases}
 \dfrac{10}7r^2-\dfrac{10}7r^4+2r^6-\dfrac{10}7r^7
                   +\dfrac5{21}r^9-\dfrac2{77}r^{11},&0\le r\le1,\\
 r-\dfrac{50}{231},&r\ge1,
 \end{cases}
 .
 \label{qcp-C-eq:profile-polynomial}
\end{equation}
Also $\nu_F=\int_0^1(1-F^2)=281/462$ and $\int(F')^2=10/7$.
With $c_\chi=\sqrt{1+\chi}$, $\psi_\chi(v)=\chi J_F(c_\chi|v|/\chi)$, put
\begin{equation}
 m_{\chi,\tau}(p)=\frac1{2\pi}\int e^{-ipv}
                         e^{-\tau\psi_\chi(v)/2}\,dv .
 \label{qcp-C-eq:packet-spectrum}
\end{equation}
For $\varphi=F'$, Plancherel's theorem gives
$\psi_\chi(v)=(c_\chi/\pi)\int(1-\cos pv)
|\widehat\varphi(\chi p/c_\chi)|^2p^{-2}dp$.
The symmetric Levy density is therefore
$\tau c_\chi|\widehat\varphi(\chi p/c_\chi)|^2/(2\pi p^2)$.
Extract a finite uniform jump component on a symmetric interval where
$\widehat\varphi\ne0$. Its compound Poisson mixture has a positive
density component everywhere, preserved by convolution with the remaining
Levy law. Integrability of the characteristic function then gives a
continuous, strictly positive density.

The first six derivatives match at $0,\pm\chi/c_\chi$.
The polynomial gives seventh derivative jumps
$7200\tau c_\chi^7/\chi^6$ at zero and
$-3600\tau c_\chi^7\chi^{-6}e^{-\tau D_\chi}$ at each join.
The regular eighth derivative has bounded variation and exponential tails.
Eight distributional integrations by parts give the expansion below,
including any fixed finite number of momentum and intensity derivatives,
uniformly on $I=[1/2,2]$, $\mathcal T=[1/2,3/2]$:
\begin{equation}
 m_{\chi,\tau}(p)=
 \frac{3600\tau c_\chi^7}{\pi\chi^6p^8}
 \{1-e^{-\tau D_\chi}\cos(d_\chi p)\}
       +O_I(|p|^{-9}),
 \quad
 D_\chi=\frac{181\chi}{462},\quad
 d_\chi=\frac{\chi}{\sqrt{1+\chi}}.
 \label{qcp-C-eq:uniform-tail}
\end{equation}
The floor $1-e^{-\tau D_\chi}>0$, compact positivity and bounded
polynomial variation give
\begin{align}
 c_1(1+|p|)^{-8}&\le m_{\chi,\tau}(p)\le c_2(1+|p|)^{-8},
 \label{qcp-C-eq:uniform-spectral-comparison}\\
 \sup_{\chi\in I,\tau\in\mathcal T,p}
 \left|\frac{\partial_\tau^j\partial_p^k m_{\chi,\tau}(p)}
                 {m_{\chi,\tau}(p)}\right|&<\infty
 \quad(j,k\text{ fixed finite}).
 \label{qcp-C-eq:packet-relative-derivatives}
\end{align}
All intensities $(1+a\epsilon z)^2$ lie in $\mathcal T$ when
$aL\epsilon\le1/8$.

\subsection{The first arrival representation}
Sorting every photon sector by its first coordinate gives the exact isometry
\begin{equation}
 |f\rangle=
 e^{-\|f\|^2/2}|{\rm vac}\rangle
 \ \oplus\
 \int_0^1 f(t)e^{-\frac12\int_0^t f(r)^2dr}
       |t\rangle\otimes|f\mathbf1_{(t,1]}\rangle\,dt .
 \label{qcp-C-eq:first-arrival-isometry}
\end{equation}
The sector factor $\sqrt n$ gives $1/\sqrt{(n-1)!}$;
symmetrization inverts the isometry and retains vacuum coherence.
At the quantum coordinate $t\le1-w$, the known displacement $-s$ is applied,
followed by dilation of the window of width $w$ and compression of the
suffix to one residual mode. With $F_\chi(x)=F(x/\chi)$,
$A_\chi(x)=\int_0^xF_\chi^2$, $\zeta=a\epsilon z$, the normalized packet is
\begin{equation}
 B_\zeta(x)=\mathbf1_{x\ge0}(1+\zeta)F_\chi(x)
       e^{-(1+\zeta)^2A_\chi(x)/2}
       |(1+\zeta)F_\chi(x+\cdot)-1\rangle_{[0,\chi]} .
 \label{qcp-C-eq:arrival-packet}
\end{equation}
The survival probability $e^{-(1+\zeta)^2A_\chi(x)}$ normalizes the packet,
giving the arrival component
\begin{equation}
 \epsilon^{-1}B_{a\epsilon z}((t-u)/\epsilon^2)
                    \otimes|az\sqrt{1-t-w}\rangle .
 \label{qcp-C-eq:physical-arrival-packet}
\end{equation}
Vacuum and late arrivals are retained in C3.

\subsection{Cross overlaps and all vector directions}
Use $\widehat B_\zeta(p)=(2\pi c_\chi)^{-1/2}
\int_0^\infty e^{-ipx/c_\chi}B_\zeta(x)dx$.
At separation $d\ge0$, differentiate the exact first arrival packets
at equal amplitudes. With $\eta$ perturbing the ket amplitude at the
later position, their overlap has first-order expansion
\[
 K^P_{1+\eta,1}(d)
 =e^{-\chi J_F(d/\chi)/2}
  \left[1+\frac{\eta}{2}\{d-\chi J_F(d/\chi)\}\right]+O(\eta^2).
\]
Indeed, the common coherent overlap contributes
$-\eta\chi J_F/2$, the relative waiting normalization contributes
$\eta d/2$, and the terms involving the squared amplitude difference
start at order $\eta^2$. The exact packet derivative bounds below
justify differentiation and Fourier inversion.
At equal amplitudes,
\begin{equation}
 \|\widehat B_\zeta(p)\|^2=m_{\chi,(1+\zeta)^2}(p).
 \label{qcp-C-eq:packet-norm-spectrum}
\end{equation}
With $m=m_{\chi,1}$, the first ket-amplitude derivative is
$[c_\chi(v-v')-\psi_\chi(v-v')]k_\chi(v-v')/2$; Fourier inversion gives
\begin{equation}
 w_0=\widehat B_0/\sqrt m,\qquad
 \eta(p)=\left.\partial_\tau\log m_{\chi,\tau}(p)\right|_{\tau=1}
                    -\frac{ic_\chi}{2}\partial_p\log m(p),\qquad
 w_1=\frac{\partial_\zeta\widehat B_\zeta|_0}{\sqrt m}
                                  -\eta w_0,\quad w_1\perp w_0.
 \label{qcp-C-eq:packet-tangent}
\end{equation}
Thus $q=a\operatorname{Re}\eta$ and $r=a\operatorname{Im}\eta$ are the
scalar intensity and phase components; $w_1$ is the orthogonal tangent.
For every fixed finite $j,k$,
\begin{equation}
 \|\partial_\zeta^j\partial_p^k\widehat B_\zeta(p)\|
 \le C_{j,k}(1+|p|)^{-4},\qquad |\zeta|\le\zeta_0<1 .
 \label{qcp-C-eq:full-packet-estimate}
\end{equation}
Put $N_\zeta(x)=\|(1+\zeta)F_\chi(x+\cdot)-1\|_{L^2[0,\chi]}^2$.
In sector $n$, the integrand is the waiting amplitude times
$e^{-N_\zeta(x)/2}\prod_{l=1}^n[(1+\zeta)F_\chi(x+y_l)-1]/\sqrt{n!}$.
After $j$ derivatives and multiplication by $x^k$, this integrand is $C^2$ with
third derivative jumps at $0,\chi,\chi-y_l$. Their absolute sum and the
regular fourth derivative $L^1$ norm are bounded by
$C_{j,k}(n+1)^{N_{j,k}}R_1^n/\sqrt{n!}$.
Four integrations by parts give $|p|^{-4}$ decay, and the $L^1$ bound
controls small $p$. Since $\sum(n+1)^{2N}(R_1^2\chi)^n/n!<\infty$,
the squared sector bounds sum. The relative spectral estimates then
bound derivatives of the normalized packet.

On the enlarged window space, use
\begin{equation}
 U_0(p)=
 \begin{pmatrix}
 0&w_0(p)^{\dagger}\\
 w_0(p)&I-|w_0(p)\rangle\langle w_0(p)|
 \end{pmatrix}.
 \label{qcp-C-eq:known-window-frame}
\end{equation}
This self-adjoint unitary maps $(0,w_0)$ to $e_0=(1,0)$.
Set $\widetilde w_1=U_0(0,w_1)$; the derivative estimates give C7's
weighted expansion. Rescaling the window to $[0,1]$ makes the sector
bounds uniform on $I$.

\section{Endpoint estimates at quadratic precision}
\label{qcp-C-proof-physical-endpoint-removal}
\label{qcp-C-proof-boundary-gap-bounds}
\label{qcp-C-sec:endpoints}
Fix $L<\infty$, $w=\chi\epsilon^2$, $s=\epsilon^{-1}\ge2aL$,
$w<d<1/4$, and $B_\lambda=2\lambda+4L^2(1-\lambda)$.
For real filters $\sum_jf_j(X)^2=I$, put
$E_j=f_j(X)-f_j(u)$, $Q=\sum_jE_j^2$. Completeness gives
$\sum_jf_j(u)E_j=-Q/2$, hence
\begin{equation}
 \sum_j f_j(X)\rho f_j(X)-\rho
 =\sum_jE_j\rho E_j-\tfrac12(Q\rho+\rho Q).
 \label{qcp-C-eq:soft-instrument-identity}
\end{equation}
For a pure input its trace distance disturbance is at most
$(\langle Q\rangle+\sqrt{\langle Q^2\rangle})/2$.

Choose filters $(f_L,f_M,f_R)$ equal to the left, middle and right unit
vectors outside the transition intervals $[d,2d]$, $[1-2d,1-d]$.
Rotate linearly from zero to $\pi/2$ on each transition, giving speed
$\ell/d$, $\ell=\pi/2$, and $\sum_j\operatorname{TV}(f_j^2)=4$.
Apply the filters to the quantum arrival coordinate, assigning vacuum
value one and retaining labels and states. The delay is $\min(W_z,1-u)$, where
\begin{equation}
 W_z\le_{\rm st}\epsilon^2Y_\chi,\qquad
 Y_\chi=\chi+\operatorname{Exp}(1/4).
 \label{qcp-C-eq:waiting-envelope}
\end{equation}
because the post rise intensity is at least $s^2/4$.
All moments of $Y_\chi$ are uniformly finite on $I$. Hence
\begin{equation}
 \bar\delta_{\epsilon,d}\le C_I
 \left(\frac{\epsilon^4}{d}+\frac{\epsilon^6}{d^2}\right).
 \label{qcp-C-eq:endpoint-disturbance}
\end{equation}
A delay $W$ crosses transitions on length at most $2d+2W$.
Integrating $Q\le\ell^2W^2/d^2$ gives
$C_I(\epsilon^4/d+\epsilon^6/d^2)$. For the square root term, the
transition interiors have the same bound, while the exterior strips
contribute at most $\epsilon^6d^{-2}\int_0^\infty
[\mathbb E(Y_\chi^4\mathbf1_{Y_\chi\ge r})]^{1/2}dr$.
The exponential envelope bounds this integral uniformly, including vacuum.

\subsection{Posterior weights and endpoint costs}
Keep the normalized conditional states and compare their label weights
$p_j(u,z)$ with $p_j(u,0)$. Monotone coupling of the waiting survival
probabilities $e^{-\tau_zA_\chi(x)}$, $\tau_z=(1+a\epsilon z)^2$, gives
$\mathbb E|W_z-W_0|\le\epsilon^2|\tau_z-1|C_\chi$,
$C_\chi=\chi^2+16$. Estimate the initial segment of width $\chi$ directly, then use
$A_\chi(x)=x-\nu_F\chi$ for $\tau\ge1/4$.
The bounded variation translation inequality gives
\begin{equation}
 \int_0^1\frac12\sum_j|p_j(u,z)-p_j(u,0)|du
 \le\bar\eta_\epsilon
 :=2C_\chi(2aL\epsilon^3+a^2L^2\epsilon^4).
 \label{qcp-C-eq:endpoint-prior}
\end{equation}
For $\pi_j=\int p_j(u,0)du$, let $\mathsf F_j$ have position density
$p_j(u,0)/\pi_j$, the original amplitude prior and normalized states.
The forward labelled comparison and reverse label removal have errors
$\bar\eta$ and $\bar\delta+\bar\eta$. Lemma~\ref{qcp-C-lem:transfer} gives
\begin{equation}
 \left|\mathcal G_\lambda(\mathsf E)
       -\sum_j\pi_j\mathcal G_\lambda(\mathsf F_j)\right|
 \le B_\lambda(\bar\delta+2\bar\eta).
 \label{qcp-C-eq:actual-endpoint-disintegration}
\end{equation}
On the left branch, $u<2d$. Splitting the field at $2d+w$ and freezing the
amplitude in the window leaves a constant suffix mode. The contracted
vectors satisfy
\begin{equation}
 \|v_0\|
 \left\|\frac{v_z}{\|v_z\|}-\frac{v_0}{\|v_0\|}\right\|
 \le2\|v_z-v_0\|.
 \label{qcp-C-eq:conditional-vector-bound}
\end{equation}
The error weighted by the reference probability is at most
$4aLd\sqrt{2d+w}$; a zero reference probability contributes zero mass.
The product experiment has zero gap, so two comparisons through the
identity channel give
\begin{equation}
 \pi_L\mathcal G_\lambda(\mathsf F_L)
          \le8B_\lambda aLd\sqrt{2d+w}.
 \label{qcp-C-eq:left-endpoint}
\end{equation}
On the right branch, a near optimal location rule is paired with an amplitude
report of zero. Passing to the actual weights costs at most
$10L^2\bar\eta$ in the risk of the zero report and in the optimum.
Reveal $u$ and complete the rule by reporting zero on the complement of
the branch. Lemma~\ref{qcp-C-lem:uniform-amplitude-information}
with QFI $4a^2(1-u)$ bounds the full record amplitude improvement by
$(8/15)a^2L^4(1-u)$. Integrate over $[1-4d,1]$;
earlier right labels have probability at most
$4\epsilon^2e^{-(2d-w)/(4\epsilon^2)}$. Thus
\begin{equation}
 \pi_R\mathcal G_\lambda(\mathsf F_R)
 \le(1-\lambda)\left\{
 \frac{64}{15}a^2L^4d^2+
 4L^2\epsilon^2e^{-(2d-w)/(4\epsilon^2)}
 +10L^2\bar\eta_\epsilon\right\}.
 \label{qcp-C-eq:right-endpoint}
\end{equation}
\subsection{Return to the middle synthesis}
For $d=\epsilon^{3/2}$, the disturbance is $O(\epsilon^{5/2})$ and the
weight change is $O(\epsilon^3)$. The weighted left and right gaps are
$O(\epsilon^{9/4})$, $O(\epsilon^3)$. The conditional gaps are
nonnegative, so restoring the middle weights gives
\begin{equation}
 \mathcal G_\lambda(\mathsf E)
       =\mathcal G_\lambda(T_{\rm ph})+o(\epsilon^2).
 \label{qcp-C-eq:physical-middle-return}
\end{equation}
Here $T_{\rm ph}$ is C7's exact arrival synthesis with $k=f_M$;
source extension adds the waiting tail. The endpoint rules give
$o(\epsilon^2)$ excess for each task at fixed interior weights and
combine with the middle rule in C9, uniformly on compact positive
$\chi$ intervals.

\section{Comparison of marginal excess risks}
\label{qcp-C-proof-centered-location-regret-transfer}
\label{qcp-C-sec:centered-comparison}
Use $\mathcal F_\delta v=(2\pi\delta)^{-1/2}\int e^{-ipu/\delta}v(u)du$,
$P_\delta=-i\delta\partial_u$, $m_1(p)=m_{\chi,1}(p)>0$ from C2,
$f=\sqrt{m_1}$, and
$\gamma(p,q)=f(p)f(q)(1-\cos K(p-q))/[\pi K(p-q)^2]$,
$g(p)=\int\gamma(p,q)dq$.
Write $\langle P_\delta\rangle=(I+P_\delta^2)^{1/2}$, and let $M_E$
denote multiplication by the coefficient field $E(t)$.
For $f\asymp(1+|p|)^{-4}$,
\begin{equation}
 \Lambda_{\delta}=2(1-g/m_1)(P_{\delta})\otimes I,\qquad
 g(p)/m_1(p)\le C(1+p^2).
 \label{qcp-C-eq:source-dual}
\end{equation}
The corresponding positive source form is
\begin{equation}
 L_{x,\delta}=2\min\{1,|x-u|/(K\delta)\}-\Lambda_{\delta}\ge0
 \label{qcp-C-eq:positive-source-form}
\end{equation}
Conjugating by $f(P_\delta)$ and the report phase gives twice the
nonnegative graph Laplacian. The bound on $g/m_1$ follows from
convolution of $(1+|p|)^{-4}$ with $C(1+|p-q|)^{-2}$.
For $T\langle P_\delta\rangle$ Hilbert--Schmidt, define
\begin{equation}
 \mathfrak p_J(T)=r_{\loc}(J;T)-\operatorname{Tr}(T\Lambda_{\delta}T^{\dagger}),
 \qquad
 d_{\delta}(R)^2=2\|R\|_{\HS}^2+
              \|R(-\Lambda_{\delta})_+^{1/2}\|_{\HS}^2.
 \label{qcp-C-eq:centered-form}
\end{equation}
The positive source form writes $\mathfrak p_J(T)$ as a squared
direct integral row norm. Outcome partitions and domain approximation
extend it to every complete POVM. The triangle inequality and
$0\le\ell\le2$ give
\begin{equation}
 0\le \mathfrak p_J(R)\le d_{\delta}(R)^2,\qquad
 |\sqrt{\mathfrak p_J(T)}-\sqrt{\mathfrak p_J(S)}|\le d_{\delta}(T-S)
 \le C\|(T-S)\langle P_{\delta}\rangle\|_{\HS}.
 \label{qcp-C-eq:root-stability}
\end{equation}
Taking near minimizers in both directions gives, with $\mathfrak p^\star(T)=\inf_J\mathfrak p_J(T)$,
\begin{equation}
 |\sqrt{\mathfrak p^\star(T)}-\sqrt{\mathfrak p^\star(S)}|\le d_{\delta}(T-S),\qquad
 e_{\loc}(J;T)=\mathfrak p_J(T)-\mathfrak p^\star(T).
 \label{qcp-C-eq:actual-centered-regret}
\end{equation}
Thus each excess is centered at its own marginal optimum.
\begin{qcpClemma}[Centered quadratic comparison]\label{qcp-C-lem:centered}
Fix \(0<\lambda<1\). Let \(T_\epsilon,S_\epsilon\) have the same source targets,
losses and class of complete paired measurements, with their finite
unnormalized masses retained. Assume:
\begin{enumerate}
\item \(d_{\delta}(T_\epsilon-S_\epsilon)=o(\epsilon)\);
\item for a common reference \(T_0\) and a complete rule \(J_0\),
      \(\mathfrak p_{J_0}(T_0)=0\) and
      \(d_{\delta}(T_\epsilon-T_0)+d_{\delta}(S_\epsilon-T_0)=O(\epsilon)\);
\item uniformly over all paired rules with \(|Y|\le L\),
      \(|\sqrt{e_{\amp}(J;T_\epsilon)}
         -\sqrt{e_{\amp}(J;S_\epsilon)}|=o(\epsilon)\);
\item at least one model admits a complete common rule with weighted
      excess \(O(\epsilon^2)\).
\end{enumerate}
Then
\begin{equation}
 \mathcal G_\lambda(T_\epsilon)
       -\mathcal G_\lambda(S_\epsilon)=o(\epsilon^2).
 \label{qcp-C-eq:centered-gap}
\end{equation}
Uniform hypotheses and moduli on a known design set give the uniform
conclusion there.
\end{qcpClemma}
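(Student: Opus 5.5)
The plan is to compare the two gaps rule by rule. The gap $\mathcal G_\lambda=\inf_J\{\lambda e_\loc(J;\cdot)+(1-\lambda)e_\amp(J;\cdot)\}$ is an infimum over one common class of complete paired rules. So it suffices to show the following \emph{transfer claim}: every rule $J$ whose two marginal excesses are $O(\epsilon^2)$ in one model has excesses differing by $o(\epsilon^2)$ in the other model. Amplitude reports may be clipped to $|Y|\le L$ without increasing either risk, so hypothesis~3 applies to all the rules we use.

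\emph{Step 1: a priori size of the centered location forms.} By hypothesis~2 and \eqref{qcp-C-eq:root-stability},
\[
 \mathfrak p^\star(T_\epsilon)\le\mathfrak p_{J_0}(T_\epsilon)
 \le\bigl(\sqrt{\mathfrak p_{J_0}(T_0)}+d_\delta(T_\epsilon-T_0)\bigr)^2=O(\epsilon^2),
\]
and likewise $\mathfrak p^\star(S_\epsilon)=O(\epsilon^2)$. By \eqref{qcp-C-eq:actual-centered-regret}, $\mathfrak p_J(S_\epsilon)=\mathfrak p^\star(S_\epsilon)+e_\loc(J;S_\epsilon)$. Hence $e_\loc(J;S_\epsilon)=O(\epsilon^2)$ forces $\sqrt{\mathfrak p_J(S_\epsilon)}=O(\epsilon)$.

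\emph{Step 2: the location excess transfers.} Write $x^2-y^2=(x-y)(x+y)$ for the square roots. With $d=d_\delta(T_\epsilon-S_\epsilon)=o(\epsilon)$ from hypothesis~1,
\[
 |\mathfrak p_J(T_\epsilon)-\mathfrak p_J(S_\epsilon)|\le d\bigl(2\sqrt{\mathfrak p_J(S_\epsilon)}+d\bigr)=o(\epsilon^2).
\]
The same estimate holds for $\mathfrak p^\star$, using the first inequality in \eqref{qcp-C-eq:actual-centered-regret} and Step~1. Subtracting the two gives
\[
 e_\loc(J;T_\epsilon)-e_\loc(J;S_\epsilon)=o(\epsilon^2).
\]

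\emph{Step 3: the amplitude excess transfers.} Hypothesis~3 gives $|\sqrt{e_\amp(J;T_\epsilon)}-\sqrt{e_\amp(J;S_\epsilon)}|=o(\epsilon)$. Combined with $\sqrt{e_\amp(J;S_\epsilon)}=O(\epsilon)$, the same factorization yields an $o(\epsilon^2)$ difference. This proves the transfer claim, and the argument is symmetric in $T_\epsilon$ and $S_\epsilon$.

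\emph{Step 4: conclusion.} Hypothesis~4 supplies a rule with weighted excess $O(\epsilon^2)$ in one model. Since $0<\lambda<1$ is fixed and both excesses are nonnegative, each individual excess is at most $O(\epsilon^2)/\min\{\lambda,1-\lambda\}$. The transfer claim then shows that the other model's gap is also $O(\epsilon^2)$.

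Now fix either model and take a rule within $\epsilon^3$ of its gap infimum. Its weighted excess is $O(\epsilon^2)$, so again both its marginal excesses are $O(\epsilon^2)$. By Steps~2 and~3, its weighted excess in the other model exceeds it by at most $o(\epsilon^2)$. Doing this in both directions gives $\mathcal G_\lambda(T_\epsilon)\le\mathcal G_\lambda(S_\epsilon)+o(\epsilon^2)$ and the reverse inequality, which is \eqref{qcp-C-eq:centered-gap}.

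The step I expect to need the most care is Step~1. The centering $\mathfrak p_J=r_\loc-\tr(T\Lambda_\delta T^\dagger)$ must be exactly what makes the location optimum itself $O(\epsilon^2)$; otherwise the square-root stability only gives $o(\epsilon)$ errors. The finite unnormalized masses must be retained so that the reference $T_0$ and hypothesis~2 apply verbatim. For the uniform statement on a known design set, every $O$ and $o$ above comes with explicit constants formed from the hypotheses' moduli and $\min\{\lambda,1-\lambda\}$. Uniform hypotheses therefore give the uniform conclusion directly.
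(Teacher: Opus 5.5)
Your proposal is correct and follows essentially the same route as the paper's proof. The reference $T_0$ gives $\mathfrak p^\star(T_\epsilon),\mathfrak p^\star(S_\epsilon)=O(\epsilon^2)$, and hypothesis~4 at the interior weight bounds both excesses of a near-optimal rule. Root stability, through the factorization $d_\delta(T-S)\bigl(\sqrt{\mathfrak p_J(T)}+\sqrt{\mathfrak p_J(S)}\bigr)=o(\epsilon^2)$ applied to the forms and to their infima, together with the squared amplitude root bound, then transfers the excesses in both directions.
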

\begin{proof}
The reference gives $\mathfrak p^\star(T),\mathfrak p^\star(S)=O(\epsilon^2)$.
A feasible $O(\epsilon^2)$ gap at an interior weight bounds both excesses.
Root stability and hypothesis 3 transfer these bounds between the models.

For a gap optimizer within $o(\epsilon^2)$ of the infimum,
$d_\delta(T-S)(\sqrt{\mathfrak p_J(T)}+\sqrt{\mathfrak p_J(S)})=o(\epsilon^2)$
controls the location forms and, by the same estimate at marginal
minimizers, their infima. Squaring the amplitude root bound and
interchanging $T,S$ completes the comparison.
\end{proof}
Hypothesis 3 follows from a common bounded self-adjoint predictor.
With $\Omega_T=TT^\dagger$, $\Omega_S=SS^\dagger$, assume
\begin{equation}
 \|A_T-\widetilde A_\epsilon\|_{\Omega_T}
 +\|A_S-\widetilde A_\epsilon\|_{\Omega_S}=o(\epsilon),
 \quad
 \|T-S\|_{\HS}=o(\epsilon),\quad
 \sup_\epsilon\|\widetilde A_\epsilon\|<\infty,
 \label{qcp-C-eq:common-predictor}
\end{equation}
The square identity and triangle inequality give
\begin{equation}
 |\sqrt{e_{\amp}(J;T)}-\sqrt{e_{\amp}(J;S)}|
 \le o(\epsilon)+(L+\|\widetilde A_\epsilon\|)\|T-S\|_{\HS}
 =o(\epsilon).
 \label{qcp-C-eq:amplitude-root}
\end{equation}
C1's source defect bound verifies these estimates in C5 and C7.

\section{Scalar experiment and a common readout}
\label{qcp-C-proof-slow-residual-regret}
\label{qcp-C-proof-nonstationary-amplitude-benchmark}
\label{qcp-C-sec:scalar-experiment}
Use C3's middle filter $k$ with $d=\epsilon^{3/2}$.
Then $\|k\|_2^2=1-3d$, $\|k'\|_2^2=\pi^2/(4d)$, and
\begin{equation}
 c(t)=a\sqrt{1-\operatorname{clip}_{[d,1-d]}(t)-w},\qquad
 \|c'\|_2^2=\frac{a^2}{4}\log\frac{1-d-w}{d-w}.
 \label{qcp-C-eq:actual-suffix}
\end{equation}
The source is $L^2(du)\otimes L^2(\nu)$, $\nu=\Unif[-1/2,1/2]$;
$Z$ multiplies by $z$ and $D(t)=k(t)V_{c(t)}$.
Using the positive spectrum in \eqref{qcp-C-eq:packet-spectrum}, define
\begin{align}
 m_\tau(p)=m_{\chi,\tau}(p),\nonumber\\
 f=\sqrt{m_1},\qquad
 q=a\left.\partial_\tau\log m_\tau\right|_{\tau=1},
 \qquad f_z=\sqrt{m_{(1+a\epsilon z)^2}},\nonumber\\
 \Phi_\epsilon(z,p)=\frac{f_z(p)}{f(p)}>0.
 \label{qcp-C-eq:scalar-spectra}
\end{align}
The source multiplier $\Phi_\epsilon(Z,P_\delta)$ defines the real
synthesis and reference
\begin{equation}
 T_\epsilon^{\rm r}=T_0\Phi_\epsilon(Z,P_\delta),\qquad
 T_0=\sqrt{2\pi \delta}\,M_D f(P_{\delta}).
 \label{qcp-C-eq:real-synthesis}
\end{equation}
Put $\Omega_{\rm r}=T_\epsilon^{\rm r}(T_\epsilon^{\rm r})^\dagger$.
Both syntheses have the finite mass $m=\|k\|_2^2$ because $\int f_z^2=1$.
Their losses are $2\min(1,|x-u|/(K\delta))$ and $(y-z)^2$.

\subsection{Location optimality and amplitude prediction}
Coordinate readout followed by $A_{c(t)}$ is complete. Multiplying the
positive graph slack by $k(t)|c(t)z\rangle$ and integrating over $z$ gives
\begin{equation}
 \mathfrak p_{J_0}(T_0)=0,\qquad
 r_{\loc}^\star(T_0)=m\,r_F(\chi).
 \label{qcp-C-eq:ordered-optimality}
\end{equation}
Applied to each positive $f_z$, the same graph argument establishes exact
optimality of coordinate readout over all complete measurements.
For $\overline A=M_{A_c}$, $\overline K=M_{K_c}$, the reference defect
equals $\sqrt{2\pi\delta}M_D[\overline K,f(P_\delta)]$ and satisfies
\begin{equation}
 \|\overline A T_0-T_0Z-T_0\overline K\|_{\HS}
 \le 2(2L+\kappa_{\mathrm{src}})\frac{L}{\sqrt3}\,
        \delta\|k\|_\infty\|c'\|_2
       \left(\int|f'(p)|^2dp\right)^{1/2}
 =o(\epsilon).
 \label{qcp-C-eq:zero-source-defect}
\end{equation}
Indeed, the squared kernel norm integrates
$k(t)^2|\alpha(x)|^2\|V_{c(t)}(K_{c(t)}-K_{c(t-\delta x)})\|_{\HS}^2$,
$\alpha=\mathcal F^{-1}f$. Apply C1's weighted lift bound and
$\|c(\cdot)-c(\cdot-\delta x)\|_2\le\delta|x|\|c'\|_2$.

For $b(p)=\int e^{iyp}\mu(dy)$, the translation representation and
Minkowski's inequality give
\begin{equation}
 \sqrt{2\pi \delta}\,
 \|[b(P_{\delta}),M_E]f(P_{\delta})\|_{\HS}
 \le\|f\|_2\int
       \|E(\cdot+\delta y)-E(\cdot)\|_{L^2(\HS)}\,|\mu|(dy).
 \label{qcp-C-eq:fourier-reordering}
\end{equation}
For $E=D$, $\delta\|D'\|_2=O(\epsilon^{5/4})$.
For residual coefficients, remove an endpoint strip and truncate the
Fourier measure, then apply C1's weighted and strong continuity.
C10 makes these moduli uniform.
Intensity Taylor expansion gives the source multiplier
$\Phi_\epsilon(Z,P_\delta)=I+\epsilon Zq(P_\delta)+O(\epsilon^2)$ and hence
\begin{equation}
 T_\epsilon^{\rm r}=T_0+\epsilon T_1+O_{\HS}(\epsilon^2),
 \qquad T_1=\sqrt{2\pi \delta}\,M_{DZ}q(P_{\delta})f(P_{\delta}).
 \label{qcp-C-eq:scalar-expansion}
\end{equation}
Define the amplitude and source predictors by
\begin{equation}
 A_{\delta}=M_{A_c}+\epsilon\{q(P_{\delta}),M_{B_c}\},\qquad
 K_{\delta}=M_{K_c}+\frac\epsilon2\{q(P_{\delta}),M_{K_{1,c}}\}.
 \label{qcp-C-eq:scalar-predictors}
\end{equation}
Here $B_c$ is C1's response. The bounded predictors are self-adjoint
and skew-adjoint, respectively. The zeroth order defect is
\eqref{qcp-C-eq:zero-source-defect}; reordering the first order terms gives
$2B_cV_c+V_c[K_c,Z]-V_cK_{1,c}=0$. The remaining
$\epsilon o(1)+O(\epsilon^2)$ error and the source defect lemma imply
\begin{equation}
 \|A_{\delta}T_\epsilon^{\rm r}
        -T_\epsilon^{\rm r}Z-T_\epsilon^{\rm r}K_{\delta}\|_{\HS}
       =o(\epsilon),\qquad
 \|A_{\delta}-A_\epsilon\|_{\Omega_{\rm r}}
       =o(\epsilon).
 \label{qcp-C-eq:actual-amplitude-predictor}
\end{equation}

\subsection{A complete upper readout}
Coordinate readout followed by $A_c$ defines a location optimal rule
$M_0$ satisfying
$YJ=J\overline A$, $\overline A=M_{A_c}$. Its amplitude calibration is
\begin{equation}
 r_{\amp,\rm ref}(\epsilon)=\int k(t)^2\mathcal A_L(c(t))\,dt,\qquad
 \mathcal A_L(c)=L^2/3-\operatorname{Tr}(\Omega_cA_c^2).
 \label{qcp-C-eq:actual-amplitude-calibration}
\end{equation}
Boundedness of $B_c,q$ gives
$\|\overline A-A_\delta\|_{\Omega_{\rm r}}=O(\epsilon)$.
Equation~\eqref{qcp-C-eq:actual-amplitude-predictor} and the square identity
then give, uniformly on $I$,
\begin{equation}
 e_{\amp}(M_0;T_\epsilon^{\rm r})
 =\|\overline A-A_\epsilon\|_{\Omega_{\rm r}}^2
 \le C_{\amp}\epsilon^2+o(\epsilon^2).
 \label{qcp-C-eq:scalar-common-cost}
\end{equation}
The centering uses the finite-$\epsilon$ amplitude optimum.
The identity $\tr(\Omega_cB_c)=\mathcal A_L(c)$ gives C6's response.

\subsection{Finite positions and conditional weights}
Use $\mathcal F^{-1}b(x)=(2\pi)^{-1/2}\int_{\mathbb R}e^{ipx}b(p)\,dp$
and $\alpha_0=\mathcal F^{-1}f$. For the real synthesis, put
$\alpha_{\epsilon,z}=\mathcal F^{-1}f_z$; for the phase inclusive synthesis
in \eqref{qcp-C-eq:phase-inclusive-source}, use
$\alpha_{\epsilon,z}=\mathcal F^{-1}(f_ze^{i\epsilon zr})$ instead.
First use $u\in\mathbb R$ with measure $du\,\nu(dz)$ and extend $k$ by zero.
Let
$p_\epsilon(u,z)=\int_{\mathbb R}k(t)^2\delta^{-1}
|\alpha_{\epsilon,z}((t-u)/\delta)|^2\,dt$,
and define $p_0(u)$ by the same expression with $\alpha_0$.
These column masses weight the normalized vectors on $[0,1]$.
Uniformly square integrable momentum derivatives give packet moments
of every fixed order. Since $\operatorname{supp}k\subset[d,1-d]$,
the added mass is at most $C_r m(\delta/d)^{2r}$. With $r=3$, its
contribution to each bounded risk is $O(\epsilon^3)$.
Clip reports to $[0,1]$.
With $W_1$ denoting the first Wasserstein distance, changing conditional
weights while retaining normalized vectors costs
\begin{equation}
 \int_0^1|p_\epsilon(u,z)-p_0(u)|\,du
 \le\operatorname{TV}(k^2)\,\delta\,
  W_1(|\alpha_{\epsilon,z}|^2,|\alpha_0|^2)
 =O(\epsilon \delta).
 \label{qcp-C-eq:scalar-prior-fee}
\end{equation}
Equal packet masses and the weighted first moment derivative bound
give the $W_1$ estimate by coupling unmatched mass through zero;
$\operatorname{TV}(k^2)=2$. Retain the reference mass
$\pi_0=\int_0^1p_0$ at quadratic precision. These are the scalar column
weights; C3 uses the physical arrival posterior.

\section{Lower bound from rigidity and response}
\label{qcp-C-proof-nonstationary-full-class-lower}
\label{qcp-C-sec:scalar-lower}
\begin{qcpClemma}[Rigidity--response for the scalar synthesis]
\label{qcp-C-thm:core-rigidity}
Let $T=T_\epsilon^{\rm r}$, $\Omega=TT^\dagger$ and $m=\tr\Omega>0$, with actual
amplitude optimum $A$. Fix a known unitary $U$ on the record space and a known
measurable phase $g$, $|g(x)|=1$. Assume
\begin{equation}
 U\Omega U^\dagger\le C\Omega,\qquad
 \|(g(X)J-JU)T\|_{\HS}\le\kappa_{\rm rig}\sqrt{e_\loc(J)}
 \label{qcp-C-eq:core-rigidity-hypotheses}
\end{equation}
for every complete paired rule $J$. For any bounded Hermitian predictor
with $\|\widetilde A-A\|_{\Omega}\le\eta_\amp$, define
$\widetilde\Delta=\tr\Omega(\widetilde A-U^\dagger\widetilde A U)$.
For fixed $0<\lambda<1$, the optimized gap satisfies
\begin{equation}
 \begin{gathered}
 \alpha=2L\kappa_{\rm rig},\qquad \beta=1+\sqrt C,\qquad
 \rho=[|\widetilde\Delta|/\sqrt m-\beta\eta_\amp]_+,\\
 \mathcal G_\lambda\ge
 \frac{\rho^2}{\alpha^2/\lambda+\beta^2/(1-\lambda)}.
 \end{gathered}
 \label{qcp-C-eq:core-response-bound}
\end{equation}
\end{qcpClemma}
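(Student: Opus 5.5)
We prove the bound by showing that a small joint excess forces the amplitude report to nearly commute with the unitary $U$, while the response $\widetilde\Delta$ shows that the optimal amplitude observable does not. Fix a complete paired rule $J$ with Naimark isometry and commuting report observables $X,Y$, $|Y|\le L$. We use the two marginal square identities. The amplitude identity $e_\amp(J)=\|YJ-JA\|_\Omega^2$ comes from \eqref{qcp-C-eq:foundation-square}. The location side is used only through the rigidity hypothesis \eqref{qcp-C-eq:core-rigidity-hypotheses}. Write $\epsilon_\loc=\sqrt{e_\loc(J)}$ and $\epsilon_\amp=\sqrt{e_\amp(J)}$. The goal is the linear inequality
\[
 \rho\sqrt m\le\alpha\epsilon_\loc+\beta\epsilon_\amp .
\]
Minimizing $\lambda\epsilon_\loc^2+(1-\lambda)\epsilon_\amp^2$ subject to it is a Cauchy--Schwarz step. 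It gives $\rho^2 m/(\alpha^2/\lambda+\beta^2/(1-\lambda))$. We then check the normalization by $m$: the statement divides $|\widetilde\Delta|$ by $\sqrt m$. So the inequality we actually need is $\rho\le\alpha\epsilon_\loc+\beta\epsilon_\amp$ with $\rho$ as defined. Taking the infimum over $J$ yields \eqref{qcp-C-eq:core-response-bound}.

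\textbf{Key steps.} First, we exploit that $Y$ commutes with $g(X)$, since both are functions of commuting report observables. This gives the trace identity
\[
 0=\tr\bigl(T^\dagger J^\dagger[g(X)^\dagger Y g(X)-Y]JT\bigr).
\]
Insert $g(X)J=JU+R$ with $\|RT\|_{\HS}\le\kappa_{\rm rig}\epsilon_\loc$. Then
\[
 J^\dagger g(X)^\dagger Y g(X)J=U^\dagger J^\dagger YJU+(\text{terms linear in }R),
\]
and the $R$-terms contribute at most $2\|Y\|\,\|RT\|_{\HS}\|T\|_{\HS}\le 2L\kappa_{\rm rig}\epsilon_\loc\sqrt m$. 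Hence
\[
 |\tr\Omega(J^\dagger YJ-U^\dagger J^\dagger YJU)|\le\alpha\epsilon_\loc\sqrt m .
\]
Second, we replace $J^\dagger YJ$ by $\widetilde A$ in this commutator trace. For the direct term, Cauchy--Schwarz in the $\Omega$-weighted norm gives
\[
 |\tr\Omega(J^\dagger YJ-\widetilde A)|\le\sqrt m\,\|YJ-J\widetilde A\|_\Omega\le\sqrt m(\epsilon_\amp+\eta_\amp),
\]
using $J^\dagger J=I$. For the conjugated term,
\[
 \tr(\Omega U^\dagger(\cdot)U)=\tr(U\Omega U^\dagger(\cdot)),
\]
and the domination $U\Omega U^\dagger\le C\Omega$ converts the $U\Omega U^\dagger$-weighted norm into $\sqrt C$ times the $\Omega$-norm. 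This gives the factor $\sqrt C(\epsilon_\amp+\eta_\amp)\sqrt m$. Combining the two terms,
\[
 |\widetilde\Delta|\le\alpha\epsilon_\loc\sqrt m+\beta(\epsilon_\amp+\eta_\amp)\sqrt m .
\]
Rearranging gives $\rho\le\alpha\epsilon_\loc+\beta\epsilon_\amp$, which is where the positive part enters.

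\textbf{Main obstacle.} The delicate point is the second step. The trace $\tr\Omega(D)$ for $D=J^\dagger YJ-\widetilde A$ must be bounded by $\sqrt m$ times $\|YJ-J\widetilde A\|_\Omega$ and not by something worse. The approach is to write
\[
 \tr(\Omega J^\dagger(YJ-J\widetilde A))=\tr\bigl(T^\dagger J^\dagger\,(YJ-J\widetilde A)T\bigr)
\]
and apply Cauchy--Schwarz with $\|JT\|_{\HS}=\sqrt m$. The conjugated version requires justifying $\|(YJ-J\widetilde A)UT\|_{\HS}\le\sqrt C\|(YJ-J\widetilde A)T\|_{\HS}$ from the operator inequality $U\Omega U^\dagger\le C\Omega$. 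We do this via the factorization $UT=\Omega^{1/2}W$ with $\|W\|\le\sqrt C$, a Douglas-lemma argument. Domain issues for unbounded $\Omega^{-1/2}$ are handled on $\operatorname{supp}\Omega$ as in the proof of Proposition~\ref{qcp-M-prop:common-optimality}. Near-minimizers are attained by the compactness argument there.
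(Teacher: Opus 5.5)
Your proof is correct and is essentially the paper's argument. The paper first bounds the shift response $\Delta$ of the actual optimum $A$ by comparing $A$ with $J^\dagger YJ$ in the unshifted and shifted states (errors $\sqrt{me_\amp}$ and $\sqrt{mCe_\amp}$) plus the location term $2L\sqrt m\,\|(g(X)J-JU)T\|_{\HS}$, and only then pays $\beta\eta_\amp\sqrt m$ to pass to $\widetilde\Delta$; you fold the predictor into $\|YJ-J\widetilde A\|_\Omega\le\sqrt{e_\amp}+\eta_\amp$ instead, which gives the same linear inequality and the same weighted Cauchy--Schwarz conclusion.

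Two small tidyings. First, write the location contribution as $\tr[(a-b)^\dagger Ya]+\tr[b^\dagger Y(a-b)]$ with $a=g(X)JT$, $b=JUT$ and $\|a\|_{\HS}=\|b\|_{\HS}=\sqrt m$; expanding $(JU+R)^\dagger Y(JU+R)$ instead leaves a quadratic $R^\dagger YR$ term. Second, the Douglas factorization is unnecessary, since $\tr(DU\Omega U^\dagger D^\dagger)\le C\tr(D\Omega D^\dagger)$ follows directly from $U\Omega U^\dagger\le C\Omega$ for bounded $D$.
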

\begin{proof}
Set $\Delta=\tr\Omega(A-U^\dagger AU)$ and $O=J^\dagger YJ$.
The amplitude square bounds the error of replacing $A$ by $O$ in the
unshifted and shifted states by $\sqrt{me_\amp}$ and $\sqrt{mCe_\amp}$.
Since $\|g(X)JT\|_{\HS}=\|JUT\|_{\HS}=\sqrt m$ and
$g(X)^\dagger Yg(X)=Y$, the reported response is at most
$2L\sqrt m\,\|(g(X)J-JU)T\|_{\HS}$. Location rigidity therefore gives
\[
 |\Delta|/\sqrt m\le\alpha\sqrt{e_\loc}+\beta\sqrt{e_\amp}.
\]
The predictor error gives
$\rho\le\alpha\sqrt{e_\loc}+\beta\sqrt{e_\amp}$.
Weighted Cauchy--Schwarz bounds its square by
$[\alpha^2/\lambda+\beta^2/(1-\lambda)]
[\lambda e_\loc+(1-\lambda)e_\amp]$; take the infimum over paired rules.
\end{proof}

For $\mathcal D_\gamma(v)=\iint\gamma(p,q)\|v(p)-v(q)\|^2dpdq$, use
the normalized scalar spectrum $m_1$ from C5 and assume
\begin{equation}
 \mathcal D_\gamma(v)\ge c_D
  \int m_1(p)\|v(p)-\bar v\|^2dp,\qquad
 \bar v=\int m_1(p)v(p)\,dp ,
 \label{qcp-C-eq:graph-coercivity}
\end{equation}
and $m_1(p-\zeta)\le C_\zeta m_1(p)$. Subtracting $\bar v$, squaring the
triangle inequality and changing variable give
\begin{equation}
 \int m_1(p)\|v(p+\zeta)-v(p)\|^2dp
 \le2(1+C_\zeta)c_D^{-1}\mathcal D_\gamma(v).
 \label{qcp-C-eq:shift-control}
\end{equation}
For the real scalar synthesis, the location excess is the graph form
of the output rows after removing the report phase, at each latent $z$.
The momentum shift $U_\zeta$ multiplies record functions by
$e^{i\zeta t/\delta}$, with report phase $g(x)=e^{i\zeta x/\delta}$.
Summing over rows and $z$ gives
\begin{equation}
 \kappa_{\mathrm{rig}}^2=2(1+C_\zeta)/c_D .
 \label{qcp-C-eq:actual-location-rigidity}
\end{equation}
Coordinate optimality centers the excess at the marginal optimum for
all complete rules. Apply the graph and shift bounds to
$m_{\chi,(1+a\epsilon z)^2}$ with C2/C10's common constants, then multiply
by $M_{k|cz\rangle}$ and integrate to obtain average state domination.

Use C5's predictor $A_\delta$, with weighted error $o(\epsilon)$.
The zeroth order means of $A_c$ cancel because both shifted spectra
integrate to one. Evaluating the first order term at the reference
spectrum and reordering gives
\begin{align}
 \widetilde\Delta_{\epsilon,\zeta}
 &=2\epsilon\int k(t)^2
        \operatorname{Tr}(\Omega_{c(t)}B_{c(t)})\,dt
       \int[m_1(p)-m_1(p-\zeta)]q(p)\,dp+o(\epsilon)\nonumber\\
 &=-2\epsilon Q_\zeta(\chi)\,\overline{\mathcal A}_L+o(\epsilon),
 \quad
 Q_\zeta(\chi)=\int m_1(p)q(p+\zeta)\,dp .
 \label{qcp-C-eq:actual-scalar-response}
\end{align}
Normalization gives $\int m_1q=0$, and
$\tr(\Omega_cB_c)=\mathcal A_L(c)$ identifies Note J's positive suffix
average $\overline{\mathcal A}_L$. C10 gives one finite shift with
$\inf_IQ_\zeta>0$, so the rigidity bound yields
\begin{align}
 \liminf_{\epsilon\downarrow0}\epsilon^{-2}\mathcal G_\lambda(T_\epsilon^{\rm r})
 &\ge\frac{2Q_\zeta(\chi)^2\overline{\mathcal A}_L^2}
 {(1+C_\zeta)\{4L^2/(\lambda c_D)+1/(1-\lambda)\}}>0,
 \label{qcp-C-eq:scalar-lower}\\
 \limsup_{\epsilon\downarrow0}\epsilon^{-2}\mathcal G_\lambda(T_\epsilon^{\rm r})
 &\le(1-\lambda)C_{\amp}<\infty .
 \label{qcp-C-eq:scalar-upper}
\end{align}
Use $(1+\sqrt C)^2\le2(1+C)$ for the lower bound and C5's rule for
the upper bound, retaining the finite mass and marginal infima.

\section{Reduction of the retained packet}
\label{qcp-C-proof-physical-quadratic-upper}
\label{qcp-C-proof-nonstationary-normal-alignment}
\label{qcp-C-proof-physical-amplitude-response}
\label{qcp-C-sec:physical-return}
Keep the weights $du\,\nu(dz)$ of C3's middle branch and write its exact
packet as
\begin{equation}
 v_{\epsilon;u,z}(t)
 =k(t)\delta^{-1/2}\beta_{\epsilon,z}((t-u)/\delta)
                       \otimes|c(t)z\rangle,
 \quad
 \beta_{\epsilon,z}(y)=\sqrt{c_\chi}\,
                         B_{a\epsilon z}(c_\chi y),
 \quad c_\chi=\sqrt{1+\chi}.
 \label{qcp-C-eq:retained-middle}
\end{equation}
Extension to the full source line gives mass $m=\|k\|_2^2$;
negative positions require delay $d-u$ and have exponentially small mass.
For C2's frame $U_0$, the squared commutator norm is
\begin{equation}
 \int d\xi\,dp\,
 \|\widehat E(\xi)\|_{\HS}^2m_1(p)
 \|[U_0(p+\delta\xi)-U_0(p)](0,w_0(p))\|^2.
 \label{qcp-C-eq:ordinary-fourier-commutator}
\end{equation}
Ordinary derivatives bound it by $C_U^2\delta^2\|E'\|_2^2$.
For $E=D$,
\begin{equation}
 \|D'\|_2^2=\|k'\|_2^2+
             (L^2/3)\int k^2|c'|^2
 =O(d^{-1}+\log(1/d)).
 \label{qcp-C-eq:physical-commutator}
\end{equation}
For real normalized coherent states, the cross derivative vanishes.
The same estimate holds for $DZ$ and the window multipliers.

\subsection{Removal of the orthogonal tangent}
The packet has the weighted expansion
\begin{equation}
 U_0(p)(0,\widehat\beta_{\epsilon,z}(p))
 =f(p)\left[e_0+\epsilon z\{
            e_0(q(p)+ir(p))+a\widetilde w_1(p)\}\right]+O(\epsilon^2),
 \quad \widetilde w_1(p)\perp e_0 .
 \label{qcp-C-eq:framed-packet-expansion}
\end{equation}
Here $\widetilde w_1$ and its required derivatives are bounded, and
$r=-(ac_\chi/2)\partial_p\log m_1$.
Let $e(t)=1-\operatorname{clip}_{[d,1-d]}(t)-w$,
$D_1(p)=|\widetilde w_1(p)\rangle\langle e_0|$, and let $b$ denote
annihilation in the residual mode. Define the known skew-adjoint generator
\begin{equation}
 G_\epsilon=D_1(P_{\delta})M_{e^{-1/2}}\otimes b
          -M_{e^{-1/2}}D_1(P_{\delta})^{\dagger}\otimes b^{\dagger}.
 \label{qcp-C-eq:normal-generator}
\end{equation}
The generator preserves $b^\dagger b+\Pi_\perp$,
$\Pi_\perp=I-|e_0\rangle\langle e_0|$, and has sector norm at most
$C\sqrt N/\sqrt{d-w}$. Sectorwise exponentiation defines a unitary.
The suffix cancels the orthogonal tangent because
\begin{equation}
 b|c(t)z\rangle=a\sqrt{e(t)}\,z|c(t)z\rangle .
 \label{qcp-C-eq:coherent-eigenvector}
\end{equation}
For $T_0^{\rm out}=\sqrt{2\pi\delta}\,e_0\otimes M_Df(P_\delta)$,
commutator control and the full vector expansion give
\begin{equation}
 U_0(P_{\delta})T_{\rm ph}=T_0^{\rm out}+\epsilon T_1+R_\epsilon,
 \quad
 \|R_\epsilon\|_{\HS}\le C(\epsilon^2+\delta\sqrt{E_d}),
 \quad E_d=d^{-1}+\log(1/d).
 \label{qcp-C-eq:physical-expansion}
\end{equation}
The normal first order part differs from $G_\epsilon T_0^{\rm out}$
by at most $C\delta\sqrt{E_d}$, while the parallel part is $z(q+ir)$.
To control the exponential remainder on both sources, use
$W_\delta=\langle P_\delta\rangle$ and
\begin{equation}
 \|G_\epsilon^2T_0^{\rm out}W\|_{\HS}
       +\|G_\epsilon T_1W\|_{\HS}\le C(d-w)^{-1/2},
 \qquad W\in\{I,W_\delta\}.
 \label{qcp-C-eq:source-number-bound}
\end{equation}
Annihilation cancels the inverse strength on $e_0$; return through
the normal block leaves one inverse square root, with
$\|b^\dagger|cz\rangle\|^2=1+c^2z^2$.
For $W=W_\delta$, each scalar profile gains $\langle p\rangle$.
C2's $p^{-8}$ tail and vector derivative bounds preserve square
integrability. The commutator estimate with weight $(1+p^2)m_1(p)$
therefore gives $\|R_\epsilon W_\delta\|_{\HS}\le
C(\epsilon^2+\delta\sqrt{E_d})$.
On each conserved sector, Taylor's integral formula gives
\[
 \begin{aligned}
 &\bigl[(e^{-\epsilon G_\epsilon}-I+\epsilon G_\epsilon)T_0^{\rm out}
       +\epsilon(e^{-\epsilon G_\epsilon}-I)T_1\bigr]W_\delta\\
 &\qquad=\int_0^\epsilon e^{-tG_\epsilon}
       \bigl[(\epsilon-t)G_\epsilon^2T_0^{\rm out}
             -\epsilon G_\epsilon T_1\bigr]W_\delta\,dt.
 \end{aligned}
\]
Summing the sector Hilbert--Schmidt bounds and setting
$\mathcal U_\epsilon=e^{-\epsilon G_\epsilon}U_0(P_\delta)$ gives
\begin{equation}
 \|(\mathcal U_\epsilon T_{\rm ph}-T_\epsilon^{\rm p})
                   \langle P_{\delta}\rangle\|_{\HS}
 \le C\left(\epsilon^2+\frac{\epsilon^2}{\sqrt{d-w}}
                         +\delta\sqrt{E_d}\right)
 =O(\epsilon^{5/4})=o(\epsilon).
 \label{qcp-C-eq:weighted-normal-alignment}
\end{equation}
Here the synthesis retaining the scalar phase is
\begin{equation}
 T_\epsilon^{\rm p}
   =e_0\otimes T_\epsilon^{\rm r}R_\epsilon^{\rm src},
 \qquad
 R_\epsilon^{\rm src}=\exp(i\epsilon Zr(P_{\delta})).
 \label{qcp-C-eq:phase-inclusive-source}
\end{equation}

\subsection{Amplitude prediction and the gap return}
The source phase commutes with $Z$, so
\begin{equation}
 T_\epsilon^{\rm p}Z^j(T_\epsilon^{\rm p})^{\dagger}
 =e_0e_0^{\dagger}\otimes
       T_\epsilon^{\rm r}Z^j(T_\epsilon^{\rm r})^{\dagger}.
 \label{qcp-C-eq:phase-amplitude-moments}
\end{equation}
Put $T=\mathcal U_\epsilon T_{\rm ph}$, $S=T_\epsilon^{\rm p}$,
$\Omega_T=TT^\dagger$, $\Omega_S=SS^\dagger$, and let $A_T,A_S$ be their
amplitude optima. Extend $A_\delta$ boundedly and self-adjointly to the
unused directions and conjugate its source lift by the source phase.
The $o(\epsilon)$ synthesis error and C1's defect bound give
\begin{equation}
 \|A_T-A_{\delta}\|_{\Omega_T}
 +\|A_S-A_{\delta}\|_{\Omega_S}
 =o(\epsilon).
 \label{qcp-C-eq:physical-common-predictor}
\end{equation}
Both syntheses are within weighted $O(\epsilon)$ of $T_0^{\rm out}$.
Its coordinate rule has zero centered location form and
$o(\epsilon)$ amplitude defect; root stability gives both excesses
$O(\epsilon^2)$. Equations~\eqref{qcp-C-eq:common-predictor}--\eqref{qcp-C-eq:amplitude-root}
and \eqref{qcp-C-eq:weighted-normal-alignment} verify
Lemma~\ref{qcp-C-lem:centered}, yielding
\begin{equation}
 \mathcal G_\lambda(T_{\rm ph})
       -\mathcal G_\lambda(T_\epsilon^{\rm p})=o(\epsilon^2).
 \label{qcp-C-eq:physical-to-phase-gap}
\end{equation}
The same comparison transfers individual excesses. C9 removes the
phase and combines this bound with C3's endpoint return.

\section{Location bounds for Hermitian multipliers}
\label{qcp-C-proof-hermitian-multiplier-location-lemma}
\label{qcp-C-sec:hermitian-multiplier}
Let $D(t)$ be bounded Hilbert--Schmidt, $m=\int\|D(t)\|_{\HS}^2dt<\infty$,
$g_0=\int g<\infty$, and $C(p)=C(p)^\dagger$, $\|C\|\le B$.
Put $Q_\epsilon=I+\epsilon C$,
$T_\epsilon=\sqrt{2\pi\delta}M_Df(P_\delta)Q_\epsilon(P_\delta)$.
Write $m_f=f^2$; $m$ is the base mass.
Let $r_\omega(p)$ be the row of $JM_D\mathcal F_\delta^\dagger$ times
$e^{-ipx(\omega)/\delta}$, including multiplicities.
Put $\Delta r_\omega(p,q)=r_\omega(p)-r_\omega(q)$ and
\begin{equation}
 \mathfrak D_J=2\pi \delta
   \int\!\!\int\gamma(p,q)
      \int\|\Delta r_\omega(p,q)\|_{\HS}^2\,d\omega\,dp\,dq .
 \label{qcp-C-eq:dirichlet-rule}
\end{equation}
Completeness gives
\begin{equation}
 2\pi \delta\int g(p)\int\|r_\omega(p)\|_{\HS}^2d\omega\,dp=mg_0 .
 \label{qcp-C-eq:complete-row-mass}
\end{equation}
\begin{qcpClemma}[Fourth order reference error]\label{qcp-C-lem:hermitian}
For \(|\epsilon|B\le1/8\), let \(r_{\loc,\rm ref}\) be the coordinate readout
location risk on \(T_\epsilon\). Every complete paired rule \(J\) satisfies
\begin{align}
 |r_{\loc}(J;T_\epsilon)-r_{\loc,\rm ref}-\mathfrak D_J|
 &\le(2|\epsilon|B+\epsilon^2B^2)\mathfrak D_J
       +2\epsilon^2B^2\sqrt{mg_0\,\mathfrak D_J},
       \label{qcp-C-eq:hermitian-risk}\\
 0\le r_{\loc,\rm ref}-r_{\loc}^\star(T_\epsilon)
 &\le2B^4mg_0\,\epsilon^4,
       \label{qcp-C-eq:fourth-order-reference}\\
 \mathfrak D_J
 &\le4e_{\loc}(J;T_\epsilon)+16B^4mg_0\,\epsilon^4.
       \label{qcp-C-eq:near-opt-dirichlet}
\end{align}
For direct sums of labeled branches with a common \(B\), these bounds use
the sum of the base masses in place of \(m\).
\end{qcpClemma}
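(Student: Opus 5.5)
The plan is to write the location risk of an arbitrary complete paired rule $J$ as an exact quadratic form in its output rows, and then to expand that form in $\epsilon$ around coordinate readout. First, the positive source form of C4 writes the location risk as twice a graph Laplacian after conjugating by $f(P_\delta)$ and removing the report phase. For $T_\epsilon$ the relevant rows are $r_\omega(p)Q_\epsilon(p)$, so
\[
 r_\loc(J;T_\epsilon)=\mathrm{const}+2\pi\delta\iint\gamma(p,q)\int\|r_\omega(p)Q_\epsilon(p)-r_\omega(q)Q_\epsilon(q)\|_{\HS}^2\,d\omega\,dp\,dq .
\]
The constant is fixed by \eqref{qcp-C-eq:complete-row-mass}. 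I would then split each row difference as
\[
 \Delta r_\omega(p,q)\,Q_\epsilon(p)+\epsilon\, r_\omega(q)\,[C(p)-C(q)] .
\]
The squared norm then separates into three pieces.
\begin{itemize}
\item \emph{Rigid term.} The square of the second piece, after summing over $\omega$ and using completeness ($\int r_\omega^\dagger r_\omega\,d\omega$ is independent of $J$), does not depend on $J$. Since coordinate readout has $\Delta r\equiv0$, this piece together with the constant is exactly $r_{\loc,\rm ref}$.
\item \emph{Diagonal term.} The square of the first piece equals $\mathfrak D_J$ up to a relative error $2|\epsilon|B+\epsilon^2B^2$, because $\|Q_\epsilon\|\le1+|\epsilon|B$.
\item \emph{Cross term.} This is $2\epsilon\operatorname{Re}\langle\Delta r\,Q_\epsilon(p),\,r(q)\Delta C\rangle$.
\end{itemize}

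The key step is showing that the order-$\epsilon$ part of the cross term, $2\epsilon\operatorname{Re}\langle\Delta r,\,r(q)\Delta C\rangle$, vanishes after integration. I would symmetrize under $p\leftrightarrow q$, using that $\gamma$ is symmetric. This turns the integrand into $\epsilon\operatorname{Re}\tr[(r(p)-r(q))^\dagger(r(p)+r(q))\Delta C]$. Summed over $\omega$, the diagonal parts $r(p)^\dagger r(p)-r(q)^\dagger r(q)$ cancel by completeness. The mixed part $r(p)^\dagger r(q)-r(q)^\dagger r(p)$ is anti-Hermitian, and its trace against the Hermitian $\Delta C$ is purely imaginary. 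This is exactly where the Hermiticity hypothesis $C=C^\dagger$ enters, and I expect verifying this cancellation (with the report phases and multiplicities correctly tracked) to be the main obstacle. The remaining order-$\epsilon^2$ cross term $2\epsilon^2\operatorname{Re}\langle\Delta r\,C(p),\,r(q)\Delta C\rangle$ is handled by Cauchy--Schwarz in the $\gamma$-weighted space. One factor gives $\sqrt{\mathfrak D_J}$. The other, after $\int\gamma(p,q)\,dp=g(q)$ and the row mass identity \eqref{qcp-C-eq:complete-row-mass}, gives $\sqrt{mg_0}$, with the $B$ factors from $\|C\|\le B$. Together these yield \eqref{qcp-C-eq:hermitian-risk}.

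The remaining bounds are then scalar optimization. For \eqref{qcp-C-eq:fourth-order-reference}, the upper bound $r^\star_\loc\le r_{\loc,\rm ref}$ holds because coordinate readout is complete. For the lower bound, set $x=\sqrt{\mathfrak D_J}$. By \eqref{qcp-C-eq:hermitian-risk} and $|\epsilon|B\le1/8$ (so that $1-2|\epsilon|B-\epsilon^2B^2\ge1/2$),
\[
 r_\loc(J)-r_{\loc,\rm ref}\ge x^2/2-2\epsilon^2B^2\sqrt{mg_0}\,x\ge-2B^4mg_0\epsilon^4 ,
\]
and taking the infimum over $J$ gives the claim. For \eqref{qcp-C-eq:near-opt-dirichlet}, combine $e_\loc(J)=r_\loc(J)-r^\star_\loc\ge r_\loc(J)-r_{\loc,\rm ref}\ge x^2/2-2\epsilon^2B^2\sqrt{mg_0}\,x$. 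Absorbing the linear term by $2\epsilon^2B^2\sqrt{mg_0}\,x\le x^2/4+4\epsilon^4B^4mg_0$ gives $x^2\le4e_\loc+16B^4mg_0\epsilon^4$. For direct sums of labeled branches with a common $B$, every identity above is additive over branches, and Cauchy--Schwarz across branches replaces $m$ by the total base mass.
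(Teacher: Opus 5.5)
Your route is sound in structure and differs from the paper's mainly in bookkeeping. The paper splits $Q_\epsilon(p)Q_\epsilon(q)$ into $S_\epsilon=\{Q_\epsilon(p),Q_\epsilon(q)\}/2$ and $\mathsf K_\epsilon=[Q_\epsilon(p),Q_\epsilon(q)]/2$. That gives the exact identity $r_\loc(J;T_\epsilon)-r_{\loc,\rm ref}=2\pi\delta\iint\gamma\int\Delta r\,S_\epsilon\,\Delta r^\dagger-4\pi\delta\operatorname{Re}\iint\gamma\int\Delta r\,\mathsf K_\epsilon\,r(q)^\dagger$, in which no first-order cross term ever appears. You instead anchor the row difference at $r(q)$. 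You then remove the $O(\epsilon)$ cross term using $p\leftrightarrow q$ symmetrization, the $p$- and $J$-independence of $\int r_\omega(p)^\dagger r_\omega(p)\,d\omega$, and Hermiticity of $C$. That cancellation is correct, as are your rigid and diagonal terms and the closing scalar optimizations.

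The step that fails as stated is the Cauchy--Schwarz bound on the leftover term $2\epsilon^2\operatorname{Re}\tr[\Delta r\,C(p)(C(p)-C(q))\,r(q)^\dagger]$. Its operator factor can have norm $2B^2$, not $B^2$; take $C(p)=BI$ and $C(q)=-BI$. The estimate you describe therefore gives $4\epsilon^2B^2\sqrt{mg_0\mathfrak D_J}$, not $2\epsilon^2B^2\sqrt{mg_0\mathfrak D_J}$. Your optimization then yields $8B^4mg_0\epsilon^4$ and $64B^4mg_0\epsilon^4$ in place of the stated $2B^4mg_0\epsilon^4$ and $16B^4mg_0\epsilon^4$. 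This is harmless for the downstream $o(\epsilon^2)$ comparisons, but it does not prove the lemma with its stated constants.

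The missing observation is that only the commutator part of $Q_\epsilon(p)Q_\epsilon(q)$ should reach Cauchy--Schwarz, and your own cancellation argument supplies it. Take the Hermitian operator $E=Q_\epsilon(p)^2-S_\epsilon$. The quantity $\operatorname{Re}\tr[(r_\omega(p)-r_\omega(q))\,E\,(r_\omega(p)+r_\omega(q))^\dagger]$ vanishes after integration over $\omega$, pointwise in $(p,q)$: its mixed terms are purely imaginary, and its diagonal terms both equal $\tr[E\int D^\dagger D\,dt]/(2\pi\delta)$. Subtracting it converts your diagonal-plus-cross expression exactly into the paper's identity. The coefficient of $\mathfrak D_J$ is then controlled by $\|S_\epsilon-I\|\le2|\epsilon|B+\epsilon^2B^2$, and only $\|\mathsf K_\epsilon\|=\epsilon^2\|[C(p),C(q)]\|/2\le\epsilon^2B^2$ enters the square-root term, which gives the constant $2$.
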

\begin{proof}
Let $S_\epsilon=\{Q_\epsilon(p),Q_\epsilon(q)\}/2$,
$\mathsf K_\epsilon=[Q_\epsilon(p),Q_\epsilon(q)]/2$.
Then $S_\epsilon\ge I/2$,
$\|S_\epsilon-I\|\le2|\epsilon|B+\epsilon^2B^2$ and
$\|\mathsf K_\epsilon\|\le\epsilon^2B^2$.
Expanding the triangular Fourier benefit and symmetrizing gives
\begin{align}
 r_{\loc}(J;T_\epsilon)-r_{\loc,\rm ref}
 ={}&2\pi \delta\int\gamma(p,q)\int
  (\Delta r_\omega(p,q))S_\epsilon(p,q)
  (\Delta r_\omega(p,q))^{\dagger}\,d\omega\,dp\,dq \nonumber\\
 &-4\pi \delta\,\operatorname{Re}\int\gamma(p,q)\int
 (\Delta r_\omega(p,q))\mathsf K_\epsilon(p,q)r_\omega(q)^{\dagger}
 \,d\omega\,dp\,dq .
 \label{qcp-C-eq:hermitian-exact}
\end{align}
The coordinate reference trace is
$4\pi\delta M_D[m_f(p)Q_\epsilon^2-\int\gamma S_\epsilon dq](P_\delta)M_D^\dagger$.
Subtract $r(q)\mathsf K_\epsilon r(q)^\dagger$, whose real part vanishes,
then apply Cauchy--Schwarz and \eqref{qcp-C-eq:complete-row-mass}.

In particular, $r_\loc(J)-r_{\loc,\rm ref}\ge\mathfrak D_J/2
-2\epsilon^2B^2\sqrt{mg_0\mathfrak D_J}\ge-2B^4mg_0\epsilon^4$.
Take near minimizers for the second bound. Young's inequality gives
$e_\loc(J)\ge\mathfrak D_J/4-4B^4mg_0\epsilon^4$.
Summing branches before Cauchy--Schwarz gives the direct sum version.
\end{proof}
For a positive diagonal multiplier $F_\epsilon(Z,p)=I+O(\epsilon)$,
the graph kernel is positive at each latent amplitude. Coordinate
readout is therefore optimal, with
\begin{equation}
 e_{\loc}^{\rm real}(J)=(1+O(\epsilon))\mathfrak D_J.
 \label{qcp-C-eq:real-multiplier}
\end{equation}
On either model's near optimal set, $e_\loc$ and $\mathfrak D_J$ are
$O(\epsilon^2)$, so
\begin{equation}
 e_{\loc}^{\rm Hermitian}(J)
       -e_{\loc}^{\rm real}(J)=o(\epsilon^2)
 \label{qcp-C-eq:hermitian-real-regret}
\end{equation}
for every rule in these sets.

\section{Removal of the scalar phase}
\label{qcp-C-proof-nonstationary-phase-gauge-return}
\label{qcp-C-sec:phase-return}
The phase can be removed at quadratic precision:
\begin{equation}
 \mathcal G_\lambda(T_\epsilon^{\rm p})
       -\mathcal G_\lambda(T_\epsilon^{\rm r})=o(\epsilon^2).
 \label{qcp-C-eq:phase-gap-return}
\end{equation}
Take a real sine and cosine partition $\sum_jf_j^2=1$ of support length at
most $2b$, $b=\epsilon^{1/2}$, with $\sum_j|f_j'|^2\le C/b^2$.
Apply the corresponding complete instrument to the output coordinate,
retaining the labels. The disturbance is bounded using the instrument
identity from C3 and uniform second and fourth packet moments:
\begin{equation}
 O(m\delta^2/b^2)=O(\epsilon^3)
 \label{qcp-C-eq:soft-label-disturbance}
\end{equation}
for both models. Applying or discarding the instrument compares the
three risks. The direct sum fields $D_j=kf_jV_c$ retain their weights and obey
\begin{equation}
 \sum_j\|D_j\|_{L^2(\HS)}^2=m,\qquad
 \sum_j\|D_j'\|_{L^2(\HS)}^2
  \le C(d^{-1}+b^{-2}+\log(1/d)).
 \label{qcp-C-eq:summed-label-fields}
\end{equation}
The cross terms vanish since $\sum_jf_jf_j'=0$.
Reorder the summed field before applying Cauchy--Schwarz.

\subsection{Output gauge and source lift}
Use the known bounded output gauge
\begin{equation}
 H_{\delta}=\tfrac12\{r(P_{\delta}),M_{A_c}\},\qquad
 U_\epsilon=e^{-i\epsilon H_{\delta}}.
 \label{qcp-C-eq:known-phase-gauge}
\end{equation}
The Taylor expansion, $A_cV_c=V_cZ+V_cK_c$ and Fourier reordering give
\begin{align}
 U_\epsilon T_{\epsilon,j}^{\rm p}
 =T_{0,j}
   &+\epsilon\sqrt{2\pi \delta}\,M_{D_jZ}q(P_{\delta})f(P_{\delta})\nonumber\\
   &-i\epsilon\sqrt{2\pi \delta}\,M_{D_jK_c}r(P_{\delta})f(P_{\delta})
     +o_{\HS,\langle P_{\delta}\rangle}(\epsilon).
 \label{qcp-C-eq:phase-gauged-expansion}
\end{align}
The identity $irD_jZ-irA_cD_j=-irD_jK_c$ cancels the parallel phase.
The summed derivatives and tight Fourier measures bound the remainder
by weighted $o(\epsilon)$, including C4's factor $\langle P_\delta\rangle$.
The exponential remainder is $O(\epsilon^2)$.
For one known point $t_j$ in each support, $K_j=K_{c(t_j)}$ satisfies
\begin{equation}
 \sum_j\int
 \|D_j(t)(K_{c(t)}-K_j)\|_{\HS}^2dt\le Cb m .
 \label{qcp-C-eq:freeze-source-only}
\end{equation}
C1's weighted lift bound and the square root modulus of $c$ give this
estimate, so freezing costs $O(\epsilon\sqrt b)=O(\epsilon^{5/4})$.
Set
\begin{equation}
 C_j(p)=q(p)Z-i r(p)K_j=C_j(p)^{\dagger},\qquad
 \widehat T_j=T_{0,j}[I+\epsilon C_j(P_{\delta})].
 \label{qcp-C-eq:hermitian-comparison-source}
\end{equation}
These multipliers have the common bound
$B=L\|q\|_\infty+\kappa_{\rm src}\|r\|_\infty$.
Suffixes, observables and masses still vary with the record coordinate.

\subsection{Comparing amplitude and location excesses}
The gauge conjugates the amplitude optimum. To verify C4's predictor
criterion for $A_\delta$, it remains to bound
\begin{equation}
 [H_{\delta},M_{A_c}]=\tfrac12[r(P_{\delta}),M_{A_c^2}].
 \label{qcp-C-eq:gauge-predictor-commutator}
\end{equation}
Expanding the difference of squares, with $A_cV_c=V_c(Z+K_c)$, gives
\begin{equation}
 \|(A_{c'}^2-A_c^2)V_c\|_{\HS}
 \le2L\|(A_{c'}-A_c)V_c\|_{\HS}
       +C\|V_{c'}-V_c\|_{\HS}
 \le C'|c-c'|.
 \label{qcp-C-eq:squared-predictor-difference}
\end{equation}
Fourier reordering makes this commutator $o(1)$ on the syntheses,
hence $o(\epsilon)$ with its prefactor. The first order predictor adds
$O(\epsilon^2)$. Transfer the defect through freezing and apply C1.
The coordinate optimal reference and C7's quadratic rule verify the
remaining hypotheses of Lemma~\ref{qcp-C-lem:centered}, giving
\begin{equation}
 \mathcal G_\lambda(T^{{\rm p},{\rm lab}})
       -\mathcal G_\lambda(\widehat T)=o(\epsilon^2).
 \label{qcp-C-eq:phase-to-Hermitian}
\end{equation}
The real labelled synthesis is $T_{0,j}\Phi_\epsilon(Z,P_\delta)$,
using C5's positive source multiplier. Let $T_0=\bigoplus_jT_{0,j}$,
$Q=\bigoplus_j(I+\epsilon C_j)$ and
$\Phi=\bigoplus_j\Phi_\epsilon(Z,P_\delta)$.
Both right factors have uniformly bounded inverses and
$\|Q-\Phi\|=O(\epsilon)$.
For a rule $J$ within $o(\epsilon^2)$ of either gap optimum, define
\[
 E_J=(YJ-JA_\delta)T_0.
\]
The amplitude root excess is $O(\epsilon)$. Replacing its optimum by
$A_\delta$ costs $o(\epsilon)$; removing the invertible right factor
therefore gives $\|E_J\|_{\HS}=O(\epsilon)$ and
\[
 \|E_JQ-E_J\Phi\|_{\HS}
 \le\|E_J\|_{\HS}\|Q-\Phi\|=O(\epsilon^2).
\]
Restoring each model's own amplitude optimum yields
\[
 \left|\sqrt{e_\amp(J;\widehat T)}
       -\sqrt{e_\amp(J;T^{{\rm r},{\rm lab}})}\right|
 =o(\epsilon).
\]
Both roots are $O(\epsilon)$, so the squared excesses differ by
$o(\epsilon^2)$. C8 gives the same precision for location, since the
summed Dirichlet form is $O(\epsilon^2)$. Comparing near optimizers in
both directions yields
\begin{equation}
 \mathcal G_\lambda(\widehat T)
       -\mathcal G_\lambda(T^{{\rm r},{\rm lab}})
       =o(\epsilon^2).
 \label{qcp-C-eq:Hermitian-to-real}
\end{equation}
Combine the $O(\epsilon^3)$ instrument disturbance with C3's endpoint
and posterior bounds and C7's packet comparison to obtain
\begin{equation}
 \begin{aligned}
 \mathcal G_\lambda(\mathsf E_{\rm original})
       &=\mathcal G_\lambda(T_{\rm ph})+o(\epsilon^2)\\
       &=\mathcal G_\lambda(T_\epsilon^{\rm p})+o(\epsilon^2)\\
       &=\mathcal G_\lambda(T_\epsilon^{\rm r})+o(\epsilon^2).
 \end{aligned}
 \label{qcp-C-eq:original-real-return}
\end{equation}
Pulling C5's rule through the gauges, labels and endpoint branches
gives excesses over the original marginal optima:
\begin{equation}
 e_{\loc}=o(\epsilon^2),\qquad
 e_{\amp}=O(\epsilon^2).
 \label{qcp-C-eq:original-common-rule}
\end{equation}

\section{Uniform inverse square bounds}
\label{qcp-C-proof-compact-scale-theorem}
\label{qcp-C-sec:uniform-design}
C2 gives $Q=\sup_{\chi,p}|q_\chi(p)|<\infty$ and
$R=\sup_{\chi,p}|r_\chi(p)|<\infty$ under the crossover assumptions.

\subsection{Coercivity and a uniform response witness}
For $R_0=\pi/K$, $P_K=2\pi/K$, use
\begin{equation}
 c_0=\frac{\sqrt{c_1/c_2}}
           {2\pi K(1+R_0^2)(1+R_0)^4},\quad
 G_0=\frac{Kc_2}{3\pi},\quad
 c_D=\frac{c_0^2P_K}{4G_0}>0 .
 \label{qcp-C-eq:uniform-coercivity-constant}
\end{equation}
For $|r|\le R_0$, $\gamma(p,r)\ge c_0m(p)(1-\cos K(p-r))$,
$g(r)\le G_0$. The kernel $\int_{-R_0}^{R_0}\gamma(p,r)\gamma(r,q)/g(r)dr$ is at least
$c_0^2P_Km(p)m(q)/(2G_0)$: the period integral of the two cosine factors
is at least $P_K/2$. Its Dirichlet form is at most four times the original,
whereas the $m\otimes m$ form is twice the variance. This proves C6's
Hilbert valued coercivity with $c_D$.
The tail comparison also gives
\begin{equation}
 m_{\chi,\tau}(p-\zeta)\le C_\zeta m_{\chi,\tau}(p),
 \qquad C_\zeta=(c_2/c_1)(1+|\zeta|)^8.
 \label{qcp-C-eq:uniform-shift-domination}
\end{equation}
It also gives the source domain bound $g/m\le C_I(1+p^2)$.
Differentiating the spectral tail yields
\begin{equation}
 q_\chi(p)=a\left\{1+
 \frac{D_\chi e^{-D_\chi}\cos(d_\chi p)}
             {1-e^{-D_\chi}\cos(d_\chi p)}\right\}
                  +v_\chi(p),\qquad
 v_\chi(p)=O_I(|p|^{-1}).
 \label{qcp-C-eq:uniform-score-tail}
\end{equation}
Since $181/924\le D_\chi\le181/231$, its periodic part is at least
$a[1-D_\chi/(e^{D_\chi}+1)]\ge2a/(2+D_\chi)
\ge\underline q=462a/643>0$.
Choose a common $P_0$ with $|v_\chi(p)|\le\underline q/2$ for $|p|\ge P_0$.
The common polynomial tail permits a known finite $\zeta>P_0$ such that
\begin{equation}
 2P_0c_2(1+\zeta-P_0)^{-8}
 \le\frac{\underline q}{4(\underline q/2+Q)}.
 \label{qcp-C-eq:one-response-shift}
\end{equation}
This bounds the mass of $|p+\zeta|<P_0$. The score is at least
$\underline q/2$ outside this region and at least $-Q$ everywhere, so
\begin{equation}
 Q_\zeta(\chi)=\int m_\chi(p)q_\chi(p+\zeta)\,dp
 \ge\underline q/4=:Q_*>0 \qquad(\chi\in I).
 \label{qcp-C-eq:uniform-positive-response}
\end{equation}

\subsection{Uniform reordering and return errors}
The phase score splits as
\begin{equation}
 r_\chi(p)=
 -\frac{ac_\chi e^{-D_\chi}d_\chi\sin(d_\chi p)}
          {2\{1-e^{-D_\chi}\cos(d_\chi p)\}}+w_\chi(p).
 \label{qcp-C-eq:uniform-phase-tail}
\end{equation}
The differentiated tails and positivity on compact sets give common
$H^1$ bounds for both remainders. Expanding $(1-E\cos)^{-1}$,
$E\le e^{-181/924}<1$, gives periodic Fourier measures with bounded
variation and exponential tails. For an $H^1$ remainder,
Cauchy--Schwarz bounds its inverse transform $L^1$ norm by its $H^1$ norm
and bounds its tail by $CY^{-1/2}\|v'\|_2$. Hence
\begin{equation}
 \sup_{\chi\in I}\|\mu_\chi\|_{\rm TV}<\infty,\qquad
 \sup_{\chi\in I}|\mu_\chi|(\{|y|>Y\})\longrightarrow0 .
 \label{qcp-C-eq:uniform-Fourier-tightness}
\end{equation}
The shifted score and moving atomic frequencies have the same bounds.
Remove an endpoint strip, choose a common Fourier cutoff, then let
$\epsilon\to0$. C1's weighted and strong continuity gives common
$o(1)$ moduli in this order.

After rescaling the window to $[0,1]$, C2's sector majorants and C3's
waiting bounds are uniform on $I$. The labelled fields satisfy
$\delta(\sum_j\|D_j'\|_2^2)^{1/2}=O_I(\epsilon^{5/4})$ and eventually
$d-w\ge d/2$, giving alignment and freezing errors $O_I(\epsilon^{5/4})$.
The phase and Hermitian generators are bounded by $LR$ and
$LQ+\kappa_{\rm src}R$. With the common
$2+\operatorname{Exp}(1/4)$ moment envelope,
Sections~\ref{qcp-C-sec:endpoints}, \ref{qcp-C-sec:scalar-experiment}
and \ref{qcp-C-sec:phase-return} give uniform $o_I(\epsilon^2)$ returns
for endpoints, posterior weights, source tails and soft labels. Hence
\begin{equation}
 \sup_{\chi\in I}
 |\mathcal G_\lambda(\mathsf E_{\rm original})
        -\mathcal G_\lambda(T_\epsilon^{\rm r})|
            =o(\epsilon^2).
 \label{qcp-C-eq:uniform-return}
\end{equation}

\subsection{Conclusion of the uniform bound}
The response expansion and dominated residual integrals are uniform.
Define
\begin{equation}
 a_-
 =\frac{2Q_*^2\overline{\mathcal A}_L^2}
 {(1+C_\zeta)\{4L^2/(\lambda c_D)+1/(1-\lambda)\}},
 \qquad
 a_+=(1-\lambda)C_{\amp} .
 \label{qcp-C-eq:uniform-order-constants}
\end{equation}
C5/C6 and the uniform return give
$a_- -r(s)\le s^2\G_\lambda(s,\chi/s^2)\le a_++r(s)$, where the common
$r(s)\to0$ bounds predictor, response, centered and residual integral
moduli together with $\epsilon^{1/4}+\epsilon$. Choose $s_0$ with
$r\le a_-/2$; then $c_-=a_-/2$, $c_+=a_++a_-/2$ prove
Theorem~\ref{qcp-M-thm:uniform-first-cost}(ii).
These comparisons transfer C5's rule and both excesses uniformly on $I$.
Continuity and compactness separate the leading risks from zero and
the prior risks. C11 transfers the bounds to finite records.

\section{Transfer to finite records}
\label{qcp-C-sec:finite-record-return}
Let $\phi$ be a known normalized $C^3$ curve in fixed finite dimension,
$q_0=\|(I-\rho(0))\phi'(0)\|^2>0$.
For a known monotone profile and fixed $L>0$, use
\begin{equation}
 \theta_i=\frac{s_n+az}{\sqrt{nq_0}}F((i/n-u)/w_n),
 \qquad u\sim\Unif[0,1],\quad z\sim\Unif[-L,L],
 \label{qcp-C-eq:core-site-parameters}
\end{equation}
Put $H_n=s_n+aL$, $b_n=1-|\langle\phi(0),\phi(s_n/\sqrt{nq_0})\rangle|^2$,
$\tau_n=nb_n$, $k_w(t)=t/\sqrt{1+wt}$. The physical and coherent position
coefficients are $k_{w_n}(\tau_n)/K$, $k_{w_n}(s_n^2)/K$.
The target is $u$, with reports in $\mathcal X_n$ and physical loss
\eqref{qcp-M-eq:physical-rise-loss}.

Apply Theorem~\ref{qcp-C-thm:channels}'s cell isometry and completed
inverse to each $\theta_i$. The cell error $C_\phi q_0\theta_i^2$
gives product fidelity error at most $C_\phi H_n^2/\sqrt n$.
For any monotone $f:[0,1]\to[0,1]$, its right step approximation obeys
$\|f_n-f\|_2^2\le\|f_n-f\|_1\le[f(1)-f(0)]/n\le1/n$ by telescoping.
With $H=H_n$, these curve and grid channels give
\begin{equation}
 \varepsilon_{n,H}
 \le\frac{H_n+C_\phi H_n^2}{\sqrt n}.
 \label{qcp-C-eq:core-physical-channel}
\end{equation}
The bound is uniform in $w_n>0$.
The curve expansion gives $\tau_n/s_n^2=1+O(s_n/\sqrt n)$, and
the logarithmic derivative of $k_w$ lies in $[1/2,1]$.
Clipping and grid rounding therefore give
\begin{equation}
 d_{\loc,n}
 =2\frac{|\tau_n-s_n^2|}{\max\{\tau_n,s_n^2\}}
       +\frac{2k_{w_n}(\tau_n)}{Kn}
 =O\!\left(\frac{s_n}{\sqrt n}+\frac{s_n^2}{n}\right),
 \qquad d_{\amp,n}=0.
 \label{qcp-C-eq:core-calibration}
\end{equation}
Apply Lemma~\ref{qcp-C-lem:transfer} with $B_\loc=2$, $B_\amp=4L^2$,
$B_\lambda=2\lambda+4L^2(1-\lambda)$ and
$E_n=B_\lambda\varepsilon_{n,H}+\lambda d_{\loc,n}$:
\begin{align}
 |r_{j,n}^\star-r_{j,\infty}^\star|
   &\le B_j\varepsilon_{n,H}+d_{j,n},\label{qcp-C-eq:core-marginal-return}\\
 |R_{\lambda,n}^\star-R_{\lambda,\infty}^\star|
   &\le E_n,
     \label{qcp-C-eq:core-joint-return}\\
 |\mathcal G_{\lambda,n}-\mathcal G_{\lambda}|
   &\le2E_n.
     \label{qcp-C-eq:core-gap-return}
\end{align}
Pulling back a common rule adds at most $2B_j\varepsilon_{n,H}+2d_{j,n}$
to each excess.
For $s_n\to\infty$, the channel, calibration and report errors are
$O(s_n^2/\sqrt n)$, $O(s_n/\sqrt n)$ and $O(s_n^2/n)$, uniformly in
$w_n>0$. They vanish for $s_n=o(n^{1/4})$, transferring J's leading
limits whenever $w_n\to0$.

Multiplication by $s_n^2$ gives $O(s_n^4/\sqrt n)$,
$O(s_n^3/\sqrt n)$ and $O(s_n^4/n)$, which vanish for
$s_n=o(n^{1/8})$. Thus the joint and marginal optima, the gap and both
common rule excesses transfer at $o(s_n^{-2})$ along every known path
$\chi_n\in[1/2,2]$, $w_n=\chi_n/s_n^2$.

\endgroup

\pdfbookmark[1]{IV. Fixed width limit and cubic width cost}{qcp-SM-part4}
\section*{IV. The fixed width limit, loss dependence and cubic width cost}
\begingroup
\qcpresetvolume{S}
\qcparticlelayout
\setlength{\emergencystretch}{3em}

\renewcommand{\thesection}{S}
\renewcommand{\theequation}{S.\arabic{equation}}
\section{Limits at fixed width and local quadratic compatibility}
Fix $0<w<1$, $L=1/2$ and the quintic profile from
\eqref{qcp-M-eq:rise-family}; in particular, $F(r)>0$ for $0<r\le1$.
Use the main priors, reports and calibration, with $\nu=\Unif[-L,L]$ and losses
\begin{equation}
 \delta_{s,w}=\frac{\sqrt{1+s^2w}}{s^2},\quad
 \ell_\loc^{(w)}(x,u)=2\min\{1,|x-u|/(K\delta_{s,w})\},\quad
 \ell_\amp(y,z)=(y-z)^2.\label{qcp-S-eq:loss}
\end{equation}
The loss bounds are $B_\loc=2$, $B_\amp=4L^2$ and
$B_\lambda=2\lambda+4L^2(1-\lambda)$. Write $\G$ for the full record gap
and $g$ for the covariant fiber gap. We use
$D(\ket f,\ket g)\le\|f-g\|_2$.

\subsection{Truncated tangents and global comparison}
For $u<1$, set $e=1-u$, $r=\min\{e/w,1\}$,
$I(r)=\int_0^rF'^2$, $J(r)=\int_0^rF^2$. Then
\begin{align}
 G_{11}(u)&=I(r),&G_{12}(u)&=-\tfrac{\sqrt w}{2}F(r)^2,&
 G_{22}(u)&=(e-w)_++wJ(r),\label{qcp-S-eq:gram}\\
 \kappa&=\sqrt{G_{11}},&d&=aG_{12}/\kappa,&c&=\sqrt{a^2G_{22}-d^2}.
 \label{qcp-S-eq:normal}
\end{align}
These are the Gram entries of $g_{\loc,u}=-w^{-1/2}F'((t-u)/w)$ and
$g_{\amp,u}=F((t-u)/w)$. Cauchy--Schwarz gives $IJ>F(r)^4/4$:
equality would make $F'$ proportional to $F$, contrary to $F(0)=0<F(r)$.
Thus $G_{11}>0$ and $\det G>0$, including truncated rises.
Gram--Schmidt and Fock factorization give
\begin{equation}
 \rho^u_{v,z}=\proj{\kappa(u)v+d(u)z}\otimes\proj{c(u)z}.
 \label{qcp-S-eq:fiber}
\end{equation}
The continuous coefficients satisfy $\kappa,c>0$ and $d<0$ on $[0,1)$.
The complete covariant gap is $g_\lambda(\kappa,d,c)$, defined as in
\eqref{qcp-M-eq:gap}; set
$g_{\lambda,w}(u)=g_\lambda(\kappa(u),d(u),c(u))$.

Partition $[0,1]$ into cells of length $b_s=1/N_s$, centered at $u_j$.
Good cells end at or before $1-\epsilon$; the others have mass at most
$\epsilon+b_s$. The comparison $\mathsf F_s^\epsilon$ emits the true good
label and $\ket{v g_{\loc,u_j}+azg_{\amp,u_j}}$, with
$v=s(u-u_j)/\sqrt w$, and emits vacuum with a bad label elsewhere.
Taylor and pilot attenuation errors are $T_s=O_w(sb_s^2+b_s)$ and
$H_s=O_w(t_s(sb_s+1))$. For $q_\epsilon=F(\epsilon/(2w))>0$, the label error is
\begin{equation}
 P_s\le\min\{1,C_{w,\epsilon}(N_s+1)
 (s^{-1}+(s\sqrt{t_s})^{-1})\}.
 \label{qcp-S-eq:label-error}
\end{equation}
The bounds are uniform over good cells and $|z|\le L$.
For the joint risk, take $r_j^\star=R_\lambda^\star$ and $B_j=B_\lambda$.
\begin{qcpStheorem}[Global comparison]\label{qcp-S-thm:channels}
Channels independent of $u,z$ in both directions have errors averaged
over the prior satisfying
\begin{equation}
 \varepsilon_{\to}\le\epsilon+b_s+P_s+T_s+H_s,\qquad
 \varepsilon_{\leftarrow}\le\epsilon+b_s+T_s,\label{qcp-S-eq:errors}
\end{equation}
with both errors capped at one. For each marginal and the joint risk,
$-B_j\varepsilon_{\leftarrow}\le r_j^\star(\mathsf E_s)-r_j^\star(\mathsf F_s^\epsilon)
\le B_j\varepsilon_{\to}$; hence
$|\G_\lambda(\mathsf E_s)-\G_\lambda(\mathsf F_s^\epsilon)|
\le B_\lambda(\varepsilon_{\to}+\varepsilon_{\leftarrow})$.
At fixed $\epsilon$, the choices $t_s=s^{-1/2}$ and
$N_s=\lceil s^{2/3}\rceil$ give $P_s=O_\epsilon(s^{-1/12})$,
$T_s=O(s^{-1/3})$, $H_s=O(s^{-1/6})$, while
\begin{equation}
 V_s=sb_s/(2\sqrt w)\longrightarrow\infty.\label{qcp-S-eq:wide-prior}
\end{equation}
\end{qcpStheorem}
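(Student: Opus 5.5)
The plan follows the template of the product comparison in Section J1. Once the two channel errors in \eqref{qcp-S-eq:errors} are established, the risk statements follow at once from Lemma~\ref{qcp-C-lem:transfer}. That lemma applies to each marginal and, with $B_j=B_\lambda$, to the joint risk. The bulk of the work is therefore building the forward channel $\mathsf E_s\to\mathsf F_s^\epsilon$ and the reverse channel, neither depending on $(u,z)$, and bounding their prior-averaged trace distance errors term by term. The fidelity formula for coherent states reduces each term to an $L^2$ field norm, using $D(\ket f,\ket g)\le\|f-g\|_2$.

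\emph{Forward channel.} Split the coherent field by a known beam splitter, assigning intensity fraction $t_s$ to a pilot. Apply constant-mode homodyne to the pilot, exactly as in Section J1. The pilot mean is $\sqrt{t_s}(s+az)A_w(u)$, so the clipped inversion has error $O(s^{-1}+(s\sqrt{t_s})^{-1})$. On a good cell, the field there is at least $sF(\epsilon/(2w))=sq_\epsilon$ away from the suffix edge, which allows cell identification up to a neighbourhood of the $N_s+1$ cell boundaries. Summing the boundary strips gives \eqref{qcp-S-eq:label-error}.

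On an observed good label, the plan is as follows.
\begin{enumerate}
\item Displace the retained field by the known $-\sqrt{1-t_s}\,sF((t-u_j)/w)$.
\item Undo the attenuation, which costs $H_s=O(t_s\|{\rm residual}\|)$ with residual norm $O(sb_s+1)$.
\item Taylor expand $s[F((t-u)/w)-F((t-u_j)/w)]+azF((t-u)/w)$ about $u_j$. This gives $v\,g_{\loc,u_j}+az\,g_{\amp,u_j}$ with $v=s(u-u_j)/\sqrt w$.
\end{enumerate}
The second-order remainder has norm $O_w(s b_s^2)$, since $F\in C^2$. Replacing $g_{\amp,u}$ by $g_{\amp,u_j}$ costs $O_w(b_s)$. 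Together these give $T_s$. Bad cells are sent to vacuum with a bad label, and together with wrong labels they contribute $\epsilon+b_s+P_s$ to the error.

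\emph{Reverse channel.} Embed the two-mode state $\ket{v g_{\loc,u_j}+azg_{\amp,u_j}}$ back into $L^2[0,1]$ along these known functions. Then restore the known displacement $sF((t-u_j)/w)$, and discard the label. Bad labels map to a fixed state. No pilot is involved, so the only errors are the Taylor term $T_s$ and the bad-cell mass $\epsilon+b_s$. This matches $\varepsilon_{\leftarrow}$.

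\emph{Rates.} Substitute $t_s=s^{-1/2}$ and $N_s=\lceil s^{2/3}\rceil$, so that $b_s\asymp s^{-2/3}$. Then $s b_s^2\asymp s^{-1/3}$ and $t_s s b_s\asymp s^{-1/6}$. Also $N_s(s\sqrt{t_s})^{-1}\asymp s^{2/3-3/4}=s^{-1/12}$, which dominates $N_s/s$. Finally $V_s=sb_s/(2\sqrt w)\asymp s^{1/3}\to\infty$.

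\emph{Main obstacle.} I expect the label error to be the hardest step. The pilot's information about $u$ degrades near the suffix edge and within the rise window, so uniformity over good cells and $|z|\le L$ needs the lower bound $q_\epsilon>0$. I would first try bounding the misclassification probability by a Gaussian tail of the homodyne noise relative to the cell spacing. I would then integrate over the boundary strips as in Section J1, absorbing the $w$-dependent rise correction $w(1-\mu_F)$ into the known report offset.
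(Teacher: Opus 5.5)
Your proposal follows the paper's proof: a pilot fraction $t_s$, constant-mode homodyne, and clipped inversion of $A_w$ on $[0,1-\epsilon/2]$, where $|A_w'(u)|=F((1-u)/w)\ge q_\epsilon$. It then uses the known displacement with Taylor and attenuation errors $T_s,H_s$, a reverse channel that restores the displacement and sends bad labels to fixed states, and Lemma~\ref{qcp-C-lem:transfer} for the three risks. Two small corrections: the affine form $A_w(u)=1-u-w(1-\mu_F)$ from J1 holds only for $u\le 1-w$, so on good cells in $(1-w,1-\epsilon]$ you must invert the nonlinear monotone map $A_w$ itself rather than absorb an offset (your clipped inversion already does this), and no channel undoes the attenuation, since $H_s$ is simply the distance between the attenuated and unattenuated residual fields, at most $(1-\sqrt{1-t_s})\le t_s$ times the residual norm $O_w(sb_s+1)$.
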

\begin{proof}
Homodyne measurement of the pilot fraction gives
$X=\sqrt{t_s}(s+az)A_w(u)+Z/2$, where $Z\sim N(0,1)$,
$A_w(u)=\int_0^1F((t-u)/w)dt$ and $A_w'(u)=-F((1-u)/w)$.
Clip to the range on $[0,1-\epsilon/2]$ and invert $A_w$. On a good cell,
$|\widehat u-u|\le q_\epsilon^{-1}(aL/s+|Z|/(2s\sqrt{t_s}))$;
integration over cell boundary strips gives $P_s$.
The conditionally independent retained field is displaced by
$-\sqrt{1-t_s}sF((t-u_j)/w)$ at a reported good label.
Taylor's remainder and $1-\sqrt{1-t_s}\le t_s$ give $T_s,H_s$.
Wrong labels and bad positions contribute their probabilities.
The reverse channel restores the supplied displacement with error $T_s$
and assigns fixed states to bad labels. Pulling back complete POVMs gives
the three risk bounds; their difference gives the gap bound.
\end{proof}

\subsection{Wide priors and integration over position}
In a fixed fiber, take independent $v\sim\Unif[-V,V]$ and $z\sim\nu$,
with $c_K(x-v)=2\min\{1,|x-v|/K\}$. Translation by $t$ displaces the
first mode by $\kappa t$ and preserves amplitude reports.
Compactify the position reports by a failure point $\infty$ of cost two.
\begin{qcpSlemma}[Wide flat prior]\label{qcp-S-lem:cov}
As $V\to\infty$, the finite prior optimum converges to the complete
covariant risk $R_{\lambda,\mathrm{cov}}^\star(\kappa,d,c)$ for every
$0\le\lambda\le1$. Convergence is uniform on compact coefficient sets
with $\kappa,c>0$.
\end{qcpSlemma}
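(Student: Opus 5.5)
The plan mirrors the proof of Lemma~\ref{qcp-C-J-lem:canonical}, now for paired reports on the two-mode fiber $\ket{\kappa v+dz}\otimes\ket{cz}$. Since all losses lie in $[0,B_\lambda]$, it suffices to prove two inequalities, one for each direction, and then add uniformity.

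\textbf{Upper bound.} The inequality $R^\star_{\lambda,V}\le R^\star_{\lambda,\mathrm{cov}}$ comes from restriction. Take a complete covariant paired POVM whose position report is equivariant under translation $v\mapsto v+t$ and whose amplitude report is invariant. Clip its position reports to $[-V,V]$; clipping only reduces the distance to $v\in[-V,V]$. Its risk at every $v$ then equals, or is below, its covariant risk, which is independent of $v$. Averaging over $v\sim\Unif[-V,V]$ and $z\sim\nu$ gives the bound.

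\textbf{Lower bound.} Here I show $\liminf_V R^\star_{\lambda,V}\ge R^\star_{\lambda,\mathrm{cov}}$.
\begin{enumerate}
\item Take near-optimal finite-prior rules $M_V$. Extend them to the compactified report space $(\mathbb R\cup\{\infty\})\times[-L,L]$.
\item Form the averaged rule $\overline M_V=(2V)^{-1}\int_{-V}^{V}\mathcal T_t^{-1}M_V\mathcal T_t\,dt$. Here $\mathcal T_t$ is the unitary displacing the first mode by $\kappa t$, and it acts on the position report by shifting by $t$ while leaving $y$ unchanged. The joint risk of $\overline M_V$ at $v=0$ equals the finite-prior risk of $M_V$, up to the clipping already used.
\item Translating $\overline M_V$ by any fixed $b$ changes every continuous test of norm $\le1$ by at most $|b|/V$.
\item By the compactness argument in the proof of Proposition~\ref{qcp-M-prop:common-optimality}, a subnet converges to a complete POVM $M_\infty$ on the compact report space. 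By the previous step, $M_\infty$ is covariant.
\item The loss at $v=0$ is continuous on the compactification: the position part equals $2$ at $\infty$, and the amplitude part is continuous and bounded. Hence the limit risk is at most $\liminf_V R^\star_{\lambda,V}$.
\item Mass at $\infty$ is a failure outcome of cost two, matching the definition of the covariant class. Therefore the limit risk is at least $R^\star_{\lambda,\mathrm{cov}}$.
\end{enumerate}
The amplitude marginal survives the limit because $y\mapsto(y-z)^2$ is bounded continuous on $[-L,L]$. The cases $\lambda=0,1$ are the same argument with one weight zero.

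\textbf{Uniformity.} I need equicontinuity of $R^\star_{\lambda,V}$ in $(\kappa,d,c)$, uniform in $V$. Coherent fibers give $D\le\|(\kappa-\kappa')v+(d-d')z\|+|c-c'||z|$, which is not uniform in $v$ because of the $\kappa$ term. I would remove this by rescaling. The reports $x\mapsto(\kappa'/\kappa)x$ and the prior $\Unif[-V,V]$ change only by a dilation, so first compare fibers with equal $\kappa$ at scaled window sizes. The finite prior and loss change by $O(|\kappa/\kappa'-1|)$ through the Lipschitz triangular loss and the prior mass mismatch. The channel errors then give a $V$-independent modulus. Pointwise convergence together with equicontinuity on a compact set gives uniform convergence via a finite net.

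\textbf{Main obstacle.} I expect this step, the scale-change in $\kappa$ with a uniform bound, to be the main obstacle. The averaging construction itself is routine given the Proposition~\ref{qcp-M-prop:common-optimality} compactness.
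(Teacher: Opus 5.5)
Your proposal is correct and follows the paper's proof essentially step for step: clipping for the upper bound, translation averaging of near-optimal rules plus the compactness of complete POVMs on $(\mathbb R\cup\{\infty\})\times[-L,L]$ for the lower bound, and the rescaling $x=\kappa v$ for uniformity. In that rescaling the prior changes by $|\kappa-\kappa'|/\max\{\kappa,\kappa'\}$ in total variation and the loss by twice that, while changes in $d,c$ give state distances uniform in $x$; a finite net then finishes. The only detail to add is that in the upper bound a covariant failure outcome must be sent to some finite report, costing at most two, as the paper does explicitly.
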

\begin{proof}
For the upper bound, replace covariant failure by a finite report and clip to
$[-V,V]$. For the lower bound, average $M_V$ with risk within $V^{-1}$ of the optimum:
$\overline M_V(B)=(2V)^{-1}\int_{-V}^VU_t^\dagger M_V(B+t)U_tdt$.
Its risk at zero equals the finite prior risk, and translation by $t$
changes a continuous test integral by at most $|t|\|f\|_\infty/V$.
Apply the compactness and translated-average argument of
Lemma~\ref{qcp-C-J-lem:canonical} to paired reports
$(\mathbb R\cup\{\infty\})\times[-L,L]$. It gives a complete covariant
limit with the limiting risk, since the cost field
$\lambda c_K(x)\Xi_0+(1-\lambda)(y^2\Xi_0-2y\Xi_1+\Xi_2)$
is trace-class valued and continuous on this compact space.
Here $\Xi_j=\int z^j\rho_{0,z}\nu(dz)$; the failure cost is included.
For uniformity, set $x=\kappa v$. A change in $\kappa$ changes the prior
by $|\kappa-\kappa'|/\max\{\kappa,\kappa'\}$ in total variation and
the loss by twice this amount. A change in $d,c$ changes the state
distance by at most $L\sqrt{(d-d')^2+(c-c')^2}$, uniformly in $x$.
Equicontinuity and a finite net give uniform convergence.
\end{proof}
\begin{qcpStheorem}[Global fixed width limit]\label{qcp-S-thm:global}
At fixed width, the original full priors and losses \eqref{qcp-S-eq:loss} give
\begin{align}
 \lim_{s\to\infty}R_{\lambda,s}^\star
 &=\int_0^1R_{\lambda,\mathrm{cov}}^\star(\kappa(u),d(u),c(u))du,
 \label{qcp-S-eq:global-risk}\\
 \lim_{s\to\infty}\G_{\lambda,s}&=\int_0^1g_{\lambda,w}(u)du.
 \label{qcp-S-eq:global-gap}
\end{align}
The first identity at $\lambda=1,0$ gives the marginal limits.
\end{qcpStheorem}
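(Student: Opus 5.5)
The proof combines the global comparison of Theorem~\ref{qcp-S-thm:channels} with the wide-prior Lemma~\ref{qcp-S-lem:cov}, and then lets first $s\to\infty$ and afterwards $\epsilon\downarrow0$.

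Fix $\epsilon>0$. By Theorem~\ref{qcp-S-thm:channels} with $t_s=s^{-1/2}$ and $N_s=\lceil s^{2/3}\rceil$, the forward error satisfies $\varepsilon_{\to}\le\epsilon+o(1)$ and the reverse error satisfies $\varepsilon_{\leftarrow}\le\epsilon+o(1)$. The joint optimum of the original experiment therefore lies within $B_\lambda(\epsilon+o(1))$ of the joint optimum of $\mathsf F_s^\epsilon$. The same holds for each marginal optimum, with $B_j$ in place of $B_\lambda$.

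Next compute the optima of $\mathsf F_s^\epsilon$. It emits the true label $j$, so its risk decomposes over labels. Bad labels carry vacuum, and the optimal reports there are the conditional prior Bayes reports. The bad mass is at most $\epsilon+b_s$, so they contribute $O(\epsilon+b_s)$ to the risk and zero to the gap.

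On a good cell $j$, the state is $\ket{v g_{\loc,u_j}+az g_{\amp,u_j}}$. Gram--Schmidt as in \eqref{qcp-S-eq:gram}--\eqref{qcp-S-eq:normal} turns it into the two-mode fiber \eqref{qcp-S-eq:fiber} with coefficients frozen at $u_j$. Conditionally, $v$ is uniform on $[-V_s,V_s]$ with $V_s=sb_s/(2\sqrt w)$, and $z\sim\nu$ independently. The location loss in these units is $2\min\{1,|x'-v|\sqrt w/(sK\delta_{s,w})\}$. Since $s\delta_{s,w}/\sqrt w=\sqrt{1+s^2w}/(s\sqrt w)\to1$, this differs from $c_K$ by a uniformly vanishing ordered change. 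By Lemma~\ref{qcp-C-lem:transfer} that change costs only $o(1)$ in each optimum.

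By \eqref{qcp-S-eq:wide-prior}, $V_s\to\infty$. The coefficients are continuous with $\kappa,c>0$ on the compact set $[0,1-\epsilon]$. Lemma~\ref{qcp-S-lem:cov} then gives, uniformly over good cells, that the cell optimum equals $R^\star_{\lambda,\mathrm{cov}}(\kappa(u_j),d(u_j),c(u_j))+o(1)$. Summing over cells with weights $b_s$ produces a Riemann sum. Continuity of $R^\star_{\lambda,\mathrm{cov}}$ in the coefficients, which comes from the same equicontinuity used in the uniformity part of that lemma, makes it converge to $\int_0^{1-\epsilon}R^\star_{\lambda,\mathrm{cov}}\,du$.

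Combining these steps, $\limsup$ and $\liminf$ of $R^\star_{\lambda,s}$ both lie within $C(\epsilon)$ of $\int_0^1 R^\star_{\lambda,\mathrm{cov}}\,du$, where $C(\epsilon)=O(\epsilon)$. The $O(\epsilon)$ term covers the transfer error and the missing strip of length $\epsilon$, on which the integrand is bounded by $B_\lambda$. Letting $\epsilon\downarrow0$ proves \eqref{qcp-S-eq:global-risk}. The cases $\lambda=1$ and $\lambda=0$ give the marginal limits. Subtracting the weighted marginal limits from the joint limit gives \eqref{qcp-S-eq:global-gap}, because $g_{\lambda,w}=R^\star_{\lambda,\mathrm{cov}}-\lambda R^\star_{1,\mathrm{cov}}-(1-\lambda)R^\star_{0,\mathrm{cov}}$ pointwise.

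The main obstacle is the uniformity of Lemma~\ref{qcp-S-lem:cov} as $u\to1$. There the coefficients degenerate: $c\to0$ and $\kappa$ shrinks for truncated rises. The lemma's uniformity holds only on compacts with $\kappa,c>0$, which is why the cutoff $\epsilon$ is taken before $s\to\infty$. A second point to check is that the integrand stays bounded and measurable near $u=1$, so that the $\epsilon$-strip contributes $O(\epsilon)$. Boundedness by $B_\lambda$ together with continuity on $[0,1)$ gives this. Finally, the per-cell optima must add up to the optimum of $\mathsf F_s^\epsilon$. This is immediate because the label is emitted classically, so rules may be chosen separately on each label.
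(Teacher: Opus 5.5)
Your proposal is correct and follows essentially the same route as the paper. Both arguments compare with $\mathsf F_s^\epsilon$ through the two-way channels and add optimal risks over the classically emitted labels. Both treat the rescaled location loss $2\min\{1,q_s|\widehat v-v|/K\}$, with $q_s=s\sqrt w/\sqrt{1+s^2w}\to1$, as an ordered change of size at most $2(1-q_s)$. Both then apply Lemma~\ref{qcp-S-lem:cov} uniformly on $[0,1-\epsilon]$, bound the remaining strip and bad labels by $B_\lambda(\epsilon+b_s)$, and let $s\to\infty$ before $\epsilon\downarrow0$. You simply spell out the Riemann-sum, continuity and gap-subtraction details that the paper compresses into a few sentences.
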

\begin{proof}
Clipping good cell reports gives local loss
$2\min\{1,q_s|\widehat v-v|/K\}$, where
$q_s=s\sqrt w/\sqrt{1+s^2w}\to1$; its distance from $c_K$ is at most
$2(1-q_s)$. Optimal risks add over the labels.
Lemma~\ref{qcp-S-lem:cov}, uniformly on $[0,1-\epsilon]$, turns these
sums into the fiber integral. The remaining contribution is bounded by
$B_\lambda(\epsilon+b_s)$. The channel bounds, followed by
$s\to\infty$ and $\epsilon\downarrow0$, prove the risk identity.
Its cases $\lambda=1,0$ give the marginal limits and hence the gap identity.
\end{proof}

\begin{qcpSproposition}[Residual amplitude optimum]\label{qcp-S-prop:amp}
The fiber amplitude optimum equals $\mathcal A_L(c(u))$, the Bayes risk of
the residual experiment
\begin{equation}
 z\longmapsto\ket{c(u)z},\qquad z\sim\nu.\label{qcp-S-eq:amp-mode}
\end{equation}
\end{qcpSproposition}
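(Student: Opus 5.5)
The plan is to squeeze the fiber amplitude optimum between two bounds that both equal $\mathcal A_L(c(u))$. I will work first with the finite flat prior $v\sim\Unif[-V,V]$ of Lemma~\ref{qcp-S-lem:cov}, and then let $V\to\infty$. Lemma~\ref{qcp-S-lem:cov} at $\lambda=0$ identifies the complete covariant amplitude risk $R^\star_{0,\mathrm{cov}}(\kappa,d,c)$ with the limit of these finite-prior optima. It therefore suffices to show that the finite-prior amplitude optimum is $\mathcal A_L(c)+O(|d|L/(\kappa V))$.

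\emph{Upper bound.} I discard the first mode and measure the residual mode $\ket{c(u)z}$ with the spectral measure of its Personick observable $A_c$, as in Proposition~\ref{qcp-M-prop:common-optimality}. I append a fixed position report (or the covariant coordinate readout, which is irrelevant for amplitude). This rule is complete and covariant, because translations act only on the first mode. Its amplitude risk is exactly $\mathcal A_L(c)$ for every $V$, and hence also in the covariant class.

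\emph{Lower bound.} The key point is that a flat unknown $v$ washes out the shared displacement $dz$ in the first mode. Here $\kappa,d$ are real and the displacement $dz$ lies along the same real direction as the position shift $\kappa v$. I will compare the fiber ensemble with the reference ensemble in which the first mode is $\ket{\kappa v}$, with $v$ independent of $z$, keeping the same prior, targets and losses.

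In the fiber ensemble, the conditional law given $z$ of the first-mode label $x=\kappa v+dz$ is uniform on $[-\kappa V+dz,\kappa V+dz]$. Its total variation distance from the uniform law on $[-\kappa V,\kappa V]$ is at most $|d|L/(\kappa V)$. Coupling $v'=v+dz/\kappa$ therefore gives parameter-independent identity channels whose prior-averaged trace-distance errors are $O(|d|L/(\kappa V))$. These errors come only from the boundary strips; the position target moves by $dz/\kappa$, but the amplitude loss does not depend on it.

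Lemma~\ref{qcp-C-lem:transfer}, applied to the amplitude task with $B_\amp=4L^2$, then gives $r^\star_\amp(\text{fiber})\ge r^\star_\amp(\text{ref})-4L^2\,O(|d|L/(\kappa V))$. In the reference ensemble the first mode is independent of $z$ given the prior, so it can be simulated from the prior without knowledge of $z$. The reference amplitude optimum is therefore that of the residual experiment \eqref{qcp-S-eq:amp-mode} alone, namely $\mathcal A_L(c)$. To make this rigorous, I will write the averaged moments $\Omega,\Omega_1$ of the reference as $\sigma\otimes\Omega_c$ and $\sigma\otimes\Omega_{1,c}$, so that the Personick solution is $I\otimes A_c$ and the risk formula $\tr\Omega_2-\tr(\Omega A^2)$ returns $\mathcal A_L(c)$.

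The main obstacle is the bookkeeping of the target shift in the lower bound. I will handle it by comparing at fixed amplitude target only, treating $v$ as a nuisance that is not scored. This is legitimate because the amplitude risk ignores the position report. Uniformity on compact coefficient sets with $\kappa,c>0$ then follows from the explicit $O(1/V)$ error, and letting $V\to\infty$ in both bounds proves the proposition.
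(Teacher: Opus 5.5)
Your proposal is correct and follows essentially the paper's route. Discarding the first mode gives $r^\star_{\amp,V}\le\mathcal A_L(c)$. Replacing the $v$-averaged first-mode mixture on $[-\kappa V+dz,\kappa V+dz]$ by the $z$-independent mixture centered at zero costs at most $\min\{1,|dz|/(2\kappa V)\}$ in trace distance, so $\mathcal A_L(c)-r^\star_{\amp,V}\to0$, and Lemma~\ref{qcp-S-lem:cov} at $\lambda=0$ finishes. Your bound $|d|L/(\kappa V)$ is a factor two looser than necessary but harmless, and the Personick factorization $I\otimes A_c$ for the reference is a valid way to show that the $z$-independent first mode carries no amplitude information.
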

\begin{proof}
Averaging the first mode over $v$ gives the coherent mixture on
$[-\kappa V+dz,\kappa V+dz]$. Replacing it by the mixture centered at zero
costs at most $\min\{1,|dz|/(2\kappa V)\}$ in trace distance.
This replacement and discarding the first mode give
\begin{equation}
 0\le \mathcal A_L(c)-r_{\amp,V}^\star
 \le4L^2\mathbb E\min\{1,|dz|/(2\kappa V)\}
 \longrightarrow0.\label{qcp-S-eq:amp-finite-prior}
\end{equation}
Lemma~\ref{qcp-S-lem:cov} finishes the proof.
\end{proof}

\subsection{Complete covariant decisions and loss control}
\label{qcp-S-sec:covariant-decision}
On the support of $\Omega=\int\proj{cz}\nu(dz)$, set
$\psi_z=\ket{cz}$, $\Omega_j=\int z^j\proj{\psi_z}\nu(dz)$,
$\Omega A+A\Omega=2\Omega_1$ and
$\mathcal A=\tr\Omega_2-\tr(\Omega A^2)$, with $\|A\|\le L$.
Use $[Q,P]=i/2$ and $\varphi(p)=(2/\pi)^{1/4}e^{-p^2}$.
Covariant position POVMs admit an isometry field
\cite[Theorem~2.4.3 and Corollary~2.4.5]{qcp-S-toigo2005}.
For a complete covariant paired POVM with real location reports, take
a minimal joint Naimark dilation $(\widetilde{\mathcal H},\mathsf J,P)$.
Covariance makes
\[
 \widetilde U_tP(B\times C)\mathsf J\psi
 =P((B+t)\times C)\mathsf J U_t\psi
\]
an isometry on a dense span, defining a strongly continuous unitary
representation by compactly supported test functions.
The imprimitivity theorem identifies the dilation with
$L^2(\mathbb R,\mathcal K)$, with position multiplication and translation
shifts \cite[Theorem~1.2.2]{qcp-S-toigo2005}.
The amplitude PVM commutes with position and translations, so it is a
constant decomposable field $I\otimes F_\amp$.
Thus $Y=\int yF_\amp(dy)$ is self-adjoint with $\|Y\|\le L$, and the
momentum intertwiner is a measurable isometry field
$W(p):\mathcal H_\amp\to\mathcal K$. The location density is the squared norm of
\begin{equation}
 \sqrt{\kappa/\pi}\int e^{2i\kappa xp}\varphi(p)e^{-2idzp}W(p)\psi_zdp.
 \label{qcp-S-eq:cov-amplitude}
\end{equation}
Writing $\mathsf P_d(\xi)=\int e^{-2idz\xi}\proj{\psi_z}\nu(dz)$ and
$k_b(\xi)=b\operatorname{sinc}^2(b\xi)/\pi$, $b=\kappa K$, Fourier
transformation and Personick completion give
\begin{align}
 r_\loc(W)&=2-2\operatorname{Re}\int\varphi(p)\varphi(q)k_b(p-q)
 \tr[W(p)\mathsf P_d(p-q)W(q)^\dagger]dpdq,\label{qcp-S-eq:cov-loc-functional}\\
 r_\amp(W,Y)-\mathcal A&=\int\varphi(p)^2
 \|YW(p)-W(p)A\|_\Omega^2dp.
 \label{qcp-S-eq:cov-amp-excess}
\end{align}
These infima also include compactified failure reports. Disintegrate
their invariant effects as $F_p(dy)$ and partition $[-L,L]$ into bins
$C_j$ of diameter $\eta$, with $y_j\in C_j$.
The field $V_\eta(p)\psi=\bigoplus_jF_p(C_j)^{1/2}\psi$ and
$Y_\eta=\bigoplus_jy_jI$ preserve amplitude bin probabilities by
Plancherel. A real position report costs at most the failure cost two,
and amplitude rounding costs at most $4L\eta$.
Direct sum with the real report part and let $\eta\downarrow0$.

\begin{qcpStheorem}[Quadratic compatibility]\label{qcp-S-thm:quadratic-control}
For quadratic location loss $(\widehat v-v)^2$, the fiber optimum is
\begin{equation}
 r_2^\star=\frac{1/4+d^2\mathcal A_L(c)}{\kappa^2}.
 \label{qcp-S-eq:quadratic-value}
\end{equation}
Joint measurement of first mode $Q$ and residual mode $A$, reported as
$\widehat v=(X-dY)/\kappa$, $\widehat z=Y$, attains both marginal optima
in the complete covariant class, giving zero covariant gap for every weight.
\end{qcpStheorem}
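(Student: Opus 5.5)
The proof has two halves: an achievability bound for the proposed readout, and a matching lower bound valid over the whole complete covariant class. In the fiber model \eqref{qcp-S-eq:fiber} the first mode is the coherent state $\ket{\kappa v+dz}$. With $[Q,P]=i/2$ its quadrature $Q$ has outcome $X=\kappa v+dz+N$, where $N$ is independent of $(v,z)$ with $\mathbb E N=0$ and $\mathbb E N^2=1/4$. Because the two modes are in a product state and $Q$ acts on the first mode while $A$ acts on the second, the two observables commute. Measuring them jointly therefore gives a complete paired POVM. It is covariant, since a translation by $t$ shifts $X$ by $\kappa t$ and leaves $Y$ unchanged.

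\emph{Upper bound.} The amplitude marginal of this readout is the spectral measure of $A$. By Proposition~\ref{qcp-S-prop:amp} and the Personick optimum, its amplitude risk equals $\mathcal A_L(c)$, which is the amplitude marginal optimum. For position, the error of $\widehat v=(X-dY)/\kappa$ is
\[
\widehat v-v=\frac{N-d(Y-z)}{\kappa}.
\]
Conditionally on $z$, $N$ is independent of $Y$ and has mean zero, so the cross term vanishes. Hence $\mathbb E(\widehat v-v)^2=(1/4+d^2\mathcal A_L(c))/\kappa^2$. With the flat prior on $v$ this holds pointwise in $v$.

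\emph{Lower bound.} I will use the covariant representation of Section~\ref{qcp-S-sec:covariant-decision}: every complete covariant paired rule is given by an isometry field $W(p)$ and a bounded $Y$. The main obstacle is to show that no such rule can beat the displayed value for quadratic location loss. My plan is a Personick-type completion of the square in the two-mode experiment, with the location estimator written as $\widehat v=v+\text{noise}$. The covariance of the estimator is fixed by the shift. The sum $\kappa\widehat v-\kappa v+dz$ then estimates the first-mode displacement, which carries quadrature noise $1/4$.

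More concretely, I will reduce to a translation-invariant quadratic functional. Write $\kappa(\widehat v-v)=(\kappa\widehat v-\kappa v-dz)+dz$, treat this as estimation of $dz$ from the whole record, and split the squared error orthogonally into two parts. The first part has variance at least $1/4$. This follows from the canonical commutator $[Q,P]=i/2$, since the first-mode displacement is unknown with a flat wide prior, giving the Heisenberg/shift bound for covariant measurements. The second part is bounded below by $d^2\mathcal A_L(c)$. The reason is that, after averaging over the flat $v$, the first mode carries no information about $z$ (the argument of Proposition~\ref{qcp-S-prop:amp}), so the best available prediction of $dz$ is $d$ times the residual Personick estimate.

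The cross term between the two parts should vanish by translation covariance. A cleaner way to see this is to pass to variables in which $Q-dA/\kappa$-type combinations decouple. If the direct argument is too delicate, I will instead evaluate \eqref{qcp-S-eq:cov-loc-functional} with the loss replaced by the quadratic one, which is a second-derivative kernel in $p-q$. I then expand around a constant $W$ and use \eqref{qcp-S-eq:cov-amp-excess} to identify the residual term as $d^2\|YW-WA\|^2_\Omega+d^2\mathcal A$.

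Since the upper and lower bounds coincide and the amplitude marginal is simultaneously optimal, both excess risks vanish for the same rule. The covariant gap is therefore zero at every weight $\lambda$.
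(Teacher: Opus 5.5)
\textbf{Assessment: genuine gap.} Your achievability half is correct and matches the last lines of the paper's proof. With $X=\kappa v+dz+N$ and $N\sim N(0,1/4)$ independent, the error $(N-d(Y-z))/\kappa$ gives $(1/4+d^2\mathcal A_L(c))/\kappa^2$, and the sharp $A$ marginal is amplitude optimal. The gap is the lower bound over the complete covariant class, which is the real content of the theorem. Neither of your routes proves it.

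\emph{The literal split fails.} In $\kappa(\widehat v-v)=(\kappa\widehat v-\kappa v-dz)+dz$ the cross term does not vanish. For your optimal rule the first part is $N-dY$, and
$2\mathbb E[(N-dY)\,dz]=-2d^2\mathbb E[Yz]=-2d^2\operatorname{Tr}(\Omega A^2)<0$ for $c>0$.
The second part has mean square $d^2L^2/3$, not $d^2\mathcal A_L(c)$. That negative cross term is exactly the gain from amplitude prediction, so discarding it cannot yield $d^2\mathcal A_L(c)$.

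\emph{The conditional split also fails.} If you split conditionally on $z$ into conditional variance plus squared conditional bias, the cross term does vanish, but the termwise bounds break. The covariant rule $\kappa\widehat v=X-dX_2/c$, with $X_2$ the residual-mode quadrature, has zero conditional bias and conditional variance $1/4+d^2/(4c^2)$. The two pieces trade off and must be bounded jointly; translation covariance alone does not do this.

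\emph{The fallback is wrong.} The location risk depends only on the location marginal, hence only on $W$ and never on $Y$. For $W\equiv I$ and $Y=A$ the defect $\|YW-WA\|_\Omega$ vanishes. Yet this rule simply reports $X/\kappa$ and has risk $(1/4+d^2L^2/3)/\kappa^2$. So a residual of the form $d^2\|YW-WA\|_\Omega^2+d^2\mathcal A$ is impossible. Expanding around a constant $W$ would not give an exact bound anyway, since the optimum $e^{2idpA}$ is not constant.

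\emph{What is missing.} You need an exact completion of the square for the location marginal itself. Evaluating the functional with quadratic loss is the right start, but it produces a kinetic form in $W'$. By Plancherel,
$\kappa^2r_2(W)=\tfrac14\iint\|\partial_p[\varphi e^{-2idzp}W(p)\psi_z]\|^2\nu(dz)\,dp$.
Isometry of $W$ removes the $\varphi'$ cross term, and $\int|\varphi'|^2=1$ gives $1/4$. The Personick equation $\tfrac12\{\Omega,A\}=\Omega_1$ then gives
$\int\|(W'-2idzW)\psi_z\|^2\nu(dz)=4d^2\mathcal A+\|W'-2idWA\|_\Omega^2$.
Hence $r_2(W)-r_2^\star=(4\kappa^2)^{-1}\int\varphi^2\|W'-2idWA\|_\Omega^2\,dp\ge0$, with equality for $W=e^{2idpA}$. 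The relevant defect is differential in momentum, not the amplitude intertwining defect.

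\emph{A possible repair of your heuristic.} Covariant conditional risk does not depend on $v$, so you may impose the prior $\theta=\kappa v+dz\sim N(0,\sigma^2)$ independent of $z$. The record becomes $\ket{\theta}\otimes\ket{cz}$ with independent parameters and target $(\theta-dz)/\kappa$. The Personick observable then splits as $\kappa^{-1}(S_\theta\otimes I-d\,I\otimes A)$. The resulting Bayes bound $\kappa^{-2}[\sigma^2/(4\sigma^2+1)+d^2\mathcal A_L(c)]$ tends to $r_2^\star$ as $\sigma\to\infty$.
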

\begin{proof}
By Plancherel, the quadratic risk equals
$(4\kappa^2)^{-1}\int\!\int\|\partial_p[\varphi e^{-2idzp}W\psi_z]\|^2\nu(dz)dp$.
Bounded $z$ and positivity of $\varphi$ on compact intervals give local
weighted Sobolev regularity for finite risk. Isometry cancels the
$\varphi'$ cross term, $\int|\varphi'|^2=1$, and the Sylvester equation gives
\begin{equation}
 \int\|(W'-2idzW)\psi_z\|^2\nu(dz)
 =4d^2\mathcal A+\|W'-2idWA\|_\Omega^2.
 \label{qcp-S-eq:kinetic-square}
\end{equation}
On spectral projections $P_N$ of $\Omega$, the compressed Sylvester
equation and skew-adjointness of $(WP_N)^\dagger(WP_N)'$ prove this identity.
Bounded $z,A$ control the cross terms in the Sobolev limit, yielding
\[
 r_2(W)-r_2^\star=\frac1{4\kappa^2}\int\varphi^2
       \|W'-2idWA\|_\Omega^2dp.
\]
The field $W=e^{2idpA}$ makes both defects zero and yields
$(Q-dA)/\kappa$. Here $X=\kappa v+dz+N$, with independent
$N\sim N(0,1/4)$, so the location error is $(N+d(z-Y))/\kappa$.
\end{proof}

\endgroup
\begingroup
\qcpresetvolume{P}
\qcparticlelayout
\setlength{\emergencystretch}{3em}

\renewcommand{\thesection}{P}
\renewcommand{\theequation}{P.\arabic{equation}}
\section{Local bounds needed for the cubic width cost}
\subsection{Scalar coercivity and loss response}
\label{qcp-P-app:scalar-loss}
Set $b=\kappa K$ and define
\[
 \varphi(p)=(2/\pi)^{1/4}e^{-p^2},\quad
 \mu(dp)=\varphi(p)^2dp,\quad
 k_b(x)=\frac b\pi\operatorname{sinc}^2(bx),\quad
 g_b(p,q)=\varphi(p)\varphi(q)k_b(p-q).
\]
For a real scalar field $\eta$, define
\begin{align}
 \mathcal D_b(\eta)&=\iint g_b(p,q)(\eta(p)-\eta(q))^2\,dp\,dq,\nonumber\\
 (\mathcal L_b\eta)(p)&=\int g_b(p,q)(\eta(p)-\eta(q))\,dq,\nonumber\\
 f_b(p)&=\int g_b(p,q)(p-q)^3\,dq.\label{qcp-P-eq:graph-definitions}
\end{align}
Then $\int f_b=0$. Replacing the squared difference by
$\|H(p)-H(q)\|_\Omega^2$ defines $\mathcal D_{b,\Omega}(H)$.
Choose representatives of mean zero under $\mu$ modulo constants.

\begin{qcpPlemma}[Gaussian graph coercivity]
\label{qcp-P-lem:graph-coercivity}
For every $b>0$, some $c_b>0$ satisfies
$\mathcal D_b(\eta)\ge c_b\operatorname{Var}_\mu \eta$.
The degree measure $m_b(p)dp$, with $m_b(p)=\int g_b(p,q)dq$,
also has a positive Poincare constant.
\end{qcpPlemma}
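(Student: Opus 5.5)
The plan is to compare the Dirichlet form $\mathcal D_b$ with the variance form $\iint\mu(dp)\mu(dq)(\eta(p)-\eta(q))^2=2\operatorname{Var}_\mu\eta$ by a two-step path argument, as in the coercivity proof of Section~\ref{qcp-C-sec:uniform-design}. The kernel $k_b(x)=b\operatorname{sinc}^2(bx)/\pi$ is positive except on the discrete zero set $x\in(\pi/b)\mathbb Z\setminus\{0\}$, and it satisfies $k_b(x)\ge c_0$ on $|x|\le R_0:=\pi/(2b)$. For any $p,q$, I will route through an intermediate point $r$ and apply Cauchy--Schwarz:
\[(\eta(p)-\eta(q))^2\le 2(\eta(p)-\eta(r))^2+2(\eta(r)-\eta(q))^2.\]
Then I integrate against a positive weight $w(p,r,q)$ chosen so that both $w\le C\,g_b(p,r)\,\varphi(q)^2$ and $w\le C\,g_b(r,q)\,\varphi(p)^2$ hold, while $\int w(p,r,q)\,dr\ge c\,\varphi(p)^2\varphi(q)^2$.

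The key difficulty is that Gaussian tails do not allow a uniform one-step comparison. The ratio $g_b(p,q)/(\varphi(p)^2\varphi(q)^2)=k_b(p-q)/(\varphi(p)\varphi(q))$ blows up at large $|p|,|q|$, and this works in our favour. It fails only when $|p-q|$ is large, where $k_b$ decays like $|p-q|^{-2}$ and $\varphi(p)\varphi(q)$ decays faster in the direction we need. So I will first try the direct one-step bound
\[g_b(p,q)\ge c\,\varphi(p)^2\varphi(q)^2,\]
which is equivalent to $k_b(p-q)\ge c\,\varphi(p)\varphi(q)$. Away from the zeros of $k_b$ this holds, because $k_b(x)\gtrsim \min(1,x^{-2})$ on average while $\varphi(p)\varphi(q)\le Ce^{-(p-q)^2/2}$. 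Near the zeros it fails, and there I use the two-step path through $r$ close to $p$: $k_b(p-r)\ge c_0$ for $|p-r|\le R_0$, and $p-r$ and $r-q$ can be kept at a distance $\gtrsim 1/b$ from the zero set by choosing $r$ in a set of positive measure. Concretely, for each $(p,q)$ the set of good $r\in[p-R_0,p+R_0]$ with $k_b(r-q)\ge c_1\min(1,(r-q)^{-2})$ has measure at least $R_0/2$. Combining these, I get $\mathcal D_b(\eta)\ge c\iint\varphi^2(p)\varphi^2(q)(\eta(p)-\eta(q))^2$ after checking the Gaussian ratio factors, which gives $\mathcal D_b\ge 2c\operatorname{Var}_\mu\eta$.

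The main obstacle is making this Gaussian bookkeeping uniform over all $p,q$. I expect the clean route to be: symmetrize, restrict to $r$ with $|r|\le\max(|p|,|q|)$ so that $\varphi(r)\ge\min(\varphi(p),\varphi(q))$, and bound $(r-q)^{-2}\ge(|p-q|+R_0)^{-2}$, which is dominated by the Gaussian decay $e^{-(p-q)^2/2}$ times a constant. If the zero set causes trouble, I would replace $k_b$ by its positive lower minorant $\tilde k(x)=c\min\{1,x^{-2}\}\mathbf 1_{\operatorname{dist}(x,(\pi/b)\mathbb Z^*)\ge\pi/(4b)}$ and do the path argument with that minorant.

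For the second claim, I note that $m_b(p)=\varphi(p)\int\varphi(q)k_b(p-q)dq$ is comparable, by the same estimate, to $\varphi(p)\cdot(\varphi*k_b)(p)$. This lies between $c\varphi(p)^2$-type and $C(1+p^2)^{-1}\varphi(p)$-type bounds. Then $\mathcal D_b(\eta)\ge c\operatorname{Var}_\mu\eta$, together with the comparison of $\mu$ and $m_b\,dp$ via a path argument of the same form (using $g_b(p,q)\ge c\,m_b(p)m_b(q)$ on $|p-q|\le R_0$ after normalization), gives a Poincaré inequality $\mathcal D_b(\eta)\ge c_b'\operatorname{Var}_{m_b}\eta$. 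To obtain it, I repeat the two-step argument with $m_b(p)m_b(q)$ in place of $\varphi^2\varphi^2$, using that $m_b(p)\le C\varphi(p)\,(1+|p|)^{-2}\cdot$const is dominated by $\varphi(p)(\varphi*k_b)(p)$ by definition.
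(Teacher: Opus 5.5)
Your argument for the first claim works, by a different route from the paper's, once two bookkeeping points are fixed.

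First, take the short hop on the origin side of $p$, that is $|r|\le|p|$ and not merely $|r|\le\max\{|p|,|q|\}$. Then $\varphi(r)\ge\varphi(p)$ and $g_b(p,r)\ge c_0\varphi(p)^2$.

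Second, do not charge the long leg against $\varphi(p)^2$. The condition $w\le Cg_b(r,q)\varphi(p)^2$ fails when the hop is made at the endpoint of larger modulus. For example, with $q=0$ and $|p|\to\infty$ along $(\pi/b)\mathbb Z$ (a bad pair), it would need $\int_{|r-p|\le R_0}\varphi(r)k_b(r)\,dr\gtrsim\varphi(0)$. Charge the long leg instead to the window $\mathbf 1_{\{|p-r|\le R_0\}}/R_0$ in $p$. Then use $\varphi(p)^2\varphi(q)^2\le\varphi(p)\varphi(q)\,\varphi(r)\varphi(q)$ together with $\varphi(p)\varphi(q)\lesssim e^{-(p-q)^2/2}\lesssim k_b(r-q)$ for good $r$.

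The paper instead routes every pair through $[-r_b,r_b]$, $r_b=\pi/(2b)$, using the kernel $g_b^{(2)}(p,q)=\int g_b(p,r')g_b(r',q)m_b(r')^{-1}dr'$. Its Dirichlet form is at most $4\mathcal D_b$. Averaging $\sin^2(b(p-r'))\sin^2(b(r'-q))$ over one period gives at least $1/8$, which disposes of the zeros of $k_b$. The result is the rank-one minorant $g_b^{(2)}(p,q)\ge C_b\,\varphi(p)(|p|+r_b)^{-2}\,\varphi(q)(|q|+r_b)^{-2}$.

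The genuine gap is in the second claim. Since $m_b(p)\asymp\varphi(p)/(1+p^2)$, the degree measure has far heavier tails than $\mu$. Its tails are also far heavier than the short-range conductance: for $|r-p|\le R_0$ one has $g_b(p,r)\le(b/\pi)\varphi(p)\varphi(r)\lesssim\varphi(p)^2e^{2R_0|p|}$.

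So your two-step argument, with $m_b(p)m_b(q)$ in place of $\varphi(p)^2\varphi(q)^2$, breaks on the short leg. Suppose $w$ is supported in $|r-p|\le R_0$, with $w\le Cg_b(p,r)h(q)$, $\int h<\infty$, and $\int w\,dr\gtrsim m_b(p)m_b(q)$. Taking $q$ near the origin and $p\to\infty$ along $(\pi/b)\mathbb Z$, this would force $m_b(p)\lesssim\varphi(p)^2e^{2R_0|p|}$, which is false.

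Neither of your fallback ideas rescues this. The local bound $g_b\ge c\,m_b\otimes m_b$ on $|p-q|\le R_0$ does not help, because normalizing leaves a factor $\int m_b/m_b(r)\to\infty$. And $\operatorname{Var}_\mu$ cannot be converted into a variance under $m_b$ by comparing densities: $\mu\le Cm_b$ points the wrong way, and $m_b\le C\mu$ is false.

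The degree at large $|p|$ is carried by long jumps into the bulk. The intermediate point must therefore lie in a fixed bounded region, where $g_b(p,r')\gtrsim\varphi(p)/(1+p^2)$ for $r'$ avoiding the zeros of $k_b(p-r')$. This is exactly what the product minorant above records.

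That minorant dominates $m_b\otimes m_b$ only once you actually prove $(k_b*\varphi)(p)\lesssim(1+p^2)^{-1}$. This bound is not true by definition. Split the integral at $|q|=|p|/2$: use $k_b(x)\le1/(\pi bx^2)$ on $|p-q|\ge|p|/2$, and the Gaussian decay of $\varphi(q)$ on $|q|>|p|/2$.

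Routing through the origin region also gives your first claim at once, since the same minorant dominates $\mu\otimes\mu$.
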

\begin{proof}
Define the two step symmetric kernel
\[
 g_b^{(2)}(p,q)=\int\frac{g_b(p,r')g_b(r',q)}{m_b(r')}\,dr'.
\]
This kernel has degree $m_b$ and form at most $4\mathcal D_b$.
With $r_b=\pi/(2b)$, boundedness of $k_b*\varphi$ and integration over
$[-r_b,r_b]$ give
\[
 g_b^{(2)}(p,q)\ge
 C_b
 \frac{\varphi(p)}{(|p|+r_b)^2}
 \frac{\varphi(q)}{(|q|+r_b)^2}.
\]
The sine square product averages to
$1/4+\cos(2b(p-q))/8\ge1/8$ over this period.
Boundedness of $\varphi(p)(|p|+r_b)^2$ makes this dominate
$\mu\otimes\mu$; its form is $2\operatorname{Var}_\mu\eta$.

The period estimate also gives $m_b(p)\gtrsim\varphi(p)/(1+p^2)$.
For the upper bound, split at $|q|=|p|/2$, use
$k_b(p-q)\le4/(\pi bp^2)$ on the inner part and Gaussian decay outside;
continuity covers bounded $p$. Hence $m_b(p)\asymp\varphi(p)/(1+p^2)$
and $\mu(dp)\le C m_b(p)dp$.
The same kernel bound dominates $m_b(p)m_b(q)$, proving degree coercivity.
\end{proof}

The equation $\mathcal L_b\chi_b=f_b$ admits a unique mean zero weak
solution in the Dirichlet space, characterized by
\begin{equation}
 \iint g_b(p,q)(\chi_b(p)-\chi_b(q))(\eta(p)-\eta(q))\,dp\,dq
 =2\int f_b(p)\eta(p)\,dp.
 \label{qcp-P-eq:response}
\end{equation}
The finite sixth kernel moment bounds the source in the Dirichlet norm.
Coercivity embeds the mean zero completion in $L^2(\mu)$, so Riesz
representation gives the solution.

\begin{qcpPlemma}[Positive response variance]
\label{qcp-P-lem:response-variance}
The response variance is finite, strictly positive and depends only on $b$:
\begin{equation}
 V_b:=\operatorname{Var}_\mu\chi_b>0.
 \label{qcp-P-eq:response-lower}
\end{equation}
\end{qcpPlemma}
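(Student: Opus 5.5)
We treat the three claims in order: finiteness, then strict positivity, then dependence only on $b$. The last is immediate, since $\chi_b$ is defined through the weak equation \eqref{qcp-P-eq:response}. That equation involves only $g_b$, $\varphi$ and $f_b$, and these depend only on $b$ and the fixed Gaussian $\varphi$.

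\emph{Finiteness.} We use the Riesz characterization directly. Take $\eta=\chi_b$ in \eqref{qcp-P-eq:response}, which gives $\mathcal D_b(\chi_b)=2\int f_b\chi_b\,dp$. Bound the right side in the degree weighted space: by Cauchy--Schwarz, $|\int f_b\chi_b|\le(\int f_b^2/m_b)^{1/2}(\int m_b\chi_b^2)^{1/2}$.

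Next control $\int f_b^2/m_b$. By Cauchy--Schwarz in $q$, we have $f_b(p)^2\le m_b(p)\int g_b(p,q)(p-q)^6dq$. Hence
\[
 \int f_b^2/m_b\le\iint g_b(p,q)(p-q)^6\,dp\,dq<\infty,
\]
which is the finite sixth kernel moment mentioned before the lemma. Since $k_b(x)(x)^6$ grows like $x^4$, this is finite only because of the Gaussian factors $\varphi(p)\varphi(q)$, and it holds.

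The degree Poincar\'e inequality of Lemma~\ref{qcp-P-lem:graph-coercivity}, applied to the mean zero representative, gives $\int m_b\chi_b^2\le C\mathcal D_b(\chi_b)$. This yields $\mathcal D_b(\chi_b)\le C'$. Gaussian coercivity then gives $V_b\le\mathcal D_b(\chi_b)/c_b<\infty$.

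\emph{Positivity.} Argue by contradiction. If $V_b=0$, then $\chi_b$ is $\mu$-a.e.\ constant, and because $\varphi>0$ it is a.e.\ constant. Then the left side of \eqref{qcp-P-eq:response} vanishes for every test field $\eta$, so $\int f_b\eta\,dp=0$ for all admissible $\eta$. The space of admissible $\eta$ contains every bounded smooth function, and in particular $\eta(p)=p\,\varphi(p)^{-1}\cdot$(cutoff) or simply compactly supported smooth $\eta$. Hence $f_b\equiv0$ a.e.

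It remains to exhibit a point where $f_b\ne0$; this is the main step. Write
\[
 f_b(p)=\varphi(p)\int\varphi(p-x)k_b(x)x^3\,dx .
\]
The function $x\mapsto k_b(x)x^3$ is odd and positive for $x>0$. Moreover, $\varphi(p-x)-\varphi(p+x)>0$ for $p,x>0$, because $\varphi$ is a centered Gaussian and $|p-x|<p+x$. Pairing $x$ with $-x$ therefore gives
\[
 f_b(p)=\varphi(p)\int_0^\infty k_b(x)x^3\bigl[\varphi(p-x)-\varphi(p+x)\bigr]dx>0
\]
for every $p>0$. The integral converges absolutely thanks to the Gaussian decay. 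So $f_b$ is not identically zero, which contradicts $f_b\equiv0$, and hence $V_b>0$.

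The step needing care is justifying that compactly supported smooth $\eta$ belong to the Dirichlet space. This follows because $\mathcal D_b(\eta)\le\|\eta'\|_\infty^2\iint g_b(p,q)(p-q)^2\,dp\,dq<\infty$.
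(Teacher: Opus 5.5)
Your proof is correct, but your positivity argument takes a different route from the paper's.

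\textbf{Positivity.} The paper tests the weak equation \eqref{qcp-P-eq:response} with the single unbounded field $\eta(p)=p$. It then uses the symmetrization identity $\int pf_b(p)\,dp=\tfrac12\iint g_b(p,q)(p-q)^4\,dp\,dq>0$, so a constant $\chi_b$, which kills the left side, is impossible. That identity needs only that $g_b$ is symmetric, nonnegative and not a.e.\ zero, so it does not depend on the shape of $\varphi$.

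You instead prove the pointwise sign structure
\[
 f_b(p)=\varphi(p)\int_0^\infty k_b(x)x^3\bigl[\varphi(p-x)-\varphi(p+x)\bigr]dx>0\quad(p>0),
\]
which uses the symmetric decreasing shape of the Gaussian, and then localize with bounded test functions. Your route yields more information: $f_b$ is odd with the sign of $p$, and since $pf_b(p)>0$ this already implies the paper's inequality. It also avoids checking that the unbounded field $p$ lies in the Dirichlet space. In fact every bounded $\eta$ is admissible, because $\mathcal D_b(\eta)\le4\int m_b\eta^2$. The paper's route is shorter and independent of the weight's shape.

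\textbf{Finiteness.} The paper simply notes that $\chi_b$ already lies in the Dirichlet space, so coercivity gives $V_b\le\mathcal D_b(\chi_b)/c_b<\infty$. Your energy estimate re-derives the Riesz bound from the sixth kernel moment and the degree Poincar\'e inequality. It is redundant but harmless.

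\textbf{Two small precision points.}
\begin{enumerate}
\item $k_b(x)x^3=\sin^2(bx)\,x/(\pi b)$ vanishes at $x=n\pi/b$. It is therefore nonnegative and positive off a discrete set rather than strictly positive, which still gives $f_b(p)>0$.
\item The degree Poincar\'e inequality controls the $m_b$-centered representative rather than the $\mu$-centered one. Using it is legitimate here, because $\int f_b=0$ makes $\int f_b\chi_b$ invariant under constant shifts, so you may center under $m_b$ in the Cauchy--Schwarz step.
\end{enumerate}
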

\begin{proof}
Coercivity gives finite variance. The test field $\eta(p)=p$ belongs to the
Dirichlet space, and symmetry gives
\begin{equation}
 \int p f_b(p)dp
 =\tfrac12\iint g_b(p,q)(p-q)^4dpdq>0.
 \label{qcp-P-eq:Jfour}
\end{equation}
A constant $\chi_b$ would contradict this identity in the weak equation.
Thus $V_b>0$, and its defining objects depend only on $b$.
\end{proof}

\subsection{The complete local decision class}
\label{qcp-P-app:physical-loss}
\subsubsection{The uniform coherent residual}
Use $\nu=\Unif[-L,L]$, $L=1/2$, $\psi_z=|cz\rangle$, $0<c\le a$,
$\kappa=\kappa_F=\sqrt{10/7}$ and $K=9\log2$.
On the support of $\Omega=\int\rho_z\nu(dz)$, set
$\rho_z=\proj{\psi_z}$, $\Omega_j=\int z^j\rho_z\nu(dz)$ and
$\mathsf S_\Omega(X)=\{\Omega,X\}/2$.
C1 gives $\mathsf S_\Omega(A)=\Omega_1$ and $\|A\|\le L$.
For the complete covariant experiment
\begin{equation}
 |\kappa v+dz\rangle\otimes|cz\rangle,\qquad v\in\mathbb R,
 \label{qcp-P-eq:phys-experiment}
\end{equation}
the losses are $2\min\{1,|x-v|/K\}$ and $(y-z)^2$, $|y|\le L$.
Take $d\to0$ at fixed $c$ and write
$g_\lambda(d;c)=g_\lambda(\kappa_F,d,c)$.
Note S gives $\mathcal A=\tr\Omega_2-\tr\Omega A^2$, and
Proposition~\ref{qcp-C-prop:relative-residual} gives
\begin{equation}
 \mathsf B=\Omega^{-1/2}A\Omega^{1/2},\qquad \|\mathsf B\|\le R_A.
 \label{qcp-P-eq:relative-A}
\end{equation}
Define
\begin{align}
 \mathcal R_3&=2\int\{(zI-A)^3,\rho_z\}\nu(dz),\label{qcp-P-eq:phys-R3}\\
 C_3&=\mathsf S_\Omega^{-1}(\mathcal R_3),\qquad
 J_3=\|[A,C_3]\|_\Omega^2.\label{qcp-P-eq:phys-C3}
\end{align}
The moment form of the cubic source is
\[
 \mathcal R_3=4\Omega_3-6\{A,\Omega_2\}
 +\{A^3,\Omega\}+3A\{A,\Omega\}A.
\]
Here $\Omega_1=\{\Omega,A\}/2$ combines the ordered cubic products.
With $Z_j=\Omega^{-1/2}\Omega_j\Omega^{-1/2}$ and $\|Z_j\|\le L^j$, this gives
\begin{equation}
 \Omega^{-1/2}\mathcal R_3\Omega^{-1/2}
 =4Z_3-6(\mathsf{B}Z_2+Z_2\mathsf{B}^\dagger)
 +3(\mathsf{B}^2\mathsf{B}^\dagger+\mathsf{B}(\mathsf{B}^\dagger)^2)+\mathsf{B}^3+(\mathsf{B}^\dagger)^3.
 \label{qcp-P-eq:relative-R3}
\end{equation}
The relative source norm and $\|C_3\|$ are at most
$M_3=4(L^3+3R_AL^2+2R_A^3)$, by positivity of the inverse Sylvester map.
Complete bounded Hermitian operators in
$\langle X,Y\rangle_\Omega=\tr\Omega(XY+YX)/2$ to obtain
$\mathfrak h_\Omega$. Spectral compressions are dense, so
\begin{equation}
 \|[A,X]\|_\Omega\le(L+R_A)\|X\|_\Omega
 \label{qcp-P-eq:bounded-commutator}
\end{equation}
extends the commutator continuously. For $b=\kappa K$, set
\[
 H_*=-\chi_b C_3/3,\qquad
 \mathcal E_A(H)=\int\|[A,H(p)-\bar H]\|_\Omega^2\mu(dp),
 \quad\bar H=\int H\,d\mu.
\]
The vector-valued Dirichlet form is denoted by $\mathcal D_{b,\Omega}$,
with fields identified modulo constants.

\subsubsection{A common space for varying measurement dilations}
\label{qcp-P-sec:weighted-dilation}
Normalize $m_b(p)dp$ to a probability measure $\widehat m_b$.
For fields $Z:\mathcal H\to\mathcal K$, write
\[
 \|Z\|_{\mu,\Omega}^2
 =\int\|Z(p)\|_\Omega^2\mu(dp),\qquad
 \mathcal D_{b,\Omega}(Z)
 =\iint g_b(p,q)\|Z(p)-Z(q)\|_\Omega^2dpdq.
\]
Complete these fields in $L^2(\widehat m_b)$ and the Dirichlet seminorm.
Scalar coercivity, applied coordinatewise, gives the Hilbert--Schmidt
bounds and the embedding into $L^2(\mu)$.

\begin{qcpPlemma}[Weighted dilation compactness]
\label{qcp-P-lem:weighted-dilation-compactness}
Let $\Omega$ be faithful and trace class, with $\tr\Omega=1$.
Let $A=A^\dagger$, $\|A\|\le L$, and suppose
$\Omega^{-1/2}A\Omega^{1/2}$ has norm at most $R$.
Let $t_k\ne0$ tend to zero and let $V_k(p):\mathcal H\to\mathcal K_k$
be measurable isometries with
$\mathcal D_{b,\Omega}(V_k)=O(t_k^2)$.
After padding the separable output spaces and applying constant output
unitaries, a subsequence and a common isometry $J:\mathcal H\to\mathcal K$ satisfy
\[
 X_k=(V_k-J)/t_k\ \rightharpoonup\ X=JiH+N,
 \qquad J^\dagger N=0,
\]
in the ambient weighted space, with $H$ in the Hermitian weighted
completion and
\begin{equation}
 \liminf_k t_k^{-2}\mathcal D_{b,\Omega}(V_k)
 \ge\mathcal D_{b,\Omega}(H)+\mathcal D_{b,\Omega}(N).
 \label{qcp-P-eq:dilation-dirichlet-liminf}
\end{equation}
If, in addition, $Y_k=Y_k^\dagger$, $\|Y_k\|\le L$, and
$\|Y_kV_k-V_kA\|_{\mu,\Omega}=O(|t_k|)$, then a further subsequence
admits a constant Hermitian weighted correction $Y_\infty$ satisfying
\begin{equation}
 \liminf_k\frac{\|Y_kV_k-V_kA\|_{\mu,\Omega}^2}{t_k^2}
 \ge\int\|Y_\infty+i[A,H(p)]\|_\Omega^2\mu(dp)
 \ge\mathcal E_A(H).
 \label{qcp-P-eq:dilation-amplitude-liminf}
\end{equation}
Both lower bounds hold jointly for every nonnegative weighted sum.
\end{qcpPlemma}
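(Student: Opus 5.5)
The plan is a weak-compactness argument in the weighted Hilbert space, followed by lower semicontinuity of the Dirichlet and amplitude forms.

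\emph{Common embedding.} Pad each $\mathcal K_k$ isometrically into one separable space $\mathcal K$. The graph coercivity of Lemma~\ref{qcp-P-lem:graph-coercivity}, applied coordinatewise to the degree measure $\widehat m_b$, gives
\[
 \|V_k-\overline{V}_k\|_{L^2(\widehat m_b;\Omega)}^2\le C_b\,\mathcal D_{b,\Omega}(V_k)=O(t_k^2),
\]
where $\overline V_k$ is the $\widehat m_b$-average. Each $V_k(p)$ is an isometry, so $\overline V_k$ is a contraction at distance $O(|t_k|)$ in $\Omega$-weighted norm from an isometry field. I would replace $\overline V_k$ by the isometric polar part of $\overline V_k\Omega^{1/2}$, made isometric on $\mathcal H$ by faithfulness of $\Omega$. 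Composing with a constant output unitary then aligns it with one fixed isometry $J$. Since $\overline V_k^\dagger\overline V_k=I-O(t_k)$ in the weighted sense, this changes $V_k$ by $O(t_k)$ only, so $X_k=(V_k-J)/t_k$ is bounded in the weighted space. Weak compactness gives a subsequence with $X_k\rightharpoonup X$.

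\emph{Structure of the limit.} Decompose $X=JJ^\dagger X+N$ with $J^\dagger N=0$. Isometry gives
\[
 0=V_k^\dagger V_k-I=t_k(J^\dagger X_k+X_k^\dagger J)+t_k^2X_k^\dagger X_k .
\]
Divide by $t_k$ and pass to the weak limit, tested against $\Omega$-weighted finite-rank operators. This shows that $J^\dagger X$ is skew-Hermitian in the weighted completion, so $J^\dagger X=iH$ with $H$ in $\mathfrak h_\Omega$. Since $\mathcal D_{b,\Omega}$ is a continuous convex quadratic form, it is weakly lower semicontinuous, and $\mathcal D_{b,\Omega}(X_k)=t_k^{-2}\mathcal D_{b,\Omega}(V_k)$ because $J$ is constant. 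The pointwise orthogonality of $J(\cdot)$ and $N(\cdot)$ splits $\mathcal D_{b,\Omega}(X)=\mathcal D_{b,\Omega}(H)+\mathcal D_{b,\Omega}(N)$, which is \eqref{qcp-P-eq:dilation-dirichlet-liminf}.

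\emph{Amplitude part.} Write
\[
 \frac{Y_kV_k-V_kA}{t_k}=\frac{Y_kJ-JA}{t_k}+Y_kX_k-X_kA .
\]
Both the left side and the last two terms are bounded in $\|\cdot\|_{\mu,\Omega}$, using $\|Y_k\|\le L$ and the bound $\|\Omega^{-1/2}A\Omega^{1/2}\|\le R$ to control $X_kA$ in the weighted norm. Hence the constant fields $D_k=(Y_kJ-JA)/t_k$ are bounded. Compress by $J^\dagger$ and extract further weak limits: $J^\dagger Y_kJ\to$ a limit, and $J^\dagger D_k\to Y_\infty$, a constant Hermitian weighted correction. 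The term $J^\dagger Y_kX_k$ must converge weakly to $A\,iH$. For this I need $Y_kJ\approx JA$ strongly, which holds because $\|Y_kJ-JA\|_\Omega=O(t_k)$. The product of a strongly convergent bounded sequence with a weakly convergent one converges weakly, and the $\Omega$-weights are handled by the relative bound $R$. The compressed limit is therefore $Y_\infty+i(AH-HA)=Y_\infty+i[A,H]$, and the first inequality in \eqref{qcp-P-eq:dilation-amplitude-liminf} follows from lower semicontinuity after dropping the $N$-component. The second inequality holds because the constant $Y_\infty$ can absorb only constants: minimizing $\int\|C+i[A,H(p)]\|_\Omega^2\mu(dp)$ over constants $C$ yields the centered energy $\mathcal E_A(H)$, with the commutator extended by \eqref{qcp-P-eq:bounded-commutator}. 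Joint bounds for nonnegative weighted sums follow by taking one common subsequence and using superadditivity of $\liminf$.

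\emph{Main obstacle.} The hardest step is the alignment in the first paragraph: producing the common isometry $J$ with $\|V_k-J\|=O(t_k)$ in the weighted norm, not merely $o(1)$, while output spaces vary. I would handle it with the polar-decomposition estimate $\|\,|B|-I\|\le\|B^\dagger B-I\|$ applied to $B=\overline V_k$, in $\Omega$-weighted Hilbert--Schmidt form. The second delicate point is justifying the weak-times-strong limit for $J^\dagger Y_kX_k$, where the only control on unbounded weighted directions is the relative bound $R$.
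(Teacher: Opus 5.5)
Your proposal is correct and follows essentially the paper's proof: $\widehat m_b$-averaging with graph coercivity, and alignment through the polar part of $\overline V_k$, controlled by $\int\|V_k-\overline V_k\|_\Omega^2\,d\widehat m_b=\tr\Omega(I-\overline V_k^\dagger\overline V_k)=O(t_k^2)$. It continues with padding to a common space, the relation $J^\dagger X_k+X_k^\dagger J=-t_kX_k^\dagger X_k$ forcing $J^\dagger X=iH$ skew-Hermitian, and weak lower semicontinuity of both $\mathcal D_{b,\Omega}$ and the $J^\dagger$-compressed amplitude defect; your treatment of $J^\dagger Y_kX_k$, moving the strongly convergent $Y_kJ\to JA$ onto the test fields, is the same device as the paper's $F_k,\mathsf E_k$ splitting. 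The one slip is in your first paragraph: the aligning isometry should be the polar part of $\overline V_k$ itself (as in your final paragraph, not of $\overline V_k\Omega^{1/2}$), extended isometrically on $\ker\overline V_k$ into the padded complement, since faithfulness of $\Omega$ alone does not make it isometric.
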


\begin{proof}
Let $\overline V_k=\int V_k\,d\widehat m_b$. Since each $V_k$ is an
isometry, $0\preceq|\overline V_k|\preceq I$, and
\[
 \int\|V_k-\overline V_k\|_\Omega^2d\widehat m_b
 =\tr\Omega(I-\overline V_k^\dagger\overline V_k)=O(t_k^2).
\]
The estimate follows from graph coercivity. Extend the polar part of
$\overline V_k$ to an isometry $J_k$; then
\[
 \|J_k-\overline V_k\|_\Omega^2
 =\tr\Omega(I-|\overline V_k|)^2
 \le\tr\Omega(I-|\overline V_k|^2)=O(t_k^2).
\]
After padding output complements to countably infinite dimension,
constant unitaries identify $J_k$ with $J$ and conjugate $Y_k$.
The resulting bounded fields $X_k$ have a weak limit $JiH+N$,
with $J^\dagger N=0$.

To check that $H$ is Hermitian, use $V_k^\dagger V_k=I$:
\[
 J^\dagger X_k+X_k^\dagger J=-t_kX_k^\dagger X_k.
\]
For fixed eigenvectors of $\Omega$, the columns are bounded in
$L^2(\widehat m_b)$ and the right side tends to zero in $L^1$.
Testing against bounded scalar functions makes $J^\dagger X=iH$
skew-adjoint. Weak convergence of Dirichlet differences and orthogonality
of the output components give \eqref{qcp-P-eq:dilation-dirichlet-liminf}.

Right multiplication by $A$ is continuous in the weighted norm,
because $A\Omega^{1/2}=\Omega^{1/2}\mathsf{B}$ with $\|\mathsf{B}\|\le R$.
The triangle inequality, $\|Y_k\|\le L$, and the $L^2(\mu)$ bound
on $X_k$ therefore give
\[
 \|Y_kJ-JA\|_\Omega=O(|t_k|).
\]
Hence $Z_k=(J^\dagger Y_kJ-A)/t_k$ has a Hermitian weak limit
$Y_\infty\in\mathfrak h_\Omega$. Uniform operator bounds extend
$Y_kJ-JA\to0$ from the dense range of $\Omega^{1/2}$ to the whole space.

Set $F_k=J^\dagger Y_kJ-A$ and $\mathsf{E}_k=(I-JJ^\dagger)Y_kJ$.
Both operators converge strongly to zero and are uniformly bounded, with $F_k$
self-adjoint. Projection of the scaled amplitude defect onto
$J\mathcal H$ gives
\[
 Z_k+A J^\dagger X_k-J^\dagger X_k A
       +F_kJ^\dagger X_k+\mathsf{E}_k^\dagger(I-JJ^\dagger)X_k.
\]
The last two terms vanish weakly by moving $F_k,\mathsf E_k$ onto fixed
left vectors in finite rank tests, then using strong convergence, uniform
$L^2$ bounds and density. Projection therefore gives $Y_\infty+i[A,H]$
and the first amplitude inequality. The minimizing constant is
$-i[A,\int H\,d\mu]$ by \eqref{qcp-P-eq:bounded-commutator}.
Lower semicontinuity in the direct sum gives both bounds jointly.
\end{proof}

\begin{qcpPlemma}[Attainable location correction]
\label{qcp-P-lem:dense-dilation-fields}
Bounded Hermitian fields $H_n=-\chi_nC_3/3$ approximate
$H_*=-\chi_bC_3/3$ in the Dirichlet norm. The complete
readouts $W=e^{id^3H_n}e^{2idpA}$ attain $\Phi(H_n)$ at order $d^6$.
\end{qcpPlemma}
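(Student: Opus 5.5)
We plan to prove the two assertions separately: first the Dirichlet approximation of $H_*$, which is a scalar statement tensored with the fixed operator $C_3$, and then the expansion of the two excesses of the explicit readouts $W=e^{id^3H_n}e^{2idpA}$. We read $\Phi$ as the weighted second-order functional combining $\lambda$ times the location terms of order $d^6$ and $(1-\lambda)\mathcal E_A$. The meaning of ``attain $\Phi(H_n)$ at order $d^6$'' is then
\[
 \lambda e_\loc(W)+(1-\lambda)e_\amp(W)=d^6\Phi(H_n)+o_n(d^6),\qquad d\to0 ,
\]
for each fixed $n$.

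\emph{Approximation.} Take the mean zero weak solution $\chi_b$ of \eqref{qcp-P-eq:response} and form the clipped representatives $\chi_n=\max\{-n,\min\{n,\chi_b\}\}$. Clipping is a normal contraction, so $(\chi_n(p)-\chi_n(q))^2\le(\chi_b(p)-\chi_b(q))^2$ pointwise. Dominated convergence against $g_b$ then gives $\mathcal D_b(\chi_n-\chi_b)\to0$, and Lemma~\ref{qcp-P-lem:graph-coercivity} controls the change of mean. Since $C_3$ is a fixed bounded Hermitian operator with $\|C_3\|\le M_3$, we obtain
\[
 \mathcal D_{b,\Omega}(H_n-H_*)=\tfrac19\|C_3\|_\Omega^2\,\mathcal D_b(\chi_n-\chi_b)\to0 .
\]
The commutator bound \eqref{qcp-P-eq:bounded-commutator} shows that $\mathcal E_A(H_n)\to\mathcal E_A(H_*)$, because $\mathcal E_A$ is dominated by the $L^2(\mu)$ variance and hence by the Dirichlet form. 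The linear source pairing $\int f_b\chi$ is continuous in the Dirichlet norm by the sixth-moment bound. Together these give $\Phi(H_n)\to\Phi(H_*)$.

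\emph{Completeness.} For fixed $n$ and $d$, $p\mapsto W(p)$ is a measurable unitary field, since $\chi_n$ is bounded and $A$ is bounded. By the covariant isometry-field representation of Note~S, it defines a complete covariant paired POVM. We take the amplitude report $Y=A$ on the constant amplitude factor. The amplitude excess is given by \eqref{qcp-S-eq:cov-amp-excess}:
\[
 \int\varphi^2\|AW-WA\|_\Omega^2\,dp=\int\varphi^2\|[A,e^{id^3H_n(p)}]\|_\Omega^2\,dp .
\]
Expanding the exponential to first order gives $d^6\int\|[A,H_n]\|_\Omega^2\,d\mu+O(d^9\|H_n\|^3)$. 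Allowing a constant Hermitian recentering of $Y$, as in \eqref{qcp-P-eq:dilation-amplitude-liminf}, replaces this by $d^6\mathcal E_A(H_n)$. Completeness of the recentred report is preserved after clipping to $[-L,L]$, which only lowers the loss.

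\emph{Location expansion (main obstacle).} We insert $W$ into \eqref{qcp-S-eq:cov-loc-functional}. The factor $e^{2idpA}$ cancels the linear phase $e^{-2idz(p-q)}$ in $\mathsf P_d(p-q)$ to first order through the Personick equation. Expanding the remaining cumulant in $d(p-q)$, the quadratic order reproduces the reference term. The cubic order produces $d^3(p-q)^3\mathcal R_3$, whose pairing with $\mathsf S_\Omega^{-1}$ yields $C_3$ and the source $f_b$. The correction $e^{id^3H_n}$ enters at $d^6$ in two ways: through $\mathcal D_{b,\Omega}(H_n)$ and through the cross term $-2\int f_b\,\tr\Omega(C_3H_n)$. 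The difficulty is uniform control of the remainders. The terms of order $d^4$ and $d^5$ do not depend on $H_n$, so they coincide with those of the reference readout $e^{2idpA}$. We must still show that they cancel against the expansion of $r_\loc^\star$ to $o(d^6)$. This cancellation is exactly how $\Phi$ is normalized, and the upper rule only needs to be compared with the same reference. For the Taylor remainders, we would bound every term by moments $\iint g_b|p-q|^k$ up to $k=9$, which are finite by Gaussian decay of $\varphi$ together with $k_b\le b/\pi$ and the $\operatorname{sinc}^2$ tail. Boundedness of $H_n$, of $A$, of $C_3$, and of the relative norm $R_A$ makes the $O(d^7)$ errors uniform in $p,q$ for fixed $n$. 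If the cumulant bookkeeping proves awkward, we would first try computing the $d$-derivatives of the integrand at $d=0$ through order six, using the moment form of $\mathcal R_3$ and \eqref{qcp-P-eq:relative-R3}.
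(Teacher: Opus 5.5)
\textbf{The approximation step.} This part is fine, and it differs only mildly from the paper. Truncation works because $x\mapsto x-\max\{-n,\min\{n,x\}\}$ is $1$-Lipschitz. Hence $|(\chi_b-\chi_n)(p)-(\chi_b-\chi_n)(q)|\le|\chi_b(p)-\chi_b(q)|$, and dominated convergence applies. The paper instead takes bounded smooth approximants in $L^2(m_b\,dp)$ and uses $\mathcal D_b(h)\le4\int m_bh^2$. Smoothness is never needed later; only $\|H_n\|_\infty<\infty$ enters.

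\textbf{The gap: you are proving the wrong statement.} In the paper, $\Phi$ is the pure location functional of \eqref{qcp-P-eq:physical-Phi}, $\Phi(H)=\mathcal D_{b,\Omega}(H)+\frac43\int f_b(p)\tr[H(p)\mathcal R_3]\,dp$, and ``attain'' means
\[
 r_\loc\bigl(e^{id^3H_n}W_0\bigr)-r_0(d)=d^6\Phi(H_n)+o(d^6),\qquad r_0(d)=r_\loc(W_0).
\]
Combined with $\Phi(H_n)\to\Phi(H_*)=-\mathcal D_{b,\Omega}(H_*)$ and a diagonal choice, this gives the benchmark \eqref{qcp-P-eq:physical-three-risk-limit}. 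Your target $\lambda e_\loc+(1-\lambda)e_\amp=d^6\Phi(H_n)+o(d^6)$ is a different statement. Its location part cannot be reached as you propose, because $e_\loc$ is measured from $r_\loc^\star$. The difference $r_0(d)-r_\loc^\star(d)$ is itself of exact order $d^6$: it equals $d^6\mathcal D_{b,\Omega}(H_*)+o(d^6)$ whenever $C_3\neq0$. That quantity is precisely what this lemma (upper half) and Lemma~\ref{qcp-P-lem:weighted-dilation-compactness} (lower half) determine. So your step of making the reference terms ``cancel against the expansion of $r_\loc^\star$ to $o(d^6)$'' is circular, and as stated it is false. The amplitude computation is also not part of this lemma; the amplitude side of the gap bound comes from the compactness lemma.

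\textbf{The location computation, corrected.} Once the target is the difference from $r_0(d)$, the bookkeeping you flag as the main obstacle disappears, but your cross term is wrong. Insert $V=e^{id^3H_n}$ into the exact identity \eqref{qcp-P-eq:physical-exact-difference}.
\begin{itemize}
\item The first term is quadratic in $V(p)-V(q)=id^3(H_n(p)-H_n(q))+O(d^6\|H_n\|_\infty^2)$.
\item The second term is linear in it, times $T_d$. The expansion \eqref{qcp-P-eq:uniform-relative-T} of $T_d$ starts at $d^3$, because the Personick equation removes the first two anti-Hermitian orders. So no $d^4$ or $d^5$ terms arise at all.
\item On $|p|,|q|\le P(d)$, \eqref{qcp-P-eq:uniform-relative-S} turns the first term into $d^6\mathcal D_{b,\Omega}(H_n)(1+o(1))$.
\item By antisymmetry of $g_b(p,q)(p-q)^3$, the second term becomes $\frac{2d^6}{3}\iint g_b(p-q)^3\tr[(H_n(p)-H_n(q))\mathcal R_3]\,dp\,dq=\frac{4d^6}{3}\int f_b\tr[H_n\mathcal R_3]\,dp$.
\item Gaussian tails handle the exterior of the square. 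There the relative factors, such as $e^{2|d||p|R_A}$, are not uniformly bounded, so your ``uniform in $p,q$'' remainder claim needs this restriction.
\end{itemize}
Since $\tr[H\mathcal R_3]=\operatorname{Re}\tr(\Omega C_3H)$, your term $-2\int f_b\tr\Omega(C_3H_n)$ has the wrong sign and the wrong constant. With it, $\Phi(H_*)$ would be positive. The completion of the square $\Phi(H)=\mathcal D_{b,\Omega}(H-H_*)-\mathcal D_{b,\Omega}(H_*)$ would then fail, and $H_*$ would not be the benchmark field.
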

\begin{proof}
Coercivity places $\chi_b$ in $L^2(m_bdp)$ modulo constants.
Bounded smooth approximations converge in energy since
$\mathcal D_b(h)\le4\int m_bh^2$. Bounded $C_3$ gives Hermitian $H_n$
and continuity of the source functional. Gaussian moment bounds justify
Taylor expansion for each $H_n$. Taking the approximation before
$d\to0$ gives the benchmark upper limit by a diagonal choice.
\end{proof}

\begin{qcpPlemma}[Sixth order local lower bound]
\label{qcp-P-thm:physical-principle}
For each fixed $0<c\le a$ and $0<\lambda<1$, the gap over all
complete covariant paired measurements satisfies
\begin{equation}
 \liminf_{d\to0}\frac{g_\lambda(d;c)}{d^6}
 \ge q_\lambda(c):=
 \frac{\lambda(1-\lambda)c_bV_bJ_3(c)}
 {9[\lambda c_b+(1-\lambda)(L+R_A)^2]}.
 \label{qcp-P-eq:physical-coefficient}
\end{equation}
Here $b=\kappa_FK$, and the positive constants $c_b$ and
$V_b=\operatorname{Var}_\mu\chi_b>0$ depend only on $b$.
Whenever $J_3(c)>0$, the sixth order lower coefficient is strictly positive.
\end{qcpPlemma}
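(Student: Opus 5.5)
We argue by contradiction along a sequence $d_k\to0$ with $g_\lambda(d_k;c)\le Md_k^6$, extracting limits through Lemma~\ref{qcp-P-lem:weighted-dilation-compactness}. Fix a near-optimal complete covariant paired rule at each $d_k$. By the covariant representation of Note S, it is described by a measurable isometry field $W_k(p)$ and a self-adjoint $Y_k$ with $\|Y_k\|\le L$. Both excesses are then $O(d_k^6)$. We factor out the known quadratic-compatible readout, writing $W_k(p)=V_k(p)e^{2id_kpA}$, and take $t_k=d_k^3$.

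The first step is to expand the location functional \eqref{qcp-S-eq:cov-loc-functional} in $d$. We conjugate $\mathsf P_d(p-q)$ by $e^{2idpA}$ and expand $e^{-2idz\xi}$ against $e^{2id\xi A}$. The Sylvester equation cancels the linear term, and the quadratic term reproduces the $d^2\mathcal A$ contribution common to the optimum. The first noncancelling term is cubic, with source $\int(zI-A)^3\rho_z\nu(dz)$ symmetrized, i.e.\ $\mathcal R_3$ and hence $C_3$, multiplied by $(p-q)^3$. The resulting expansion is
\[
 e_\loc\ge \mathcal D_{b,\Omega}(V_k)-\tfrac{2}{3}d_k^3\operatorname{Re}\iint g_b(p,q)(p-q)^3\,\langle V_k(p)-V_k(q),\,\cdot\,C_3\rangle_\Omega\,dp\,dq+O(d_k^6)\text{-small terms},
\]
where the optimum itself is found by completing the square against $H_*=-\chi_bC_3/3$ through the weak response equation \eqref{qcp-P-eq:response}. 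Lemma~\ref{qcp-P-lem:dense-dilation-fields} shows the infimum is attained at order $d^6$. Together these give $d^{-6}e_\loc\ge \mathcal D_{b,\Omega}(H-H_*)+\mathcal D_{b,\Omega}(N)-o(1)$ in the limit.

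The second step is the amplitude side. The defect is $\|Y_kW_k-W_kA\|=\|Y_kV_k-V_kA\|$, since $e^{2idpA}$ commutes with $A$. Its square is $O(d_k^6)$, so Lemma~\ref{qcp-P-lem:weighted-dilation-compactness}, applied jointly, gives $\liminf d_k^{-6}e_\amp\ge\mathcal E_A(H)$.

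The third step combines the two bounds. We must minimize $\lambda\mathcal D_{b,\Omega}(H-H_*)+(1-\lambda)\mathcal E_A(H)$ over Hermitian fields $H$. Coercivity, applied coordinatewise, gives $\mathcal D_{b,\Omega}(X)\ge c_b\int\|X-\bar X\|_\Omega^2d\mu$. The commutator bound \eqref{qcp-P-eq:bounded-commutator} gives $\mathcal E_A(H)\ge(L+R_A)^{-2}\int\|[A,H-\bar H]\|_\Omega^2d\mu$. Writing $H=H_*+X$, the triangle inequality gives $\|[A,H_*-\bar H_*]\|\le\|[A,H-\bar H]\|+(L+R_A)\|X-\bar X\|$. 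Its left side, squared and integrated, equals $V_bJ_3/9$, since $H_*$ is the scalar field $\chi_b$ times the constant $C_3$. Weighted Cauchy--Schwarz, as in Lemma~\ref{qcp-C-thm:core-rigidity}, then yields
\[
 \frac{V_bJ_3}{9}\le\Bigl[\tfrac{(L+R_A)^2}{\lambda c_b}+\tfrac{1}{1-\lambda}\Bigr]\bigl(\lambda\mathcal D+(1-\lambda)\mathcal E_A\bigr).
\]
Rearranging gives exactly $q_\lambda(c)$ after subtracting the marginal optima. The location optimum is $O(d^6)$-precise at $H_*$, and the amplitude optimum has zero excess.

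The main obstacle is the first step. We need a rigorous cubic expansion of the clipped-absolute-loss functional that is uniform over all near-optimal isometry fields, rather than only smooth ones. That means controlling the remainder $O(d^4)\cdot\|V_k-J\|$ terms by Gaussian moment bounds on $g_b$ (finite sixth and eighth kernel moments, $\mathsf P_d$ trace-norm Taylor bounds with $\|z\|\le L$). It also means checking that the cross term between $\mathcal D_{b,\Omega}(V_k)$ and the cubic source passes to the weak limit. We would handle the latter by testing the source against the weakly convergent $X_k$, using that $(p-q)^3g_b$ gives a bounded functional on the Dirichlet space.
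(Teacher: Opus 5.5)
Your outline follows the paper's route. You factor $W=VW_0$ with $W_0=e^{2idpA}$, split the location difference into a Dirichlet term and a cubic source $\tfrac{i}{3}d^3(p-q)^3\mathcal R_3$, pass to weak limits with $t=d^3$, complete the square at $H_*$, and take the benchmark from Lemma~\ref{qcp-P-lem:dense-dilation-fields}. However, the step you defer as the main obstacle is where the proof's essential device lives, and the tool you propose for it would not work. A near-optimal field is small only in the $\Omega$-weighted sense, $\mathcal D_{b,\Omega}(V)=O(d^6)$; neither $V(p)-V(q)$ nor $V(q)-J$ is small in operator norm. Trace-norm Taylor bounds on $\mathsf P_d$ therefore bound the remainder $\tr[V(q)^\dagger(V(p)-V(q))E]$ only by $\|V(p)-V(q)\|\,\|E\|_1=O(d^4)$. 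Replacing the Hermitian part $S_d$ by $\Omega$ in the quadratic form costs $O(d)$, because its first-order term $id(p+q)[A,\Omega]$ does not vanish. Both errors are far above $d^6$. What is needed is control relative to $\Omega$. The operator $Z_d=\Omega^{-1/2}\mathsf P_d\Omega^{-1/2}$ is bounded, with $d$-derivatives at most $(2L|p-q|)^j$. Also $A\Omega^{1/2}=\Omega^{1/2}\mathsf B$ with $\|\mathsf B\|\le R_A$ from Proposition~\ref{qcp-C-prop:relative-residual}, so conjugation by $W_0$ becomes the factors $e^{2idp\mathsf B}$ and $e^{-2idq\mathsf B^\dagger}$. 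Since $\mathsf B$ is not self-adjoint, these are close to $I$ only on a cutoff square $|p|,|q|\le P(d)=\sqrt{40\log(1/|d|)}$. On that square, $(1-\eta_d)\Omega\preceq S_d\preceq(1+\eta_d)\Omega$ with $\eta_d=O(|d|P(d))$, and $T_d=\tfrac{i}{3}d^3(p-q)^3\mathcal R_3+\Omega^{1/2}E_d\Omega^{1/2}$ with $\|E_d\|=O(d^4(1+|p|+|q|)^4)$. Gaussian tails make the exterior $o(d^{12})$. Only in this form is the remainder $O(d^4)\,\mathcal D_{b,\Omega}(V)^{1/2}=o(d^6)$, and only with the relative bound on $\mathcal R_3$ is the limiting source functional continuous on the weighted space. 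You must also replace $V(q)^\dagger$ by $J^\dagger$ in the cross term before invoking weak convergence of $X_k$, since a product of two $k$-dependent factors does not pass to a weak limit. The error is a bounded relative factor times $\|V(q)-J\|_\Omega\|V(p)-V(q)\|_\Omega$, which is $o(d^6)$ after the $d^3$ prefactor.

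The same point invalidates your amplitude identity: $\|(YV-VA)W_0\|_\Omega\neq\|YV-VA\|_\Omega$ in general, because $W_0$ does not commute with $\Omega$. The equality holds in operator norm, but Lemma~\ref{qcp-P-lem:weighted-dilation-compactness} needs the weighted norm. The paper writes $(YW-WA)\Omega^{1/2}=(YV-VA)\Omega^{1/2}e^{2idp\mathsf B}$, uses that this factor and its inverse are $I+o(1)$ on the cutoff square, and controls the exterior by $\|YV-VA\|\le 2L$ and Gaussian decay. Your final optimization, by contrast, is a correct and slightly more elementary alternative. The paper evaluates $\inf_H\{\lambda\mathcal D_{b,\Omega}(H-H_*)+(1-\lambda)\mathcal E_A(H)\}$ through the parallel sum $\tfrac{V_b}{9}\langle C_3,\lambda c_b(1-\lambda)\mathsf T_A(\lambda c_b+(1-\lambda)\mathsf T_A)^{-1}C_3\rangle_\Omega$ and then uses $\mathsf T_A\le(L+R_A)^2$. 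Your triangle inequality and weighted Cauchy--Schwarz reach $q_\lambda(c)$ directly. Delete the line $\mathcal E_A(H)\ge(L+R_A)^{-2}\int\|[A,H-\bar H]\|_\Omega^2d\mu$: by definition $\mathcal E_A(H)$ equals that integral, and the equality is what your Cauchy--Schwarz step actually uses.
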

\begin{proof}
Use \eqref{qcp-S-eq:cov-loc-functional}--\eqref{qcp-S-eq:cov-amp-excess}
and complete failure reports at mesh $\eta(d)=o(|d|^6)$.
Write $\mathsf P_d(\xi)=\int e^{-2idz\xi}\rho_z\nu(dz)$ and
$e_\amp=r_\amp-\mathcal A$.
The reference $W_0(p)=e^{2idpA}$, $Y=A$, has zero amplitude excess
and location risk $r_0(d)$.
For $W=VW_0$, set $G_d=W_0(p)\mathsf{P}_d(p-q)W_0(q)^\dagger$,
$S_d=(G_d+G_d^\dagger)/2$, $T_d=(G_d-G_d^\dagger)/2$.
The difference in location risk is exactly
\begin{align}
 r_\loc(VW_0)-r_0(d)
 &=\iint g_b\tr[(V(p)-V(q))S_d(V(p)-V(q))^\dagger]dpdq\notag\\
 &\quad-2\operatorname{Re}\iint g_b
 \tr[V(q)^\dagger(V(p)-V(q))T_d]dpdq.
 \label{qcp-P-eq:physical-exact-difference}
\end{align}

The unital positive moment map and
$A\Omega^{1/2}=\Omega^{1/2}\mathsf B$ give relative kernel
$e^{2idp\mathsf B}Z_d(p-q)e^{-2idq\mathsf B^\dagger}$, where
$Z_d=\Omega^{-1/2}\mathsf P_d\Omega^{-1/2}$ has derivatives bounded by
$(2L|p-q|)^j$. On $|p|,|q|\le P(d)=\sqrt{40\log(1/|d|)}$, Taylor expansion gives
\begin{align}
 (1-\eta_d)\Omega&\preceq S_d\preceq(1+\eta_d)\Omega,
 &\eta_d&=O(|d|P(d))=o(1),\label{qcp-P-eq:uniform-relative-S}\\
 T_d&=\frac{id^3(p-q)^3}{3}\mathcal R_3
                  +\Omega^{1/2}E_d\Omega^{1/2},
 &\|E_d\|&=O(d^4(1+|p|+|q|)^4).
 \label{qcp-P-eq:uniform-relative-T}
\end{align}
The Personick equation cancels the first two anti-Hermitian orders.
The anti-Hermitian part of
$e^{itA}\int e^{-itz}\rho_z\nu(dz)$ is
$it^3\mathcal R_3/24+O(t^4)$.
Conjugating by $e^{2idqA}$ and using $t=2d(p-q)$ gives the stated
$id^3(p-q)^3\mathcal R_3/3$ term. Spectral tests and the bounded relative
series justify the expansion. Gaussian tails, including all relative
factors, contribute $o(d^{12})$. Cauchy--Schwarz then gives
\[
 r_\loc(VW_0)-r_0(d)\ge(1-o(1))\mathcal D_{b,\Omega}(V)
 -C|d|^3\mathcal D_{b,\Omega}(V)^{1/2}-o(d^{12}).
\]
For a joint rule within $o(d^6)$ of the optimum, comparison with the
reference and positivity of both weights give
$\mathcal D_{b,\Omega}(V)=O(d^6)$ and $e_\amp=O(d^6)$.

Apply Lemma~\ref{qcp-P-lem:weighted-dilation-compactness} with $t=d^3$.
For the amplitude defect, set $D(p)=YV(p)-V(p)A$. Since $W_0$ commutes with $A$,
\[
 (YW-WA)\Omega^{1/2}
 =D\Omega^{1/2}e^{2idp\mathsf B}.
\]
The last factor and its inverse are $I+o(1)$ on the square;
$\|D\|\le2L$ controls the Gaussian exterior.
In the source term, replacing $V(q)^\dagger$ by the aligned $J^\dagger$
has error, for bounded relative $Z$, at most
\[
 \|Z\|\|V(q)-J\|_\Omega
             \|V(p)-V(q)\|_\Omega.
\]
Its integral on the square is $O(d^6P(d)^3)$; the cubic prefactor
therefore gives $o(d^6)$, as does the Gaussian exterior.
The relative bound and finite Gaussian moments make the source continuous.
Joint lower semicontinuity gives
\begin{equation}
 \liminf\frac{r_\loc(VW_0)-r_0(d)}{d^6}\ge
 \Phi(H)=\mathcal D_{b,\Omega}(H)+\frac43\int f_b(p)\tr[H(p)\mathcal R_3]dp,
 \quad\liminf\frac{e_\amp}{d^6}\ge\mathcal E_A(H).
 \label{qcp-P-eq:physical-Phi}
\end{equation}
Completing the square with $\mathsf S_\Omega(C_3)=\mathcal R_3$ gives
$\Phi(H)=\mathcal D_{b,\Omega}(H-H_*)-\mathcal D_{b,\Omega}(H_*)$.
Lemma~\ref{qcp-P-lem:dense-dilation-fields} supplies bounded
approximations to $H_*$ and complete readouts $W=e^{id^3H}W_0$, giving
\begin{equation}
 \limsup_{d\to0}\frac{r_\loc^\star(d)-r_0(d)}{d^6}
 \le-\mathcal D_{b,\Omega}(H_*).
 \label{qcp-P-eq:physical-three-risk-limit}
\end{equation}
Subtract the location benchmark and exact amplitude optimum from
the joint bound in \eqref{qcp-P-eq:physical-Phi} to obtain
\begin{equation}
 \liminf_{d\to0}\frac{g_\lambda(d;c)}{d^6}
 \ge\inf_H\{\lambda\mathcal D_{b,\Omega}(H-H_*)
                    +(1-\lambda)\mathcal E_A(H)\}.
 \label{qcp-P-eq:physical-gap}
\end{equation}
Let $\mathsf T_A$ be the positive bounded operator associated with the commutator
form, so that $\|\mathsf T_A\|\le(L+R_A)^2$. Graph coercivity gives
\[
 \inf_H\{\lambda\mathcal D_{b,\Omega}(H-H_*)+(1-\lambda)\mathcal E_A(H)\}
 \ge\frac{\operatorname{Var}_\mu\chi_b}{9}
 \left\langle C_3,
 \frac{\lambda c_b(1-\lambda)\mathsf T_A}
 {\lambda c_b+(1-\lambda)\mathsf T_A}C_3\right\rangle_\Omega.
\]
Since $\operatorname{Var}_\mu\chi_b=V_b$, bounded functional calculus gives
\eqref{qcp-P-eq:physical-coefficient}.
\end{proof}

\subsubsection{Uniform prior relative bounds and finite section return}
\begin{qcpPlemma}[Relative bound for the cubic observable]
\label{qcp-P-thm:coherent-class}
Uniformly for $0\le c\le a$, the full residual and every unnormalized
finite Fock section satisfy C1's relative $A$ bound and
\begin{equation}
 \|\Omega^{-1/2}C_3\Omega^{1/2}\|
 \le R_C:=8(S_3+3R_AS_2+2R_A^3).
 \label{qcp-P-eq:uniform-RC}
\end{equation}
The constants $S_j,R_A$ are those of
\eqref{qcp-C-eq:relative-constants}.
\end{qcpPlemma}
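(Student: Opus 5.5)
The bound will be obtained in relative form. Write every operator as conjugated by $\Omega^{1/2}$ and bound the resulting expression with Schur norms, exactly as in the proof of Proposition~\ref{qcp-C-prop:relative-residual}. By definition $C_3=\mathsf S_\Omega^{-1}(\mathcal R_3)$, so $\tfrac12\{\Omega,C_3\}=\mathcal R_3$. Put $\mathsf C=\Omega^{-1/2}C_3\Omega^{1/2}$ and $\mathsf R=\Omega^{-1/2}\mathcal R_3\Omega^{-1/2}$, and diagonalize $\Omega e_i=\lambda_ie_i$. In this eigenbasis the Sylvester equation becomes
\[
 \mathsf C_{ij}=\frac{2\lambda_j}{\lambda_i+\lambda_j}\mathsf R_{ij},
\]
which is the same multiplier used for $\mathsf B$ in Proposition~\ref{qcp-C-prop:relative-residual}. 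The first step is to check that this multiplier is not bounded on Schur norms by itself, but that in the paper's argument it acts on entries already controlled by a Schur bound, and it costs at most a factor $2$ there. Concretely, that proof gives $\|\mathsf B\|_{\rm Sch}\le2\|Z_1\|_{\rm Sch}\le2S_1$, and I will reuse that step: $\|\mathsf C\|_{\rm Sch}\le2\|\mathsf R\|_{\rm Sch}$.

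The second step bounds $\|\mathsf R\|_{\rm Sch}$ from the expansion \eqref{qcp-P-eq:relative-R3}. That identity writes $\mathsf R$ as $4Z_3-6(\mathsf BZ_2+Z_2\mathsf B^\dagger)$ plus cubic words in $\mathsf B,\mathsf B^\dagger$. Here I take $Z_j$ to be the normalized relative moments of Proposition~\ref{qcp-C-prop:relative-residual}, written in the eigenbasis, with Schur bounds $\|Z_j\|_{\rm Sch}\le S_j$ from \eqref{qcp-C-eq:q-dependent-Schur}. I also use $\|\mathsf B\|_{\rm Sch}=\|\mathsf B^\dagger\|_{\rm Sch}\le R_A$, since the Schur norm is symmetric under transposition. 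Submultiplicativity then gives
\[
 \|\mathsf R\|_{\rm Sch}\le4S_3+12R_AS_2+8R_A^3=4(S_3+3R_AS_2+2R_A^3).
\]
With the factor $2$ from the Sylvester multiplier this yields $R_C=8(S_3+3R_AS_2+2R_A^3)$. Since the Schur norm dominates the operator norm, \eqref{qcp-P-eq:uniform-RC} follows. Uniformity in $0\le c\le a$ holds because $S_j$ and $R_A$ depend only on $q_{\max}$. At $c=0$ the support is the vacuum, $A_0=0$, and the bound is trivial.

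The third step handles finite Fock sections. Proposition~\ref{qcp-C-prop:relative-residual} already asserts the same Schur estimates for $\Omega_{j,N}=C_ND_N(P_N\mathsf J^jP_N)D_NC_N^\dagger$, via the separated min--max bands and the bounds on $C_N$ and $C_N^{-1}$. The relative $A$ bound is therefore inherited directly. For $C_3$ I repeat the computation above with the compressed moments $Z_{j,N}$. The only care needed is that $P_N\mathsf J^jP_N$ is not $(P_N\mathsf JP_N)^j$. The moment form of $\mathcal R_3$ uses only $\Omega_j$ and $A$, never products of compressed $\mathsf J$'s, so the expansion \eqref{qcp-P-eq:relative-R3} is still valid with each $Z_j$ interpreted as the compressed relative moment. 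Its Schur norm is still at most $U_*^2(2L/\sqrt3)^j$, because compression does not increase entrywise absolute row sums.

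The main obstacle is justifying the factor $2$ for the Sylvester multiplier acting on $\mathsf R$. The entrywise factor $2\lambda_j/(\lambda_i+\lambda_j)$ is at most $2$ only for $\lambda_j\le\lambda_i$, and is unbounded otherwise. I would first reproduce the argument behind $\|\mathsf B\|_{\rm Sch}\le2S_1$, which presumably exploits the decay $|U_{in}|\le e^{6q}q^{|i-n|/2}$ together with the band separation $\lambda_n\asymp d_n^2$. If that route fails, a fallback is to bound the multiplier via $2\lambda_j/(\lambda_i+\lambda_j)\le2\min\{1,\lambda_j/\lambda_i\}$ and the geometric decay $d_{n+1}/d_n\le\sqrt{q/(n+1)}$. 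The resulting constant is absorbed into the already doubled factor $8$ in $R_C$.
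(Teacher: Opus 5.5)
Your route is the paper's. You conjugate by $\Omega^{1/2}$, take $\|Z_j\|_{\rm Sch}\le S_j$ and $\|\mathsf B\|_{\rm Sch}\le R_A$ from C1, and apply submultiplicativity to the relative expansion of $\Omega^{-1/2}\mathcal R_3\Omega^{-1/2}$ to get $4(S_3+3R_AS_2+2R_A^3)$. You then pay a factor $2$ for the Sylvester multiplier and rerun the computation with the compressed moments $P_N\mathsf J^jP_N$ for Fock sections. Your second and third steps are fine.

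What is wrong is the ``main obstacle'' paragraph, which rests on a false claim. For positive eigenvalues, $\lambda_j<\lambda_i+\lambda_j$, so $0<2\lambda_j/(\lambda_i+\lambda_j)<2$ for every pair $(i,j)$. As $\lambda_j/\lambda_i\to\infty$ the factor tends to $2$, not to infinity; you seem to have $\lambda_j/\lambda_i$ in mind. The Schur norm here is the maximum absolute row or column sum, so it is monotone under entrywise domination. Hence $|\mathsf{C}_{ij}|\le2|\mathsf{R}_{ij}|$ gives $\|\mathsf C\|_{\rm Sch}\le2\|\mathsf R\|_{\rm Sch}$ immediately. That is all the paper uses (``The multiplier is at most two''). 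It is also the whole argument behind $\|\mathsf B\|_{\rm Sch}\le2S_1$ in C1. Band separation and eigenvector decay enter only earlier, through $\|U\|_{\rm Sch}\le U_*$ in the bound on $Z_j$.

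Both of your fallbacks should therefore be discarded. The absorption fallback would not even be available. Since $8=2\cdot4$ is exactly the multiplier bound times the source bound, any multiplier constant above $2$ would fail to give the stated $R_C$. Your own inequality $2\lambda_j/(\lambda_i+\lambda_j)\le2\min\{1,\lambda_j/\lambda_i\}$ already shows the factor never exceeds $2$, which contradicts the unboundedness you asserted. Replace your last paragraph with this one-line observation; the proof is then complete and coincides with the paper's.
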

\begin{proof}
Equation~\eqref{qcp-C-eq:q-dependent-Schur} of C1 gives
$\|Z_j\|_{\rm Sch}\le S_j$, $\|\mathsf B\|_{\rm Sch}\le R_A$.
Submultiplicativity in \eqref{qcp-P-eq:relative-R3} bounds the relative
cubic source by $4S_3+12R_AS_2+8R_A^3$ in this same norm.
If $\omega_i>0$ are the support eigenvalues, the relative Sylvester
solution has entries
\[
 (\Omega^{-1/2}C_3\Omega^{1/2})_{ij}
 =\frac{2\omega_j}{\omega_i+\omega_j}
 (\Omega^{-1/2}\mathcal R_3\Omega^{-1/2})_{ij}.
\]
The multiplier is at most two. Applying the same estimate to the compressed
moments $P_N\mathsf J^jP_N$ gives every unnormalized Fock section.
At $c=0$, $A=C_3=0$ on the vacuum support.
\end{proof}

\subsubsection{Returning the witness to the full residual}
Let $P_N$ project onto the first $N$ Fock levels. Construct $A_N,C_{3,N}$
from the unnormalized moments $P_N\Omega_jP_N$ and extend them by zero;
write $\Omega_N=P_N\Omega P_N$. Constants are uniform for $c$ near zero.
\begin{qcpPproposition}[Weighted witness return]
\label{qcp-P-prop:witness-return}
For each fixed Fock cutoff $N$, the full and finite witnesses satisfy,
as $c\downarrow0$,
\begin{equation}
 |\sqrt{J_3(c)}-\sqrt{J_{3,N}(c)}|=O(c^{N/4}),\qquad
 J_{3,N}=\|[A_N,C_{3,N}]\|_{\Omega_N}^2.
 \label{qcp-P-eq:witness-return}
\end{equation}
\end{qcpPproposition}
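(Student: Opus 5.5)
We compare the full residual moments with their Fock compressions and push the difference through the relative bounds of Proposition~\ref{qcp-C-prop:relative-residual} and Lemma~\ref{qcp-P-thm:coherent-class}. In the synthesis $T=CD$ of C1, the Fock coefficients carry the weights $d_n=c^n\sqrt{h_n}/\sqrt{n!}$. Hence the mass of $\Omega$ beyond level $N$ is $\eta_N=\tr[(I-P_N)\Omega]=O(c^{2N})$, and the off-diagonal blocks $P_N\Omega_j(I-P_N)$ carry a factor of order $c^N$ in Hilbert--Schmidt norm. The plan is to bound $\|A-A_N\|_{\Omega}$ and $\|C_3-C_{3,N}\|_{\Omega}$, and $\Omega$ versus $\Omega_N$ weights, by a power $c^{N/2}$. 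We then take square roots in the triangle inequality for the commutator norm, which produces the stated exponent $N/4$. The exponent need not be sharp; any positive power that grows with $N$ suffices for the later use.

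\emph{Step 1 (source defect for $A$).} We view $V_N=P_NV_c$ and $V=V_c$ as two syntheses with the same target $Z$. By \eqref{qcp-C-eq:weighted-A}, $\|A_V-A_{V_N}\|_{\Omega_V}\le(2L+\kappa_{\rm src})\|V-V_N\|_{\HS}$, with $\|V-V_N\|_{\HS}^2=\eta_N=O(c^{2N})$. The lift bound $\kappa_{\rm src}$ is uniform over finite sections by Proposition~\ref{qcp-C-prop:relative-residual}. The same inequality with roles exchanged controls the $\Omega_N$-weighted difference.

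\emph{Step 2 (cubic observable).} We write $\mathsf S_\Omega(C_3-C_{3,N})=\mathcal R_3-\mathcal R_{3,N}+\mathsf S_{\Omega_N-\Omega}(C_{3,N})$. The moment form of $\mathcal R_3$ expresses the source difference through $\Omega_j-P_N\Omega_jP_N$ and $A-A_N$. Both are small after relative conjugation: Lemma~\ref{qcp-P-thm:coherent-class} gives uniform relative bounds $R_A,R_C$, so each error factor can be moved next to a weight $\Omega^{1/2}$. Testing the Sylvester equation against $C_3-C_{3,N}$, as in the source defect lemma of C1, gives $\|C_3-C_{3,N}\|_\Omega^2\le C\cdot(\text{source defect in }\Omega^{-1/2}\text{-weighted HS})\cdot\|C_3-C_{3,N}\|_\Omega$, up to terms linear in $\|A-A_N\|_\Omega$. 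Hence this defect is $O(c^{N/2})$ at worst. Here the crude estimate is that a single off-diagonal factor contributes $\sqrt{\eta_N}\cdot(\text{bounded})^{1/2}$.

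\emph{Step 3 (commutator).} We write $[A,C_3]-[A_N,C_{3,N}]=[A-A_N,C_3]+[A_N,C_3-C_{3,N}]$. Bounding each term in $\|\cdot\|_\Omega$ uses bounded operator norms ($\|A\|,\|A_N\|\le L$, $\|C_3\|\le M_3$) on one side and the relative bounds \eqref{qcp-P-eq:bounded-commutator} on the other. The switch from $\|\cdot\|_{\Omega}$ to $\|\cdot\|_{\Omega_N}$ costs $\|X\|^2\eta_N$ for bounded $X$. The reverse triangle inequality then bounds $|\sqrt{J_3}-\sqrt{J_{3,N}}|$ by the square root of the accumulated error, giving $O(c^{N/4})$ after Steps 1--2.

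The main obstacle is Step 2. $C_3$ is defined through the inverse Sylvester map, which is unbounded in the unweighted sense because $\Omega$ has eigenvalues of order $c^{2n}$. Naive perturbation therefore loses powers of $c$ at high levels. I would first try to keep every estimate in the relative Schur framework of C1: write $\Omega^{-1/2}(\cdot)\Omega^{1/2}$ for the full and compressed moments with a common polar factor, using the lower-triangular structure $T_N=C_ND_NP_N$ so that compression commutes with the triangular factor. The fallback is to accept the loss in the square-root step, which is exactly why only $c^{N/4}$ is claimed.
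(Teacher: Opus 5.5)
Your outline matches the paper's proof. You compare the moments with their Fock compressions, control $A-A_N$, control $C_3-C_{3,N}$ by pairing the cubic Sylvester equation against the bounded difference, expand the commutator, and switch weights. For the last step, the finite commutator lives on $P_N\mathcal H$, so its $\Omega$ and $\Omega_N$ weights coincide and the switch costs nothing.

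The genuine difference is Step~1. The paper uses only the trace-norm residual $\|\mathsf S_\Omega(A_N)-\Omega_1\|_1=O(c^N)$, paired against $A_N-A$ (norm at most $2L$). This gives $\|A_N-A\|_\Omega=O(c^{N/2})$. The cubic residual is then $O(c^{N/2})$, and the same pairing gives $\|C_{3,N}-C_3\|_\Omega=O(c^{N/4})$.

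So the exponent $N/4$ comes from two successive square-root losses in these energy pairings, not from the final step. $\sqrt{J_3}$ and $\sqrt{J_{3,N}}$ are already norms, so the reverse triangle inequality bounds their difference by the commutator difference itself, not by its square root.

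Your Step~1, the lifted inequality \eqref{qcp-C-eq:weighted-A} with $W=P_NV_c$, is legitimate: C1's finite-section relative bound supplies the lift $K_W$. It is linear in $\|V_c-P_NV_c\|_{\HS}=\sqrt{\eta_N}$, so it gives $\|A-A_N\|_\Omega=O(c^N)$. Carried through with trace-norm pairing for $C_3$, your route proves $O(c^{N/2})$, which is stronger than stated. The paper's route is cruder but needs only operator and trace norms.

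One point must be repaired. Relative bounds for finite-section objects hold against $\Omega_N$, not against the full $\Omega$, and they cannot be transported, because $P_N\mathcal H\cap\operatorname{ran}\Omega^{1/2}=\{0\}$. To see this, suppose $\int|cz\rangle f\,d\nu\in P_N\mathcal H$ and set $g=e^{-c^2z^2/2}f$. Then $\int z^m g\,d\nu=0$ for all $m\ge N$, so $z^Ng=0$ and hence $g=0$.

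This has two consequences.
\begin{itemize}
\item In Step~2, the ``relative conjugation'' of $\Omega_j-P_N\Omega_jP_N$ is unavailable. Use your fallback: pair these differences, and $\mathsf S_{\Omega_N-\Omega}(C_{3,N})$, in trace norm against bounded operators, exactly as the paper does.
\item In Step~3, the term $(C_3-C_{3,N})A_N$ in your decomposition is not covered by \eqref{qcp-P-eq:bounded-commutator}, which concerns the full $A$.
\end{itemize}
Order the expansion as the paper does,
\[
 A_N(C_{3,N}-C_3)+(A_N-A)C_3-C_{3,N}(A_N-A)-(C_{3,N}-C_3)A,
\]
or split $A_N=A+(A_N-A)$ in that term. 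Then only the full $A$ and $C_3$, with their relative bounds $R_A$ and $R_C$, ever stand next to $\Omega^{1/2}$. The same ordering, $A_N^{j-1-k}(A_N-A)A^k$, is used when telescoping $A_N^j-A^j$ in the cubic residual.
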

\begin{proof}
The coherent tail gives
$\|\Omega_j-P_N\Omega_jP_N\|_1=O(c^N)$ for $j\le3$.
The full Sylvester residual of $A_N$ is therefore $O(c^N)$ in trace
norm; pairing it with $A_N-A$, whose norm is at most $2L$, gives
$\|A_N-A\|_\Omega=O(c^{N/2})$.
The ordered cubic source, the uniform relative bounds above and
\[
 \|A_N^j-A^j\|_\Omega
 \le\sum_{k=0}^{j-1}L^{j-1-k}R_A^k
       \|A_N-A\|_\Omega,\qquad j=2,3,
\]
give a full cubic Sylvester residual $O(c^{N/2})$.
Since both cubic observables have norm at most $M_3$, the same energy
pairing gives $\|C_{3,N}-C_3\|_\Omega=O(c^{N/4})$.
The commutator difference expands as
\[
 A_N(C_{3,N}-C_3)+(A_N-A)C_3
 -C_{3,N}(A_N-A)-(C_{3,N}-C_3)A.
\]
The ordinary and relative bounds give error $O(c^{N/4})$.
The finite commutator has identical $\Omega$ and $\Omega_N$ weights,
so the triangle inequality proves \eqref{qcp-P-eq:witness-return}.
\end{proof}

\subsection{Reference risks for the width law}
\begin{qcpPlemma}[Frequency envelope for the Personick reference]
\label{qcp-P-lem:frequency-envelope}
Let $\mathsf P(t)=\int e^{-2itz}\rho_z\nu(dz)$ and
$q_A(t)=\operatorname{Re}\tr(e^{2itA}\mathsf P(t))$.
The complete field $W_A(p)=e^{2idpA}$ has zero amplitude excess, and
\begin{align}
 r_\loc^\star&\ge2-2\int k_b(\xi)e^{-\xi^2/2}
                    \|\mathsf P(d\xi)\|_1d\xi,
 \label{qcp-P-eq:frequency-lower}\\
 r_\loc(W_A)&=2-2\int k_b(\xi)e^{-\xi^2/2}q_A(d\xi)d\xi.
 \label{qcp-P-eq:frequency-reference}
\end{align}
With $\delta_A(t)=\|\mathsf P(t)\|_1-q_A(t)$,
\[
 0\le\delta_A(t)\le2\mathcal A_L(c)t^2,\qquad
 0\le e_\loc(W_A)\le
 2\int k_b(\xi)e^{-\xi^2/2}\delta_A(d\xi)d\xi .
\]
\end{qcpPlemma}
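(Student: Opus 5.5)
The plan is to work entirely with the covariant location functional \eqref{qcp-S-eq:cov-loc-functional} and the amplitude excess identity \eqref{qcp-S-eq:cov-amp-excess} from Note~S, here with $\mathsf P_d(\xi)=\mathsf P(d\xi)$.

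\emph{Step 1: reduce the double integral to one variable.} Change variables to $\xi=p-q$ and $m=(p+q)/2$. For $\varphi(p)=(2/\pi)^{1/4}e^{-p^2}$ one has
\[
 \varphi(p)\varphi(q)=\sqrt{2/\pi}\,e^{-2m^2}e^{-\xi^2/2},
 \qquad
 \int\sqrt{2/\pi}\,e^{-2m^2}\,dm=1 .
\]
So the Gaussian factor becomes a probability average over $m$, weighted by $e^{-\xi^2/2}$.

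\emph{Step 2: the lower bound \eqref{qcp-P-eq:frequency-lower}.} Each $W(p)$ is a contraction: an isometry, or a contraction once the failure part is included. Hence
\[
 |\tr[W(p)\mathsf P(d\xi)W(q)^\dagger]|\le\|\mathsf P(d\xi)\|_1 .
\]
Since $k_b\ge0$, integrating over $m$ bounds the benefit term for every complete covariant rule. Failure reports cost two and only remove benefit, so the bound survives the compactification, and taking the infimum gives \eqref{qcp-P-eq:frequency-lower}.

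\emph{Step 3: the reference field $W_A(p)=e^{2idpA}$.} This field is a measurable unitary field. It commutes with $A$, so with $Y=A$ the defect $YW_A-W_AA$ vanishes, and \eqref{qcp-S-eq:cov-amp-excess} gives zero amplitude excess. Cyclicity of the trace gives
\[
 \tr[e^{2idpA}\mathsf P(d\xi)e^{-2idqA}]=\tr[e^{2id\xi A}\mathsf P(d\xi)],
\]
which is independent of $m$. Taking real parts yields $q_A(d\xi)$ and the exact formula \eqref{qcp-P-eq:frequency-reference}.

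\emph{Step 4: the envelope for $\delta_A$.} Nonnegativity follows from $|\tr(U\mathsf P)|\le\|\mathsf P\|_1$ for the unitary $U=e^{2itA}$. For the upper bound, first note $\|\mathsf P(t)\|_1\le1$, since $\mathsf P(t)$ is an average of the unit-trace-norm operators $e^{-2itz}\rho_z$. Next write
\[
 q_A(t)=\int\operatorname{Re}\langle\psi_z,e^{2it(A-z)}\psi_z\rangle\,\nu(dz).
\]
The spectral inequality $\cos x\ge1-x^2/2$ applied to $X=2t(A-z)$ gives
\[
 q_A(t)\ge1-2t^2\int\langle\psi_z,(A-z)^2\psi_z\rangle\,\nu(dz).
\]
Expanding the integral gives $\tr\Omega A^2-2\tr\Omega_1A+\tr\Omega_2$. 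The Personick equation $\{\Omega,A\}=2\Omega_1$ gives $\tr\Omega_1A=\tr\Omega A^2$, so the integral equals $\mathcal A_L(c)$. Hence $\delta_A(t)\le 1-q_A(t)\le2\mathcal A_L(c)t^2$.

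\emph{Step 5: the excess bound.} $e_\loc(W_A)=r_\loc(W_A)-r_\loc^\star$, so subtract \eqref{qcp-P-eq:frequency-lower} from \eqref{qcp-P-eq:frequency-reference}. Both nonnegativity and the integral bound then follow directly.

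The main obstacle is Step~2. One must make sure the functional \eqref{qcp-S-eq:cov-loc-functional}, with the binned amplitude fields and failure mass from Note~S, really covers every complete covariant paired rule, so that the pointwise trace-norm bound can be taken inside the integral. I would handle this by applying the bound to the contraction fields $V_\eta$ and letting $\eta\downarrow0$.
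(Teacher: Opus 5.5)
Your proposal is correct and follows essentially the same route as the paper. The paper also uses contraction plus trace duality with the Gaussian identity $\int\varphi(p)\varphi(p-\xi)\,dp=e^{-\xi^2/2}$ (your change of variables to $m,\xi$) for the lower bound. It also uses cyclicity for the reference risk, the spectral cosine bound with the Personick equation and $\|\mathsf P(t)\|_1\le1$ for the envelope, and subtraction for the excess; the nonnegativity of $e_\loc(W_A)$ comes from $W_A$ being an admissible complete rule rather than from the subtraction.
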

\begin{proof}
In \eqref{qcp-S-eq:cov-loc-functional}, $W(q)^\dagger W(p)$ is a
contraction. Trace duality and
$\int\varphi(p)\varphi(p-\xi)dp=e^{-\xi^2/2}$ give the two risk
bounds. Spectral calculus yields
$1-q_A(t)\le2t^2\int\|(A-z)\psi_z\|^2\nu(dz)
=2t^2\mathcal A_L(c)$.
Use $\|\mathsf P(t)\|_1\le1$ for the envelope, and subtract the
location lower bound. The amplitude square in
\eqref{qcp-S-eq:cov-amp-excess} vanishes because $W_A$ commutes with $A$.
\end{proof}

\endgroup
\begingroup
\qcpresetvolume{B}
\qcparticlelayout
\setlength{\emergencystretch}{3em}

\renewcommand{\thesection}{B}
\renewcommand{\theequation}{B.\arabic{equation}}
\section{The uniform prior witness for the cubic width law}
\subsection{A noncommuting cubic response at weak residual sensitivity}

For $\nu=\Unif[-1/2,1/2]$, the moments are
\[
 m_j=\int z^j\nu(dz),\qquad
 m_2=\frac1{12},\quad m_4=\frac1{80},\quad m_6=\frac1{448}.
\]
Use Note P's operators $\Omega_j(c),A(c),C_3(c)$ on the average state support.

\begin{qcpBtheorem}[Uniform prior cubic witness]
\label{qcp-B-thm:prior-moment-test}
For the uniform residual prior, as $c\downarrow0$,
\[
 J_3(c):=\|[A(c),C_3(c)]\|_{\Omega(c)}^2
       =\frac{128}{1575^2}c^4+O(c^6).
\]
For every fixed $0<\lambda<1$, Note P's local lower coefficient
$q_\lambda(c)$ is strictly positive for all sufficiently small $c>0$.
\end{qcpBtheorem}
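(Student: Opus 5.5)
The plan is to reduce the computation to a fixed finite Fock section and expand that section analytically in $c$. The passage from full residual to section is supplied by Proposition~\ref{qcp-P-prop:witness-return}.

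\emph{Step 1: choose the cutoff.} Fix $N\ge 16$. Then $|\sqrt{J_3(c)}-\sqrt{J_{3,N}(c)}|=O(c^{N/4})=O(c^4)$. Suppose we show $J_{3,N}(c)=\tfrac{128}{1575^2}c^4+O(c^6)$. Then $\sqrt{J_{3,N}}\asymp c^2$, and squaring the root comparison gives $J_3-J_{3,N}=O(c^2\cdot c^4)=O(c^6)$, which yields the stated expansion. So everything reduces to the unnormalized section moments $P_N\Omega_jP_N$.

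\emph{Step 2: graded structure of the section.} Write $|cz\rangle=e^{-c^2z^2/2}\sum_n c^nz^n|n\rangle/\sqrt{n!}$. Then $P_N\Omega_jP_N=D\,G^{(j)}\,D$, with $D=\operatorname{diag}(c^n/\sqrt{n!})$ and $G^{(j)}_{mn}=\int z^{m+n+j}e^{-c^2z^2}\nu(dz)$. These are Hankel matrices that are analytic and even in $c$, with entries built from the moments $m_2,m_4,m_6,\dots$. Parity in $z$ makes every operator respect the Fock parity grading: $A_N,C_{3,N}$ are odd and $\Omega_N$ is even. The eigenvalues of $\Omega_N$ separate as $\omega_n\asymp c^{2n}$, by the min--max bands of Proposition~\ref{qcp-C-prop:relative-residual}. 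Hence the Sylvester multipliers $2\omega_j/(\omega_i+\omega_j)$ have convergent expansions in $c^2$.

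I would solve $\tfrac12\{\Omega_N,A_N\}=P_N\Omega_1P_N$ order by order. This can be done either in the eigenbasis, using perturbation theory of the graded matrix, or more cleanly by a power-series ansatz $A_N=\sum_k c^{k}A^{(k)}$ in the Fock basis with the appropriate scaling of entry $(m,n)$. Then I form $\mathcal R_3$ from the moment formula in Note P, solve for $C_{3,N}=\mathsf S_{\Omega_N}^{-1}(\mathcal R_3)$ the same way, and expand $J_{3,N}=\tfrac12\operatorname{Tr}\,\Omega_N\{[A_N,C_{3,N}]^\dagger,[A_N,C_{3,N}]\}$, or equivalently the $\Omega$-weighted Hilbert--Schmidt norm.

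\emph{Step 3: count the needed levels.} Each off-diagonal step costs a power of $c$, and the commutator of two odd operators is even. So only Fock levels $0$ through $3$, and moments up to $m_6$, contribute through order $c^4$. The lower orders should vanish:
\begin{itemize}
\item On the vacuum--single-photon block, $A$ and $C_3$ are both leading multiples of the same rank-one structure, essentially of the form $|1\rangle\langle0|+|0\rangle\langle1|$, so their commutator cancels at order $c^0$.
\item The first noncommuting contribution comes from the $|1\rangle,|2\rangle,|3\rangle$ couplings. It carries the combinations of $m_2,m_4,m_6$ giving the rational constant $128/1575^2$.
\end{itemize}
Evenness in $c$ gives the $O(c^6)$ remainder.

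\emph{Main obstacle.} The hard part is exactly this bookkeeping: obtaining the second-order corrections to $A$ and $C_3$, where $\mathcal R_3$ involves $A^3$ and $A\{A,\Omega\}A$, without sign errors, and verifying that the $c^4$ coefficient is nonzero. I would organize the computation in the relative coordinates $\mathsf B=\Omega^{-1/2}A\Omega^{1/2}$ of \eqref{qcp-P-eq:relative-R3}. There the leading multipliers tend to $0$, $1$ or $2$ according to the Fock ordering, which makes the structure nearly triangular. I would then cross-check the final rational number by exact symbolic arithmetic on a $4\times4$ truncation.

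\emph{Step 4: conclude positivity.} Once $J_3(c)>0$ for small $c>0$, Lemma~\ref{qcp-P-thm:physical-principle} applies. Its constants $c_b$ and $V_b$ depend only on $b=\kappa_FK$. The constant $R_A$ is uniform on $[0,a]$ by Proposition~\ref{qcp-C-prop:relative-residual}. Hence $q_\lambda(c)>0$ for every fixed $0<\lambda<1$.
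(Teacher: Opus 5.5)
Your Steps 1, 2 and 4 follow the paper's own proof. You pass to the $N=16$ section with Proposition~\ref{qcp-P-prop:witness-return}, so $|\sqrt{J_3}-\sqrt{J_{3,16}}|=O(c^4)$ and squaring costs $O(c^6)$. You use the uniform bounds for analyticity and prior parity to make $A_N,C_{3,N}$ odd in $c$ and in Fock parity. You read off $q_\lambda(c)>0$ from Lemma~\ref{qcp-P-thm:physical-principle}.

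The gap is Step 3, which is the whole content of the theorem. You never compute the $c^4$ coefficient, and you do not even show it is nonzero, although that is exactly what positivity of $q_\lambda$ requires. Your sketch of where the coefficient comes from is also off:
\begin{itemize}
\item In the Fock basis every entry of $A_N$ and $C_{3,N}$ is already $O(c)$, being odd, analytic and bounded. So no second order corrections enter.
\item The $\{0,1\}$ block drops out for a trivial reason: $[A_N,C_{3,N}]$ is real antisymmetric and parity even.
\item The vacuum weight is $1+O(c^2)$ and every other entry of $\Omega_N$ is $O(c^2)$, so $J_{3,N}=|[A_N,C_{3,N}]_{02}|^2+O(c^6)$.
\item At order $c$, the vacuum and first-row equations kill $a_{0j},c_{0j},a_{1j},c_{1j}$ for $j\ge3$. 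Hence $[A_N,C_{3,N}]_{02}=c^2(a_{01}c_{12}-c_{01}a_{12})+O(c^4)$, and no other vacuum-column entry survives at order $c^2$.
\end{itemize}
The leading term therefore comes from the chain $0\to1\to2$, not from couplings among $|1\rangle,|2\rangle,|3\rangle$. Level three plays no role.

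What you still have to supply are these four coefficients:
\begin{itemize}
\item At order $c$, the vacuum row gives $a_{01}=2m_2$. Since $(\mathcal R_3)_{01}=c(4m_4-6m_2a_{01})+O(c^3)$, it also gives $c_{01}=8(m_4-3m_2^2)$.
\item At order $c^3$, the $(1,2)$ Personick entry gives $\tfrac12 m_2(a_{12}+a_{01}/\sqrt2)=m_4/\sqrt2$.
\item At the same order, the $(1,2)$ cubic Sylvester entry gives $\tfrac12 m_2(c_{12}+c_{01}/\sqrt2)=r_{12}^{(3)}$, where $r_{12}^{(3)}=4m_6/\sqrt2-6(a_{01}m_4/\sqrt2+m_4a_{12})+6m_2a_{01}a_{12}$. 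In this entry $\{A^3,\Omega\}$ starts only at order $c^5$, and $3A\{A,\Omega\}A=6A\Omega_1A$ contributes only the last term.
\end{itemize}
With $m_2=1/12$, $m_4=1/80$ and $m_6=1/448$ these give $a_{01}=1/6$, $a_{12}=\sqrt2/15$, $c_{01}=-1/15$ and $c_{12}=2\sqrt2/525$. Hence $[A_N,C_{3,N}]_{02}=8\sqrt2\,c^2/1575+O(c^4)$, whose square is the stated $128c^4/1575^2$.

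These equations involve only levels $0$, $1$ and $2$. So your $4\times4$ cross-check would reproduce the number, and the $N=16$ section has the same leading coefficients. As written, however, the proposal only asserts that some such computation will produce the constant.
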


\begin{proof}
In the unnormalized $N=16$ Fock section, the moment matrices are analytic
and $\Omega_N>0$ for $c>0$: a nonzero polynomial has positive squared
integral on the prior interval. Note P's uniform bounds on $A_N,C_{3,N}$
remove their meromorphic Sylvester singularities at zero.

Prior parity makes $A_N,C_{3,N}$ odd in $c$ and off-diagonal in Fock
parity. Write $A_N=cA^{(1)}+O(c^3)$, $C_{3,N}=cC_3^{(1)}+O(c^3)$, and set
$a_{ij}=A^{(1)}_{ij}$, $c_{ij}=(C_3^{(1)})_{ij}$.
The moment expansion is
\[
 (\Omega_j)_{kl}
 =\frac{c^{k+l}}{\sqrt{k!l!}}
   \sum_{r=0}^\infty\frac{(-c^2)^r}{r!}m_{k+l+j+2r}.
\]
At order $c$ the vacuum-row equations give
$a_{01}=2m_2$, $c_{01}=8(m_4-3m_2^2)$, and all
$a_{0j}=c_{0j}=0$ for $j\ge3$.
The $(1,2)$ Personick equation at order $c^3$ is
\[
 \tfrac12\bigl(m_2 a_{12}+m_2 a_{01}/\sqrt2\bigr)
 =m_4/\sqrt2.
\]
The remaining $a_{1j}$, $j\ge3$, vanish by the same equation and parity.

If $r_{12}^{(3)}$ is the $c^3$ coefficient of the cubic right side,
retaining its operator order gives
\[
 r_{12}^{(3)}
 =\frac{4m_6}{\sqrt2}
  -6\left(\frac{a_{01}m_4}{\sqrt2}+m_4a_{12}\right)
  +6m_2 a_{01}a_{12}.
\]
The term involving $A^3\Omega+\Omega A^3$ starts at order $c^5$
in this entry. The cubic Sylvester equation gives
\[
 \tfrac12\bigl(m_2 c_{12}+m_2 c_{01}/\sqrt2\bigr)
 =r_{12}^{(3)}.
\]
Substitution gives
\[
 a_{01}=\tfrac16,\qquad a_{12}=\tfrac{\sqrt2}{15},\qquad
 c_{01}=-\tfrac1{15},\qquad c_{12}=\tfrac{2\sqrt2}{525}.
\]
The vanishing $a_{1j}$ also gives $c_{1j}=0$ for $j\ge4$.

The two adjacent Fock transitions give
\[
 \begin{aligned}
 [A_N,C_{3,N}]_{02}
 &=c^2\bigl(a_{01}c_{12}-c_{01}a_{12}\bigr)+O(c^4)\\
 &=\frac{8\sqrt2}{1575}c^2+O(c^4).
 \end{aligned}
\]
These exhaust the leading vacuum transitions. The vacuum weight is
$1+O(c^2)$ and the other weights are $O(c^2)$, so parity gives
\[
 J_{3,N}(c)=\frac{128}{1575^2}c^4+O(c^6).
\]

The uniform relative bounds in Lemma~\ref{qcp-P-thm:coherent-class} and
Proposition~\ref{qcp-P-prop:witness-return} give
$|\sqrt{J_3}-\sqrt{J_{3,16}}|=O(c^4)$.
As $\sqrt{J_{3,16}}=O(c^2)$, this changes the witness by $O(c^6)$.
Note P then gives $q_\lambda(c)>0$ for fixed $0<\lambda<1$ and all
sufficiently small $c>0$.

\end{proof}

\endgroup
\begingroup
\qcpresetvolume{U}
\qcparticlelayout
\setlength{\emergencystretch}{3em}

\renewcommand{\thesection}{U}
\renewcommand{\theequation}{U.\arabic{equation}}
\section{Proof of the cubic width law under the full prior}
Note S gives the limit as $s\to\infty$ at fixed $0<w<1$.
We match the lower bound from P and B with a complete upper readout,
then let $w\downarrow0$.
For complete rises write $g_\lambda(d;c)=g_\lambda(\kappa_F,d,c)$,
$\kappa_F=\sqrt{10/7}$; $g_{\lambda,w}(u)$ includes truncated rises.
\subsection{A complete sixth order upper readout}

Use Proposition~\ref{qcp-C-prop:relative-residual}'s bound
$\|\Omega^{-1/2}A\Omega^{1/2}\|\le R_A$ on $c^2L^2\le\log2$,
including $c=0$, and write $\mathsf P(t)=\int e^{-2itz}\rho_z\nu(dz)$.

\begin{qcpUlemma}[Uniform sixth order reference regret]
For $L=1/2$ and $0\le c\le a$, a constant $C=C(L,R_A)$ satisfies
\[
 \begin{aligned}
 \delta_A(t)&=\|\mathsf P(t)\|_1-\operatorname{Re}\tr(e^{2itA}\mathsf P(t)),\\
 0&\le\delta_A(t)\le C t^6\qquad(t\in\mathbb R).
 \end{aligned}
\]
The complete reference field $W_A=e^{2idpA}$ has
$e_\amp=0$ and $e_\loc\le2CJ_6(\kappa K)d^6$, where
$J_6(b)=\int k_b(\xi)e^{-\xi^2/2}\xi^6d\xi\le16/\pi$.
\end{qcpUlemma}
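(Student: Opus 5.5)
The plan is to split $\delta_A(t)$ into its Hermitian and anti-Hermitian parts and to bound the trace norm by a Cauchy--Schwarz estimate that is quadratic in the anti-Hermitian part. The key fact is the one already used in the proof of the sixth order local lower bound: the Personick equation cancels the first two anti-Hermitian orders of $e^{itA}\int e^{-itz}\rho_z\nu(dz)$, leaving an anti-Hermitian part of order $t^3$. Squaring a cubic term gives $t^6$.

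First reduce to small $t$. Since $\|\mathsf P(t)\|_1\le1$ and $|\tr(e^{2itA}\mathsf P(t))|\le1$, we always have $\delta_A\le2$. Hence for $|t|\ge t_0$ the bound holds with $C\ge2t_0^{-6}$, and it suffices to treat $|t|\le t_0$ for a fixed $t_0$ depending only on $L$ and $R_A$. Nonnegativity is immediate from $|\tr(UX)|\le\|X\|_1$ for unitary $U$.

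For $|t|\le t_0$, set $G(t)=e^{2itA}\mathsf P(t)$. Then $\tr G=\tr\mathsf P(t)e^{2itA}$ has real part $\operatorname{Re}\tr(e^{2itA}\mathsf P(t))$, and $\|G\|_1=\|\mathsf P(t)\|_1$. Write $G=S+T$ with $S=(G+G^\dagger)/2$ and $T=(G-G^\dagger)/2$, so that $\delta_A=\|S+T\|_1-\tr S$. Pass to relative form through $A\Omega^{1/2}=\Omega^{1/2}\mathsf B$, exactly as in \eqref{qcp-P-eq:uniform-relative-S}--\eqref{qcp-P-eq:uniform-relative-T}. There $\Omega^{-1/2}G\Omega^{-1/2}=e^{2it\mathsf B}Z(t)$, where $Z(t)=\Omega^{-1/2}\mathsf P(t)\Omega^{-1/2}$ has $j$-th derivative bounded by $(2L)^j$, and $\|\mathsf B\|\le R_A$ uniformly on $c^2L^2\le\log2$, including $c=0$ (Proposition~\ref{qcp-C-prop:relative-residual}). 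The Taylor series in $t$ therefore converges with constants depending only on $L$ and $R_A$, and it gives:
\begin{itemize}
\item $(1-C_1t^2)\Omega\preceq S\preceq(1+C_1t^2)\Omega$, since the first order Hermitian term vanishes by the Personick equation;
\item $T=\Omega^{1/2}E(t)\Omega^{1/2}$ with $\|E(t)\|\le C_2|t|^3$, because the $t$ and $t^2$ anti-Hermitian orders cancel and the remainder starts with the $\mathcal R_3$ term, whose relative norm is bounded by \eqref{qcp-P-eq:relative-R3}.
\end{itemize}
Choose $t_0$ so that $C_1t_0^2\le1/2$, which gives $S\succeq\Omega/2$.

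The key inequality comes next. Write $S+T=S^{1/2}(I+iK)S^{1/2}$ with $K=-iS^{-1/2}TS^{-1/2}$ Hermitian. By Hölder, $\|XY\|_1\le\|X\|_{\HS}\|Y\|_{\HS}$, so
\[
\|S+T\|_1\le\|S^{1/2}\|_{\HS}\|(I+iK)S^{1/2}\|_{\HS}=\sqrt{\tr S\,\tr S(I+K^2)}\le\tr S+\tfrac12\tr(SK^2).
\]
Hence $\delta_A\le\tfrac12\tr(T^\dagger S^{-1}T)\le\tr(T^\dagger\Omega^{-1}T)=\tr(\Omega^{1/2}E^\dagger E\,\Omega^{1/2})\le\|E\|^2\tr\Omega\le C_2^2t^6$, using $\tr\Omega=1$. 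The step I expect to need the most care is making the relative Taylor remainders for $S$ and $T$ uniform in $c\in[0,a]$ and rigorous on the possibly infinite-dimensional support. I would handle this on the finite Fock sections covered by Proposition~\ref{qcp-C-prop:relative-residual}, whose Schur bounds are uniform, and then pass to the limit with the trace-norm convergence of the moments.

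Finally, the reference field: $W_A$ commutes with $A$, so Lemma~\ref{qcp-P-lem:frequency-envelope} gives $e_\amp=0$ and
\[
e_\loc\le2\int k_b(\xi)e^{-\xi^2/2}\delta_A(d\xi)\,d\xi\le2Cd^6J_6(b).
\]
For the bound on $J_6$, use $k_b(\xi)\xi^2=\sin^2(b\xi)/(\pi b)\le1/(\pi b)$ to get
\[
J_6(b)\le\frac{1}{\pi b}\int\xi^4e^{-\xi^2/2}\,d\xi=\frac{3\sqrt{2\pi}}{\pi b}.
\]
Since $b=\kappa_FK\ge1$, this is below $16/\pi$.
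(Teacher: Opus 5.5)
Your overall route is the paper's. You split $e^{2itA}\mathsf P(t)$ into Hermitian and anti-Hermitian parts, and use the relative Taylor expansion with $\|\mathsf B\|\le R_A$ and Personick cancellation to get $S\succeq\eta_0\Omega$ together with a relative anti-Hermitian part of order $|t|^3$. You then bound the trace-norm excess by $\|E\|^2/(2\eta_0)$. Your H\"older factorization $S^{1/2}(I+iK)S^{1/2}$ is a valid replacement for the paper's trace-duality argument over contractions, and both give the same constant. Likewise $\delta_A\le2$ at large $|t|$ works as well as the paper's $\delta_A\le2v_At^2$.

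One justification is wrong, though harmless. The Personick equation kills the first-order \emph{anti-Hermitian} term, not the Hermitian one. The first-order Hermitian part of $G$ is $it[A,\Omega]$, which is $it(\mathsf B-\mathsf B^\dagger)$ in relative form. This is nonzero whenever $[A,\Omega]\neq0$, as for the coherent residual. Since $\tr(i[A,\Omega])=0$ and $\Omega$ is faithful, it has a negative direction, so $(1-C_1t^2)\Omega\preceq S$ fails for small $t$. The correct statement is $S\succeq(1-C_1'|t|)\Omega$, which is the paper's $(2-e^{x})\Omega$ with $x=H_A|t|$. It still gives $S\succeq\Omega/2$ after you shrink $t_0$, so the rest of your estimate survives.

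The genuine gap is the kernel bound. Your estimate $k_b(\xi)\xi^2\le1/(\pi b)$ gives $J_6(b)\le3\sqrt{2\pi}/(\pi b)$. This is below $16/\pi$ only for $b\ge3\sqrt{2\pi}/16$, and you reach it by specializing to $\kappa=\kappa_F$. The lemma, however, is stated for a general fiber sensitivity $\kappa$ with $b=\kappa K$. The complete upper readout theorem in Section U applies it to terminally truncated rises, where $\kappa(u)=\sqrt{I(r)}\downarrow0$ as $u\uparrow1$ (hence the phrase ``also as $\kappa\downarrow0$'' there). The terminal regret bound $C_0(a^2I_F)^3w^3$ therefore needs $J_6$ bounded uniformly over all $b>0$, and your bound grows like $1/b$. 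The repair is to use $\sin^2x\le\min\{1,x^2\}\le|x|$. This gives $k_b(\xi)\le1/(\pi|\xi|)$ for every $b>0$, and hence $J_6(b)\le\pi^{-1}\int|\xi|^5e^{-\xi^2/2}\,d\xi=16/\pi$, which is how the paper argues.
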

\begin{proof}
Set $H_A=2(R_A+L)$, $v_A=L^2/3$ and
$T=e^{2itA}\mathsf P(t)=H_R+iH_I$, with $H_R,H_I$ self-adjoint.
With $x=H_A|t|$, relative Taylor expansion and Personick cancellation give
$H_R\succeq(2-e^x)\Omega$ and
$\|\Omega^{-1/2}H_I\Omega^{-1/2}\|\le x^3e^x/6$.
For $H_R\succeq\eta_0\Omega$ and $H_I=\Omega^{1/2}Z\Omega^{1/2}$,
a contraction $V$ has excess reward at most
$-\eta_0\epsilon_V^2/2+\epsilon_V\|Z\|$, with
$\epsilon_V=\|V-I\|_\Omega$. Trace duality gives
$\|H_R+iH_I\|_1-\tr H_R\le\|Z\|^2/(2\eta_0)$.
Thus $\delta_A(t)\le H_A^6t^6/16$ for $|t|\le\log(3/2)/H_A$.
Elsewhere use $\delta_A(t)\le2v_At^2$ from
Lemma~\ref{qcp-P-lem:frequency-envelope}. One may take
$C=\max\{H_A^6/16,2v_AH_A^4/\log(3/2)^4\}$.
Integrating against Note P's frequency envelope gives the regret bound.
The kernel bound follows from
$k_b(\xi)\le1/(\pi|\xi|)$ and
$\int|\xi|^5e^{-\xi^2/2}d\xi=16$, uniformly for every $b>0$.
\end{proof}

\subsection{Integrating complete and terminal rises under the full prior}
For the quintic profile, set
\[
 \begin{gathered}
 I_F=\int_0^1F^2=181/462,\quad\kappa_F^2=10/7,\\
 \gamma_F=\frac{a^2}{4\kappa_F^2}=\frac{7\log2}{10},\qquad
 \vartheta=1-I_F+\frac1{4\kappa_F^2}=\frac{7237}{9240}.
 \end{gathered}
\]
\begin{qcpUtheorem}[Complete upper readout under the full prior]
\label{qcp-U-thm:effective-full-prior}
At every sufficiently small fixed $w>0$, the strong signal limit admits
a complete common readout with full prior excesses
$e_\amp=0$ and $e_\loc=O(w^3)$, including terminally truncated rises.
\end{qcpUtheorem}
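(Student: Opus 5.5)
We will build the readout fiberwise from the reference field of the preceding lemma and then integrate it over the position prior through the fixed width limit of Note S. By Theorem~\ref{qcp-S-thm:global} and the channels of Theorem~\ref{qcp-S-thm:channels}, a complete rule in the labeled product experiment $\mathsf F_s^\epsilon$ pulls back to the original record. The pullback adds excess at most $B_j(\varepsilon_\to+\varepsilon_\leftarrow)+2(1-q_s)$. This vanishes as $s\to\infty$ and then $\epsilon\downarrow0$. So it suffices to give, for each good label $u_j$, a complete covariant readout on the fiber $\ket{\kappa(u)v+d(u)z}\otimes\ket{c(u)z}$. Its amplitude excess should be zero and its location excess should be controlled pointwise in $u$. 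For a finite prior of half width $V_s\to\infty$, we clip position reports and reuse Lemma~\ref{qcp-S-lem:cov} and Proposition~\ref{qcp-S-prop:amp}. The fiber amplitude optimum is then $\mathcal A_L(c(u))$, and the fiber location optimum is the covariant one.

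On each fiber we take $W_A(p)=e^{2id(u)pA_{c(u)}}$ with $Y=A_{c(u)}$. This is coordinate readout of the first mode corrected by the residual amplitude observable. By the preceding lemma its amplitude excess is exactly zero. Its location excess is at most $2C J_6(\kappa(u)K)\,d(u)^6$, with $C=C(L,R_A)$ uniform in $0\le c\le a$ by Proposition~\ref{qcp-C-prop:relative-residual}. After rescaling $v\mapsto\kappa v$, the lemma's constants depend on $\kappa$ only through $b=\kappa K$. Since $J_6(b)\le16/\pi$ uniformly in $b$, the pointwise bound is $e_\loc(u)\le(32C/\pi)\,d(u)^6$.

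Next we control $d(u)$ from the Gram formulas \eqref{qcp-S-eq:gram}--\eqref{qcp-S-eq:normal}. We have $d=aG_{12}/\kappa=-a\sqrt w F(r)^2/(2\sqrt{I(r)})$, where $r=\min\{(1-u)/w,1\}$.

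On complete rises ($1-u\ge w$) this gives $d^2=a^2w/(4\kappa_F^2)=\gamma_F w$. Hence $d^6=\gamma_F^3w^3$, and integrating over $u\in[0,1-w]$ gives $O(w^3)$.

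On the terminal strip $u\in(1-w,1)$, of length $w$, the main obstacle is that $\kappa(u)^2=I(r)\to0$ as $r\to0$. Near zero, $F(r)\sim10r^3$ and $F'(r)\sim30r^2$, so $I(r)\sim180r^5$ and $F(r)^4/I(r)\sim r^7\cdot(10^4/180)$. Therefore $F(r)^4/I(r)$ is bounded on $(0,1]$ and $d^2\le Cw$ uniformly. The remaining risk is that the lemma's constant secretly depends on $\kappa$ through the dilation or through the $b\to0$ behaviour of the frequency envelope. We would check that the bound $\delta_A(t)\le Ct^6$ is in the variable $t=d\xi$ alone, with $\xi$ integrated against $k_b(\xi)e^{-\xi^2/2}$. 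The only $\kappa$-dependence is then via $b$, which is handled by the uniform $J_6$ bound. With that verified, the strip contributes at most $w\cdot O(w^3)$.

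Finally, $c(u)>0$ on $[0,1)$ keeps the fibers nondegenerate. Cells near $u=1$ lie among the bad labels of mass $\epsilon+b_s$, and these are sent to $0$ first. Summing the contributions gives $e_\loc\le(32C/\pi)\gamma_F^3w^3(1+O(w))=O(w^3)$ and $e_\amp=0$ in the strong signal limit. The rule also remains complete after appending the fixed report on bad labels.
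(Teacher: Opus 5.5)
Your proposal is correct and follows the paper's proof. You pair $W_A=e^{2idpA}$ with the exact Personick report in each fiber, which gives zero amplitude excess. You then use the uniform bound $e_\loc\le2CJ_6(\kappa K)d^6$ with $J_6\le16/\pi$; this does remove any $\kappa$-dependence, including as $\kappa\downarrow0$. This yields $d^6=\gamma_F^3w^3$ on complete rises and $O(w^4)$ from the terminal strip, and integrating through Note S finishes the proof. The only difference is how you bound $F(r)^4/I(r)$: you use the quintic's small-$r$ asymptotics and continuity, whereas the paper applies Cauchy--Schwarz to $F(\tau)^2/2=\int_0^\tau FF'$, which gives the explicit, profile-independent bound $d^2\le a^2wI_F$.
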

\begin{proof}
Pair $W_A=e^{2idpA}$ with the exact Personick report in each fiber.
For complete rises, $u\le1-w$,
\[
 \kappa=\kappa_F,\qquad d=-\sqrt{\gamma_Fw},\qquad
 c^2=a^2(1-u-\vartheta w),\qquad c^2/4\le\log2.
\]
The uniform $R_A$ bound gives
$e_\loc\le C_0d^6=C_0\gamma_F^3w^3$, with $C_0$ independent of $u,w$.
For truncated rises, set $\tau=(1-u)/w$. Orthogonal decomposition gives
\[
 c^2\le a^2wI_F,\qquad
 d^2=\frac{a^2wF(\tau)^4}{4\int_0^{\tau}(F')^2}\le a^2wI_F.
\]
The last inequality is Cauchy--Schwarz applied to
$F(\tau)^2/2=\int_0^\tau FF'$.
For small $w$, the uniform $R_A$ and $J_6\le16/\pi$ bounds give terminal
regret at most $C_0(a^2I_F)^3w^3$, also as $\kappa\downarrow0$.
Its region has mass $w$, hence integrated regret $O(w^4)$.
The endpoint $u=1$ has zero mass; Note S integrates the complete fields
or their measurable approximations for all three risks.

\end{proof}

\begin{qcpUcorollary}[Ordered cubic width law under the full prior]
\label{qcp-U-cor:width}
For fixed $0<\lambda<1$, the full prior gap satisfies
$\overline{\G}_\lambda(w)=\Theta_\lambda(w^3)$ as $w\downarrow0$.
\end{qcpUcorollary}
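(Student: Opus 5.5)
The upper bound is already in hand. Theorem~\ref{qcp-S-thm:global} identifies $\overline{\G}_\lambda(w)=\int_0^1 g_{\lambda,w}(u)\,du$. The complete readout of Theorem~\ref{qcp-U-thm:effective-full-prior} pairs $W_A=e^{2idpA}$ with the Personick report. It has $e_\amp=0$ and integrated $e_\loc=O(w^3)$, including truncated rises. Because the gap is an infimum of weighted excesses, this gives $\overline{\G}_\lambda(w)\le\lambda\, O(w^3)$, with the constant uniform for small $w$. The real content is therefore the matching lower bound $\overline{\G}_\lambda(w)\ge c_\lambda w^3$.

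For the lower bound I would discard everything except a fixed positive-mass set of positions where the local model is a complete rise with small residual sensitivity. Gaps are nonnegative fiberwise, so
\[
 \overline{\G}_\lambda(w)\ge\int_{U}g_\lambda(\kappa_F,d,c(u))\,du .
\]
On complete rises, $d=-\sqrt{\gamma_F w}$ and $c(u)^2=a^2(1-u-\vartheta w)$. I choose $c_1<c_2$ in the window where Theorem~\ref{qcp-B-thm:prior-moment-test} gives $J_3(c)>0$, so that $q_\lambda(c)>0$ on $[c_1,c_2]$. The set $U$ is the corresponding interval of $u$, which has mass bounded below as $w\downarrow0$. Lemma~\ref{qcp-P-thm:physical-principle} then gives $g_\lambda(d;c)\ge (q_\lambda(c)-o(1))d^6$ for each fixed $c$, and $d^6=\gamma_F^3w^3$. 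Integrating gives $\liminf_{w\downarrow0} w^{-3}\overline{\G}_\lambda(w)\ge\gamma_F^3\int_U q_\lambda(c(u))\,du>0$, which is the result.

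The main obstacle is exchanging $\liminf$ and the integral. The local lemma is pointwise in $c$, while $c(u)$ varies with $u$ and also shifts slightly with $w$. Two routes are available. The first is Fatou's lemma applied to $w^{-3}g_\lambda(\kappa_F,-\sqrt{\gamma_F w},c(u;w))$. This requires the pointwise liminf along a moving $c\to c(u;0)$, so I would check that the proof of Lemma~\ref{qcp-P-thm:physical-principle} is locally uniform in $c$. Its error terms ($\eta_d$, the $O(d^4)$ remainder in $T_d$, the weighted compactness) are governed by the uniform constants $R_A,R_C$ of Proposition~\ref{qcp-C-prop:relative-residual} and Lemma~\ref{qcp-P-thm:coherent-class}. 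Continuity of $A_c$, $C_3(c)$ and hence $J_3(c)$ in $c$ follows from the C1 continuity estimates. Given that, a compactness argument over $c\in[c_1,c_2]$ gives a uniform liminf. The second route, if Fatou is awkward, is to first freeze $c$ on short subintervals using the fiber continuity in Lemma~\ref{qcp-S-lem:cov} and then apply the pointwise bound. With $J_3$ positive on the compact window, $q_\lambda$ is bounded below there, and the theorem follows.
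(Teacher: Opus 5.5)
Your upper bound and the architecture of your lower bound match the paper. That architecture is: restrict to a window of residual sensitivities where Theorem~\ref{qcp-B-thm:prior-moment-test} makes $J_3>0$, apply Lemma~\ref{qcp-P-thm:physical-principle} fiberwise, and integrate with Fatou.

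The gap is the step you flag as the main obstacle, which you leave unresolved. Route (1) needs a locally uniform version of Lemma~\ref{qcp-P-thm:physical-principle}, namely $\liminf_{w\downarrow0} w^{-3}g_\lambda(-\sqrt{\gamma_F w};c_w)\ge\gamma_F^3q_\lambda(\bar c)$ along $c_w\to\bar c$. Nothing in the paper proves this. The weak-limit argument behind the lemma (Lemma~\ref{qcp-P-lem:weighted-dilation-compactness}) is set up for a single fixed $\Omega$, $A$, $C_3$, so you would have to redo the compactness with varying bases. Route (2) does not work. The continuity in Lemma~\ref{qcp-S-lem:cov} is a trace-distance modulus, so freezing $c$ moves the fiber gap by $O(B_\lambda L|c-c'|)$. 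At fixed $u$, however, the residual shifts by $c(u;w)-c(u;0)=-a^2\vartheta w/(c(u;w)+c(u;0))=\Theta(w)$, far above the $w^3$ scale you must resolve. Any representative held fixed as $w\downarrow0$ therefore costs too much, and $w$-dependent representatives bring back the moving-$c$ problem.

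The missing idea is to integrate over $c$ rather than $u$, which removes the obstacle entirely. Choose $c_0>0$ with $J_3(c)\ge64c^4/1575^2$ on $(0,c_0]$, as Theorem~\ref{qcp-B-thm:prior-moment-test} allows. Put $u_w(c)=1-c^2/a^2-\vartheta w$ for $c\in\mathcal C=[c_-,c_+]$, where $0<c_-<c_+<\min\{c_0,a\}$. For small $w$ these positions lie in $(0,1-w)$, so the fiber at $u_w(c)$ is exactly $(\kappa_F,-\sqrt{\gamma_F w},c)$. The change of variables has Jacobian $2c/a^2$, which is independent of $w$. Nonnegativity of the fiber gaps then gives
\[
 \overline{\G}_\lambda(w)\ge\int_{\mathcal C}g_\lambda(-\sqrt{\gamma_F w};c)\,\frac{2c}{a^2}\,dc,
\]
an integral over a fixed domain with $c$ held fixed in the integrand. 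Fatou's lemma and the pointwise statement of Lemma~\ref{qcp-P-thm:physical-principle}, which covers $d<0$, give
\[
 \liminf_{w\downarrow0}w^{-3}\overline{\G}_\lambda(w)\ge\gamma_F^3\int_{\mathcal C}q_\lambda(c)\,\frac{2c}{a^2}\,dc,
\]
with no uniformity in $c$ needed. For positivity, the bound $J_3(c)\ge64c_-^4/1575^2$ on $\mathcal C$, together with the uniform $R_A$ bound, yields $q_\lambda\ge q_*>0$ there. You do not need C1 continuity of $J_3$.
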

\begin{proof}
Note B, Theorem~\ref{qcp-B-thm:prior-moment-test}, gives
$J_3(c)=128c^4/1575^2+O(c^6)$.
Choose $c_0>0$ such that $J_3(c)\ge64c^4/1575^2$ for
$0<c\le c_0$, and fix
$0<c_-<c_+<\min\{c_0,a\}$. Set $\mathcal C=[c_-,c_+]$.
For all sufficiently small $w$, the positions
\[
 u_w(c)=1-c^2/a^2-\vartheta w,\qquad c\in\mathcal C,
\]
lie in $(0,1-w)$. This change of variables in the uniform position
integral fixes the residual at $c$ and has Jacobian $2c/a^2$.
Note P applies to either sign of $d$, giving
\[
 \liminf_{w\downarrow0}\frac{g_\lambda(-\sqrt{\gamma_F w};c)}{w^3}
 \ge\gamma_F^3q_\lambda(c)>0.
\]
For fixed $b=\kappa_FK$, the constants $c_b,V_b$ are fixed.
The uniform $R_A$ bound and $J_3(c)\ge64c_-^4/1575^2$ in
\eqref{qcp-P-eq:physical-coefficient} give $q_\lambda(c)\ge q_*>0$
on $\mathcal C$. By Note S,
$\overline{\G}_\lambda(w)=\int_0^1g_{\lambda,w}(u)du$.
Nonnegativity and Fatou's lemma give
\[
 \liminf_{w\downarrow0}\frac{\overline{\G}_\lambda(w)}{w^3}
 \ge\gamma_F^3\int_{\mathcal C}q_\lambda(c)\frac{2c}{a^2}dc>0.
\]
The complete upper readout gives
$\overline{\G}_\lambda\le\lambda e_\loc=O(w^3)$.
\end{proof}

\endgroup

\pdfbookmark[1]{V. The finite phase record certificate}{qcp-SM-part5}
\section*{V. The finite phase record certificate}
\begingroup
\qcpresetvolume{V}
\qcparticlelayout
\setlength{\emergencystretch}{3em}

\renewcommand{\thesection}{V}
\renewcommand{\theequation}{V.\arabic{equation}}
\section{Certificate for the original record of 16 qubits}
Fix $s=s_{\mathrm c}$, equal task weights and known $L\in[7/16,9/16]$,
with the main text calibration and losses.
\subsection{Physical primal and dual bounds}
\label{qcp-V-app:finite-width}
The record lies in the span of vacuum and the orthonormal vectors
$\ket{e_{j,l}}=\ket0^{\otimes(j-1)}\ket1\ket{D_l^{n-j}}$,
$1\le j\le n$, $0\le l\le n-j$, where $D_l^m$ is the normalized
Dicke vector with $l$ excitations. For $n=16$, the dimension is
$1+n(n+1)/2=137$.
In the phase model, $q_0=1/4$ and
$\alpha=\theta/2=(s+az)/\sqrt n$. Grouping by the first excitation gives
\[
 \langle0^n|\Psi_{k,\alpha}\rangle=(\cos\alpha)^{n-k+1},\quad
 \langle e_{j,l}|\Psi_{k,\alpha}\rangle=
 \begin{cases}\sqrt{\binom{n-j}{l}}(\cos\alpha)^{n-k-l}(\sin\alpha)^{l+1},&j\ge k,\\
 0,&j<k.\end{cases}
\]
Zeroth powers equal one.

For normalized Gauss quadrature with $q=32$, put
$\omega=2a\sqrt n$ and $C_q=(q!)^4/[(2q+1)((2q)!)^3]$.
The generator norm is at most $a\sqrt n$, hence
$\|\partial_z^m\rho\|_1\le\omega^m$ and, for $r=0,1,2$,
\[
 e_r=C_q(2L)^{2q}\sum_{j=0}^r\binom{2q}{j}
 \frac{r!L^{r-j}\omega^{2q-j}}{(r-j)!},\quad
 E_\loc=2e_0,\quad E_\amp=L^2e_0+2Le_1+e_2.
\]
For a normalized dyadic contraction $\mathsf K$ into dimension $D$,
let $\mathsf X$ be its successful weighted coordinates. If
$\|\mathsf X-\mathsf X_{\rm c}\|_F\le e$, $|z-z_c|\le e_z$ and
the location table error is $e_\ell$, the cost allowances are
\[
 \zeta_\loc=2e(2+e)+e_\ell(1+e)^2,\qquad
 \zeta_\amp=4L^2e(2+e)+4Le_z(1+e)^2.
\]
The lower risks include $D\zeta_j+B_j\sqrt{2\delta}+E_j$ and the
exact dual repairs, where $\delta$ is compression failure mass.
Joint fees are the weighted task fees.

Set $S_r=\mathsf X_{\rm c}\operatorname{diag}(z_c^r)\mathsf X_{\rm c}^\dagger$
and $E_A=S_0\widehat A+\widehat A S_0-2S_1$ for a dyadic Hermitian
$\widehat A$. The amplitude dual
$Z=S_2-\widehat A S_0\widehat A-L\|E_A\|I$ is feasible for $|y|\le L$:
its slack is
$(yI-\widehat A)S_0(yI-\widehat A)+yE_A+L\|E_A\|I\succeq0$.
Exact integer residuals certify the joint dual at integer locations
and 129 amplitude nodes. Subtracting
$(1-\lambda)\Delta_y^2S_0/4$, $\Delta_y=2L/128$, extends feasibility
to all amplitude reports. The position cost is concave between integer
reports, so those actions suffice.

Normalize the upper frame by its exact Gram bound, pull it through
$\mathsf K$ and add the positive complement with fallback amplitude zero.
Starting from prior risk $L^2/3$, the allowance for effect improvements
is $2D\zeta_j+E_j$. The gap lower bound is
$\underline R-\lambda\overline r_\loc-(1-\lambda)\overline r_\amp$.
The node endpoints use exact dyadic widths and actions and ball
enclosures at 256 bit precision.

\subsection{Interpolation in prior width}
On $I_L=[7/16,9/16]$, put $b_n=\sin^2(s_{\mathrm c}/\sqrt n)$,
$\xi=z/L$ and $y=L\eta$. Fixed POVM effects and normalized reports
$\eta$ give losses $2\min\{1,b_n|\widehat k-k|/K\}$ and
$L^2(\eta-\xi)^2$ on a common report space.
For $N=n-k+1$ changed sites and
$\alpha=(s_{\mathrm c}+aL\xi)/\sqrt n$, the Pauli generator moments give
$\|\psi'\|^2=a^2\xi^2N/n$ and
$\|\psi''\|^2=a^4\xi^4(3N^2-2N)/n^2$. Thus
\begin{equation}
 \|\rho'\|_1\le2a|\xi|\sqrt{N/n},\qquad
 \|\rho''\|_1\le2(1+\sqrt3)a^2\xi^2N/n.
 \label{qcp-V-eq:width-state-derivatives}
\end{equation}
Primes denote $L$ derivatives, which have trace zero.
Set $\beta_1=n^{-1}\sum_{N=1}^n\sqrt{N/n}$,
$\beta_2=(n+1)/(2n)$, $C_2=(1+\sqrt3)a^2\beta_2/3$ and $L_{\max}=9/16$.
The risk curvature bounds are
\begin{equation}
 M_\loc=2C_2,\quad M_\amp=8+8L_{\max}a\beta_1+4L_{\max}^2C_2,\quad
 M_\lambda=\lambda M_\loc+(1-\lambda)M_\amp.
 \label{qcp-V-eq:width-curvatures}
\end{equation}
\begin{qcpVlemma}[Curvature and interpolation]
Every fixed complete paired POVM satisfies $|r_j''(L)|\le M_j$ on $I_L$.
For $[L_-,L_+]$, put $\Delta_L=L_+-L_-$,
$\tau=(L-L_-)/\Delta_L$ and $\epsilon_j=M_j\Delta_L^2/8$. Then
\begin{align}
 r_j^\star(L)&\ge(1-\tau)\underline r_j(L_-)+\tau\underline r_j(L_+)-\epsilon_j,
 \label{qcp-V-eq:width-optimum-lower}\\
 r_j^\star(L)&\le r_j(M;L)\le(1-\tau)\overline r_j(M;L_-)
                         +\tau\overline r_j(M;L_+)+\epsilon_j.
 \label{qcp-V-eq:width-primal-upper}
\end{align}
For the joint risk use $M_\lambda$; minimize the upper bound over
complete strategies held fixed at both endpoints.
\end{qcpVlemma}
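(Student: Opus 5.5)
The plan is to prove the curvature bound $|r_j''(L)|\le M_j$ first. The interpolation inequalities then follow from the standard error estimate for linear interpolation.

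For a fixed complete paired POVM $M$ with normalized reports $\eta$, the risk is
\[
 r_j(M;L)=\frac1n\sum_{k}\int_{-1}^1\frac{d\xi}{2}\int \ell_j\,\tr[\rho_{k,L\xi}M(d\widehat k,d\eta)].
\]
In the location risk, only the state depends on $L$, because the loss $2\min\{1,b_n|\widehat k-k|/K\}\le2$ is $L$-independent. Hence
\[
 |r_\loc''|\le2\,\mathbb E\|\rho''\|_1.
\]
Using \eqref{qcp-V-eq:width-state-derivatives} and averaging $\xi^2$ over $\Unif[-1,1]$ (mean $1/3$) and $N/n$ over the uniform position (mean $\beta_2$) gives $2(1+\sqrt3)a^2\beta_2/3\cdot2=2C_2\cdot$(the factor already contained in $C_2$). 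I will check the constant so that it equals $M_\loc=2C_2$.

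The amplitude loss is $L^2(\eta-\xi)^2$. Differentiating $L^2(\eta-\xi)^2\tr[\rho M]$ twice by the Leibniz rule gives three terms:
\begin{itemize}
 \item $2(\eta-\xi)^2\tr[\rho M]$, bounded by $8$ since $|\eta|,|\xi|\le1$;
 \item $2\cdot2L(\eta-\xi)^2\tr[\rho'M]$, bounded by $8L\,\mathbb E\|\rho'\|_1\le8L_{\max}a\beta_1$, using $\mathbb E|\xi|=1/2$ and $\mathbb E\sqrt{N/n}=\beta_1$;
 \item $L^2(\eta-\xi)^2\tr[\rho''M]$, bounded by $4L_{\max}^2C_2$.
\end{itemize}
This reproduces $M_\amp$ in \eqref{qcp-V-eq:width-curvatures}.

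Exchanging derivatives and the POVM integral is justified by the trace-norm differentiability of the finite-dimensional analytic family and by bounded losses. Because every term is linear in $M$, the same bounds apply to the joint risk with $M_\lambda$.

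The interpolation step is as follows. For fixed $M$, the function $f(L)=r_j(M;L)$ satisfies $|f''|\le M_j$. Therefore
\[
 \bigl|f(L)-(1-\tau)f(L_-)-\tau f(L_+)\bigr|\le\frac{M_j}{2}(L-L_-)(L_+-L)\le\epsilon_j .
\]
The upper bound \eqref{qcp-V-eq:width-primal-upper} follows directly: take the same $M$ at both endpoints and replace the endpoint risks by their certified upper enclosures $\overline r_j$.

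For the lower bound \eqref{qcp-V-eq:width-optimum-lower}, I take any $M$ (or a near-minimizer at $L$). The interpolation gives
\[
 r_j(M;L)\ge(1-\tau)r_j(M;L_-)+\tau r_j(M;L_+)-\epsilon_j\ge(1-\tau)\underline r_j(L_-)+\tau\underline r_j(L_+)-\epsilon_j ,
\]
since $r_j(M;L_\pm)\ge r_j^\star(L_\pm)\ge\underline r_j(L_\pm)$. Taking the infimum over $M$ gives the bound.

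The main obstacle is that the reports must live on an $L$-independent space. Rescaling $y=L\eta$ achieves this, but the clipping constraint $|y|\le L$ must correspond exactly to $|\eta|\le1$ so that a fixed POVM remains admissible for every $L$. I will verify this explicitly, together with the bound $(\eta-\xi)^2\le4$ used above.
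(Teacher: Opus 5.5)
\textbf{A factor-of-two gap in the curvature constants.}

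Your overall structure matches the paper's proof. You differentiate the risk of a fixed POVM on the normalized report space and expand $(L^2f)''=2f+4Lf'+L^2f''$ for amplitude. You then use the interpolation error $\tfrac{M_j}{2}(L-L_-)(L_+-L)\le\epsilon_j$, take the infimum after interpolating for the lower bound, and fix a strategy for the upper bound. Those steps are fine. The gap is in the curvature constants themselves, which are the actual content of $|r_j''|\le M_j$.

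Your location estimate $|r_\loc''|\le2\,\mathbb E\|\rho''\|_1$ uses $\|\rho''\|_1\le2(1+\sqrt3)a^2\xi^2N/n$, $\mathbb E\xi^2=1/3$ and $\mathbb E N/n=\beta_2$. This yields only $4C_2=2M_\loc$. The discrepancy you noticed is real, and no factor hidden in $C_2$ absorbs it.

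The amplitude terms have the same issue. With $Q_\xi=\int(\eta-\xi)^2M\preceq4I$, the operator-norm estimate $|\tr\rho'Q_\xi|\le4\|\rho'\|_1$ gives $16L\,\mathbb E\|\rho'\|_1$ for your second term and $8L^2C_2$ for your third. You write $8L\,\mathbb E\|\rho'\|_1$ and $4L_{\max}^2C_2$ instead. Those values are the correct ones, but nothing in your argument produces them.

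\textbf{The missing idea.} Since $\tr\rho\equiv1$, both $\rho'$ and $\rho''$ are traceless. For Hermitian $X$ with $\tr X=0$ and $0\preceq Q\preceq BI$,
\[
\tr XQ=\tr X\bigl(Q-\tfrac B2I\bigr),\qquad \bigl\|Q-\tfrac B2I\bigr\|\le\tfrac B2,
\]
hence $|\tr XQ|\le\tfrac B2\|X\|_1$.
\begin{itemize}
\item For location, take $Q_\loc=\int\ell_\loc\,M(d\widehat k,d\eta)$ with $B=2$. This gives $|r_\loc''|\le\mathbb E\|\rho''\|_1\le2C_2=M_\loc$.
\item For amplitude, take $Q_\xi$ with $B=4$. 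This gives $|f'|\le2\|\rho'\|_1$ and $|f''|\le2\|\rho''\|_1$, which produce exactly $8L_{\max}a\beta_1$ and $4L_{\max}^2C_2$ in $M_\amp$.
\end{itemize}
Without this halving, your argument proves only
\[
|r_\loc''|\le2M_\loc,\qquad |r_\amp''|\le8+16L_{\max}a\beta_1+8L_{\max}^2C_2 .
\]
The allowances $\epsilon_j=M_j\Delta_L^2/8$, and the numerical values $M_j/32768$ used in the width certificate, would then not be justified.

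The report-space issue you single out as the main obstacle is not one. With $y=L\eta$ and $|\eta|\le1$, every fixed POVM is admissible at every $L$ by construction.
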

\begin{proof}
For Hermitian $X$ of trace zero and $0\preceq Q\preceq BI$,
$|\tr XQ|\le B\|X\|_1/2$. Taking $B=2$ and
$\mathbb E\xi^2=1/3$ gives $M_\loc$.
For amplitude, put $Q_\xi=\int(\eta-\xi)^2M\preceq4I$ and
$f=\tr\rho Q_\xi$. Equation~\eqref{qcp-V-eq:width-state-derivatives}
gives $f\le4$, $|f'|\le4a|\xi|\sqrt{N/n}$ and
$|f''|\le4(1+\sqrt3)a^2\xi^2N/n$.
Average $(L^2f)''=2f+4Lf'+L^2f''$, using $\mathbb E|\xi|=1/2$,
to obtain $M_\amp$. These uniform bounds justify differentiation.
Linear interpolation has error at most
$M_j(L-L_-)(L_+-L)/2\le\epsilon_j$.
Take infima after this estimate for the lower bound; fix a strategy
for the upper bound.
\end{proof}
The nine nodes $L_i=(28+i)/64$, $i=0,\ldots,8$, give eight cells.
Subtract marginal envelopes from the joint envelope; divide by the
positive marginal baseline for relative gaps. Comparisons at endpoints and affine crossings bound each cell. At a new width, keep an endpoint POVM,
rescale its reports by known $L/L_i$, retain its complement and fallback
zero, and apply the quadrature allowance at that width.

\subsection{Numerical endpoints used by the width proof}
The following node enclosures are the separately certified risks in the exact
physical record. All decimal endpoints are rounded outward; the accompanying
numerical certificate archive retains the exact integer inputs, continuous
prior fees, complete upper-rule frames and all-report dual repairs.
\begin{center}\small
\begin{tabular}{lccc}
\toprule
$L$ & $r_\loc^\star$ & $r_\amp^\star$ & $R_{1/2}^\star$\\
\midrule
$7/16$ & $[0.090182574,\,0.090184177]$ & $[0.048585258,\,0.048585730]$ & $[0.070338846,\,0.070347021]$\\
$29/64$ & $[0.090866566,\,0.090869583]$ & $[0.051282099,\,0.051282688]$ & $[0.072120947,\,0.072130336]$\\
$15/32$ & $[0.091572917,\,0.091574446]$ & $[0.053991553,\,0.053992135]$ & $[0.073921146,\,0.073931016]$\\
$31/64$ & $[0.092297122,\,0.092298894]$ & $[0.056710168,\,0.056710800]$ & $[0.075737028,\,0.075746892]$\\
$1/2$ & $[0.093041191,\,0.093042611]$ & $[0.059435032,\,0.059435706]$ & $[0.077565023,\,0.077575534]$\\
$33/64$ & $[0.093804044,\,0.093805435]$ & $[0.062163472,\,0.062164157]$ & $[0.079403967,\,0.079415285]$\\
$17/32$ & $[0.094585665,\,0.094587198]$ & $[0.064892969,\,0.064893713]$ & $[0.081249595,\,0.081263592]$\\
$35/64$ & $[0.095386330,\,0.095387690]$ & $[0.067621431,\,0.067622180]$ & $[0.083107292,\,0.083119188]$\\
$9/16$ & $[0.096205257,\,0.096206805]$ & $[0.070346728,\,0.070347580]$ & $[0.084968096,\,0.084980452]$\\
\bottomrule
\end{tabular}
\end{center}
The complete-readout curvature bounds in \eqref{qcp-V-eq:width-curvatures}
are enclosed above by $M_\loc<2.682760529$ and
$M_\amp<14.904048650$; $M_{1/2}=(M_\loc+M_\amp)/2$.
For cell length $1/64$, these give respective interpolation allowances
$M_j/32768$. Applying the lower and fixed-strategy upper envelopes gives:
\begin{center}\small
\begin{tabular}{lcc}
\toprule
Prior-width cell & Certified $10^3\mathcal G_{1/2}$ & Certified relative cost (\%)\\
\midrule
$[7/16,\,29/64]$ & $[0.417186692,\,1.592709437]$ & $[0.58474971,\,2.30441504]$\\
$[29/64,\,15/32]$ & $[0.508106019,\,1.685486461]$ & $[0.69554358,\,2.38042955]$\\
$[15/32,\,31/64]$ & $[0.601150130,\,1.779952428]$ & $[0.80396469,\,2.45463679]$\\
$[31/64,\,1/2]$ & $[0.695474849,\,1.874127805]$ & $[0.90902819,\,2.52457794]$\\
$[1/2,\,33/64]$ & $[0.789158802,\,1.968233299]$ & $[1.00846907,\,2.59081157]$\\
$[33/64,\,17/32]$ & $[0.882465532,\,2.060980344]$ & $[1.10296047,\,2.65195857]$\\
$[17/32,\,35/64]$ & $[0.972433266,\,2.152013640]$ & $[1.18918205,\,2.70792442]$\\
$[35/64,\,9/16]$ & $[1.065651117,\,2.241164557]$ & $[1.27553314,\,2.75884781]$\\
\bottomrule
\end{tabular}
\end{center}
The upper envelopes use the $75$ complete physical strategy cross-evaluations
at neighboring nodes, rather than a width-dependent mixture of incomplete
success branches. Their exact affine coefficients and crossings are supplied
in the archive's \texttt{width\_interval\_certificate.json}; each endpoint
risk includes the fallback probability and a fresh continuous-prior fee.

\begin{qcpVtheorem}[Conflict across prior widths]
For every $L\in[7/16,9/16]$ in the stated $n=16$ experiment,
\begin{equation}
 4.17\times10^{-4}<\mathcal G_{1/2}(L)<2.242\times10^{-3},\qquad
 0.5847\%<\frac{\mathcal G_{1/2}(L)}{(r_\loc^\star(L)+r_\amp^\star(L))/2}<2.759\%.
 \label{qcp-V-eq:width-main-numbers}
\end{equation}
Both marginals have positive risk and improve on their priors:
\[
 r_\loc^\star>0.0901,\quad r_\amp^\star>0.0481,\quad
 r_\loc^{\rm no\ data}-r_\loc^\star>0.342,\quad L^2/3-r_\amp^\star>0.0147.
\]
\end{qcpVtheorem}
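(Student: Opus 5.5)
The plan is to assemble the statement from the node enclosures in the preceding table and the curvature--interpolation lemma; no new analysis of the 16-qubit record is needed. For each of the eight cells $[L_i,L_{i+1}]$, $L_i=(28+i)/64$, the lemma gives two envelopes valid for every $L$ in the cell. The lower envelope for $R_{1/2}^\star(L)$ is the affine interpolation of the certified node lower bounds, minus $M_{1/2}/32768$. The upper envelopes for $r_\loc^\star(L)$ and $r_\amp^\star(L)$ come from fixed complete strategies, namely the endpoint POVMs with reports rescaled by $L/L_i$ and the fallback and continuous-prior fee retained. Each is affine in $\tau$ plus $M_j/32768$. Their cross-evaluations at neighbouring nodes are the 75 strategy values in the archive. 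Subtracting the weighted marginal upper envelope from the joint lower envelope gives a lower bound on $\mathcal G_{1/2}(L)$ that is a minimum of finitely many affine functions of $\tau$. Reversing the roles (joint upper, marginal lower) gives an upper bound that is a maximum of affine functions.

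The key steps, in order, are as follows.

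\textbf{Step 1.} Confirm the curvature constants. From \eqref{qcp-V-eq:width-state-derivatives}, compute $\beta_1$ (a finite sum over $N\le16$) and $\beta_2=17/32$. Then bound $a=2\sqrt{\log2}$ above rationally to check $M_\loc<2.68276\ldots$ and $M_\amp<14.9040\ldots$.

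\textbf{Step 2.} On each cell, minimize the lower affine envelope at the endpoints and at the affine crossings, and maximize the upper envelope the same way. This reproduces the cell table.

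\textbf{Step 3.} Take the minimum of the lower ends over cells, $0.417186692\times10^{-3}>4.17\times10^{-4}$, and the maximum of the upper ends, $2.241164557\times10^{-3}<2.242\times10^{-3}$.

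\textbf{Step 4.} Bound the relative cost by dividing by the marginal baseline $(r_\loc^\star+r_\amp^\star)/2$. For the lower relative bound, divide the gap lower envelope by the baseline upper envelope; for the upper relative bound, divide the gap upper envelope by the baseline lower envelope. The baseline lower envelope is positive because every node lower bound exceeds $0.048$ and $\epsilon_j\ll10^{-3}$. Because these are ratios of affine functions, extrema again occur at endpoints and crossings, so exact rational evaluation there yields $0.5847\%$ and $2.759\%$.

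The marginal claims follow directly from the same envelopes. The lower interpolated bound gives $r_\loc^\star\ge0.090182574-2.7\times10^{-3}\cdot(1/8)\cdot 2^{-12}$-type corrections, hence $r_\loc^\star>0.0901$. Similarly $r_\amp^\star\ge 0.048585258-\epsilon_\amp>0.0481$, using $\epsilon_\amp<14.91/32768<4.6\times10^{-4}$.

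The prior-only risks need separate handling. For amplitude it is $L^2/3\ge(7/16)^2/3\approx0.0638$, and the upper envelope of $r_\amp^\star$ on the first cell is at most about $0.0513+4.6\times10^{-4}$. Since $L^2/3$ is convex while the envelope is affine plus a constant, the margin $L^2/3-r_\amp^\star$ should be checked at the cell endpoints and the worst node. The tightest case appears to be at $L=9/16$, where $0.10547-0.07035-4.6\times10^{-4}>0.0147$.

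For location, the no-data risk is the best fixed site report under the loss $2\min\{1,b_n|\widehat k-k|/K\}$ with $k$ uniform on $\{1,\dots,16\}$. It is independent of $L$ and can be computed exactly from $b_n=\sin^2(s_{\mathrm c}/4)$. Subtracting $\max_L r_\loc^\star\le0.0962068+8.2\times10^{-5}$ should give more than $0.342$.

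The main obstacle is the rigor of Step 2. The envelopes must be genuinely valid over the whole cell, which means the upper strategies, after report rescaling, must remain complete POVMs with correct fallback. The quadrature fee must also be recomputed at each intermediate $L$; since $e_r$ increases with $L$, I would bound it by its value at $L_{i+1}$. I would handle the many affine pieces with exact rational arithmetic from the archive's JSON coefficients, enumerating all pairwise crossings inside each cell.
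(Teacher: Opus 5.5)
Your route for the gap and relative-cost bounds is the paper's. You combine the node enclosures with the curvature--interpolation lemma applied to the cross-evaluated endpoint strategies, evaluate at cell endpoints and affine crossings, and divide by the marginal baseline. The paper's proof is exactly this assembly plus the two prior-only risks. Two slips in that part are harmless. Each marginal upper envelope is a minimum over fixed strategies, so the gap lower envelope is a maximum (not a minimum) of affine functions; your endpoint-and-crossing check covers this anyway. Also, no quadrature fee is needed at interior $L$: the lemma interpolates the exact risks of a fixed strategy, so fees enter only in the node cross-evaluations.

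The marginal claims are where the stated constants are tight, and there your argument does not go through as written.

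For the amplitude gain, $L^2/3-r_\amp^\star$ is bounded below by a convex function minus a piecewise affine envelope. Convexity puts maxima at endpoints, not minima, so checking endpoints alone is not justified. You also place the tight case at $L=9/16$, where the margin is about $0.035$. It is actually at $L=7/16$, where $49/768-0.048585730-M_\amp/32768\approx0.01476$. The cell-uniform numbers you quote ($L^2/3\ge0.0638$ against an envelope up to $0.0513+4.6\times10^{-4}$) give only about $0.012$. You need to track the $L$-dependence. One way is the tangent bound $L^2/3\ge L_-^2/3+2L_-(L-L_-)/3$ on each cell. Another is to note that $2L/3\ge7/24$ exceeds the envelope slope (about $0.17$ from the node table), so the margin increases in $L$.

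For position you leave the prior risk uncomputed, yet the margin is only about $1.4\times10^{-5}$. The missing observation, which is the content of the paper's proof, is that $b_n=\sin^2(s_{\mathrm c}/4)\le s_{\mathrm c}^2/16=K/16$. Hence every error of at most $15$ sites is unclipped. A median report has mean absolute error $4$, giving prior risk $8b_n/K\approx0.43830$. This exceeds your bound $0.0962068+8.2\times10^{-5}$ by more than $0.342$.

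Finally, the position interpolation allowance is $M_\loc/32768\approx8.19\times10^{-5}$, not of order $10^{-7}$. The bound $0.090182574-8.19\times10^{-5}\approx0.0901007$ still exceeds $0.0901$, but only by about $7\times10^{-7}$.
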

\begin{proof}
The exact cell envelopes give the gap bounds, and the marginal
envelopes give the remaining inequalities.
The amplitude prior risk is $L^2/3$. For position, $b_n\le K/16$
makes the loss unclipped, and the median absolute error of the uniform
$16$-site prior is four, giving prior risk $8b_n/K$.
The risk bounds include compression, rounding, quadrature and dual
repairs, with positive complements in every upper measurement.
\end{proof}

\FloatBarrier

\endgroup
\pdfbookmark[1]{Supplementary references}{qcp-SM-references}

\end{document}